\newtheorem{lemma}{Lemma} % [section]
\newtheorem{definition}{Definition} % [section]
\newtheorem{example}{Example} % [section]
\newtheorem{theorem}{Theorem} % [section]
\newtheorem{postulate}{Postulate} % [section]
\newtheorem{proposition}{Proposition} % [section]
\newcommand{\red}{\vdash\!\!\vdash} % Reduccion de objetivos
\newcommand{\nc}{\newcommand}
\nc{\tp}[1]{\overline{#1}}   % e.g. $\tp{X}$, $\tp{t}$
\nc{\tpp}[2]{\tp{#1}_{#2}}  % e.g. $\tpp{X}{m}$, $\tpp{t}{n}$
\nc{\sig}{\Sigma}
\nc{\sigbot}{\Sigma_{\bot}}
\nc{\fs}[1]{FS^{#1}}   % e.g. $\fs{n}$; all function symbols
\nc{\dc}[1]{DC^{#1}}  % e.g. $\dc{n}$; data constructors
\nc{\ure}{\mathcal{U}}
\nc{\var}{\mathcal V\!\!ar}     % universal set of variables
\nc{\varx}{\mathcal{X}}    %
\nc{\vary}{\mathcal{Y}}    % subsets of \var
\nc{\varz}{\mathcal{Z}}    %
\nc{\tvar}{\mathcal {T\!V}\!\!ar}   % use $\tvar$
\nc{\expr}[1]{Exp_{\bot}(#1)}    % all expressions e.g. $\expr{ure}$
\nc{\texpr}[1]{Exp(#1)}          % all total expressions e.g. $\texpr{ure}$
\nc{\gexpr}[1]{GExp_{\bot}(#1)}  % all ground expressions e.g. $\gexpr{ure}$
\nc{\gtexpr}[1]{GExp(#1)}        % all ground total expressions e.g. $\gtexpr{ure}$
\nc{\pat}[1]{Pat_{\bot}(#1)} % e.g. $\pat{\ure}$
\nc{\tpat}[1]{Pat(#1)}       % e.g. $\tpat{\ure}$
\nc{\gpat}[1]{GPat_{\bot}(#1)} % e.g. $\gpat{\ure}$
\nc{\gtpat}[1]{GPat(#1)}       % e.g. $\gtpat{\ure}$
\nc{\sub}[1]{Sub_{\bot}(#1)}    % all substitutions e.g. $\sub{\ure}$
\nc{\tsub}[1]{Sub(#1)}          % all total substitutions e.g. $\tsub{\ure}$
\nc{\gsub}[1]{GSub_{\bot}(#1)}  % all ground substitutions e.g. $\gsub{\ure}$
\nc{\gtsub}[1]{GSub(#1)}        % all ground total substitutions e.g. $\gtsub{\ure}$
\nc{\restrict}{\upharpoonright}
\nc{\leqinfo}{\sqsubseteq} \nc{\lsinfo}{\sqsubset}
\nc{\geqinfo}{\sqsupseteq} \nc{\gtinfo}{\sqsupset}
\nc{\cdom}{\mathcal{D}}
\nc{\ccdom}{\mathcal{C}}
\nc{\mdom}{\mathcal{M}}
\nc{\sdom}{\mathcal{S}}
\nc{\trunc}[2]{\, \mid #1 \mid_{#2}} % $\trunc{\eta}{\cdom_i}$
\nc{\csig}{\Gamma}                            % Constraint Signature, included in \sig
\nc{\pfun}[1]{PF^{#1}}                         % Primitive Functions of given arity in \csig e.g. $\pfun{n}$
\nc{\df}[1]{DF^{#1}}                             % Defined Functions of given arity in \csig e.g. $\df{n}$
\nc{\uni}[1]{\mathcal{U}_{#1}}            % Universe of data values over given urelements e.g. $\uni{\ure}$
\nc{\dom}[1]{\mathcal{D}_{#1}}          % Constraint Domain over given urelements e.g. $\dom{\ure}$
\nc{\hoare}[1]{\emph{HP}(#1)} % Hoare's powerdomain; e.g. $\hoare{\overline{\uni{\ure}}}$
\nc{\nat}{\mathbb{N}} \nc{\ent}{\mathbb{Z}} \nc{\real}{\mathbb{R}}
\nc{\herbrand}{\mathcal{H}}            % Patterns over empty \ure; "Herbrand domain"
\nc{\herbrandseq}{\mathcal{H}_{seq}}   % "Herbrand domain" with strict equality primitive
\nc{\rdom}{\mathcal{R}}                % Patterns over \real
\nc{\rt}{\mathcal{RT}}                 % Rational Tree Constraints
\nc{\ft}{\mathcal{FT}}                 % Feature Tree Constraints
\nc{\fd}{\mathcal{FD}}                 % Finite Domain Constraints
\nc{\yields}{\to!\,}
\nc{\anyarrow}{\to^{?}\,}
\nc{\patom}{p\, \overline{t}_{n} \to!\, t\,}
\nc{\true}{\lozenge} \nc{\false}{\blacklozenge}
\nc{\pcon}[1]{PCon_{\bot}(#1)}    % all primitive constraints e.g. $\pcon{\cdom}$
\nc{\ptcon}[1]{PCon(#1)}          % all total primitive constraints e.g. $\ptcon{\cdom}$
\nc{\pgcon}[1]{PGCon_{\bot}(#1)}  % all ground primitive constraints e.g. $\pgcon{\cdom}$
\nc{\pgtcon}[1]{PGCon(#1)}        % all ground total primitive constraints e.g. $\pgtcon{\cdom}$
\nc{\val}[1]{Val_{\bot}(#1)}  % $\val{\cdom}$ = $\gsub{\ure}$
\nc{\tval}[1]{Val(#1)}        % $\tval{\cdom}$ = $\gtsub{\ure}$
\nc{\sol}[2]{Sol_{#1}(#2)}      %  e.g. $\sol{\cdom}{\pi}$, $\sol{\cdom}{\Pi}$
\nc{\wtsol}[2]{WTSol_{#1}(#2)}      %  e.g. $\wtsol{\cdom}{\pi}$, $\wtsol{\cdom}{\Pi}$
\nc{\myneq}{/\hspace*{-2mm}=}   % e.g. $X \myneq Y$
\nc{\sat}[2]{Sat_{#1}(#2)}      % e.g. $\sat{\cdom}{\Pi}
\nc{\unsat}[2]{Unsat_{#1}(#2)}  % e.g. $\insat{\cdom}{\Pi}
\nc{\clp}[1]{CLP(#1)}    % e.g. $\clp{\cdom}$
\nc{\fp}[1]{FP(#1)}      % e.g. $\fp{\cdom}$
\nc{\cflp}[1]{CFLP(#1)}  % e.g. $\cflp{\cdom}$
\nc{\datom}{p\, \overline{e}_n \to!\, t\,}
\nc{\dcon}[1]{DCon_{\bot}(#1)}    % all defined constraints e.g. $\dcon{\cdom}$
\nc{\dtcon}[1]{DCon(#1)}          % all total defined constraints e.g. $\dtcon{\cdom}$
\nc{\dgcon}[1]{DGCon_{\bot}(#1)}  % all ground defined constraints e.g. $\dgcon{\cdom}$
\nc{\dgtcon}[1]{DGCon(#1)}        % all ground total defined constraints e.g. $\dgtcon{\cdom}$
\nc{\prog}{\mathcal{P}}  % use: $\prog$
\nc{\prule}{f\, \tpp{t}{n}\, \to\, r\, \Leftarrow\, P\, \Box\, \Delta}
\nc{\sfrule}{f\, \tpp{t}{n}\, \to\, r\, \Leftarrow\, \Delta}
\nc{\sprule}{f\, \tpp{t}{n}\, \to\, true\, \Leftarrow\, \Delta}
\nc{\prules}[1]{[#1]_{\bot}} % e.g. $\prules{\prog}$.
\nc{\calg}{\mathcal{A}}
\nc{\I}{\emph{I}}                   % Interpretation I
\nc{\J}{\emph{J}}                   % Interpretation J
\nc{\Int}[1]{\mathbb{I}_{#1}}          % Lattice of all interpretations
\nc{\Bot}{\bot\hspace*{-2.35mm}\bot}   % Bottom of \Int
\nc{\Top}{\top\hspace*{-2.35mm}\top}   % Top of \Int
\nc{\FI}{\mathfrak{I}}                 % Family FI of interpretations, subset of \Int
\nc{\FJ}{\mathfrak{J}}                 % Family FJ of interpretations, subset of \Int
\nc{\grounding}[2]{[#1]_{#2}} % e.g. $\grounding{\I}{\cdom}$.
\nc{\lub}[1]{\LARGE{\sqcup\,}{#1}}      % e.g. $\lub{FI}$
\nc{\glb}[1]{\LARGE{\sqcap\,}{#1}}      % e.g. $\glb{FI}$
\nc{\ufact}{f\, \tpp{t}{n} \to\, t}                    % Unconstrained fact
\nc{\tfact}{f\, \tpp{t}{n} \to\, \bot}                 % Trivial (unconstrained) fact
\nc{\cfact}{f\, \tpp{t}{n} \to\, t\ \Leftarrow\, \Pi}  % Constrained fact
\nc{\entails}[1]{\succcurlyeq_{#1}}       % e.g. $\varphi\entails{\cdom} \varphi'$
\nc{\entailedby}[1]{\preccurlyeq_{#1}}    % e.g. $\varphi'\entailedby{\cdom} \varphi$
\nc{\forces}[1]{\vdash\hspace*{-2.0mm}\vdash_{#1}} % e.g. $\I\,\forces{\cdom}\, \varphi$
\nc{\ld}{[\![} \nc{\rd}{]\!]}
\nc{\den}[3]{\ld{#1}\rd^{#2}_{#3}} % for open expressions, e.g. $\den{e}{\I}{\eta}$
\nc{\gden}[2]{\ld{#1}\rd^{#2}}     % for closed expressions, e.g. $\gden{e}{\I}$
\nc{\smodels}[1]{\models_{#1}^s}   % e.g. $\I\, \smodels{\cdom}\, \prog$
\nc{\wmodels}[1]{\models_{#1}^w}   % e.g. $\I\, \wmodels{\cdom}\, \prog$
\nc{\spt}[1]{preST_{#1}}             % strong pre-transformer, e.g. $\spt{\prog}$
\nc{\st}[1]{ST_{#1}}                 % strong transformer, e.g.
\nc{\wpt}[1]{preWT_{#1}}             % weak pre-transformer, e.g. $\wpt{\prog}$
\nc{\wt}[1]{WT_{#1}}                 % weak transformer, e.g.
\nc{\ist}[2]{ST_{#1}\uparrow^{#2}}   % iterations of the S.T., e.g. $\ist{\prog}{k}$
\nc{\iwt}[2]{WT_{#1}\uparrow^{#2}}   % iterations of the W.T., e.g. $\iwt{\prog}{k}$
\nc{\slfp}[1]{\bigcup_{k \in \nat} \ist{#1}{k}(\Bot)}  % e.g. $\slfp{\prog}$
\nc{\wlfp}[1]{\bigcup_{k \in \nat} \iwt{#1}{k}(\Bot)}  % e.g. $\wlfp{\prog}$
\nc{\smod}[1]{\emph{S}_{#1}}  % e.g. $\smod{\prog} = \slfp{\prog}$
\nc{\wmod}[1]{\emph{W}_{#1}}  % e.g. $\wmod{\prog} = \wlfp{\prog}$
\nc{\crwl}[1]{CRWL(#1)} % e.g. $\crwl{\cdom}$
\nc{\clnc}[1]{CLNC(#1)} % e.g. $\clnc{\cdom}$
\nc{\cclnc}[1]{CCLNC(#1)} % e.g. $\cclnc{\cdom}$
\nc{\cproves}[1]{\vdash_{#1}} % e.g. $\prog\, \cproves{\cdom}\, \varphi$
\nc{\size}[1]{\|#1\|}       % full size of a proof tree.
\nc{\rsize}[1]{\mid#1\mid}  % restricted size of a proof tree, ignoring decomposition steps.
\nc{\sts}[2]{\vdash\!\!\vdash_{#1,\, #2}} % use: $\sts{\cdom}{\varx}
\nc{\ests}[1]{\vdash\!\!\vdash_{#1}} % use: $\ests{\cdom}
\nc{\toy}{\mathcal{TOY}}              % use: $\toy$
\nc{\toye}{$\mathcal{TOY \;}$}    % use $\toye$
\begin{document}

\title[On the Cooperation of the Constraint Domains $\herbrand$, $\rdom$ and $\fd$ in $CFLP$]
      {On the Cooperation of the Constraint Domains $\herbrand$, $\rdom$ and $\fd$ in $CFLP$}

\author[Est\'evez, Fern\'{a}ndez, Hortal\'{a}, Rodr\'iguez,
S\'{a}enz and del Vado] {S. EST\'EVEZ-MART\'IN, T.
HORTAL\'{A}-GONZ\'{A}LEZ, \and M.
RODR\'IGUEZ-ARTALEJO and R. DEL VADO-V\'IRSEDA\footnotemark\\
      Dpto. de Sistemas Inform\'aticos y Computaci\'on \\
      Universidad Complutense de Madrid, Spain \\
      E-mails: {\{s.estevez,teresa,mario,rdelvado\}@sip.ucm.es}
\and F. S\'AENZ-P\'EREZ\addtocounter{footnote}{-1}
\thanks{The work of these authors has been partially supported by projects MERIT-FORMS (TIN2005-09207-C03-03),
PROMESAS-CAM (S-0505/TIC0407) and STAMP (TIN2008-06622-C03-01).}\\
      Dpto. de Ingenier\'ia del Software e Inteligencia Artificial \\
      Universidad Complutense de Madrid, Spain \\
      E-mail: fernan@sip.ucm.es
\and A. J. FERN\'{A}NDEZ
%\thanks {The work of this author has been partially supported by projects TIN2007-67134 (from
% Spanish Ministry of Innovation and Science) and P06-TIC2250 (from Andalusia Regional Government).} \\
\thanks {The work of this author has been partially supported by projects TIN2008-05941 (from
Spanish Ministry of Innovation and Science) and P06-TIC2250 (from Andalusia Regional Government).} \\
      Dpto. de Lenguajes y Ciencias de la Computaci\'{o}n, \\
      Universidad de M\'{a}laga, Spain \\
      E-mail: afdez@lcc.uma.es
}

\pagerange{\pageref{firstpage}--\pageref{lastpage}}
\volume{\textbf{??} (??):} \jdate{Junio 2004 2002}
\setcounter{page}{1} \pubyear{2004}

\maketitle

\label{firstpage}

%%%%%%%%%%%%%%%%%%%%%%%%%%%%%%%%%%%%%%%%%%%%%%%%%%%%%%%%%%%%%%%%%%%%%%%%%%%%%%%
%
% ABSTRACT
%
%%%%%%%%%%%%%%%%%%%%%%%%%%%%%%%%%%%%%%%%%%%%%%%%%%%%%%%%%%%%%%%%%%%%%%%%%%%%%%%

\begin{abstract}
This paper presents a computational model  for the cooperation of
constraint domains and an implementation for a particular case of
practical importance. The computational model supports declarative
programming with lazy and possibly higher-order functions,
predicates, and the cooperation of different constraint domains
equipped with their res\-pective solvers, relying on a so-called
{\em Constraint Functional Logic Programming} ($CFLP$) scheme. The
implementation has been developed on top of the $CFLP$ system
$\mathcal{TOY}$, supporting the cooperation of the three domains
$\herbrand$, $\rdom$ and $\fd$, which supply equality and
disequality constraints over symbolic terms, arithmetic constraints
over the real  numbers, and finite domain constraints over the
integers, respectively. The computational model has been proved
sound and complete w.r.t. the declarative semantics provided by the
$CFLP$ scheme, while the implemented system has  been tested with a
set of benchmarks and shown to behave quite efficiently in
comparison to the closest related approach we are aware of.\\
{\em To appear in Theory and Practice of Logic Programming (TPLP)}
\end{abstract}

\begin{keywords}
Cooperating Constraint Domains,
Constraint Functional Logic Programming,
Constrained Lazy Narrowing,
Implementation.
\end{keywords}

%%%%%%%%%%%%%%%%%%%%%%%%%%%%%%%%%%%%%%%%%%%%%%%%%%%%%%%%%%%%%%%%%%%%%%%%%%%%%%%
%
% 1-INTRODUCTION TPLP'07
%
%%%%%%%%%%%%%%%%%%%%%%%%%%%%%%%%%%%%%%%%%%%%%%%%%%%%%%%%%%%%%%%%%%%%%%%%%%%%%%%

\section{Introduction} \label{introduction}

% The CLP and CFLP schemes.
% BEWARE: CFLP versus CFLP(X).

Constraint Programming relies on {\em constraint solving} as a powerful mechanism for tackling practical applications.
The well-known $CLP$ Scheme  \cite{jaffar+:clp-popl87,JM94,JMM+98} provides a powerful and practical
framework for constraint programming which inherits the clean semantics and
declarative style of logic programming.
Moreover, the combination of $CLP$ with functional
programming has given rise to various so-called $CFLP$
(Constraint Functional Logic Programming)
schemes,
developed since 1991 and aiming at a very expressive  combination of
the constraint, logical and functional programming paradigms.

% The problem  of domain cooperation.

This paper tackles foundational and practical issues concerning the
efficient use of constraints in $CFLP$ languages and systems. Both
the $CLP$ and the $CFLP$ schemes must be instantiated by a
parametrically given  {\em constraint domain}  $\cdom$ which
provides specific data values, constraints based on specific
primitive operations, and a dedicated constraint solver. Therefore,
there are different {\em instances}  $\clp{\cdom}$ of the $CLP$
scheme for various choices of $\cdom$, and analogously for $CFLP$,
whose instances $\cflp{\cdom}$ provide a declarative framework for
any chosen domain $\cdom$. Useful `pure' constraint domains include
the Herbrand domain $\herbrand$ which supplies equality and
disequality  constraints over symbolic terms; the domain $\rdom$
which supplies arithmetic constraints over real numbers; and the
domain $\fd$ which supplies arithmetic and finite domain constraints
over integers. Practical applications, however, often involve more
than one
`pure' domain, and sometimes problem solutions have  to be
artificially adapted to fit a particular choice of domain and
solver.

Combining decision procedures for theories is a well-known problem,
thoroughly investigated since the seminal paper of Nelson and Oppen
\cite{NO79}. In constraint programming, however, the emphasis is
placed in computing answers by the interaction of constraint solvers
with user given programs, rather than in deciding satisfiability of
formulas. The cooperative combination of constraint domains and
solvers has evolved during the last decade as a relevant research
issue that is raising an increasing interest in the $CLP$ community.
Here we mention
\cite{baader+:combination-cp95,benhamou:heterogeneous-cs-alp96,monfroy:phd96,M98,granvilliers+:coperative-solvers-short-intro-alpnewsletter2001,MIS99b,Hofstedt:phd-thesis-2001,DBLP:conf/advis/MonfroyC04,HP07}
as a limited selection of references illustrating various approaches
to the problem. An important idea emerging from the research in this
area is that of `hybrid' constraint domain, built as a combination
of simpler `pure' domains  and designed to support the cooperation
of its components, so that more declarative and efficient solutions
for practical problems can be promoted.

\subsection{Aims of this paper} \label{aims}

% First objective of the paper: Cooperative computation model.

The first aim of this paper is to present a computational model for
the cooperation of constraint domains in the $CFLP$ context, where
sophisticated functional programming features such as higher-order
functions and lazy evaluation must collaborate with constraint
solving. Our computational model is based on the $CFLP$ scheme and
goal solving calculus recently proposed in \cite{LMR04,LRV07}, which
will be enriched with new mechanisms for modeling the intended
cooperation. Moreover, we rely on the domain cooperation techniques proposed in our previous papers
\cite{estevez+:prole06,DBLP:journals/entcs/MartinFHRV07,estevez+:ppdp08},
 where we have introduced so-called {\em bridges} as a key tool for
communicating constraints between different domains.

Bridges are constraints of the form {\tt X \#==$_{d_i, d_j}$Y} which
relate the values of two variables {\tt X :: d$_i$, Y :: d$_j$} of
different base types, requiring them to be equivalent. For instance,
{\tt X \#==$_{int, real}$ Y} (abbreviated as {\tt X \#== Y} in the
rest of the paper) constrains the real variable {\tt Y :: real} to
take an integral real value equivalent to that of the integer
variable {\tt X :: int}. Note that the two types {\tt int} and {\tt
real} are kept distinct and their respective values are not
confused.

Our cooperative  computation model keeps different stores for
constraints corres\-ponding to various domains and solvers. In
addition, there is a special store where the bridge constraints
which arise during the computation are placed. A bridge constraint
{\tt X \#== Y} available in the bridge store can be used to {\em
project} constraints involving the variable {\tt X} into constraints
involving the variable {\tt Y}, or vice versa. For instance,  the
$\rdom$ constraint {\tt RX <= 3.4} (based on the inequality
primitive {\tt <=} -- `less or equal' -- for the type {\tt real})
can be projected into the $\fd$ constraint {\tt X \#<= 3} (based on
the inequality primitive  {\tt \#<=} -- `less or equal' -- for the
type {\tt int}) in case that the bridge {\tt X \#== RX} is
available. Projected constraints are submitted to their
corresponding store, with the aim of improving the performance of
the corresponding solver. In this way, projections behave as an
important cooperation mechanism, enabling certain solvers to profit
from (the projected forms) of constraints originally intended for
other solvers.

We have borrowed the idea of constraint projection from the work
of P. Hofstedt  et al. \cite{hofstedt:tigher-cooperation-cl00,DBLP:conf/cp/Hofstedt00,Hofstedt:phd-thesis-2001,HP07}, adapting it
to our $CFLP$ scheme and adding bridge constraints as a novel
technique which makes projections more flexible and compatible
with type discipline. In order to formalize our computation model,
we present a construction of  {\em coordination domains} $\ccdom$
as a special kind of `hybrid' domains built as a combination of
various `pure' domains intended to cooperate. In addition to the
specific constraints supplied by its various components,
coordination constraints also supply bridge constraints. As
particular case of practical interest, we present a coordination
domain $\ccdom$ tailored to the cooperation of the three pure
domains $\herbrand$, $\rdom$ and $\fd$.

Building upon the notion of coordination domain,
we also present a formal goal solving calculus called $\cclnc{\ccdom}$
(standing for Cooperative Constraint Lazy Narrowing Calculus over $\ccdom$)
which is sound and complete with respect to the instance
$\cflp{\ccdom}$ of the generic $CFLP$ scheme.
$\cclnc{\ccdom}$ embodies computation rules for creating bridges,
invoking constraint solvers, and performing constraint projections
as well as other more ad hoc operations for communication
among different constraint stores.
Moreover, $\cclnc{\ccdom}$ uses {\em lazy narrowing}
(a combination of lazy evaluation and unification)
for processing calls to program defined functions,
ensuring that function calls are evaluated only as far as demanded by the resolution
of the constraints involved in the current goal.

% Other objectives of the paper: Implemented system and experimental results.

A second objective of the paper is to describe the implementation of a $CFLP$ system
which supports
the cooperation of solvers via bridges and projections for
the Herbrand domain $\herbrand$ and the two
numeric domains $\rdom$ and $\fd$,
following the computational model provided by the $\cclnc{\ccdom}$ goal solving calculus.
The implementation follows the techniques summarized in our previous papers
\cite{ciclops'07,estevez+:esop08}.
It has  been developed on top of the $\toy$ system \cite{toyreport},
which is in turn implemented on top of SICStus Prolog \cite{SP}.
The $\toy$ system already supported non-cooperative $CFLP$ programming
using the $\fd$ and $\rdom$ solvers provided  by
SICStus along with Prolog code for the $\herbrand$ solver.
This former system has been extended, including a store for bridges and
implementing mechanisms for computing bridges and projections according
to the $\cclnc{\ccdom}$ computation model.

Last but not  least, another important aim of the paper is to provide some evidence
on the  practical use and performance of our implementation. To this  purpose, we
present some illustrative examples and a set of benchmarks tailored to test the
performance of $\cclnc{\ccdom}$ as implemented in $\toy$ in comparison with the closest
related system we are aware of, namely the META-S tool
\cite{DBLP:conf/ki/FrankHM03,DBLP:conf/flairs/FrankHM03,DBLP:conf/wlp/FrankHR05}
which implements Hofstedt's framework for solver cooperation \cite{HP07}.
The experimental results we have obtained are quite encouraging.

% Novelties w.r.t. to our previous publications.

The present paper thoroughly revises, expands  and elaborates
our previous related  publications in many respects.
In fact,  \cite{estevez+:prole06} was a very preliminary work which focused on
presenting bridges and providing evidence for their usefulness.
Building upon these ideas, \cite{DBLP:journals/entcs/MartinFHRV07}
introduced coordination domains and a cooperative goal solving
calculus over an arbitrary coordination domain,
proving local soundness and completeness results,
while \cite{estevez+:ppdp08} further elaborated the cooperative goal solving calculus,
providing stronger soundness and completeness results and experimental data on an
implementation tailored to the cooperation of the domains $\herbrand$, $\fd$ and $\rdom$.
Significant novelties in this article include:
technical improvements in the formalization of domains;
a new notion of solver taking care of critical variables and well-typed solutions;
a new notion of domain-specific constraint to clarify the behaviour of coordination domains;
various elaborations in the cooperative goal solving transformations needed to deal with critical variables and domain-specific constraints;
a more detailed presentation of the implementation results previously reported in \cite{ciclops'07,estevez+:esop08,estevez+:ppdp08};
and quite extensive comparisons to other related approaches.

\subsection{Motivating Examples} \label{examples}

As a motivation for the rest of the paper, we present in this subsection a few simple examples,
intended to  illustrate the different cooperation mechanisms that are supported by
the  computation model $\cclnc{\ccdom}$, as well as the benefits resulting from the cooperation.

% A program for computing intersections of grids and regions.

To start with, we present  a small program written in $\toy$ syntax,
which solves the problem of searching for a $2D$ point lying in the
intersection of a discrete grid and a continuous region. The program
includes type declarations, equations for defining functions and
clauses for defining predicates. Type declarations are similar to
those used in functional languages such as Haskell \cite{haskell}.
Function applications use {\em curried notation},  also typical of
Haskell and other higher-order functional languages. The equations
used to define functions must be understood as conditional rewrite
rules of the form $\sfrule$, whose condition $\Delta$ is a
conjunction of constraints. Predicates are viewed as Boolean
functions, and clauses are understood as an abbreviation of
conditional rewrite rules of the form $\sprule$, whose righthand
side is the Boolean constant $true$. Moreover, conditions consisting
of a Boolean expression $exp$ are understood as an abbreviation of
the {\em strict equality} constraint $exp$ {\tt ==} $true$, using
the strict equality operator {\tt ==} which is a primitive operation
supplied by the Herbrand domain $\herbrand$. The program's text is
as follows:\\

{\tt \% Discrete versus continuous points:

\underline{type} dPoint = (int, int)\\
\hspace*{0.375cm}\underline{type} cPoint = (real, real)\\

\% Sets and membership:  \hspace*{1.cm}

\underline{type} setOf A = A -> bool

isIn :: setOf A -> A -> bool

isIn Set Element = Set Element\\

\% Grids and regions as sets of points:

\underline{type} grid = setOf  dPoint\\
\hspace*{0.375cm}\underline{type} region = setOf  cPoint\\
}

{\tt \% Predicate for computing intersections of regions and grids:

bothIn :: region -> grid -> dPoint -> bool

bothIn Region Grid (X, Y) :- X \#== RX, Y \#== RY,

\hspace*{1.0cm}isIn Region (RX, RY), isIn Grid (X,Y), labeling [~] [X,Y] \\
}

{\tt \% Square grid (discrete):

square :: int -> grid

square N (X,Y) :- domain [X,Y] 0 N \\
}

{\tt \% Triangular region (continuous):

triangle :: cPoint -> real  -> real -> region

triangle (RX0,RY0)  B H (RX,RY) :-

\hspace*{2.cm} RY >= RY0 - H,

\hspace*{2.cm} B * RY - 2 * H * RX <= B * RY0 - 2 * H * RX0,

\hspace*{2.cm} B * RY + 2 * H * RX <= B * RY0 + 2 * H * RX0\\ \\
}

{\tt \% Diagonal segment (discrete):

diagonal :: int -> grid

diagonal N (X,Y) :- domain [X,Y] 0 N, X == Y  \\
}

{\tt \% Parabolic line (continuous):

parabola :: cPoint ->  region

parabola (RX0,RY0)  (RX,RY) :- RY - RY0 == (RX - RX0) * (RX - RX0) \\
}

\noindent Because of all the conventions explained above, the clause
for the {\tt bothIn} predicate
included in the program must be understood as an abbreviation of the rewrite rule \\

{\tt
\hspace*{0.5cm} bothIn Region Grid (X,Y) -> true <==

\hspace*{2.cm} X \#== RX, Y \#== RY,

\hspace*{2.cm} isIn Region (RX,RY) == true, isIn Grid (X,Y) == true,

\hspace*{2.cm} labeling [~] [X,Y]
} \\

\noindent whose condition includes two bridge constraints, two
strict equality constraints provided by the domain $\herbrand$, and
a last constraint using the {\tt labeling} primitive supplied by the
domain $\fd$. The other clauses and equations in the program can be
analo\-gously understood as conditional rewrite rules whose
conditions are constraints supported by some of the three domains
$\herbrand$, $\rdom$ or $\fd$.

Let us now discuss the intended meaning of the program.
The {\tt bothIn} predicate is intended to check if a given discrete point {\tt (X,Y)} belongs to the
intersection of the continuous region {\tt Region} and the discrete grid {\tt Grid}
given as parameters, and the constraints occurring as conditions are designed to
this purpose. More precisely, the two bridge constraints {\tt X \#== RX, Y \#== RY}
ensure that the discrete point {\tt (X,Y)} and the continuous point {\tt (RX,RY)} are equivalent;
the two strict equality constraints  {\tt isIn Region (RX, RY) == true, isIn Grid (X,Y) == true}
ensure membership to {\tt Region} and {\tt Grid}, respectively;
and finally the $\fd$ constraint {\tt labeling [~] [X,Y]}
ensures that the variables {\tt X} and {\tt Y} are bound to integer values.

\vspace*{0.75cm}
\begin{figure}[h]
\begin{center}
\vspace*{-0.4cm}
\includegraphics[scale=0.3,angle=0]{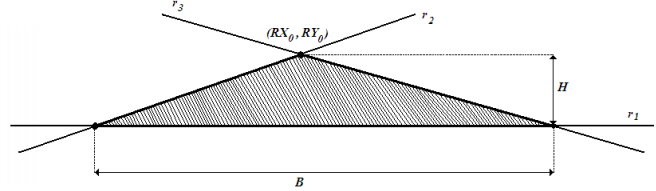}
\caption{Triangular region}\label{triangle}
\end{center}
\end{figure}

Note that both grids and regions are represented as sets,
represented themselves as Boolean functions. They can be  passed as
parameters because our programming framework supports higher-order
programming features. The program also defines two functions {\tt
square} and {\tt triangle}, intended to compute representations of
square grids and triangular regions, respectively. Let us discuss
them in turn. We first note that the type declaration for {\tt
triangle} can be written in the  equivalent form {\tt triangle ::
cPoint -> real  -> real -> (cPoint -> bool)}. A function call of the form
{\tt triangle (RX0,RY0)  B H} is intended to return a Boolean function
representing the region of all continuous $2D$ points lying within
the isosceles triangle with upper vertex {\tt (RX0,RY0)},
base {\tt B} and height {\tt H}. Applying this Boolean function
to the argument {\tt (RX,RY)} yields a function call written as
{\tt triangle (RX0,RY0)  B H (RX,RY)} and expected to
return {\tt true} in case that {\tt (RX,RY)} lies within the
intended isosceles triangle, whose three vertices are
{\tt (RX0,RY0)}, {\tt (RX0-B/2,RY0-H)} and {\tt (RX0+B/2,RY0-H)}.
The three sides of the triangle are mathematically characterized by the equations
 {\tt RY = RY0-H},
 {\tt B*RY-2*H*RX = B*RY0-2*H*RX0} and
 {\tt B*RY+2*H*RX = B*RY0+2*H*RX0}
 (corresponding to the lines $r_1$, $r_2$ and $r_3$ in Fig.
\ref{triangle}, respectively). Therefore, the conjunction of three
linear inequality $\rdom$ constraints occurring as conditions in the
clause for {\tt triangle} succeeds for those points {\tt (RX,RY)}
lying within the intended triangle.

Similarly, the type declaration for {\tt square} can be written in the  equivalent form
{\tt square :: int  -> (dPoint -> bool)},
and a function call of the form  {\tt square N} is intended to
return a Boolean function representing the grid of all discrete $2D$ points
with coordinates belonging to the interval of integers
between {\tt 0} and {\tt N}. Therefore, a function call of the form
{\tt square N (X,Y)} must return {\tt true} in case that {\tt (X,Y)} lies within
the intended grid, and for this reason the single $\fd$ constraint placed as condition in the
clause for {\tt square} has been chosen to impose the interval of integers
between {\tt 0} and {\tt N} as the domain of possible values for the variables
{\tt X} and {\tt Y}.

Finally, the last two functions {\tt diagonal} and {\tt parabola} are defined in
such a way that {\tt diagonal N}  returns a Boolean function
representing the diagonal of the grid represented by {\tt square N},
while {\tt parabola (RX0,RY0)} returns a Boolean function
representing the parabola whose equation is  {\tt RY-RY0 =
(RX-RX0)*(RX-RX0)}. The type declarations and clauses for these
functions can be understood similarly to the case of {\tt square}
and {\tt triangle}.

Different {\em goals} can be posed and solved using the small
program just described and the cooperative goal solving calculus
$\cclnc{\ccdom}$ as implemented in the $\toy$ system. For the sake
of discussing some of them, assume two fixed positive integer values
{\tt d} and {\tt n} such that {\tt n = 2*d}. Then {\tt (d,d)} is the
middle point of the grid {\tt (square n)}, which includes  {\tt
(n+1)$^2$} discrete points. The three following goals ask for points
in the intersection of this fixed square grid with three different
triangular regions:

\begin{itemize}
\item {\bf Goal 1}: {\tt bothIn (triangle (d, d+0.75)  n 0.5) (square n) (X,Y)}.\\
This goal  fails.
\item {\bf Goal 2}: {\tt bothIn (triangle (d, d+0.5)  2 1) (square n) (X,Y)}.\\
This goal computes one solution for {\tt (X,Y)}, corresponding to
the point  {\tt (d,d)}.
\item {\bf Goal 3}: {\tt bothIn (triangle (d, d+0.5)  (2*n) 1) (square n) (X,Y)}.\\
This goal computes {\tt n+1} solutions  for {\tt (X,Y)}, corresponding to
the points {\tt  (0,d), (1,d), \ldots, (n,d)}.
\end{itemize}

These three goals are illustrated in Fig. \ref{goals} for the
particular  case {\tt n = 4} and {\tt d = 2}, although the shapes
and positions of the three triangles with respect to the middle
point of the grid would be the same for any even positive integer
{\tt n = 2*d}. The expected solutions for each of the three goals
are clear from the figures.

\begin{figure}[h]
\begin{center}
\vspace*{0.2cm}
\hspace*{-0.275cm}\includegraphics[scale=0.345,angle=0]{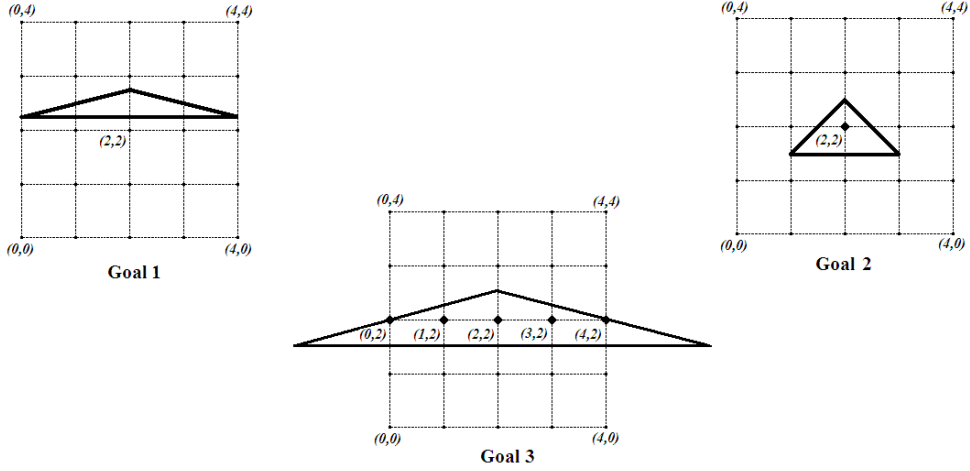}
\end{center}
\caption{Intersection of a fixed square grid with three different
triangular regions}\label{goals}
\end{figure}

 In the three cases, cooperation between
the $\rdom$ solver and the $\fd$ solver is crucial for the
efficiency of the computation. In the case of {\bf Goal 2},
cooperative goal solving implemented in $\toy$ according to the
$\cclnc{\ccdom}$ computation model uses the clauses in the program
and eventually reduces the problem of solving the goal to the
problem of solving a constraint system that, suitably simplified,
becomes:\\

\hspace*{1.5cm}
{\tt X \#== RX, Y \#== RY,}

\hspace*{1.5cm}
{\tt RY >= d-0.5, RY-RX <= 0.5, RY+RX <= n+0.5,}

\hspace*{1.5cm} {\tt domain [X,Y] 0 n, labeling [\,] [X,Y].}
\vspace*{0.25cm}

 \noindent
The $\toy$ system has the option to enable or  disable the computation of projections.
When projections are disabled, the two bridges do still work as constraints,
and the last $\fd$ constraint {\tt labeling [\,] [X,Y]} forces the enumeration of all
possible values for {\tt X} and {\tt Y} within their domains,
eventually finding the unique solution {\tt X = Y = d} after $\mathcal{O}$({\tt n}$^2$) steps.
When projections are enabled, the available bridges are used to project the $\rdom$ constraints
{\tt RY >= d-0.5, RY-RX <= 0.5, RY+RX <= n+0.5} into the $\fd$ constraints
{\tt Y \#>= d, Y\#-X \#<= 0, Y\#+X \#<= n}. Since {\tt n = 2*d}, the only possible solution of these inequalities  is {\tt X = Y = d}. Therefore, the $\fd$ solver drastically prunes the domains of {\tt X} 
and {\tt Y} to the singleton set {\tt \{d\}}, and solving the last labeling constraint leads to the unique solution with no effort. For a big value of  {\tt n = 2*d} the performance of the computation is greatly enhanced in comparison to the case where projections
are disabled, as  confirmed by the experimental results given in Subsection~\ref{subsect:TOY}.
The expected speed-up in execution time corresponds to the improvement
from the $\mathcal{O}({\tt n}^2)$ steps needed to execute the labeling constraint {\tt  labeling [\,] [X,Y]}
when the domains of both {\tt X} and {\tt Y} have size $\mathcal{O}({\tt n})$,
to the  $\mathcal{O}(1)$ steps needed to execute the same constraint when the domains
of both {\tt X} and {\tt Y} have been pruned to size $\mathcal{O}(1)$.
Similar speedups are observed  when solving
 {\bf Goal 1} (which finitely fails, and where the expected execution time also improves
 from  $\mathcal{O}({\tt n}^2)$ to $\mathcal{O}(1)$)
 and  {\bf Goal 3} (which has just {\tt n+1} solutions, and where the expected execution time reduces
 from  $\mathcal{O}({\tt n}^2)$ to $\mathcal{O}({\tt n})$).

\noindent The three goals just discussed mainly illustrate the
benefits obtained by the $\fd$ solver from the projection of $\rdom$
constraints. In fact, when $\toy$ solves these three goals according
to the cooperative computation model $\cclnc{\ccdom}$, the available
bridge constraints also allow to project the $\fd$ constraint {\tt
domain [X,Y] 0 n} into the conjunction of the $\rdom$ constraints
{\tt 0 <= RX, RX <= n, 0 <= RY, RY <= n}. These constraints,
however, are not helpful for optimizing the resolution of the
previously computed $\rdom$ constraints {\tt RY >= d-0.5, RY-RX <=
0.5, RY+RX <= n+0.5}.

In general, it seems easier for the $\fd$ solver to profit from the projection of $\rdom$ constraints than the other way round.
This is because the solution of many practical problems is arranged to finish with solving $\fd$ labeling constraints,
which means enumerating values for integer variables, and this process can greatly benefit from a
reduction of the variables' domains due to previous projections of $\rdom$ constraints.
However, the projection of $\fd$ constraints into $\rdom$ constraints can help to define the intended solutions
even if the performance of the $\rdom$ solver does not improve. For instance, assume that  the value chosen for {\tt n = 2*d} is big, and consider the goal

\begin{itemize}
\item {\bf Goal 4}: {\tt bothIn (triangle (d,d)  n d) (square 4) (X,Y)}.
\end{itemize}
\vspace*{-0.15cm}

\noindent whose resolution eventually reduces to the problem of
solving a constraint system that, suitably simplified, becomes:
\vspace*{0.15cm}

\hspace*{1.5cm}
{\tt X \#== RX, Y \#== RY,}

\hspace*{1.5cm}
{\tt RY >= 0, RY-RX <= 0, RY+RX <= n,}

\hspace*{1.5cm} {\tt domain [X,Y] 0 4, labeling [\,] [X,Y].}
\vspace*{0.1cm}

\noindent The solutions correspond to the points lying in the
 intersection of a big isosceles triangle and a tiny square grid.
 Projecting  {\tt RY >= 0, RY-RX <= 0, RY+RX <= n}
into $\fd$ constraints via the two bridges {\tt X \#== RX, Y \#== RY}
brings no significant gains to the $\rdom$ solver whose task is
anyhow trivial. The $\rdom$ constraints projected from
 {\tt domain [X,Y] 0 4} (i.e., {\tt 0 <= RX, RX <= 4, 0 <= RY, RY <= 4})
do not improve the performance of the $\rdom$
solver either, but they help to define the intended solutions. In
this example, the last labeling constraint eventually
enumerates the right solutions even if the projection of the domain
constraint to $\rdom$ does not take place, but this
projection would allow the $\rdom$ solver to compute suitable
constraints as solutions in case that the labeling constraint
were removed.
\vspace*{0.1cm}

There are also some cases where the performance of the $\rdom$
solver can benefit from the cooperation with the  $\fd$ domain.
Consider for instance the goal

\begin{itemize}
\item {\bf Goal 5}: {\tt bothIn (parabola (2,0)) (diagonal 4) (X,Y)}.
\end{itemize}
\vspace*{-0.15cm}

\noindent asking for points in the intersection of the discrete
diagonal segment of size 4 and a parabola with vertix {\tt (2,0)}
(see Fig. \ref{parabola}). Solving this goal eventually reduces to
solving a constraint system that, suitably simplified,
becomes:\vspace*{0.15cm}

\hspace*{1.5cm} {\tt X \#== RX, Y \#== RY,}

\hspace*{1.5cm} {\tt RY == (RX-2)*(RX-2),}

\hspace*{1.5cm} {\tt domain [X,Y] 0 4, X == Y, labeling [\,]
[X,Y].}\vspace*{0.1cm}

\begin{figure}[h]
\begin{center}
\hspace*{1.0cm}\includegraphics[scale=0.4,angle=0]{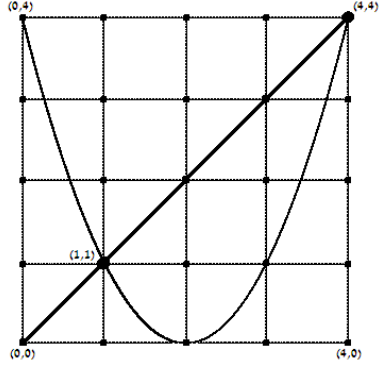}
\end{center}
\caption{Intersection of a parabolic line and a diagonal segment}\label{parabola}
\end{figure}

\noindent Cooperative goal solving as implemented in $\toy$
processes the constraints within the current goal in left-to right
order, performing projections whenever possible, and sometimes
delaying a constraint that cannot be managed by the available
solvers. In this case, the quadratic $\rdom$ constraint {\tt RY ==
(RX-2)*(RX-2)} is delayed because the $\rdom$ solver used by $\toy$
(inherited form SICStus Prolog) cannot solve non-linear constraints.
However, since this strict equality relates expressions of type {\tt
real}, it is accepted as a $\rdom$ constraint and projected via the
available bridges, producing  the $\fd$ constraint {\tt Y == (X-2)*(X-2)},
which is submitted to the $\fd$ solver.
Next, projecting the $\fd$ constraint {\tt domain [X,Y] 0 4} and
solving {\tt X == Y} causes the $\rdom$ constraints {\tt 0 <= RX, RX
<= 4, 0 <= RY, RY <= 4} to be submitted to the $\rdom$ solver, and
the variable {\tt X} to be substituted in place of {\tt Y} all over
the goal. The bridges  {\tt X \#== RX, Y \#== RY} become then {\tt X
\#== RX, X \#== RY}, and the labeling constraint becomes {\tt
labeling [\,] [X,X]}. An especial mechanism called {\em bridge
unification} infers from the two bridges {\tt X \#== RX, X \#== RY}
the strict equality constraint {\tt RX == RY}, which is solved by
substituting {\tt RX} for {\tt RY} all over the current goal. At
this point, the delayed $\rdom$ constraint becomes {\tt RX ==
(RX-2)*(RX-2)}. Finally, the $\fd$ constraint {\tt labeling [\,]
[X,X]} is solved by enumerating all the possible values for {\tt X}
allowed by its domain, and continuing a different alternative
computation with each of them. Due to the bridge {\tt X \#== RX},
each integer value {\tt v} assigned to {\tt X} by the labeling
process causes the variable {\tt RX} to be bound to the
integral real number {\tt rv}  equivalent to {\tt v} (in our computation model,
this is part of the behaviour of a solver in charge of solving
bridge constraints). The binding of {\tt RX} to {\tt rv} awakens the
delayed constraint {\tt RX == (RX-2)*(RX-2)}, which becomes the
linear (and even ground) constraint {\tt rv == (rv-2)*(rv-2)} and
succeeds if {\tt rv} is an integral solution of the delayed
quadratic equation. In this way, the two solutions of {\bf Goal 5}
are eventually computed, corresponding to the two points {\tt (X,Y)}
lying in the intersection of the parabolic line and the diagonal
segment: {\tt  (1,1)} and {\tt (4,4)}, as seen in Fig. \ref{parabola}.

All the computations described in this subsection can be actually executed in the $\toy$ system and also
formally represented in the cooperative goal solving calculus $\cclnc{\ccdom}$.
The formal representation of goal solving computations in $\cclnc{\ccdom}$ performs quite many detailed intermediate steps.
In particular, constraints are transformed into a flattened form (without nested calls to primitive functions) before
performing projections, and especial mechanisms for creating new bridges in some intermediate steps
are provided. Detailed explanations and examples are given in Section \ref{cooperative}.
\vspace*{-.3cm}

\subsection{Structure of the Paper} \label{structure}

% Organization of this paper.

To finish the introduction, we summarize the organization of the rest of the paper.
Section \ref{coordination} starts by presenting the main features of the $CFLP$ scheme, including
a mathematical formalization of constraint domains and solvers. The presentation follows
\cite{LRV07}, adding an explicit consideration of type discipline and an improved presentation of
constraint domains, solvers and their formal properties. The rest of the section is new with
respect to previous presentations of $CFLP$ schemes: it discusses bridge constraints and the
construction of coordination domains, concluding with a presentation of a particular coordination
domain $\ccdom$ tailored to the cooperation of the domains $\herbrand$, $\rdom$ and $\fd$.
In the subsequent sections, $\ccdom$ always refers to this particular coordination domain.

Section \ref{cooperative} presents our proposal of a computational
model for cooperative programming and goal solving in
$\cflp{\ccdom}$. Programs and goals are introduced, the cooperative
goal solving calculus $\cclnc{\ccdom}$ is discussed in detail, and
its main formal properties (namely {\em soundness} and {\em limited
completeness} w.r.t. the declarative semantics of $\cflp{\ccdom}$ provided by the
$CFLP$ scheme) are presented.

Section \ref{implementation} sketches the implementation of the
$\cclnc{\ccdom}$ computational model on top of the $\toy$ system
\cite{toyreport}, which is itself implemented on top of SICStus
Prolog \cite{SP}. The architectural components of the current $\toy$ system are described 
and the extensions of $\toy$ responsible for the treatment of bridges
and projections according to the formal model provided by the previous section
are briefly discussed.

Section \ref{performance} discusses the practical use of the $\toy$
system for solving problems involving the cooperation of the domains
$\herbrand$, $\rdom$ and $\fd$.
A significant set of benchmarks is analyzed in order to study
how the cooperation mechanisms affect to the performance of the system,
and a detailed comparison to the performance of the META-S tool
is also presented.

Section \ref{relatedwork} is devoted to a discussion of related
work, trying to give an overview of different approaches in the
area of cooperative constraint solving. Section \ref{conclusions}
summarizes the main results of the paper, points out to some
limitations of the current  $\toy$ implementation, and presents a
brief outline of planned  future work.

The results reported in this paper are supported by the experimental results presented in 
Section \ref{performance} and a number of mathematical proofs, most of which have been 
collected in the Appendices \ref{pSolvCdom} and \ref{PropertiesCalculus}. In the case of reasonings concerning type discipline, we have refrained from providing full details,
that would be technically tedious and distract from the main emphasis of the paper.
More detailed proofs could be worked out, if desired, by adapting the techniques from \cite{GHR01}.
\vspace*{-.3cm}
             % \label{introduction}
%%%%%%%%%%%%%%%%%%%%%%%%%%%%%%%%%%%%%%%%%%%%%%%%%%%%%%%%%%%%%%%%%%%%%%%%%%%%%%%
%
% 2-COORDINATION
%
%%%%%%%%%%%%%%%%%%%%%%%%%%%%%%%%%%%%%%%%%%%%%%%%%%%%%%%%%%%%%%%%%%%%%%%%%%%%%%%

\section{Coordination of Constraint Domains in the $CFLP$ Scheme} \label{coordination}

The scheme presented in \cite{LRV07}  serves as a logical
and semantic framework for lazy Constraint Functional Logic
Programming (briefly $CFLP$) over a parametrically given
constraint domain. The aim of this
section  is to model the coordination of several constraint
domains with their respective solvers using instances
$\cflp{\ccdom}$ of the  $CFLP$ scheme, where  $\ccdom$ is a
so-called {\em coordination domain} built as a suitable
combination of the various domains intended to cooperate. We use an
enhanced version of the  $CFLP$ scheme, extending \cite{LRV07}
with an explicit treatment of a polymorphic type discipline in the
style of Hindley-Milner-Damas and an improved presentation of
constraint domains, solvers and their formal properties. In this
setting, we discuss the three
`pure' constraint domains $\herbrand$, $\rdom$ and $\fd$ along
with their solvers. Next, we present bridge constraints and  the
construction of coordination domains, concluding with the
construction of a particular coordination domain $\ccdom$ tailored
to the cooperation of the domains $\herbrand$, $\rdom$ and $\fd$,
which is the topic of the rest of the paper.

\subsection{Signatures and Types} \label{types}

% Universal Signature and Domain Specific signatures

We assume a  {\em universal signature} $\Omega = \langle TC,\ BT,\ DC,\ DF,\ PF \rangle$
consisting of five pairwise disjoint sets of symbols, where

\begin{itemize}
\item
$TC = \bigcup_{n \in {\mathbb N}} TC^n$ is a family of countable and mutually disjoint sets
of {\em type constructors}, indexed by arities.
\item
$BT$ is a set of  {\em base types}.
\item
$DC = \bigcup_{n \in {\mathbb N}} DC^n$ is a family of countable and mutually disjoint sets
of {\em data constructors}, indexed by arities.
\item
$DF = \bigcup_{n \in {\mathbb N}} DF^n$ is a family of countable and mutually disjoint sets
of {\em defined function symbols}, indexed by arities.
\item
$PF = \bigcup_{n \in {\mathbb N}} PF^n$ is a family of countable and
mutually disjoint sets of {\em primi\-tive function symbols},
indexed by arities.
\end{itemize}

The idea is that base types and primitive function symbols are related to specific constraint domains,
while type constructors, data constructors and defined function symbols are related to user given programs. For each choice of a specific family of base types $SBT \subseteq BT$
and a specific family of primitive function symbols $SPF \subseteq PF$, we will say that
$\Sigma = \langle TC,\ SBT,\ DC,\ DF,\ SPF \rangle$ is a {\em domain specific signature}.
Note that any domain specific signature $\Sigma$ inherits all the type constructors, data constructors and defined function symbols from the universal signature $\Omega$, since different programs over a given constraint domain of signature $\Sigma$ might use them. All symbols belonging to the family
$DC \cup DF \cup SPF$ are collectively called {\em function symbols}.

% Types

All along the paper we will work with a static type discipline based
on the Hindley-Milner-Damas type system
\cite{hindley69,Mil78,damas+:types}. A detailed study of polymorphic
type discipline in the context of Functional Logic Programming
(without constraints) can be found in \cite{GHR01}. In the sequel we
assume a countably
infinite set  $\tvar$ of {\em type variables}. {\em Types} $\tau$
$\in$ $Type_\Sigma$ have the syntax
 $\tau ::= A \mid d \mid (c_t~\tau_{1}\ldots\tau_{n}) \mid (\tau_1, \dots, \tau_n) \mid (\tau_1 \to \tau_0)$,
where $A \in \tvar$, $d \in SBT$ and $c_t \in TC^n$.
By convention, parenthesis are omitted when there is no ambiguity,
$c_t~\overline{\tau}_{n}$ abbrevia\-tes $c_t~\tau_{1} \ldots\tau_{n}$, and
``$\to$'' associates to the right,
$\overline{\tau}_{n}\to \tau$ abbrevia\-tes $\tau_{1} \to \cdots \to \tau_{n} \to \tau$.
Types $c_t~\overline{\tau}_{n}$, $(\tau_1, \dots, \tau_n)$ and $\tau_1 \to \tau_0$
are used to represent constructed values, tuples and functions, respectively.
A type without any occurrence of ``$\to$'' is called a {\em datatype}.

% Type substitutions, type instances

{\em Type substitutions} $\sigma_t, \theta_t \in TSub_\Sigma$ are mappings
from $\tvar$ into $Type_\Sigma$, extended to mappings from
$Type_\Sigma$ into $Type_\Sigma$ in the natural way. By convention,
we write $\tau \sigma_t$ ins\-tead of $\sigma_t(\tau)$ for any type $\tau$.
Whenever $\tau' = \tau \sigma_t$ for some $\sigma_t$, we say that $\tau'$ is an
{\em instance} of $\tau$ (or also that $\tau$ is more general than $\tau'$)
 and we write $\tau \preceq \tau'$.

The set of type variables occurring in $\tau$ is written $tvar(\tau)$.
A type $\tau$  is called {\em monomorphic} iff $tvar(\tau) = \emptyset$, and {\em polymorphic} otherwise.
A polymorphic type $\tau$  must be understood as representing all its possible monomorphic instances $\tau'$.

% Type declarations in signatures

Function symbols in any signature $\Sigma$ are required to come along with a so-called
{\em principal type declaration}, which indicates its most general type.
More precisely,

\begin{itemize}
\item
Each  $n$-ary $c \in DC^n$ must have attached a principal type declaration of the form
$c\,::\,\overline{\tau}_n \to c_t\, \overline{A}_k$, where $n, k \geq 0$,
$A_{1},\ldots,A_{k}$ are pairwise different type variables, $c_t \in TC^k$,
$\tau_{1},\ldots,\tau_{n}$ are datatypes, and
$\bigcup_{i = 1}^{n} tvar(\tau_{i}) \subseteq \{A_{1},\ldots, A_{k}\}$
(so-called \emph{transparency property}).
\item
Each $n$-ary $f \in DF^n$ must have attached a principal type declaration of the form
$f\,::\,\overline{\tau}_n \to \tau$, where $\tau_{i}\, (1 \leq i \leq n)$ and $\tau$  are arbitrary types.
\item
Each $n$-ary $p \in SPF^n$ must have attached a principal type declaration of the form
$p\,::\,\overline{\tau}_n \to \tau$, where $\tau_1$, $\ldots$, $\tau_n$, $\tau$ are datatypes
and $\tau$ is not a type variable.
\end{itemize}

% Special symbols

For the sake of semantic considerations, we assume a special data constructor $(\bot\,::\, A)\, \in\, DC^0$,
intended to represent an {\em undefined value} that belongs to any type.
The type and data constructors needed to work with Boolean values and lists are also assumed
to be present in the universal signature $\Omega$.
We also assume that $SPF^2$ includes the polymorphic primitive function symbol 
 {\tt ==\,::\,A -> A -> bool}, that will be written  in infix notation and used to express {\em strict  equality constraints} in those domains where it is available.

% Concrete syntax

In concrete programming languages such as $\mathcal{TOY}$ \cite{toyreport} and Curry 
\cite{curryreport}, data constructors and their principal types are introduced by datatype declarations,
the principal types of defined functions can be either declared or inferred by the compiler,
the principal types of primitive functions are predefined and known to the users,
and $\bot$ does not textually occur in programs.

% Example

\begin{example}[Signatures and Types] \label{typesExample}
In order to illustrate the main notions concerning signatures and types, let us consider
the signature $\Sigma$ underlying the program presented in Subsection \ref{examples}.
There we find:

\begin{itemize}
\item
Two base types {\tt int} and {\tt real} for the integer and real numeric values, respectively.
\item
A nullary type constructor {\tt bool} for the type of Boolean values,
and a  unary type constructor {\tt list} for the type of polymorphic lists.
The concrete syntax for  {\tt list A}  is {\tt [A]}.
\item
{\tt [A]} is a datatype, since it has no occurrences of the type constructor {\tt ->}.
Moreover, it is polymorphic, since it includes a type variable.
Among the instances of {\tt [A]} we can find {\tt [int]} (for lists of integers) and
{\tt [int -> int]}  (for lists of functions of type {\tt int -> int}).
Note that an instance of a datatype must not be a datatype.
\item
Two nullary data constructors
{\tt false, true\,::\,bool} (for Boolean values);
a nullary data constructor {\tt nil\,::\,[A]} (for the empty list);
and a binary data constructor {\tt cons\,::\,A -> [A] -> [A]} (for nonempty  lists).
The concrete syntax for {\tt nil}
(resp. {\tt cons}) is {\tt [\,]} (resp. {\tt :}),
where {\tt :} is intended to be used as an infix operator.
\item
The principal types of the constructors in the previous item can be derived from the datatype declarations \\
\hspace*{0.5cm}  {\tt data bool = false | true} \\
\hspace*{0.5cm}  {\tt data [A] = [\,] | (A : [A])} \\
which are predefined and do not need to be included within programs.
\item
In the program presented in Subsection \ref{examples} there are also {\em type alias} declarations, such as \\
\hspace*{0.5cm}  {\tt type dPoint = (int,int)} \\
\hspace*{0.5cm}  {\tt type setOf A = A -> bool} \\
\hspace*{0.5cm}  {\tt type region = setOf dPoint} \\
Such declarations are just a practical convenience for naming certain types.
They cannot involve recursion, and the names of type alias so introduced are not
considered to belong to the signature.
\item
Defined function symbols of various arities, as e.g. {\tt isIn}, {\tt square} $\in DF^2$.
These two function symbols are binary because the rewrite rules given for them within
the program expect two formal parameters at their left hand sides. In general, rewrite rules
included in programs  for defining the behaviour of symbols $f \in DF^n$ are expected to
have $n$ formal parameters at their left hand sides. In some cases, this $n$ may not
identically correspond to the number of arrows observed in the principal type of $f$.
For instance, although {\tt square} $\in DF^2$, the principal type is
{\tt square\,::\,int -> grid}. The apparent contradiction disappears by noting that {\tt grid} is
declared as a type alias for {\tt (int,int) -> bool}. Since the type constructor {\tt ->} associates
to the right, we have in fact {\tt square\,::\,int -> (int,int) -> bool}.
\item
Primitive function symbols of various arities, as e.g. the binary
primitives {\tt \#==}, {\tt labeling}, {\tt +} and {\tt <=}, and the
ternary primitive {\tt domain}. The concrete syntax requires {\tt
\#==},  {\tt +} and {\tt <=} to be used in infix notation. Each
primitive has a predefined principal type. For instance, {\tt
\#==\,::\,int -> real -> bool}, {\tt +\,::\,real -> real -> real}
and {\tt domain\,::\,[int] -> int -> int -> bool}. These
declarations do not need to be included within programs.
\end{itemize}
\end{example}
\vspace*{-.4cm}

\subsection{Expressions and Substitutions} \label{expressions}

% Base values

For any domain of specific signature $\Sigma$, constraint programming will use expressions
which may have occurrences of certain values of base type. Therefore, in order to
define the syntax of expressions we assume a $SBT$-indexed family
$\mathcal{B} = \{\mathcal{B}_d\}_{d \in SBT}$, where each $\mathcal{B}_d$ is
a non-empty set whose elements are understood as {\em base values} of type $d$.
In the sequel, we will use letters $u, v, \ldots$ to indicate base values.
By an abuse of notation, we will also write $u \in \mathcal{B}$ instead of
$u \in \bigcup_{d \in SBT} \mathcal{B}_d$.

% Expressions

Moreover, we also assume a countable infinite set $\var$ of {\em data variables}
(disjoint from $\mathcal {T\!V}\!\!ar$ and  $\Sigma$), and we define
 {\em applicative $\Sigma$-expressions} $e \in Exp_{\Sigma}({\mathcal{B}})$ over $\mathcal{B}$
with the syntax
$e::= X \mid u \mid h \mid (e_1, \ldots, e_n) \mid (e\,e_1)$, where
$X\in\mathcal V\!\!ar$,  $u \in \mathcal{B}$ and  $h\in DC \cup DF \cup SPF$.

Expressions $ (e_1, \ldots, e_n)$ represent ordered $n$-tuples,
while expressions $(e\, e_{1})$ -- not to be confused with ordered
pairs $(e, e_{1})$ -- stand for the {\em application} of the
function represented by $e$ to the argument represented by $e_{1}$.
Following a usual convention, we assume that application associates
to the left, and we use the notation $(e\, \tpp{e}{n})$ to
abbreviate $(e\, e_{1} \ldots e_{n})$. More generally, parenthesis
can be omitted when there is no ambiguity. Applicative syntax is
common in higher-order functional languages. The usual first-order
syntax for expressions can be translated to applicative syntax by
means of so-called {\em curried notation}. For instance, $f(X,g(Y))$
becomes $(f\, X\, (g\, Y))$.

% Special expressions. Patterns.

Expressions without repeated variable occurrences are called {\em linear},
variable-free expressions are called {\em ground} and expressions without any occurrence of $\bot$
are called {\em total}.  Some particular expressions are intended to represent data values that
do not need to be evaluated. Such expressions are  called
$\Sigma$-patterns $t \in Pat_{\Sigma}({\mathcal{B}})$ over $\mathcal{B}$
and have the syntax
$t ::= X \mid u \mid (t_1, \ldots, t_n) \mid c\, \overline{t}_m \mid f\,\overline{t}_m \mid p\, \overline{t}_m$, where
$X \in \mathcal V\!\!ar$, $u \in \mathcal{B}$,
$c \in DC^n$ for some $m \leq n$,
$f \in DF^n$ for some $n > m$, and
$p \in SPF^n$ for some $n > m$.
The restrictions concerning arities in the last three cases are motivated by the idea that an expression
of the form $h\,\overline{t}_n$ (where $h \in DF^n \cup SPF^n$) is potentially evaluable and therefore not to be viewed as representing data.

The set of all ground patterns over  $\mathcal{B}$ is noted
$GPat_{\Sigma}({\mathcal{B}})$. Sometimes we will write
$\mathcal{U}_{\Sigma}({\mathcal{B}})$ in place of
$GPat_{\Sigma}({\mathcal{B}})$, viewing this set as the {\em
universe of values} over $\mathcal{B}$. The following classification
of expressions is also useful: $(X\, \overline{e}_m)$ (with $X \in
\mathcal V\!\!ar$ and $m \geq 0$) is called a {\em flexible}
expression; while $u \in \mathcal{B}$ and all expressions of the
form $(h~\overline{e}_m)$ (with $h \in DC \cup DF \cup SPF$) are
called {\em rigid}. Moreover, a rigid expression $(h\,
\overline{e}_m)$ is called {\em passive} iff $h \in DF^n \cup SPF^n$
and $m < n$, and {\em active} otherwise. Tuples $(e_1, \ldots, e_n)$
are also considered as passive expressions. The idea is that any
passive expression has the outermost appearance of a pattern,
although it might not be a pattern in case that any of its inner
subexpressions is active.

% Tuples as syntactic sugar

As illustrated by the program presented in Subsection \ref{examples}, tuples are useful for programming and therefore
the tuple syntax is supported by many programming languages, including $\mathcal{TOY}$.
On the other hand, tuples can be treated as a particular case of constructed values, just by
assuming data constructors $tup_n \in DC^n$ in the universal signature and viewing any tuple
$(e_1, \ldots, e_n)$ as syntactic sugar for  $tup_n\, e_1 \ldots e_n$.
For this reason, in the rest of the paper we will omit the explicit mention
to tuples, although we will continue to use them in examples.

% Well-typed expressions

As usual in programming languages that adopt a static type
discipline, all expressions occurring in programs are expected to be
well-typed. Deriving or checking the types of expressions relies on
two kinds of information: Firstly, the principal types of symbols
belonging to the signature, that we assume to be attached to the
signature itself; and secondly, the types of variables occurring in
the expression. In order to represent this second kind of
information, we will use {\em type environments} $\Gamma =
\{X_1\;::\,\tau_1, \ldots, X_n\;::\,\tau_n\}$, representing the
assumption that variable $X_i$ has type $\tau_i$ for all $1 \leq i
\leq n$. Following well-known ideas stemming from the work of
Hindley, Milner and Damas
 \cite{hindley69,Mil78,damas+:types},
it is possible to define type inference rules for deriving {\em type
judgements} of the form $\Sigma,\, \Gamma \vdash_{WT} e :: \tau$
meaning that the assertion  $e :: \tau$ (in words, ``$e$ has type
$\tau$") can be deduced from the type assumptions for symbols resp.
variables given in $\Sigma$ resp. $\Gamma$. The reader is referred
to \cite{GHR01} for a presentation of type inference rules well
suited to Functional Logic Languages without constraints. Adding the
treatment of constraints would be a relatively straightforward task.
An expression $e$ is called {\em well-typed} iff there is some type
environment $\Gamma$ such that $\Sigma,\, \Gamma \vdash_{WT} e ::
\tau$ can be derived for at least one type $\tau$. Although this
$\tau$ is not unique in general, it can be proved that a {\em most
general type} $\tau$ (called the {\em principal type} of $e$ and
unique up to renaming of type variables) can be derived for any
well-typed expression $e$. In practice, principal types of
well-typed expressions can be automatically inferred by compilers.

We will write $\Sigma,\, \Gamma \vdash_{WT} \overline{e}_n :: \overline{\tau}_n$
to indicate that $\Sigma,\, \Gamma \vdash_{WT} e_i :: \tau_i$ can be derived for all $1 \leq i \leq n$,
and $\Sigma,\, \Gamma \vdash_{WT} a :: \tau :: b$
to indicate that both $\Sigma,\, \Gamma \vdash_{WT} a :: \tau$
and  $\Sigma,\, \Gamma \vdash_{WT} b :: \tau$ hold.
An expression $e$ is called {\em well-typed} iff
$\Sigma,\, \Gamma \vdash_{WT} e :: \tau$ can be derived for some type $\tau$ using the underlying signature $\Sigma$ and
some suitable type environment $\Gamma$.
Sometimes we will write simply $e :: \tau$, meaning that
$\Sigma,\, \Gamma \vdash_{WT} e :: \tau$ can be derived using the underlying $\Sigma$ and some proper choice of $\Gamma$
(which can be just $\emptyset$ if $e$ is ground).

% Information ordering and Type Preservation Lemma

For the sake of semantic considerations, it is useful to define an {\em information ordering}
$\sqsubseteq$ over $Exp_{\Sigma}({\mathcal{B}})$, such that $e \sqsubseteq e'$ is
intended to mean that the information provided by $e'$ is greater or equal than the
information provided by  $e$. Mathematically, $\sqsubseteq$ is defined as
the least partial ordering over $Exp_{\Sigma}({\mathcal{B}})$ such that
$\bot \sqsubseteq e'$ for all $e' \in Exp_{\Sigma}({\mathcal{B}})$ and
$(e\, e_1) \sqsubseteq (e'\, e'_1)$ whenever $e \sqsubseteq e'$ and $e_1 \sqsubseteq e'_1$.
For later use, we accept without proof the following lemma.
It is similar to the {\em Typing Monotonicity Lemma} in \cite{GHR01}
and it says that the type of any expression is
also valid for its semantic approximations.
It can be proved thanks to the fact  that
the undefined value $\bot$ belongs to all the types.\\

\begin{lemma} [Type Preservation Lemma] \label{tpl}
Assume that $\Sigma,\, \Gamma \vdash_{WT} e' :: \tau$ and $e \sqsubseteq e'$ hold.
Then $\Sigma,\, \Gamma \vdash_{WT} e :: \tau$ is also true.
\end{lemma}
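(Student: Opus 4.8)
The plan is to prove the Type Preservation Lemma by structural induction on the expression $e$, exploiting the simple fact that $e \sqsubseteq e'$ forces $e$ to be obtained from $e'$ by replacing certain subexpressions with $\bot$, together with the assumption that $\bot :: \tau$ holds for every type $\tau$ (since $\bot \in DC^0$ carries the principal type declaration $\bot :: A$).

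First I would set up the induction by analyzing the shape of $e$ according to the relation $e \sqsubseteq e'$, using the inductive characterization of $\sqsubseteq$ given just before the lemma. There are essentially two cases. If $e = \bot$, then the result is immediate: from the principal type declaration $(\bot :: A) \in DC^0$ one derives $\Sigma,\, \Gamma \vdash_{WT} \bot :: \tau$ for any $\tau$, in particular for the $\tau$ with $\Sigma,\, \Gamma \vdash_{WT} e' :: \tau$. Otherwise $e \neq \bot$, and by the definition of $\sqsubseteq$ the only way $e \sqsubseteq e'$ can hold is that $e$ and $e'$ have the same outermost syntactic form; the cases $e = X$, $e = u$, $e = h$ with $h \in DC \cup DF \cup SPF$ are base cases where $e = e'$, so the conclusion is trivial; and the only genuinely inductive case is $e = (e_0\, e_1)$ with $e' = (e'_0\, e'_1)$, $e_0 \sqsubseteq e'_0$ and $e_1 \sqsubseteq e'_1$. (Recall that by the convention stated in the paper we treat tuples $(e_1,\ldots,e_n)$ as sugar for $tup_n\, e_1 \ldots e_n$, so no separate tuple case is needed.)

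For the application case I would argue as follows. Since $\Sigma,\, \Gamma \vdash_{WT} (e'_0\, e'_1) :: \tau$ is derivable, the last step of its type derivation must be an application rule, which means there is a type $\tau_1$ such that $\Sigma,\, \Gamma \vdash_{WT} e'_0 :: \tau_1 \to \tau$ and $\Sigma,\, \Gamma \vdash_{WT} e'_1 :: \tau_1$. Applying the induction hypothesis to $e_0 \sqsubseteq e'_0$ (with type $\tau_1 \to \tau$) and to $e_1 \sqsubseteq e'_1$ (with type $\tau_1$) yields $\Sigma,\, \Gamma \vdash_{WT} e_0 :: \tau_1 \to \tau$ and $\Sigma,\, \Gamma \vdash_{WT} e_1 :: \tau_1$, and a single application of the application rule then gives $\Sigma,\, \Gamma \vdash_{WT} (e_0\, e_1) :: \tau$, as required.

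The main obstacle is not mathematical depth but the fact that the type inference rules are not spelled out in the excerpt (the paper defers to \cite{GHR01} for a full presentation without constraints, noting that adding constraints is straightforward). Consequently the argument about ``the last rule in the derivation must be the application rule, so the premises have the stated form'' is a routine inversion-of-typing step that we should state cleanly but cannot fully justify here without reproducing the rules; this is precisely why the paper flags that it accepts the lemma without proof and remarks that full details ``would be technically tedious.'' A secondary minor point to handle with care is the polymorphic instantiation at the leaves: a variable $X$ or a symbol $h$ may be typed at an instance of its declared/environment type, but since in the nontrivial subcase $e = e'$ we reuse verbatim the derivation for $e'$, no extra instantiation reasoning is needed — the only place instantiation enters is the $\bot$ case, where the declaration $\bot :: A$ is instantiated to the desired $\tau$.
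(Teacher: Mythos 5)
Your proof is correct and is exactly the argument the paper has in mind: the paper explicitly accepts this lemma without proof, remarking only that it follows from the fact that $\bot$ belongs to every type (and pointing to the analogous Typing Monotonicity Lemma of \cite{GHR01}), and your structural induction with the $\bot :: A$ instantiation at the leaves and inversion of the application rule is the standard way of filling in that remark. The only caveat is the one you already flag yourself — the inversion step presupposes the (unstated) Hindley--Milner-style rules from \cite{GHR01} — which is precisely why the paper leaves the details out.
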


% Transparent versus opaque types and patterns

As part of the definition of signatures $\Sigma$ we have required a transparency
property for the principal types of data constructors. Due to transparency, the types
of the variables occurring in a data term $t$ can be deduced from the type of $t$. It
is useful to isolate those patterns that have a similar property. To this purpose, we
adapt some definitions from \cite{GHR01}. A type which can be written as $\tpp{\tau}{m} \to \tau$
is  called {\em $m$-transparent} iff  $tvar(\tpp{\tau}{m}) \subseteq tvar(\tau)$ and {\em $m$-opaque} otherwise.
Also, defined function symbols $f$ and primitive function symbols $p$ are called
$m$-transparent iff their principal types are  $m$-transparent and $m$-opaque otherwise.
Note that a data constructor $c$ is always $m$-transparent for  all $m\, \leq\, ar(c)$.

Then,   {\em transparent patterns} can be defined as those having the syntax
$t ::= X \mid u \mid c\, \overline{t}_m \mid f\,\overline{t}_m \mid p\, \overline{t}_m$, with
$X \in \mathcal V\!\!ar$, $u \in \mathcal{B}$,
$c \in DC^n$ for some $m \leq n$,
$f \in DF^n$ for some $n > m$, and
$p \in SPF^n$ for some $n > m$,
where the subpatterns $t_{i}$ in $(c~\tpp{t}{m})$, $(f~\tpp{t}{m})$ and $(p\, \overline{t}_m)$
must be recursively  transparent, and the principal types of both the defined function symbol $f$ in  $(f~\tpp{t}{m})$
and the primitive function symbol $p$ in $(p\, \overline{t}_m)$
must be $m$-transparent.

For instance, assume a defined function symbol with principal type declaration {\tt snd :: A -> B -> B}.
Then {\tt snd} is  $1$-opaque and the  pattern {\tt (snd X)} is also opaque.
In fact, the principal type  {\tt B -> B}  of {\tt (snd X)}
 reveals no information on the type of {\tt X},
and different instances of {\tt (snd X)} keep the principal type {\tt B -> B},
independently of the type of the expression substituted for {\tt X}.
Such a behaviour is not possible for transparent patterns,
due to the {\em Transparency Lemma}  stated without proof below. Similar results were proved
in \cite{GHR01} in a slightly different context.

\begin{lemma} [Transparency Lemma] \label{trl}
\begin{enumerate}
\item
Assume a transparent pattern  $t$ and two type environments $\Gamma_{1}$, $\Gamma_{2}$
such that  $\Sigma,\, \Gamma_1 \vdash_{WT} t :: \tau$ and $\Sigma,\, \Gamma_2 \vdash_{WT} t :: \tau$,
for a common type $\tau$. 

Then $\Gamma_{1}(X) = \Gamma_{2}(X)$ holds for every $X \in var(t)$.
\item
Assume that $\Sigma,\, \Gamma \vdash_{WT} h~\tp{a}_{m}~::~\tau~::~h~\tp{b}_{m}$
holds for some $m$-transparent $h \in DC \cup DF \cup PF$ and some
common type $\tau$. 

Then, there exist types $\tau_{i}$ such that $\Sigma,\, \Gamma \vdash_{WT} a_{i}~::~\tau_{i}~::~b_{i}$
holds for all $1 \leq i \leq m$.
\end{enumerate}
\end{lemma}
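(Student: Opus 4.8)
The plan is to prove both parts of the Transparency Lemma by structural induction on the transparent pattern $t$ (for part 1) and directly from the definition of $m$-transparent type declarations (for part 2), closely following the techniques used for the analogous results in \cite{GHR01}. The key fact being exploited throughout is that in a transparent pattern, every type variable occurring in the types of subexpressions is forced to already occur in the principal type of the head symbol applied at that node, so type information propagates ``inward'' in a controlled way.

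For part 2, I would first recall that $h$ being $m$-transparent means its principal type $\tpp{\tau}{m}^{0} \to \cdots$ (written in the form $\tpp{\sigma}{m} \to \sigma$) satisfies $tvar(\tpp{\sigma}{m}) \subseteq tvar(\sigma)$. From $\Sigma,\,\Gamma \vdash_{WT} h\,\tp{a}_m :: \tau$, the type inference rule for application (iterated $m$ times) gives a type substitution $\theta_t$ with $\sigma\theta_t = \tau$ and $\Sigma,\,\Gamma \vdash_{WT} a_i :: \sigma_i\theta_t$ for each $i$; similarly $\Sigma,\,\Gamma \vdash_{WT} h\,\tp{b}_m :: \tau$ gives $\theta_t'$ with $\sigma\theta_t' = \tau$ and $\Sigma,\,\Gamma \vdash_{WT} b_i :: \sigma_i\theta_t'$. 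The crucial step is to show $\sigma_i\theta_t = \sigma_i\theta_t'$: since $tvar(\sigma_i) \subseteq tvar(\sigma)$ and $\theta_t$, $\theta_t'$ agree on $tvar(\sigma)$ (because $\sigma\theta_t = \tau = \sigma\theta_t'$, and a type substitution is determined on $tvar(\sigma)$ by its effect on $\sigma$ when $\sigma$ is, as here, a datatype built so that each variable is recoverable), the two substitutions coincide on $tvar(\sigma_i)$. Setting $\tau_i := \sigma_i\theta_t$ then yields $\Sigma,\,\Gamma \vdash_{WT} a_i :: \tau_i :: b_i$ as required. One should be slightly careful that $\theta_t$ agreeing with $\theta_t'$ on $tvar(\sigma)$ does follow from $\sigma\theta_t = \sigma\theta_t'$; this uses that $\sigma$ is a datatype of the specific shape $c_t\,\overline{A}_k$ or a tuple/base type in the relevant cases, but more robustly it holds because we only need agreement on variables actually occurring in $\sigma$, and $\sigma\theta_t = \sigma\theta_t'$ forces exactly that.

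For part 1, I would induct on the structure of the transparent pattern $t$. In the base cases $t = X$: both derivations give $\Gamma_1(X) = \tau = \Gamma_2(X)$ directly, and $var(t) = \{X\}$, so done; $t = u$ a base value: $var(t) = \emptyset$, trivial. In the inductive case $t = h\,\tpp{t}{m}$ with $h$ $m$-transparent and each $t_i$ transparent: apply part 2 (with $\Gamma := \Gamma_1$ for the $h\,\tpp{t}{m}$ derivation against itself — or rather, observe that from $\Sigma,\,\Gamma_1 \vdash_{WT} h\,\tpp{t}{m} :: \tau$ and $\Sigma,\,\Gamma_2 \vdash_{WT} h\,\tpp{t}{m} :: \tau$ one extracts via the application rule, for a common decomposition, types $\tau_i^{(1)}$ and $\tau_i^{(2)}$ with $\Sigma,\,\Gamma_1 \vdash_{WT} t_i :: \tau_i^{(1)}$ and $\Sigma,\,\Gamma_2 \vdash_{WT} t_i :: \tau_i^{(2)}$, and by the $m$-transparency argument of part 2 applied to the two instantiating substitutions $\tau_i^{(1)} = \tau_i^{(2)}$). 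Then by the induction hypothesis applied to each $t_i$ (now under $\Gamma_1$ and $\Gamma_2$ with the common type $\tau_i^{(1)} = \tau_i^{(2)}$), $\Gamma_1$ and $\Gamma_2$ agree on $var(t_i)$; since $var(t) = \bigcup_{i=1}^m var(t_i)$, they agree on $var(t)$.

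The main obstacle I expect is the careful bookkeeping in part 2 around why the two type substitutions arising from the two typing derivations must agree on the relevant type variables — this requires being precise about the form of the application typing rule (in particular that the head $h$ is typed by an arbitrary instance of its principal type, and that the instantiation is recoverable from the result type on the transparent variables) and about the fact that we are dealing with a common result type $\tau$. This is exactly the point the authors sidestep by citing \cite{GHR01}, and indeed a fully rigorous treatment would mirror the Transparency Lemma proof there; in keeping with the paper's stated policy of not spelling out type-discipline details, I would present the argument at the level of ``by inspection of the type inference rules, using $m$-transparency of $h$'' and refer the reader to \cite{GHR01} for the complete routine verification.
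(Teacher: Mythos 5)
The paper does not actually prove this lemma: it is explicitly ``stated without proof'', with the justification deferred to the techniques of \cite{GHR01}, in line with the paper's announced policy of not spelling out type-discipline details. So there is no paper proof to diverge from; what matters is whether your argument is sound, and it is. Your key step in part 2 is exactly right in its ``robust'' form: if $A$ occurs in $\sigma$ at some position, then $\sigma\theta_t=\tau=\sigma\theta_t'$ forces $A\theta_t=A\theta_t'$, so the two instantiating substitutions agree on $tvar(\sigma)$, and $m$-transparency ($tvar(\sigma_i)\subseteq tvar(\sigma)$) then gives $\sigma_i\theta_t=\sigma_i\theta_t'$; the extra hedge about $\sigma$ having a special datatype shape is unnecessary, since the positional argument needs nothing beyond occurrence of the variable in $\sigma$. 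You also correctly notice that part 2 as stated (one environment $\Gamma$) does not literally cover the inductive step of part 1 with two environments $\Gamma_1,\Gamma_2$, and you repair this properly by rerunning the substitution-agreement argument on the two derivations to get a common argument type $\tau_i^{(1)}=\tau_i^{(2)}$ before invoking the induction hypothesis; together with the monomorphic variable rule for data variables (so $t=X$ gives $\Gamma_1(X)=\tau=\Gamma_2(X)$ directly) and the treatment of tuples as transparent constructors, this yields a complete proof at the same level of rigour the paper attributes to \cite{GHR01}. The only implicit ingredient you should state is the standard inversion (generation) property of the Hindley--Milner-style system: a derivation of $\Sigma,\Gamma\vdash_{WT} h\,\tp{a}_m::\tau$ decomposes into an instance of $h$'s principal type and typings of the $a_i$ at the instantiated argument types, which is what licenses introducing $\theta_t$ at all; but this is routine and is precisely the kind of detail the paper delegates to \cite{GHR01}.
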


% Substitutions, instances.

 {\em Substitutions} $\sigma, \theta \in Sub_{\Sigma}({\mathcal{B}})$ over $\mathcal{B}$
 are mappings from $\var$ to $Pat_{\Sigma}({\mathcal{B}})$, extended to mappings
 from $Exp_{\Sigma}({\mathcal{B}})$ to $Exp_{\Sigma}({\mathcal{B}})$ in the natural way.
 For given $ e \in Exp_{\Sigma}({\mathcal{B}})$ and $\sigma \in Sub_{\Sigma}({\mathcal{B}})$,
 we will usually write $e \sigma$ ins\-tead of $\sigma(e)$.
 Whenever $e' = e\sigma$ for some substitution $\sigma$, we say that $e'$ is an
{\em instance} of $e$ (or also that $e$ is more general than $e'$) and we write $e \preceq e'$.

 We  write $\varepsilon$ for the identity substitution and $\sigma\theta$ for the {\em composition}
of $\sigma$ and $\theta$, such that $e(\sigma\theta) = (e\sigma)\theta$ for any expression $e$.
A substitution $\sigma$ such that $\sigma\sigma = \sigma$ is called {\em idempotent}.
The {\em domain} $vdom(\sigma)$ and the {\em variable range} $vran(\sigma)$ of a
substitution are defined as usual:
$vdom(\sigma) = \{X \in \mathcal V\!\!ar \mid X\sigma \neq X\}$ and
$vran(\sigma) = \bigcup_{X \in vdom(\sigma)}var(X\sigma)$.

A substitution $\sigma$ is called {\em finite} iff $vdom(\sigma)$ is
a finite set, and {\em ground} iff $X\sigma$ is a ground pattern for
all $X \in vdom(\sigma)$. In the sequel, we will assume that the
substitutions we work with are finite, unless otherwise said. We
adopt the usual notation $\sigma = \{X_1 \mapsto t_1, \ldots, X_n
\mapsto t_n\}$, whenever $vdom(\sigma) = \{X_1, \ldots, X_n\}$ and
$X_i\sigma = t_i$ for all $1 \leq i \leq n$. In particular,
$\varepsilon = \{\,\} = \emptyset$. We also write $\sigma[X \mapsto
t]$ for the substitution $\sigma'$ such that $X\sigma' = t$ and
$Y\sigma' = Y\sigma$ for any variable $Y \in \mathcal V\!\!ar
\setminus \{X\}$.

For any set of variables $\varx \subseteq \var$ we define the {\em restriction}
$\sigma \restrict_{\varx}$ as the substitution $\sigma'$ such that
$vdom(\sigma') = \varx$ and $\sigma'(X) = \sigma(X)$ for all $X \in \varx$.
We use the notation $\sigma =_{\varx} \theta$ to indicate that
$\sigma \restrict_{\varx} = \theta \restrict_{\varx}$, and we abbreviate
$\sigma =_{\mathcal V\!\!ar \setminus \varx} \theta$ as $\sigma =_{\setminus \varx} \theta$.

Given two substitutions $\sigma$ and $\theta$, we define the {\em application} of $\theta$ to $\sigma$
as the substitution $\sigma \star \theta =_{def} \sigma\theta \restrict vdom(\sigma)$.
In other words,  for any $X \in \var$, $X (\sigma \star \theta) = X \sigma \theta$ if $X \in vdom(\sigma)$
and $X (\sigma \star \theta) = X$ otherwise.

We consider two different ways of comparing given
substitutions $\sigma,\, \sigma' \in Sub_{\Sigma}({\mathcal{B}})$:
\begin{itemize}
\item
$\sigma$ is said to be more general than $\sigma'$ over $\varx \subseteq \mathcal V\!\!ar$
(in symbols, $\sigma \preceq_{\varx} \sigma'$) iff
$\sigma\theta =_{\varx} \sigma'$ for some $\theta \in Sub_{\Sigma}({\mathcal{B}})$.
We abbreviate $\sigma \preceq_{\mathcal V\!\!ar} \sigma'$ as $\sigma \preceq  \sigma'$
and $\sigma \preceq_{\mathcal V\!\!ar \setminus \varx} \sigma'$ as $\sigma \preceq_{\setminus \varx} \sigma'$.
\item
$\sigma$ is said to bear less information than $\sigma'$ over $\varx \subseteq \mathcal V\!\!ar$
(in symbols, $\sigma \leqinfo_{\varx} \sigma'$) iff
$\sigma(X) \leqinfo \sigma'(X)$ for all $X \in \varx$.
We abbreviate $\sigma \leqinfo_{\mathcal V\!\!ar} \sigma'$ as $\sigma \leqinfo  \sigma'$
and $\sigma \leqinfo_{\mathcal V\!\!ar \setminus \varx} \sigma'$ as $\sigma \leqinfo_{\setminus \varx} \sigma'$.
\end{itemize}

 % Example

\begin{example}[Well-typed Expressions] \label{expExample}

Let us consider the specific signature $\Sigma$ and the family of base values  $\mathcal{B}$
underlying the program presented in Subsection \ref{examples}. There we find:
\begin{itemize}
\item
The sets of base values $\mathcal{B}_{int} = \mathbb{Z}$ and $\mathcal{B}_{real} = \mathbb{R}$.
\item
Well-typed expressions such as {\tt square 4 (2,3) :: bool},  {\tt RX-RY :: real}, 
{\tt (RY-RX <= RY0-RX0) :: bool}.
\item
Well-typed patterns such as {\tt 3 :: int}, {\tt 3.01 :: real}, {\tt [X,Y] :: [int]}, \\
{\tt square 4 :: dPoint -> bool}. Note that {\tt [X,Y]} abbreviates {\tt (X:(Y:[\,]))}, as usual in functional languages that use an infix list constructor.
\item
Finally, note that  {\tt $\bot$ $\sqsubseteq$ (0 : $\bot)$ $\sqsubseteq$ (0 : (1 : $\bot$)) $\sqsubseteq$ \ldots } illustrates the behaviour of the information ordering $\sqsubseteq$ when restricted to the comparison of patterns belonging to the universe $\mathcal{U}_{\Sigma}({\mathcal{B}})$.  The list patterns of type {\tt [int]} used in this example are not  allowed to occur textually in programs because of the occurrences of the undefined value $\bot$, but they are meaningful as semantic representations of partially computed lists of integers.
\end{itemize}
\end{example}
\vspace*{-.4cm}

\subsection{Domains, Constraints and Solutions} \label{dom}

% Constraint domains

Intuitively, a {\em constraint domain} provides data values and constraints oriented to some particular
application domain. Different approaches have been proposed for formalizing the notion
of constraint domain, using mathematical notions borrowed from algebra, logic and category theory;
see e.g. \cite{jaffar+:clp-popl87,Saraswat,JM94,JMM+98}. The following
definition is an elaboration of the domain notion given in  \cite{LRV07}:

\begin{definition} [Constraint Domain] \label{dcdom}
A constraint domain of specific signature $\Sigma$ (shortly,
$\Sigma$-domain) is a structure $\cdom =
\langle\mathcal{B}^{\cdom},\, \{p^{\cdom}\}_{p \in SPF}\rangle$,
where $\mathcal{B}^{\cdom} = \{\mathcal{B}_d^{\cdom}\}_{d \in
SBT}$ is a $SBT$-indexed family of sets of base values and the
{\em interpretation} $p^{\cdom}$ of each  primitive function
symbol $p :: \overline{\tau}_n \to \tau$ in $SPF^n$ is required to
be a set of $(n+1)$-tuples $p^{\cdom} \subseteq
\mathcal{U}_{\Sigma}({\mathcal{B}^{\cdom}})^{n+1}$. In the sequel,
we abbreviate $\mathcal{U}_{\Sigma}({\mathcal{B}^{\cdom}})$ as
$\mathcal{U}_{\cdom}$ (called the {\em universe of values} of $\cdom$),
and we write $p^{\cdom} \overline{t}_n \to
t$ to indicate $(\overline{t}_n,t) \in p^{\cdom}$. The intended
meaning of ``$p^{\cdom} \overline{t}_n \to t$" is that the
primitive function $p^{\cdom}$ with given arguments
$\overline{t}_n$ can return a result $t$. Moreover,
the interpretations of primitive symbols are required to satisfy four
conditions:
\begin{enumerate}
\item {\bf Polarity}: For all $p \in SPF$, ``$p^{\cdom} \overline{t}_n \to t$" behaves
monotonically w.r.t. the arguments $\overline{t}_n$ and antimonotonically
w.r.t. the result $t$.

Formally: For all $\overline{t}_n, \overline{t'}_n, t, t' \in  \mathcal{U}_{\cdom}$ such that
$p^{\cdom} \overline{t}_n \to t$, $\overline{t}_n \sqsubseteq \overline{t'}_n$ and $t \sqsupseteq t'$,
$p^{\cdom} \overline{t'}_n \to t'$ also holds.
\item {\bf Radicality}: For all $p \in SPF$, as soon as the arguments given to $p^{\cdom}$ have enough information to return  a result other than $\bot$, the same arguments suffice already  for returning a total result.

Formally:
For all $\overline{t}_n, t \in \mathcal{U}_{\cdom}$, if $p^{\cdom} \overline{t}_n \to t$ then $t=\bot$ or else
there is some total $t' \in \mathcal{U}_{\cdom}$ such that $p^{\cdom} \overline{t}_n \to t'$ and $t' \sqsupseteq t$.
\item {\bf Well-typedness}: For all $p \in SPF$, the behaviour of $p^{\cdom}$ is
well-typed w.r.t. any monomorphic instance of $p$'s principal type.

Formally:
For any monomorphic type instance
$(\overline{\tau'}_n \to \tau') \succeq (\overline{\tau}_n \to \tau)$
and for all  $\overline{t}_n, t \in \mathcal{U}_{\cdom}$
such that $\Sigma\, \vdash_{WT} \overline{t}_n :: \overline{\tau'}_n$
and $p^{\cdom} \overline{t}_n \to t$, the type judgement
$\Sigma\, \vdash_{WT} t :: \tau'$ also holds.
\item {\bf Strict Equality}: The primitive {\tt ==} (in case that it belongs to $SPF$)
is interpreted as {\em strict equality} over $\mathcal{U}_{\cdom}$,
so that for all  $t_1, t_2, t \in \mathcal{U}_{\cdom}$,
one has $t_1${\tt ==}$^{\cdom} t_2 \to t$ iff some of the three following cases holds:
\begin{enumerate}
\item[(a)]
$t_1$ and $t_2$ are one and  the same total pattern,
and $true \geqinfo t$.
\item[(b)]
$t_1$ and $t_2$ have no common upper bound in  $\mathcal{U}_{\cdom}$ w.r.t. the information ordering $\leqinfo$,
and $false \geqinfo t$.
\item[(c)]
$t = \bot$.
\end{enumerate}
With this definition, it is easy to check that  {\tt ==}$^{\cdom}$ satisfies the polarity, radicality and well-typedness conditions.
\end{enumerate}
\end{definition}

In Subsection \ref{pdom} we will introduce the notion of solver,
 and we will see that the three domains $\herbrand$, $\rdom$ and $\fd$ mentioned in the
introduction can be formalized according to the previous definition.
In the rest of this subsection we discuss how  to work with constraints over a given domain.

% Observation and notational conventions

For any given domain $\cdom$ of signature $\Sigma$, the set
$\mathcal{U}_{\cdom} = \mathcal{U}_{\Sigma}({\mathcal{B}^{\cdom}}) = GPat_{\Sigma}({\mathcal{B}}^{\cdom})$ is called the {\em universe of values} of the domain $\cdom$.
We will also  write $Exp_{\cdom}$, $Pat_{\cdom}$ and $Sub_{\cdom}$
in place of $Exp_{\Sigma}({\mathcal{B}}^{\cdom})$, $Pat_{\Sigma}({\mathcal{B}}^{\cdom})$ and $Sub_{\Sigma}({\mathcal{B}}^{\cdom})$, respectively.
Note that requirement 4. in Definition \ref{dcdom} imposes a fixed interpretation of {\tt ==} as the
strict equality operation {\tt ==}$^{\cdom}$ over  $\mathcal{U}_{\cdom}$, for every domain $\cdom$
whose  specific signature includes this primitive. It is easy to check that  the polarity, radicality and
well-typedness requirements are satisfied by strict equality. The following definition will be useful:

% Conservative extensions of a given domain

\begin{definition} [Conservative Extension of a given Domain]\label{defCextension}
Given two domains $\cdom$, $\cdom'$ with respective signatures $\Sigma$ and $\Sigma'$,
$\cdom'$ is called a {\em conservative extension} of $\cdom$ iff the following conditions hold:
\begin{enumerate}
\item
$\Sigma \subseteq \Sigma'$, i.e. $SBT \subseteq SBT'$ and $SPF \subseteq SPF'$.
\item
For all $d \in SBT$, one has $\mathcal{B}_d^{\cdom'} = \mathcal{B}_d^{\cdom}$.
\item
For all $p \in SPF^n$ other than {\tt ==}\,Êand for every $\overline{t}_n, t \in  \mathcal{U}_{\cdom}$,
one has $p^{\cdom'}\, \overline{t}_n \to t$ iff $p^{\cdom}\, \overline{t}_n \to t$.
\end{enumerate}
\end{definition}

% Constraints, primitive constraints

\noindent As usual in constraint programming, we define {\em
constraints} over a given domain $\cdom$ as logical formulas built
from atomic constraints by means of  conjunction $\wedge$ and
existential quantification $\exists$. More precisely,  constraints
$\delta \in Con_{\cdom}$ over the constraint domain $\cdom$ have the
syntax $\delta ::= \alpha \mid (\delta_1 \wedge \delta_2) \mid
\exists X \delta$, where $\alpha$ is any atomic constraint over
$\cdom$ and $X \in \var$ is any variable. We allow two kinds of {\em
atomic constraints} $\alpha$ over $\cdom$: a)  $\lozenge$ and
$\blacklozenge$, standing for truth (success) and falsity (failure),
respectively; and b) atomic constraints of the form $p\,
\overline{e}_n\, \to!\, t$ with $p \in SPF^{n}$, where
$\overline{e}_n \in Exp_{\cdom}$, $t \in Pat_{\cdom}$, and $t$ is
required to be total (i.e., without any occurrences of $\bot$).
 The intended meaning of  $p\, \overline{e}_n\, \to!\, t$ constrains the value returned by the call
 $p\, \overline{e}_n$ to be a total pattern matching the form of $t$.

% Notational conventions

 By convention, constraints of the form $p\, \overline{e}_n \to!\, true$
are abbreviated as $p\, \overline{e}_n$. Sometimes constraints of
the form $p\, \overline{e}_n \to!\, false$ are abbreviated as $p'\,
\overline{e}_n$, using some symbol $p'$
 to suggest the `negation' of $p$. In particular, {\em strict equality
constraints} $e_1$ {\tt ==} $e_2$ and  {\em strict disequality
constraints} $e_1$ {\tt /=} $e_2$ are understood as abbreviations
of $e_1$ {\tt ==} $e_2 \to!\, true$ and $e_1$ {\tt ==} $e_2\,
\to!\, false$, respectively. The next definition introduces some useful
notations for different kinds of constraints.

% Special constraints

\begin{definition} [Notations for various kinds of constraints]\label{defSpecial}
Given two domains $\cdom$, $\cdom'$ with respective signatures $\Sigma$ and $\Sigma'$,
such that $\cdom'$ is a {\em conservative extension} of $\cdom$. Let $SPF \subseteq SPF'$ be
the sets of specific primitive function symbols of $\cdom$ and $\cdom'$, respectively.
We define:
\begin{enumerate}
\item
$ACon_{\cdom} \subseteq Con_{\cdom}$ is the set of all {\em atomic constraints} over $\cdom$.
\item
$APCon_{\cdom} \subseteq ACon_{\cdom}$ is the set of all {\em atomic primitive constraints} over $\cdom$.
By definition, $\alpha \in APCon_{\cdom}$ iff $\alpha$ has the form $\lozenge$, $\blacklozenge$ or
$p\, \overline{t}_n\, \to!\, t$, where $\overline{t}_n \in Pat_{\cdom}$ are patterns.
\item
$PCon_{\cdom} \subseteq Con_{\cdom}$ is the set of all {\em primitive constraints} $\pi$ over $\cdom$.
By definition, a constraint $\pi \in Con_{\cdom}$ is called primitive iff all the atomic parts of $\pi$ are primitive.
Note that $APCon_{\cdom} =  ACon_{\cdom}\, \cap\, PCon_{\cdom}$.
\item
$Con_{\cdom'} \upharpoonright SPF$ is the set of all $SPF$-{\em
restricted} constraints over $\cdom'$. By defi\-nition, a constraint
$\delta  \in Con_{\cdom'}$ is called $SPF$-restricted iff all the
atomic parts of $\delta$ have the form $\lozenge$, $\blacklozenge$
or $p\, \overline{e}_n\, \to!\, t$, where $p \in SPF^n$. The subsets
$APCon_{\cdom'} \upharpoonright SPF \subseteq ACon_{\cdom'}
\upharpoonright SPF \subseteq Con_{\cdom'} \upharpoonright SPF$ are
defined in the natu\-ral way. In particular, $APCon_{\cdom'}
\upharpoonright SPF$ is the set of all the $SPF$-restricted atomic
primitive constraints over $\cdom'$, which have the form $\lozenge$
or $\blacklozenge$ or $p\, \overline{t}_n\, \to!\, t$, with $p \in
SPF^{n}$, $\overline{t}_n,\, t \in Pat_{\cdom'}$ and $t$  total.
\end{enumerate}
\end{definition}

% Variables and free variables in constraints.

A particular occurrence of a  variable $X$ within a constraint
$\delta$  is called {\em free} iff it is not affected by any
quantification, and {\em bound} otherwise. In the sequel, we will
write $var(\delta)$ (resp. $fvar(\delta)$) for the set of all
variables having some occurrence (resp. free occurrence) in the
constraint $\delta$. The notations $var(\Delta)$ and $fvar(\Delta)$
for a set of constraints  $\Delta \subseteq Con_{\cdom}$ have a
similar meaning.

% Well-typed constraints

The type inference rules mentioned in Subsection \ref{expressions} can be naturally extended to
derive also type judgments of the form  $\Sigma,\, \Gamma \vdash_{WT} \delta$, meaning that the
constraint $\delta$ is well-typed w.r.t.  the type assumptions for symbols resp.
variables given in $\Sigma$ resp. $\Gamma$. Sometimes
we will simply claim that  $\delta$ is {\em well-typed} to indicate that $\Sigma,\, \Gamma \vdash_{WT}
\delta$ can be derived using the underlying signature $\Sigma$ and
some suitable type environment $\Gamma$ (which can be just $\emptyset$ if $\delta$ has no free variables).

 % Solutions for primitive constraints

The set of {\em valuations} $Val_{\cdom}$ over the domain $\cdom$ consists of all ground substitutions
$\eta$ such that $vran(\eta) \subseteq \mathcal{U}_{\cdom}$. Those valuations which satisfy a given constraint are called \emph{solutions}. For those constraints $\delta$ that include subexpressions of the form
$f\, \overline{e}_n$ for some $f \in DF^n$,  the solutions of $\delta$ depend on the behaviour of $f$, which is not included in the domain $\cdom$, but must be deduced from some user given program, as we will see in Section \ref{cooperative}. However, the solutions of primitive constraints  depend only on the domain $\cdom$. More precisely:

\begin{definition} [Solutions of Primitive Constraints]\label{defPrimSol}
\begin{enumerate}
\item
The {\em set of solutions} of a primitive constraint
$\pi \in PCon_{\cdom}$ is a subset $Sol_{\cdom}(\pi) \subseteq Val_{\cdom}$ defined by recursion on the syntactic structure
of $\pi$ as follows:
\begin{itemize}
\item
$Sol_{\cdom}(\lozenge) = Val_{\cdom}$; $Sol_{\cdom}(\blacklozenge) = \emptyset$.
\item
$Sol_{\cdom}(p\, \overline{t}_n \to!\, t) = \{\eta \in Val_{\cdom}
\mid (p\, \overline{t}_n\, \to! t)\eta\, \textnormal{ground},\,
p^{\cdom} \overline{t}_n\eta \to t\eta,\,  t\eta\,
\textnormal{total}\}$.
\item
$Sol_{\cdom}(\pi_1 \wedge \pi_2) = Sol_{\cdom}(\pi_1) \cap Sol_{\cdom}(\pi_2)$.
\item
$Sol_{\cdom}(\exists X \pi) = \{\eta \in Val_{\cdom} \mid
\textnormal{exists}\, \eta' \in Sol_{\cdom}(\pi)\, \textnormal{s.t.}\, \eta'  =_{\setminus \{X\}} \eta\}.$
\end{itemize}
\item
Any set $\Pi \subseteq PCon_{\cdom}$ is interpreted  as a conjunction, and therefore
$Sol_{\cdom}(\Pi) = \bigcap_{\pi \in \Pi} Sol_{\cdom}(\pi)$.
\item
The {\em set of well-typed solutions} of a primitive constraint
$\pi \in PCon_{\cdom}$ is a subset $WTSol_{\cdom}(\pi) \subseteq Sol_{\cdom}(\pi)$ consisting
of all $\eta \in Sol_{\cdom}(\pi)$ such that $\pi \eta$ is well-typed.
\item
Finally, for any $\Pi \subseteq PCon_{\cdom}$ we define
$WTSol_{\cdom}(\Pi) = \bigcap_{\pi \in \Pi} WTSol_{\cdom}(\pi)$.
\end{enumerate}
\end{definition}

% Substitution Lemma and Monotonicity Lemma.

Note that any solution $\eta \in Sol_{\cdom}(\pi)$ must verify $vdom(\eta) \supseteq fvar(\pi)$.
For later use, we accept the two following technical lemmata.
The first one can be easily proved by induction on the syntactic structure of $\Pi$
and the second one is a simple consequence of the polarity properties of primitive functions.
The notation $(WT)Sol$ used in both lemmata is intended to indicate that they are valid both
for plain solutions and for well-typed solutions.

\begin{lemma} [Substitution Lemma] \label{sl}
For any given $\Pi \subseteq PCon_{\cdom}$, $\sigma \in Sub_{\cdom}$ and $\eta \in Val_{\cdom}$,
the equivalence $\eta \in (WT)Sol_{\cdom}(\Pi\sigma) \Leftrightarrow 
\sigma\eta \in (WT)Sol_{\cdom}(\Pi)$ is valid.
\end{lemma}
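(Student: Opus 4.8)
The plan is to prove the equivalence $\eta \in (WT)Sol_{\cdom}(\Pi\sigma) \Leftrightarrow \sigma\eta \in (WT)Sol_{\cdom}(\Pi)$ by structural induction on the way primitive constraints and their conjunctions/quantifications are built, following exactly the recursive clauses of Definition~\ref{defPrimSol}. Since $\Pi$ is interpreted as a conjunction and $Sol_{\cdom}$ distributes over conjunction ($Sol_{\cdom}(\Pi) = \bigcap_{\pi \in \Pi} Sol_{\cdom}(\pi)$, and similarly for $WTSol_{\cdom}$), it suffices to treat a single primitive constraint $\pi \in PCon_{\cdom}$ and then intersect. So the real induction is on the syntactic structure of $\pi$, with cases $\lozenge$, $\blacklozenge$, an atomic primitive $p\,\overline{t}_n \to!\, t$, a conjunction $\pi_1 \wedge \pi_2$, and an existential $\exists X\,\pi'$.

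First I would dispose of the trivial cases: $Sol_{\cdom}(\lozenge\sigma) = Sol_{\cdom}(\lozenge) = Val_{\cdom}$ and $\sigma\eta \in Val_{\cdom}$ always (composition of a substitution with a valuation over $\cdom$ is again a valuation over $\cdom$, since $vran(\sigma\eta) \subseteq \mathcal{U}_{\cdom}$); similarly $\blacklozenge$ gives $\emptyset$ on both sides. The conjunction case is immediate from the induction hypothesis and the fact that $(\pi_1 \wedge \pi_2)\sigma = \pi_1\sigma \wedge \pi_2\sigma$, intersecting the two equivalences. For the atomic case $\pi \equiv p\,\overline{t}_n \to!\, t$ with $\overline{t}_n, t \in Pat_{\cdom}$, I would unfold both sides using the defining clause: $\eta \in Sol_{\cdom}(p\,\overline{t}_n\sigma \to!\, t\sigma)$ means $(p\,\overline{t}_n\sigma\eta \to!\, t\sigma\eta)$ is ground, $p^{\cdom}\,\overline{t}_n\sigma\eta \to t\sigma\eta$, and $t\sigma\eta$ total; while $\sigma\eta \in Sol_{\cdom}(p\,\overline{t}_n \to!\, t)$ means $(p\,\overline{t}_n(\sigma\eta) \to!\, t(\sigma\eta))$ ground, $p^{\cdom}\,\overline{t}_n(\sigma\eta) \to t(\sigma\eta)$, and $t(\sigma\eta)$ total. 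These coincide because $e(\sigma\eta) = (e\sigma)\eta$ for any expression $e$ (the composition property of substitutions stated in Subsection~\ref{expressions}); one subtlety to check is that $\overline{t}_n\sigma$ need only be expressions, not patterns, but $Sol_{\cdom}$ of the substituted constraint is still well-defined since the relevant clause evaluates $p^{\cdom}$ on the fully ground patterns $\overline{t}_n\sigma\eta$, and I should remark that $\Pi\sigma$ may fall out of $PCon_{\cdom}$ syntactically — here one relies on the convention that $Sol_{\cdom}$ is read through Definition~\ref{defPrimSol}'s clauses applied to the instantiated atomic forms, or alternatively one restricts attention to $\sigma$ mapping into patterns so that $\Pi\sigma \subseteq PCon_{\cdom}$ genuinely.

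The existential case $\pi \equiv \exists X\,\pi'$ is where care is needed, and I expect it to be the main obstacle because of the usual interaction between substitution and bound variables. By the standard renaming convention I would assume $X \notin vdom(\sigma) \cup vran(\sigma)$, so that $(\exists X\,\pi')\sigma = \exists X\,(\pi'\sigma)$ and moreover $\sigma$ commutes appropriately with any modification of $\eta$ off $X$. Then $\eta \in Sol_{\cdom}(\exists X\,(\pi'\sigma))$ iff there is $\eta' =_{\setminus\{X\}} \eta$ with $\eta' \in Sol_{\cdom}(\pi'\sigma)$, which by the induction hypothesis holds iff $\sigma\eta' \in Sol_{\cdom}(\pi')$; and one checks $\sigma\eta' =_{\setminus\{X\}} \sigma\eta$ using $X \notin vdom(\sigma)$ (so $X$ is untouched by $\sigma$ and the values on all other variables are determined by $\sigma\eta$), which gives $\sigma\eta \in Sol_{\cdom}(\exists X\,\pi')$; the converse direction is symmetric, choosing the witness for $\sigma\eta$ and transferring it back along $\eta$. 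Finally, for the well-typed versions: by Lemma~\ref{sl}'s own statement the claim is asserted for both $Sol$ and $WTSol$, so I would add the observation that $(\Pi\sigma)\eta = \Pi(\sigma\eta)$ as constraints, hence $(\Pi\sigma)\eta$ is well-typed iff $\Pi(\sigma\eta)$ is well-typed, which propagates the well-typedness condition across the equivalence in every case without extra work. This completes the induction.
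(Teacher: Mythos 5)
Your proof is correct and follows exactly the route the paper indicates: the paper states Lemma~\ref{sl} without proof, remarking only that it "can be easily proved by induction on the syntactic structure of $\Pi$", and your case analysis (trivial cases, atomic case via $e(\sigma\eta)=(e\sigma)\eta$, conjunction, and the existential case under the renaming convention $X \notin vdom(\sigma)\cup vran(\sigma)$, plus the observation $(\Pi\sigma)\eta = \Pi(\sigma\eta)$ for the well-typed variant) is precisely that induction carried out in detail.
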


\begin{lemma} [Monotonicity Lemma] \label{ml}
For any given $\Pi \subseteq PCon_{\cdom}$ and $\eta, \eta' \in Val_{\cdom}$
such that $\eta \sqsubseteq \eta'$ and $\eta \in (WT)Sol_{\cdom}(\Pi)$, one also has 
$\eta' \in (WT)Sol_{\cdom}(\Pi)$.
\end{lemma}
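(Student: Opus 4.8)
The plan is to prove the Monotonicity Lemma by structural induction on the syntactic form of a primitive constraint $\pi$, and then extend the result to sets $\Pi$ by taking intersections. I would first reduce the set case to the single-constraint case: since $Sol_{\cdom}(\Pi) = \bigcap_{\pi \in \Pi} Sol_{\cdom}(\pi)$ and similarly for $WTSol$, if $\eta \sqsubseteq \eta'$ and $\eta \in (WT)Sol_{\cdom}(\pi)$ implies $\eta' \in (WT)Sol_{\cdom}(\pi)$ for every individual $\pi$, then the same implication holds for the intersection. So the heart of the matter is a single primitive constraint.

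For the base cases, $\pi = \lozenge$ gives $Sol_{\cdom}(\lozenge) = Val_{\cdom}$, which trivially contains $\eta'$; and $\pi = \blacklozenge$ has empty solution set, so the hypothesis $\eta \in Sol_{\cdom}(\blacklozenge)$ is vacuous. The crucial base case is $\pi = p\,\overline{t}_n \to!\, t$. Here $\eta \in Sol_{\cdom}(\pi)$ means $p^{\cdom}\,\overline{t}_n\eta \to t\eta$ with $t\eta$ total and ground. Since $\eta \sqsubseteq \eta'$ and both are valuations, we have $\overline{t}_n\eta \sqsubseteq \overline{t}_n\eta'$ and $t\eta \sqsubseteq t\eta'$; but $t\eta$ is already total, so $t\eta = t\eta'$ (a total pattern has no strict extensions within the same pattern skeleton determined by $t$ — more precisely, $t\eta \sqsubseteq t\eta'$ forces $\eta =_{var(t)} \eta'$ on the variables of $t$ once $t\eta$ is total). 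Then the \textbf{Polarity} condition of Definition~\ref{dcdom} applied with arguments increased from $\overline{t}_n\eta$ to $\overline{t}_n\eta'$ and result held fixed at $t\eta = t\eta'$ yields $p^{\cdom}\,\overline{t}_n\eta' \to t\eta'$, so $\eta' \in Sol_{\cdom}(\pi)$. For the well-typed variant, note that $\pi\eta'$ differs from $\pi\eta$ only in the parts of $\overline{t}_n$ whose variables are not in $var(t)$; well-typedness of $\pi\eta$ together with the Type Preservation Lemma (Lemma~\ref{tpl}) and the fact that $\eta'$ merely refines $\eta$ gives well-typedness of $\pi\eta'$, so $WTSol$ is preserved as well.

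The inductive cases are routine. For $\pi = \pi_1 \wedge \pi_2$, we have $Sol_{\cdom}(\pi) = Sol_{\cdom}(\pi_1) \cap Sol_{\cdom}(\pi_2)$, and applying the induction hypothesis to each conjunct gives the result. For $\pi = \exists X\,\pi_1$, suppose $\eta \in Sol_{\cdom}(\exists X\,\pi_1)$, so there is $\eta_1 \in Sol_{\cdom}(\pi_1)$ with $\eta_1 =_{\setminus\{X\}} \eta$. Define $\eta_1' = \eta_1[X \mapsto \eta'(X)]$; then $\eta_1' =_{\setminus\{X\}} \eta'$ and $\eta_1 \sqsubseteq \eta_1'$ (they agree off $X$, and at $X$ we have $\eta_1(X) = \eta(X) \sqsubseteq \eta'(X) = \eta_1'(X)$), so by induction $\eta_1' \in Sol_{\cdom}(\pi_1)$, witnessing $\eta' \in Sol_{\cdom}(\exists X\,\pi_1)$. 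The same argument with the induction hypothesis for $WTSol$ handles the well-typed case.

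The main obstacle — though it is more a bookkeeping subtlety than a genuine difficulty — is the interaction between the information ordering $\sqsubseteq$ and the requirement that $t$ be total in an atomic primitive constraint: one must be careful that $t\eta \sqsubseteq t\eta'$ with $t\eta$ total actually forces $t\eta = t\eta'$, which is what licenses applying Polarity with a fixed result rather than a larger one. Once that observation is in place, everything reduces to a direct appeal to the Polarity condition, so the lemma is indeed, as the text says, ``a simple consequence of the polarity properties of primitive functions.''
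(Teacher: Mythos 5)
Your reduction to single constraints, the base cases $\lozenge$, $\blacklozenge$, and the plain-solution half of the atomic case are fine, and they coincide with what the paper intends: the paper gives no detailed proof of this lemma (it is "accepted" with the remark that it is a simple consequence of polarity), and your use of Polarity with the arguments enlarged and the result fixed — justified by the observation that $t\eta$ total and $t\eta \sqsubseteq t\eta'$ force $t\eta = t\eta'$, since total patterns are $\sqsubseteq$-maximal — is exactly that argument spelled out. Two smaller remarks on the $Sol$ part: in the existential case your witness is built on the wrong side; $\eta_1' = \eta_1[X \mapsto \eta'(X)]$ agrees with $\eta$ off $X$, not with $\eta'$, so it does not witness $\eta' \in Sol_{\cdom}(\exists X\,\pi_1)$. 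The correct witness is $\eta'[X \mapsto \eta_1(X)]$, which satisfies $\eta_1 \sqsubseteq \eta'[X \mapsto \eta_1(X)]$ and $=_{\setminus\{X\}} \eta'$; this is an easy repair.

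The genuine gap is in the well-typed half of the atomic case. You invoke the Type Preservation Lemma (Lemma~\ref{tpl}) to pass from well-typedness of $\pi\eta$ to well-typedness of $\pi\eta'$, but that lemma goes the opposite way: it says that if $e' :: \tau$ and $e \sqsubseteq e'$ then $e :: \tau$, i.e.\ well-typedness is inherited by $\sqsubseteq$-\emph{smaller} expressions, not by larger ones. Well-typedness is in fact not upward-closed along $\sqsubseteq$: since $\bot$ has every type, an $\sqsubseteq$-larger ground instance can become ill-typed. Concretely, take the disequality $\pi \equiv ({\tt X},1)\,\textnormal{{\tt ==}}\,({\tt Y},2) \to!\, {\tt false}$ and $\eta$ with $\eta({\tt X}) = \eta({\tt Y}) = \bot$: then $\pi\eta$ is well-typed and $\eta \in Sol_{\cdom}(\pi)$ because $(\bot,1)$ and $(\bot,2)$ have no common upper bound; but $\eta'$ with $\eta'({\tt X}) = {\tt true}$, $\eta'({\tt Y}) = 0$ satisfies $\eta \sqsubseteq \eta'$ and $\eta' \in Sol_{\cdom}(\pi)$, while $\pi\eta'$ is ill-typed (the two tuples admit no common type). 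So your claim that "well-typedness of $\pi\eta$ \ldots\ gives well-typedness of $\pi\eta'$" is not a bookkeeping step; it fails as stated, and neither Polarity nor Lemma~\ref{tpl} can deliver it. To obtain the $WTSol$ version one needs an additional ingredient — e.g.\ the assumption that $\eta'$ is itself well-typed with respect to a type environment under which $\pi$ is well-typed, so that each $\eta'(Z)$ has the type forced for $Z$ by $\pi$ — and your write-up should either supply such an argument or flag that the $WT$ half rests on this extra typing discipline rather than on polarity alone.
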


% Demanded and obviously demanded variables.

A given solution $\eta \in Sol_{\cdom}(\Pi)$ can bind some variables $X$ to the undefined
value $\bot$. Intuitively, this will happen whenever the value of $X$ is not needed for checking the
satisfaction of the constraints in $\Pi$. Formally, a variable $X$ is  {\em demanded} by a set of constraints $\Pi \subseteq PCon_{\cdom}$ iff $\eta(X) \neq \bot$ for all $\eta \in \sol{\cdom}{\Pi}$.
We write $dvar_{\cdom}(\Pi)$ to denote the set of all $X \in fvar(\Pi)$ such that $X$ is demanded by 
$\Pi$.

In practice, $CFLP$ programming requires effective procedures for recognizing
`obvious' occurrences of demanded variables in the case that $\Pi$
is a set of atomic primitive constraints. We assume that for any
practical constraint domain $\cdom$ and any primitive atomic
constraint $\pi \in APCon_{\cdom}$ there is an effective way of
computing a subset $odvar_{\cdom}(\pi) \subseteq
dvar_{\cdom}(\pi)$. Variables $X \in odvar_{\cdom}(\pi)$ will be
said to be  {\em obviously demanded} by $\pi$. We extend the
notion to finite constraint sets $\Pi \subseteq  APCon_{\cdom}$ by
defining the set $odvar_{\cdom}(\Pi)$ of all variables {\em
obviously demanded}  by $\Pi$ as $\bigcup_{\pi \in \Pi}
odvar_{\cdom}(\pi)$. In this way, it is clear that
$odvar_{\cdom}(\Pi) \subseteq dvar_{\cdom}(\Pi)$ holds for any
$\Pi \subseteq APCon_{\cdom}$; i.e., obviously demanded variables
are always demanded. The inclusion is strict in general.

% Obviously demanded variables in the case of strict equality constraints.

In particular, for any constraint domain $\cdom$ whose specific
signature includes the strict equality primitive {\tt  ==} and any
primitive atomic constraint of the form  $\pi$  =  ($t_1${\tt
==}$t_2 \to!\, t)$, $odvar_{\cdom}(\pi)$ is defined by a case
distinction, as follows:
\begin{itemize}
\item
$odvar_{\cdom}(t_1${\tt ==}$t_2 \to!\, R) = \{R\}$, if $R \in \var$.
\item
$odvar_{\cdom}(X${\tt ==}$Y) = \{X,Y\}$, if $X,Y \in \var$.
\item
$odvar_{\cdom}(X${\tt ==}$t) =  odvar_{\cdom}(t${\tt ==}$X) =\{X\}$, if $X \in \var$ and $t \notin \var$.
\item
$odvar_{\cdom}(t_1${\tt ==}$t_2) = \emptyset$, otherwise.
\item
$odvar_{\cdom}(X${\tt /=}$Y) = \{X,Y\}$, if $X,Y \in \var$, $X$ and $Y$ not identical.
\item
$odvar_{\cdom}(X${\tt /=}$t) =  odvar_{\cdom}(t${\tt /=}$X) =\{X\}$, if $X \in \var$ and $t \notin \var$.
\item
$odvar_{\cdom}(t_1${\tt /=}$t_2) = \emptyset$, otherwise.
\end{itemize}

The inclusion $odvar_{\cdom}(\pi) \subseteq dvar_{\cdom}(\pi)$
is easy to check, by considering the behaviour of the interpreted strict equality
operation {\tt ==}$^{\cdom}$.
% Obviously demanded variables in other cases. Critical variables.
The method for computing $odvar_{\cdom}(\pi)$ for atomic primitive
constraints based on primitive functions other than equality must be
given as part of a practical presentation of the corresponding
domain $\cdom$. In the sequel, we will call {\em critical} to those
variables occurring in $\Pi$ which are not obviously demanded, and
we will write $cvar_{\cdom}(\Pi) = var(\Pi) \setminus
odvar_{\cdom}(\Pi)$ for the set of all critical variables. As we
will see in Section \ref{cooperative}, goal solving methods for
$CFLP$ programming rely on the effective recognition of critical
variables. Therefore,  the proper behaviour of goal solving depends
on well-defined methods for the computation of obviously demanded
variables.

% Constraint stores and their solutions

In the rest of the paper we will often use {\em constraint stores} of the form $S = \Pi\, \Box\, \sigma$,
where $\Pi \subseteq APCon_{\cdom}$ and $\sigma$ is an idempotent substitution such that
$vdom(\sigma) \cap var(\Pi) = \emptyset$. We will need to work with solutions of
constraint stores, possibly affected by an existential prefix. This notion is defined as follows:

\begin{definition} [Solutions of Constraint Stores]\label{defStoSol}
\begin{enumerate}
\item
$Sol_{\cdom}(\exists \overline{Y} (\Pi\, \Box\, \sigma)) =
\{\eta \in Val_{\cdom} \mid  \textnormal{exists}\, \eta' \in Sol_{\cdom}(\Pi\, \Box\, \sigma),\,
\textnormal{s.t.}\, \eta'  =_{\setminus \overline{Y}} \eta\}$.
\item
$Sol_{\cdom}(\Pi\, \Box\, \sigma) = Sol_{\cdom}(\Pi) \cap Sol(\sigma)$.
\item
$Sol(\sigma) = \{\eta \in Val_{\cdom} \mid \eta = \sigma\eta\}$ \\
(Note that $\eta = \sigma\eta$ holds iff $X\eta = X\sigma\eta\, \, \textnormal{for all}\, X \in vdom(\sigma)$).
\item
$WTSol_{\cdom}(\exists \overline{Y} (\Pi\, \Box\, \sigma)) =
\{\eta \in Val_{\cdom} \mid  \textnormal{ex.}\, \eta' \in WTSol_{\cdom}(\Pi\, \Box\, \sigma),\,
\textnormal{s.t.}\, \eta'  =_{\setminus \overline{Y}} \eta\}$.
\item
$WTSol_{\cdom}(\Pi\, \Box\, \sigma)
= \{\eta \in Sol_{\cdom}(\Pi\, \Box\, \sigma) \mid (\Pi\, \Box\, \sigma)\star\eta\, \textnormal{is well-typed}\}$,
where $(\Pi\, \Box\, \sigma)\star\eta =_{def} \Pi\eta\, \Box\, (\sigma\star\eta)$.
\end{enumerate}
\end{definition}

\begin{example}[Constraints and Their Solutions] \label{pConExample}
Let us now illustrate different notions concerning constraints by
referring again to the motivating example from Subsection
\ref{examples}. The domain $\ccdom$ underlying this example is a
`hybrid' domain supporting the cooperation of three `pure' domains
named $\herbrand$, $\rdom$ and $\fd$, as we will see in
Subsections \ref{pdom} and \ref{cdomains}. For the moment, note
that $\ccdom$ allows to work with four different kinds of
constraints, namely bridge constraints and the specific
constraints supplied by $\herbrand$, $\rdom$ and $\fd$, as
explained in Section \ref{introduction}.
\begin{enumerate}
\item
Concerning well-typed constraints, we note that the small program
in this example is well-typed. Therefore, all the constraints
occurring there are also well-typed. For instance:
\begin{itemize}
\item
{\tt domain [X,Y] 0 N} is well-typed (w.r.t. any type environment
which includes the type assumptions {\tt X :: int, Y :: int, N ::
int}).
\item
\begin{sloppypar}
{\tt RY+RX <= RY0+RX0} is also well-typed (w.r.t. any type
environment which includes the type assumptions {\tt RY :: real,
RX :: real, RY0 :: real, RX0 :: real}).
\end{sloppypar}
\end{itemize}
Of course, the signature underlying the example allows to write
constraints such as {\tt domain [X,Y] true 3.2}, which cannot be
well-typed in any type environment. Due to static type discipline,
the compiler will reject programs including ill-typed constraints.
\item
Concerning constraint solutions, note that computing by means of the cooperative goal solving calculus
presented in Section \ref{cooperative} eventually triggers the computation of solutions for primitive constraints.
As already discussed in Subsection \ref{examples}, solving  {\bf Goal  2} eventually leads to
the following set $\Pi$ of primitive constraints (understood as logical conjunction):
\vspace*{.1cm}

\hspace*{1.5cm}
{\tt X \#== RX, Y \#== RY,}

\hspace*{1.5cm}
{\tt RY >= d-0.5, RY-RX <= 0.5, RY+RX <= n+0.5,}

\hspace*{1.5cm}
{\tt domain [X,Y] 0 n, labeling [\,] [X,Y].}
\vspace*{.1cm}

$\Pi$ happens to be the union of three sets of primitive constraints corresponding to the three lines above:
A set of two {\em bridge constraints} $\Pi_M$, a set of three {\em real arithmetical constraints} $\Pi_R$,
and a set of two {\em finite domain constraints} $\Pi_F$. Therefore,
$Sol_{\ccdom}(\Pi) = Sol_{\ccdom}(\Pi_M) \cap Sol_{\ccdom}(\Pi_R) \cap Sol_{\ccdom}(\Pi_F)$.
As we have seen in Subsection \ref{examples}, the only possibility for $\eta \in \sol{\ccdom}{\Pi}$
is {\tt $\eta$(X) = $\eta$(Y) = d}, and the computation proceeds with the help of  {\em constraint solvers}
and {\em projections}, among other mechanisms.
\item
Concerning obviously demanded variables, let us remark that all the variables occurring in the constraint set $\Pi$ shown in the previous item are obviously demanded. This will become clear
from the discussion of the domains $\herbrand$, $\rdom$ and $\fd$ in Subsection \ref{pdom}.
\item
Concerning critical variables, note that a variable may be critical either because it is demanded but not obviously demanded, or else because it is not demanded at all. For instance, variables {\tt A} and {\tt B} are demanded but not obviously demanded by the strict equality constraint {\tt (A,2) == (1,B)}. Therefore, they are critical variables. To illustrate the case of  critical but not demanded variables,
consider the  primitive  constraint $\pi$ = {\tt L /= X:Xs}.
Due to the definition of `obvious demand' for strict disequality constraints,
variable {\tt L}  is obviously demanded by $\pi$, while  {\tt X} and {\tt Xs} are not obviously demanded, and therefore critical.
Moreover, it can be argued that neither  {\tt X} nor {\tt Xs} is demanded by $\pi$.
Variable {\tt X} is not demanded because there  exist solutions $\eta \in \sol{\cdom}{\pi}$ such that
$\eta({\tt X}) = \bot$ (either with $\eta({\tt L}) = [\,]$ or else with $\eta({\tt L}) = {\tt t:ts}$ such that
$\eta({\tt Xs})$ is different from {\tt ts}).
Variable {\tt Xs} is not demanded because of similar reasons.
\end{enumerate}
\end{example}
\vspace*{-.3cm}

\subsection{Pure Domains and  their Solvers}\label{pdom}

In order to be helpful for programming purposes, constraint domains must
provide so-called  {\em constraint solvers},  which process the
constraints arising in the course of a computation. For some theoretical purposes, it suffices to model
a solver as a function which maps any given constraint to one of the three different values
$true$, $false$  or $unknown$; see e.g. \cite{JMM+98}.
In practice, however, solvers are expected to have the ability  of reducing
primitive constraints to so-called {\em solved forms}, which are simpler and can be shown as
computed answers to the users. As discussed in the introduction
(see in particular Subsection \ref{examples}),
the constraint domain underlying many  practical problems may involve heterogeneous primitives
related to different base types. In such cases, it may be not realistic to expect that a single solver
for the whole domain is directly available.

% Pure vs hybrid domains

In the sequel, we will make a pragmatic distinction between {\em
pure constraint domains} which are given `in one piece' and come
equipped with a solver, and {\em hybrid constraint domains} which
are built as a combination of simpler domains and must rely on the
solvers of their components. In the rest of this subsection we give
a mathe\-matical formalization of the notion of solver tailored to
the needs of the $CFLP$ scheme, followed by a presentation of
$\herbrand$, $\rdom$  and $\fd$ as pure domains equipped with
solvers. In the case of $\rdom$  and $\fd$, we limit ourselves to
describe their most basic primitives, although other useful
facilities are available in the $\toy$ implementation. A proposal
for the construction of so-called {\em coordination domains} as a
particular kind of hybrid domains will be presented in Subsection
\ref{cdomains}. \vspace*{-.3cm}

\subsubsection{Constraint Solvers} \label{csolvers}

% Solvers: Motivation.

For any pure constraint domain $\cdom$ we postulate a {\em constraint solver} which can reduce any given finite set $\Pi$ of atomic primitive constraints to an equivalent simpler form, while taking proper  care of critical variables occurring in $\Pi$.
Since the value of a critical variable $X$ may be needed by some solutions of $\Pi$ and irrelevant
for some other solutions, we require that solvers have the ability to compute a distinction of cases
discriminating such situations.

% Solvers: Formal definition.

\begin{definition} [Formal Requirements for Solvers]\label{defSolver}
A  constraint solver for the domain $\cdom$  is modeled as
a function $solve^{\cdom}$ which can be applied to pairs of the form $(\Pi,\varx)$,
where $\Pi \subseteq APCon_{\cdom}$ is a finite set of atomic primitive constraints and
$\varx \subseteq cvar_{\cdom}(\Pi)$ is a finite set including some of the critical variables in $\Pi$,
where the two extreme cases $\varx = \emptyset$ and $\varx = cvar_{\cdom}(\Pi)$ are allowed.
By convention, we may abbreviate $solve^{\cdom}(\Pi,\emptyset)$ as $solve^{\cdom}(\Pi)$.
We require that any {\em solver invocation} $solve^{\cdom}(\Pi,\varx)$ returns a
finite disjunction $\bigvee_{j=1}^{k}\exists \overline{Y}_j (\Pi_j\, \Box\, \sigma_j)$
of existentially quantified constraint stores, fulfilling the following conditions:
\begin{enumerate}
\item
{\bf Fresh Local Variables:}
For all $1 \leq j \leq k$: $(\Pi_j\, \Box\, \sigma_j)$ is a store, $\overline{Y}_j = var(\Pi_j\, \Box\, \sigma_j)\setminus var(\Pi)$ are fresh local variables and $vdom(\sigma_j) \cup vran(\sigma_j) \subseteq var(\Pi) \cup  \overline{Y}_j$.
\item
{\bf Solved Forms:}
For all $1 \leq j \leq k$: $\Pi_j\, \Box\, \sigma_j$ is in solved form w.r.t. $\varx$.
By definition, this means that
$solve^{\cdom}(\Pi_j,\varx) = \Pi_j\, \Box\,\varepsilon$.
\item
{\bf Safe Bindings:}
For all $1 \leq j \leq k$ and for all $X \in \varx \cap vdom(\sigma_j)$: $\sigma_j(X)$ is a constant.
\item
{\bf Discrimination:}
Each computed $\varx$-solved form
$\Pi_j\, \Box\, \sigma_j$ $(1 \leq j \leq k)$ must satisfy:
Either $\varx\, \cap\, odvar_{\cdom}(\Pi_j) \neq \emptyset$
or else $\varx \cap var(\Pi_j) = \emptyset$
(i.e., either some critical variable in $\varx$ becomes obviously demanded,
or else all critical variables in $\varx$  disappear).
\item
{\bf Soundness:}
$Sol_{\cdom}(\Pi) \supseteq \bigcup_{j=1}^{k} Sol_{\cdom}(\exists\overline{Y}_j (\Pi_j\, \Box\, \sigma_j))$.
\item
{\bf Completeness:}
$WTSol_{\cdom}(\Pi) \subseteq \bigcup_{j=1}^{k} WTSol_{\cdom}(\exists\overline{Y}_j (\Pi_j\, \Box\, \sigma_j))$.
\end{enumerate}
Moreover, $solve^{\cdom}$  is called an  {\em extensible solver} iff
the solver invocation $solve^{\cdom}(\Pi,\varx)$ is defined
and satisfies the conditions listed in this definition
not just for $\Pi \subseteq APCon_{\cdom}$ and $\varx \subseteq cvar_{\cdom}(\Pi)$,
 but more  generally for $\Pi \subseteq APCon_{\cdom'} \upharpoonright SPF$
 and $\varx \subseteq cvar_{\cdom'}(\Pi)$, where $\cdom'$ is any conservative extension of $\cdom$.
The idea is that an extensible solver can deal with constraints involving
the primitives in $\cdom$ and values described by
patterns over arbitrary conservative extensions of $\cdom$.
\end{definition}

% Choice of a set of critical variables. Solving steps.

The presentation of goal solving in Section \ref{cooperative} will discuss the proper way of choosing a set $\varx$ of critical variables for each particular solver invocation. The idea is that $\varx$ should include all critical variables which are waiting to be bound to the result of evaluating some expression  at some other place within the goal. This idea also motivates the {\em safe bindings} condition.

Operationally, the alternatives within the disjunctions returned by solver invocations are usually explored in some sequential order with the help of a  backtracking mechanism.
Assuming that $solve^{\cdom}(\Pi,\varx) = \bigvee_{j=1}^{k}\exists \overline{Y}_j (\Pi_j\, \Box\, \sigma_j)$,
we will sometimes use the following notations:
 \begin{itemize}
\item
$\Pi \vdash\!\!\vdash_{solve^{\cdom}_{\varx}} \exists \overline{Y'} (\Pi'\, \Box\, \sigma')$
to indicate that $\exists \overline{Y'} (\Pi'\, \Box\, \sigma')$ is
$\exists \overline{Y}_j (\Pi_j\, \Box\, \sigma_j)$ for some $1 \leq j \leq k$.
In this case we will speak of a {\em successful solver invocation}.
\item
$\Pi \vdash\!\!\vdash_{solve^{\cdom}_{\varx}} \blacksquare$
to indicate that $k = 0$.  In this case we will speak of a {\em failed solver invocation}, yielding the obviously unsatisfiable store $\blacksquare = \blacklozenge\, \Box\, \varepsilon$.
\end{itemize}

% Solvers in practice.

As defined above, a constraint store $\Pi\, \Box\, \sigma$ is said
to be in {\em solved form} w.r.t. a set of critical variables
$\varx$ (or simply in solved form if $\varx = \emptyset$) iff
$solve^{\cdom}(\Pi,\varx) =\Pi\, \Box\,\varepsilon$. In practice,
solved forms can be recognized by syntactical criteria, and a
solver invocation $solve^{\cdom}(\Pi,\varx)$ is performed only in
the case that $\Pi\, \Box\, \sigma$ is not yet solved w.r.t.
$\varx$. Whenever a solver is invoked, the {\em soundness} condition requires that
no new spurious solution (whether well-typed or not) is introduced,
while the {\em completeness} condition requires that no
{\em well-typed} solution is lost.
In practice, any solver can be  expected to be sound, but  completeness
may hold only for some choices of the constraint set $\Pi$ to be solved.
Demanding completeness for arbitrary (rather than well-typed) solutions
would be still less realistic.
The solvers of interest for this paper suffer some limitations regarding completeness,  as explained
in Subsections \ref{hdom},  \ref{rdom} and  \ref{fdom} below.

% Black-box versus Glass-box solvers. Specification of glass-box solvers by means of store transformation systems.

From a user's viewpoint, a solver can behave as a {\em black-box} or
as a {\em glass-box}. Black-box solvers can just be invoked to
compute disjunctions of solved forms, but users cannot observe their
inner workings, in contrast to the case of glass-box solvers. Users
can define glass-box solvers by means of appropriate tools, such as
{\em Constraint Handling Rules} \cite{fruehwirth98:chr}. In this
paper we propose to use {\em store transformation systems} as a
convenient abstract technique for specifying the behaviour of
glass-box solvers. A store transformation system (briefly $sts$)
over the constraint domain $\cdom$ is specified as a set of store
transformation rules (briefly $str$s) {\bf RL} that describe
different ways to transform a given store $\Pi\, \Box\, \sigma$
w.r.t. a
given set $\varx$ of critical variables. The notions and notations
defined below are useful for working with $sts$s. Some of them refer
to a selected set of $str$s noted as $\mathcal{RS}$.
\begin{itemize}
\item
$\Pi\, \Box\, \sigma \vdash\!\!\vdash_{\cdom,\, \varx}\, \Pi'\, \Box\, \sigma'$
indicates that the store $\Pi\, \Box\, \sigma$ can be transformed into $\Pi'\, \Box\, \sigma'$ in one step,
using one of the available $str$s. This notation can be also used to indicate a failing
transformation step, writing the inconsistent store $\blacksquare = \blacklozenge\, \Box\, \varepsilon$
in place of $\Pi'\, \Box\, \sigma'$.
\item
$\Pi\, \Box\, \sigma \vdash\!\!\vdash^{*}_{\cdom,\, \varx}\, \Pi'\, \Box\, \sigma'$
indicates that $\Pi\, \Box\, \sigma$ can be transformed into $\Pi'\, \Box\, \sigma'$ in finitely many steps.
\item
The store $\Pi\, \Box\, \sigma$ is called $\mathcal{RS}$-{\em irreducible} iff there is no $str$
${\bf RL} \in \mathcal{RS}$ that can be applied to transform $\Pi\, \Box\, \sigma$.
Note that  this is trivially true if $\mathcal{RS}$ is the empty set.
If $\mathcal{RS}$ is the set of all the available $str$s, the
store $\Pi\, \Box\, \sigma$ is called simply irreducible
(or also a $\varx$-solved form).
\item
$\Pi\, \Box\, \sigma \vdash\!\!\vdash^{*}_{\cdom,\, \varx}!\, \Pi'\, \Box\, \sigma'$
indicates that $\Pi\, \Box\, \sigma \vdash\!\!\vdash^{*}_{\cdom,\, \varx}\, \Pi'\, \Box\, \sigma'$
holds, and moreover, the final store $\Pi'\, \Box\, \sigma'$ is irreducible.
\end{itemize}

% Solvers defined by STSs, properties of STSs

Assume a given $sts$ over $\cdom$ such that
for any finite $\Pi \subseteq APCon_{\cdom}$ and any $\varx \subseteq cvar_{\cdom}(\Pi)$,
the set $\mathcal{SF}_{\cdom}(\Pi,\varx) = \{\Pi'\, \Box\, \sigma' \mid \Pi\, \Box\, \varepsilon
\vdash\!\!\vdash^{*}_{\cdom,\, \varx}!\,\, \Pi'\, \Box\, \sigma'\}$ is finite.
Then the  solver defined by the $sts$ can be specified to behave as follows:
$$solve^{\cdom}(\Pi,\varx)=\bigvee \{\exists \overline{Y'}(\Pi' \Box\sigma')\mid\Pi' \Box \sigma'
\in\mathcal{SF}_{\cdom}(\Pi,\varx),\overline{Y'}=var(\Pi' \Box \sigma')\setminus var(\Pi)\}$$
Once $solve^{\cdom}$ has been so defined, the notation
$\Pi \vdash\!\!\vdash_{solve^{\cdom}_{\varx}} \exists \overline{Y'} (\Pi'\, \Box\, \sigma')$
actually happens to mean that
$\Pi\, \Box\, \varepsilon$  $ \vdash\!\!\vdash^{*}_{\cdom,\, \varx}!\,\, \Pi'\, \Box\, \sigma'$ and
$\overline{Y'} = var(\Pi'\, \Box\, \sigma') \setminus var(\Pi)$.
Therefore,  the symbols $\vdash\!\!\vdash_{solve^{\cdom}_{\varx}}$
and $\vdash\!\!\vdash^{*}_{\cdom,\, \varx}!$ should not be confused, but have related meanings.
 The following definition specifies different properties of store transformation systems that are useful
 to check that the corresponding solvers satisfy  the conditions stated in Definition \ref{defSolver}.

 \begin{definition} [Properties of Store Transformation Systems]\label{defpsts}
 Assume a store transformation system over $\cdom$ whose transition relation is
 $\sts{\cdom}{\varx}$, and a selected set $\mathcal{RS}$ of $str$s.
 Then the $sts$ is said to satisfy:
\begin{enumerate}
\item
The {\bf  Fresh Local Variables Property} iff
$\Pi\, \Box\, \sigma \vdash\!\!\vdash_{\cdom,\, \varx}\, \Pi'\, \Box\, \sigma'$
implies that $\Pi'\, \Box\, \sigma'$ is a store,
$\overline{Y'}  = var(\Pi'\, \Box\, \sigma') \setminus var(\Pi\, \Box\, \sigma)$ are fresh local variables,
and $\sigma' = \sigma \sigma_1$ for some substitution $\sigma_1$
(responsible for the variable bindings created at this step) such that
$vdom(\sigma_1) \cup vran(\sigma_1) \subseteq var(\Pi) \cup  \overline{Y'}$.
\item The {\bf Safe Bindings Property}  iff
$\Pi\, \Box\, \sigma \vdash\!\!\vdash_{\cdom,\, \varx}\, \Pi'\, \Box\, \sigma'$
implies that $\sigma_1(X)$ is a constant for all $X \in \varx \cap vdom(\sigma_1)$,
where $\sigma' = \sigma \sigma_1$ as in the previous item.
\item
The {\bf Finitely Branching Property} iff for any fixed $\Pi\, \Box\, \sigma$
there are finitely many $\Pi'\, \Box\, \sigma'$ such that 
$\Pi\, \Box\, \sigma \vdash\!\!\vdash_{\cdom,\, \varx}\, \Pi'\, \Box\, \sigma'$.
\item
The {\bf Termination Property} iff there is no infinite sequence 
$\{\Pi_i\, \Box\, \sigma_i \mid i \in \mathbb{N}\}$ such that 
$\Pi_i\, \Box\, \sigma_i \vdash\!\!\vdash_{\cdom,\, \varx}\, \Pi_{i+1}\, \Box\, \sigma_{i+1}$
for all $i \in \mathbb{N}$.
\item
The {\bf Local Soundness Property} iff
for any $\cdom$-store $\Pi\, \Box\, \sigma$, the union
$$\bigcup \{Sol_{\cdom}(\exists\overline{Y'} (\Pi' \, \Box\, \sigma')) \mid
\Pi\, \Box\, \sigma \vdash\!\!\vdash_{\cdom,\, \varx}\, \Pi'\, \Box\, \sigma',
\overline{Y'}  = var (\Pi'\, \Box\, \sigma') \setminus var(\Pi\, \Box\, \sigma)\}$$
is a subset of $Sol_{\cdom}(\Pi\, \Box\, \sigma)$.
\item
The {\bf Local Completeness Property} for $\mathcal{RS}$-free steps iff
for any $\cdom$-store $\Pi\, \Box\, \sigma$
which is $\mathcal{RS}$-irreducible but not in $\varx$-solved form,
$WTSol_{\cdom}(\Pi\, \Box\, \sigma)$ is a subset of the union
$$\bigcup \{WTSol_{\cdom}(\exists\overline{Y'} (\Pi' \, \Box\, \sigma'))\mid\Pi \Box \sigma \vdash\!\!\vdash_{\cdom,\, \varx} \Pi'  \Box \sigma',
\overline{Y'}  = var (\Pi'\Box \sigma') \setminus var(\Pi \Box \sigma)\}$$
If $\mathcal{RS}$ is the empty set
(in which case all the stores are trivially $\mathcal{RS}$-irreducible)
this property is called simply \emph{local completeness}.
\end{enumerate}
In the case of an extensible solver, the six conditions
listed in this definition must be checked for any conservative extension $\cdom'$ of $\cdom$
and any set $\Pi$ of $SPF$-restricted atomic primitive constraints over $\cdom'$.
\end{definition}

% Hereditarily RS-irreducible stores and RS-free  solver invocations

Assume a solver $solve^{\cdom}$ defined by means of a given
$sts$ with transition relation $\sts{\cdom}{\varx}$ and a
selected set $\mathcal{RS}$ of $str$s. If the $sts$ is terminating,
the following recursive definition makes sense:
A given store $\Pi\, \Box\, \sigma$ is {\em hereditarily} $\mathcal{RS}$-{\em irreducible}
iff $\Pi\, \Box\, \sigma$ is $\mathcal{RS}$-{\em irreducible} and all the stores
$\Pi'\, \Box\, \sigma'$ such that
$\Pi\, \Box\, \sigma \vdash\!\!\vdash_{\cdom,\, \varx}\, \Pi'\, \Box\, \sigma'$ (if any)
are also {\em hereditarily} $\mathcal{RS}$-{\em irreducible}.
A solver invocation $solve^{\cdom}(\Pi,\varx)$
is called $\mathcal{RS}$-{\em free} iff the store $\Pi\, \Box\, \varepsilon$ is
hereditarily $\mathcal{RS}$-irreducible. This notion occurs in the
following technical lemma (proved in Appendix \ref{pSolvCdom}), which can be applied  to
ensure that $solve^{\cdom}$ satisfies the requirements for solvers listed in
Definition \ref{defSolver}.

% Lemma for proving solver properties

\begin{lemma}[Solvers defined by means of Store Transformation Systems]\label{psts}
Any finitely branching and terminating $\cdom$-store transformation system verifies:
\begin{enumerate}
\item
$\mathcal{SF}_{\cdom}(\Pi,\varx)$ is always finite, and hence $solve^{\cdom}$ is well defined
and trivially satisfies the solved forms property.
\item
$solve^{\cdom}$ has the fresh local variables resp. safe bindings property if the
store transformation system has the corresponding property.
\item
$solve^{\cdom}$ is sound if the store transformation system is locally sound.
\item
$solve^{\cdom}$ is complete for $\mathcal{RS}$-free invocations
if the store transformation system is locally complete for $\mathcal{RS}$-free steps.
In the case that $\mathcal{RS}$ is empty, this amounts to say that
$solve^{\cdom}$ is complete if the store transformation system is locally complete.
\end{enumerate}
\end{lemma}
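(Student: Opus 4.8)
The plan is to verify the four claims in turn, reducing each to the corresponding property of the underlying $sts$ via a routine induction on the length of transformation sequences. First I would dispatch claim (1): since the $sts$ is finitely branching and terminating, König's Lemma applies to the tree of all transformation sequences starting from $\Pi\, \Box\, \varepsilon$, so this tree is finite; hence $\mathcal{SF}_{\cdom}(\Pi,\varx)$ --- the set of leaves, i.e.\ the hereditarily irreducible stores reachable via $\vdash\!\!\vdash^{*}_{\cdom,\, \varx}!$ --- is finite, $solve^{\cdom}$ is well defined by the displayed equation, and the \emph{solved forms} condition of Definition~\ref{defSolver} holds trivially because irreducible stores $\Pi'\, \Box\, \sigma'$ satisfy $solve^{\cdom}(\Pi',\varx) = \Pi'\, \Box\, \varepsilon$ by construction.

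For claim (2), I would argue by induction on the number of steps in a sequence $\Pi\, \Box\, \varepsilon \vdash\!\!\vdash^{*}_{\cdom,\, \varx}!\, \Pi'\, \Box\, \sigma'$. The base case (zero steps) gives $\sigma' = \varepsilon$, for which both the fresh local variables and safe bindings conditions are immediate. For the inductive step, one composes the single-step guarantee from the Fresh Local Variables Property (resp.\ Safe Bindings Property) of Definition~\ref{defpsts} with the induction hypothesis: freshness of the newly introduced variables at each step accumulates (the $\overline{Y}_j$ of the final store is the union of the per-step fresh variables and is disjoint from $var(\Pi)$), and composition of substitutions $\sigma' = \varepsilon\,\sigma_1\,\sigma_2\cdots\sigma_m$ preserves the domain/range containment in $var(\Pi) \cup \overline{Y}_j$; for safe bindings one uses that a composition of substitutions each of which maps every $X \in \varx$ to a constant (or leaves it fixed) again maps each such $X$ to a constant. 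Claim (3) is the analogous accumulation argument for \emph{soundness}: Local Soundness gives $Sol_{\cdom}(\exists \overline{Y'}(\Pi'\, \Box\, \sigma')) \subseteq Sol_{\cdom}(\Pi\, \Box\, \sigma)$ for each single step, and iterating along a finite sequence (using the Substitution Lemma~\ref{sl} to absorb the bindings created and monotonicity of $Sol_{\cdom}$ under existential quantification) yields $Sol_{\cdom}(\exists \overline{Y_j}(\Pi_j\, \Box\, \sigma_j)) \subseteq Sol_{\cdom}(\Pi)$ for every disjunct, so the union of solution sets of the disjuncts is contained in $Sol_{\cdom}(\Pi)$, which is precisely the \emph{soundness} condition. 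One must also check the \emph{discrimination} condition of Definition~\ref{defSolver}, but this is built into the meaning of ``$\varx$-solved form'' (the $sts$ is designed so that irreducible stores satisfy it), so no extra work is needed beyond noting it.

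Claim (4) is the delicate one and I expect it to be the main obstacle, because \emph{local} completeness is assumed only for $\mathcal{RS}$-free \emph{steps}, i.e.\ only for stores that are $\mathcal{RS}$-irreducible, whereas an arbitrary transformation sequence may pass through stores that are \emph{not} $\mathcal{RS}$-irreducible. The key observation is that an $\mathcal{RS}$-free solver invocation $solve^{\cdom}(\Pi,\varx)$ starts, by definition, from a store $\Pi\, \Box\, \varepsilon$ that is \emph{hereditarily} $\mathcal{RS}$-irreducible; hence \emph{every} store encountered along \emph{any} transformation sequence from it is again $\mathcal{RS}$-irreducible, so Local Completeness for $\mathcal{RS}$-free steps is applicable at every single step. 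Given this, I would prove by induction on the (finite, by termination) length of a maximal transformation sequence from $\Pi\, \Box\, \varepsilon$ that $WTSol_{\cdom}(\Pi\, \Box\, \varepsilon)$ is contained in the union of $WTSol_{\cdom}(\exists \overline{Y'}(\Pi'\, \Box\, \sigma'))$ over all hereditarily irreducible stores reachable from it: at a non-solved-form store the Local Completeness property forces every well-typed solution to be a solution of \emph{some} one-step successor, and the induction hypothesis applied to that successor (together with the Substitution Lemma~\ref{sl} to handle the binding it introduces) pushes the solution all the way down to a leaf; at a store already in $\varx$-solved form there is nothing to prove. Taking the union over the finitely many leaves gives exactly the \emph{completeness} condition of Definition~\ref{defSolver} for the invocation $solve^{\cdom}(\Pi,\varx)$. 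The case $\mathcal{RS} = \emptyset$ is the special case where every invocation is automatically $\mathcal{RS}$-free, giving unconditional completeness. Throughout, the treatment of the type-discipline side conditions (well-typedness of the solutions propagated) is handled exactly as in the $(WT)Sol$ convention, so the same induction covers both the plain and the well-typed statements with no additional argument; as noted in the paper's introduction, full details of the typing bookkeeping are deliberately left to adaptations of the techniques of \cite{GHR01}.
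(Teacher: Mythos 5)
Your proposal matches the paper's proof in all essentials: König's Lemma for the finiteness of $\mathcal{SF}_{\cdom}(\Pi,\varx)$ in item (1), induction along transformation sequences (accumulating fresh variables, composed bindings, and local soundness) for items (2) and (3), and an induction over the terminating transformation relation exploiting hereditary $\mathcal{RS}$-irreducibility — so that local completeness for $\mathcal{RS}$-free steps applies at every store reached — for item (4). The only quibble is your aside on discrimination: that property is not among the lemma's claims and does not come for free from irreducibility (the paper establishes it separately, per solver, as in Theorem \ref{hsolver}), but this does not affect the correctness of your argument.
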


Note that this  lemma can be used for proving global properties of
extensible solvers, provided that the $sts$ can work with constraint
stores $\Pi\, \Box\, \sigma$, where $\Pi$ is a finite set of
$SPF$-restricted atomic primitive constraints over some arbitrary
conservative extension $\cdom'$ of $\cdom$, and the local properties
required by the lemma hold for any such $\cdom'$.

% The solvers used in this paper.

In the rest of this paper, we will work with the three pure domains $\mathcal{H}$, $\mathcal{FD}$
and $\mathcal{R}$ introduced in the following sections. We will rely on black-box solvers for
$\mathcal{R}$ and $\mathcal{FD}$ provided by SICStus Prolog and we will define an extensible glass-box solver for $\mathcal{H}$ using the store transformation technique just explained.
\vspace*{-.3cm}

\subsubsection{The Pure Constraint Domain $\mathcal{H}$} \label{hdom}

% The domain \herbrand.

The {\em Herbrand domain} $\mathcal{H}$ supports computations with symbolic equality and
disequality constraints over values of any type. Formally, it is defined as follows:
\begin{itemize}
\item
Specific signature  $\Sigma = \langle TC,\ SBT,\ DC,\ DF,\ SPF \rangle$, where $SBT$ is empty
and $SPF$  includes just the strict equality operator  {\tt  == :: A ->  A ->  bool}.
\item
Interpretation {\tt ==}$^{\herbrand}$, defined as for any domain whose specific signature
includes {\tt ==}.
\end{itemize}

% Domains with equality and \herbrand constraints

Recall Definition \ref{defCextension} and note that a conservative
extension of $\herbrand$ is any domain $\cdom$ whose specific
signature includes the primitive {\tt ==}. Such a $\cdom$ will be
called a {\em domain with equality} in the sequel. The
$\{\textnormal{{\tt ==}}\}$-restricted constraints over a given
domain with equality are also called {\em extended} Herbrand
constraints. As already explained in Subsection \ref{dom}, atomic
Herbrand constraints have the form $e_1$ {\tt ==} $e_2\, \to!\, t$,
{\em strict equality constraints} $e_1$ {\tt ==} $e_2$ abbreviate
$e_1$ {\tt ==} $e_2 \to!$ {\tt  true}, and  {\em strict disequality
constraints} abbreviate $e_1$ {\tt ==} $e_2\, \to!$ {\tt false}.

%Critical variables and Herbrand solver

Obviously demanded variables (and thus critical variables) for
primitive extended Herbrand constraints are computed as explained in
Subsection \ref{dom}. An extensible Herbrand solver must be able to
solve any finite set $\Pi \subseteq APCon_{\cdom} \upharpoonright
\{\textnormal{{\tt ==}}\}$ of atomic primitive extended Herbrand
constraints, w.r.t. any $\varx \subseteq cvar_{\cdom}(\Pi)$ of
cri\-tical variables. Roughly speaking, the solver proceeds by symbolic
decomposition and binding propagation transformations. More precisely,
we define an extensible glass-box solver for $\herbrand$ by means of the
store transformation technique explained in Subsection \ref{csolvers},
using the transformation rules for $\herbrand$ stores shown in
Table \ref{htable}. Each of these rules has the form
$\pi, \Pi\, \Box\, \sigma \vdash\!\!\vdash_{\herbrand,\, \varx}\, \Pi'\, \Box\, \sigma'$
and indicates the transformation of any store $\pi, \Pi\, \Box\,
\sigma$, which includes the atomic constraint $\pi$ plus other
constraints  $\Pi$; no sequential ordering is intended. We say that
$\pi$ is the {\em selected atomic constraint} for this
transformation step. The notation $\overline{t_m \textnormal{{\tt
==}} s_m}$ in transformation {\bf H3} abbrevia\-tes $t_1$ {\tt ==}
$s_1,\, \ldots,\,t_m$ {\tt ==} $s_m$ and will be used at some other
places. All the $sts$s make sense for arbitrary extended Herbrand
constraints, which ensures extensibility of the $\herbrand$-solver.
Note that transformations {\bf H3} and {\bf H7} involve
decompositions. An application of {\bf H3} or {\bf H7} is called
{\em opaque} iff $h$ is $m$-opaque in the sense explained in
Subsection \ref{expressions}, in which case the new constraints
resulting from the decomposition may become ill-typed. Note also
that an application of transformation {\bf H13} may obviously lose
solutions. An invocation $solve^{\herbrand}(\Pi,\varx)$  of the
$\herbrand$-solver  is called {\em safe} iff it has been computed
without any opaque application of the store transformation rules
{\bf H3} and {\bf H7} and without any  application of the store
transformation rule {\bf H13}. More formally,
$solve^{\herbrand}(\Pi,\varx)$ is a safe invocation of the
$\herbrand$-solver iff it is $\mathcal{URS}$-free, where
$\mathcal{URS}$ is the set $\{{\bf OH3}, {\bf OH7}, {\bf H13}\}$
consisting of {\bf H13} and the unsafe instances  {\bf OH3} and {\bf
OH7} corresponding to opaque applications of {\bf H3} and {\bf H7},
respectively.

 \vspace*{-.3cm}
\begin{table}[h*]
\begin{center}
\begin{tabular}{p{11.cm}}
\hline
\begin{itemize}
\item[{\bf H1}] $(t$ {\tt ==} $s)$ $\to!$ $R,~\Pi$ $\Box$ $\sigma$ $\red_{\herbrand, \varx}$
$(t$ {\tt ==} $s,~\Pi)\sigma_1$  $\Box$ $\sigma\sigma_1$ ~~where
$\sigma_1 = \{R \mapsto true\}$. \vspace*{.1cm}
\item[{\bf H2}]
$(t$ {\tt ==} $s)$ $\to!$ $R,~\Pi$ $\Box$ $\sigma$ $\red_{\herbrand,
\varx}$ $(t$ {\tt /=} $s,~\Pi)\sigma_1$  $\Box$ $\sigma\sigma_1$ ~~
where $\sigma_1 = \{R \mapsto false\}$. \vspace*{.1cm}
\item[{\bf H3}] $h\, \tpp{t}{m}$ {\tt ==} $h\, \tpp{s}{m},~\Pi$ $\Box$ $\sigma$ $\red_{\herbrand, \varx}$
$\overline{t_m \textnormal{{\tt ==}} s_m},~\Pi$ $\Box$ $\sigma$
\vspace*{.1cm}
\item[{\bf H4}] $t$ {\tt  ==} $X,~\Pi$ $\Box$ $\sigma$ $\red_{\herbrand, \varx}$
$X$ {\tt  ==} $t,~\Pi$ $\Box$ $\sigma$ ~~ if $t$ is not a variable.
\vspace*{.1cm}
\item[{\bf H5}] $X$ {\tt ==} $t,~\Pi$ $\Box$ $\sigma$ $\red_{\herbrand, \varx}$
$tot(t),~\Pi\sigma_1$ $\Box$ $\sigma\sigma_1$ ~~ if $X \notin
\varx$, $X \notin var(t)$, $X \neq t$,

where $\sigma_1$ $=$ $\{X$ $\mapsto$ $t\}$, $tot(t)$ abbreviates
$\bigwedge_{Y\in var(t)}(Y\textnormal{{\tt ==}}Y)$. \vspace*{.1cm}
\item[{\bf H6}] $X$ {\tt ==} $t,~\Pi$ $\Box$ $\sigma$ $\red_{\herbrand, \varx}$ $\blacksquare$ ~~
if $X \in var(t)$, $X \neq t$. \vspace*{.1cm}
\item[{\bf H7}] $h\, \tpp{t}{m}$ {\tt  /=} $h\, \tpp{s}{m},~\Pi$ $\Box$ $\sigma$ $\red_{\herbrand, \varx}$
$(t_i$ {\tt  /=} $s_i,~\Pi$ $\Box$ $\sigma)$~~ for each $1\leq i
\leq m$. \vspace*{.1cm}
\item[{\bf H8}] $h\, \tpp{t}{n}$ {\tt  /=} $h'\, \tpp{s}{m},~\Pi$ $\Box$ $\sigma$ $\red_{\herbrand, \varx}$
$\Pi$ $\Box$ $\sigma$ ~~if $h \neq h'$ or $n \neq m$.
\vspace*{.1cm}
\item[{\bf H9}] $t$ {\tt  /=} $t,~\Pi$ $\Box$ $\sigma$ $\red_{\herbrand, \varx}$
$\blacksquare$ ~~ if $t \in \mathcal {V}\!\!ar \cup DC \cup DF \cup
SPF$. \vspace*{.1cm}
\item[{\bf H10}] $t$ {\tt  /=} $X,~\Pi$ $\Box$ $\sigma$ $\red_{\herbrand, \varx}$
$X$ {\tt  /=} $t,~\Pi$ $\Box$ $\sigma$ ~~ if $t$ is not a variable.
\vspace*{.1cm}
\item[{\bf H11}] $X${\tt  /=} $c\, \tpp{t}{n},\Pi \Box \sigma\red_{\herbrand, \varx}$
$(Z_i${\tt /=}$t_i, \Pi)\sigma_1\Box\sigma\sigma_1$  if
$X$$\notin$$\varx$, $c$$\in$$DC^n$ and $\varx$ $\cap$ $var(c\,
\tpp{t}{n})$$\neq$$\emptyset$

where $1$$\leq$$i$$\leq$$n$ (non-deterministic choice),
$\sigma_1$$=$$\{X$$\mapsto$ $c\, \tpp{Z}{n}\}$, $\tpp{Z}{n}$ fresh
variables.
\item[{\bf H12}] $X$ {\tt  /=} $c\, \tpp{t}{n},~\Pi$ $\Box$ $\sigma$ $\red_{\herbrand, \varx}$
$\Pi\sigma_1$ $\Box$ $\sigma\sigma_1$ ~if $X \notin \varx$, $c \in DC^n$ and
$\varx \cap var(c\, \tpp{t}{n}) \neq \emptyset$

where  $\sigma_1 = \{X \mapsto d\, \tpp{Z}{m}\}$, $c \in DC^n$, $d
\in DC^m$, $d \neq c$, $d$ belongs to the same datatype as $c$,
$\tpp{Z}{m}$ fresh variables. \vspace*{.1cm}
\item[{\bf H13}] $X$ {\tt  /=} $h\, \tpp{t}{m},~\Pi$ $\Box$ $\sigma$ $\red_{\herbrand, \varx}$
$\blacksquare$ ~~ if $X \notin \varx$, $\varx \cap var(h\,
\tpp{t}{m}) \neq \emptyset$ and $h \notin DC^m$.
\end{itemize}\\
\hline
\end{tabular}
\caption{Store Transformations for $solve^\mathcal{H}$}\label{htable}
\end{center}
\end{table}
\vspace*{-.4cm}

% Some history

The idea of using equality and disequality constraints in Logic
Programming stems from Colmerauer
\cite{Col84,colmerauer:prolog-iii-cacm90}. The problem of sol\-ving
these constraints, as well as related decision problems for theories
involving equations and disequations, has been widely investigated
in works such as \cite{LMM87,Mah88,CL89,Com91,Fer92,BB94}, among
others. These papers assume the classical algebraic semantics for
the equality relation, and propose methods for solving so-called
{\em unification and disunification problems} bearing some analogies
to the transformation rules shown in Table \ref{htable}. However,
there are also some differences, because strict equality in $CFLP$
is designed to work with lazy and possibly non-deterministic
functions, whose behaviour does not correspond to the semantics of
equality  in classical algebra and equational logic, as argued in
\cite{Rod01}. Note in particular transformation {\bf H5}, which introduces constraints of 
the form $Y$ {\tt ==} $Y$ in $\herbrand$-solved forms.
These are called {\em totality constraints}, because a valuation $\eta$ is a solution of 
$Y$ {\tt ==} $Y$ iff $\eta(Y)$ is a total pattern.
An approach to disequality constraints close to our
semantic framework can be found in \cite{arenas94combining}, but no
formalization of a Herbrand solver is provided.

% Formal properties of the Herbrand solver

The following theorem ensures that the $sts$ for $\herbrand$-stores
can be accepted as a correct specification of an extensible glass-box solver
for the domain $\herbrand$, which is complete for safe solver invocations.

\begin{theorem}[Formal Properties of $solve^{\herbrand}$] \label{hsolver}
The $sts$ with transition relation $\sts{\herbrand}{\varx}$ is finitely branching and terminating,
and therefore
$$solve^{\herbrand}(\Pi,\varx) =
\bigvee \{\exists \overline{Y'} (\Pi'\, \Box\, \sigma') \mid \Pi'\,
\Box\, \sigma' \in \mathcal{SF}_{\herbrand}(\Pi,\varx),\,
\overline{Y'} = var(\Pi'\, \Box\, \sigma') \setminus var(\Pi)\}$$ is
well defined for any domain with equality $\cdom$, any finite $\Pi
\subseteq APCon(\cdom) \upharpoonright \{\textnormal{{\tt ==}}\}$
and any $\varx \subseteq cvar_{\cdom}(\Pi)$. Moreover, for any
arbitrary choice of a domain $\cdom$ with equality,
$solve^{\herbrand}$ satisfies all the requirements for solvers
enumerated in  Definition \ref{defSolver}, except that the {\em
completeness} property may fail for some choices of the constraint
set $\Pi \subseteq APCon(\cdom) \upharpoonright \{\textnormal{{\tt
==}}\}$ to be solved, and is guaranteed to hold only if  the solver
invocation $solve^{\herbrand}(\Pi,\varx)$ is safe
(i.e., $\{{\bf OH3}, {\bf OH7}, {\bf H13}\}$-free).
\end{theorem}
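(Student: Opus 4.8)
The plan is to derive the theorem from the general Lemma~\ref{psts}, so that the concrete work reduces to checking the local properties of the store transformation system of Table~\ref{htable}. More precisely, once we establish that the $sts$ with transition relation $\sts{\herbrand}{\varx}$ is (i)~finitely branching, (ii)~terminating, (iii)~endowed with the Fresh Local Variables and Safe Bindings properties of Definition~\ref{defpsts}, (iv)~locally sound, and (v)~locally complete for $\mathcal{URS}$-free steps (with $\mathcal{URS} = \{\textbf{OH3}, \textbf{OH7}, \textbf{H13}\}$), Lemma~\ref{psts} gives at once that $solve^{\herbrand}$ is well defined (the displayed disjunction being finite), trivially meets the Solved Forms requirement, and satisfies the Fresh Local Variables, Safe Bindings, Soundness and (safe-invocation-restricted) Completeness requirements of Definition~\ref{defSolver}. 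The only clause of Definition~\ref{defSolver} not covered by Lemma~\ref{psts}, namely Discrimination, will be checked separately by inspecting the shape of irreducible stores. Since every argument below is carried out for an arbitrary domain with equality $\cdom$ and an arbitrary finite $\Pi \subseteq APCon(\cdom) \upharpoonright \{\texttt{==}\}$, the extensibility of $solve^{\herbrand}$ comes for free.

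Finite branching is read off Table~\ref{htable}: each rule has a single successor except \textbf{H7} (at most $m$ successors, one per argument position), \textbf{H11} (a nondeterministic choice among $n$ argument positions) and \textbf{H12} (a nondeterministic choice among the finitely many alternative data constructors of the datatype of $c$). For termination I would exhibit a well-founded lexicographic measure on stores, combining roughly: the number of equational atoms whose left side, after the flip \textbf{H4}, is a non-critical variable still eliminable by \textbf{H5}; a multiset that bounds the syntactic sizes of the constructor-headed right-hand sides of disequational atoms containing a critical variable (these are what \textbf{H11}/\textbf{H12} shrink, and the bindings those rules create, being over \emph{fresh, non-critical} variables, cannot create new atoms of this kind); and the total size of $\Pi$. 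I expect the main technical obstacle of the theorem to lie exactly here: the binding-propagation rules \textbf{H5}, \textbf{H11}, \textbf{H12} may substitute a variable and thereby reshuffle which atoms are ``active'', so the lexicographic order has to be arranged so that the component strictly decreased by a variable elimination dominates the components it might increase. I will not grind through this bookkeeping. Together, (i) and (ii) yield via Lemma~\ref{psts}(1) that $\mathcal{SF}_{\herbrand}(\Pi,\varx)$ is finite and that $solve^{\herbrand}$ is well defined by the displayed formula.

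The Fresh Local Variables and Safe Bindings properties of the $sts$ are read off directly: bindings are created only by \textbf{H1}, \textbf{H2} (mapping $R$ to a constant, hence safe even if $R \in \varx$), \textbf{H5}, \textbf{H11}, \textbf{H12} (all three guarded by $X \notin \varx$, so their safe-bindings obligation is vacuous), and fresh variables appear only through the $\overline{Z}$ in \textbf{H11}, \textbf{H12}, with the required scoping. Local soundness is a routine rule-by-rule check using the fixed interpretation of \texttt{==} (Definition~\ref{dcdom}(4)) and the Substitution Lemma~\ref{sl}: the decomposition rules \textbf{H3}, \textbf{H7} are sound because structural equality, respectively incompatibility, of ground patterns lifts through a common head symbol; \textbf{H8} is sound because two rigid patterns with distinct heads or distinct arities have no common upper bound; and the failure rules \textbf{H6}, \textbf{H9}, \textbf{H13} are sound because their target store has no solutions. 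By Lemma~\ref{psts}(2,3), $solve^{\herbrand}$ then has the Fresh Local Variables, Safe Bindings and Soundness properties.

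It remains to prove local completeness for $\mathcal{URS}$-free steps; this is the core of the argument. Let $\Pi\,\Box\,\sigma$ be $\mathcal{URS}$-irreducible but not a $\varx$-solved form; then some rule applies, and since no rule of $\mathcal{URS}$ does, some \emph{safe} rule applies, and I would show that the union of its successor stores retains every well-typed solution $\eta$ of $\Pi\,\Box\,\sigma$. Three recurring ingredients carry this: (1)~\emph{obvious demand} — for \textbf{H1}/\textbf{H2}, \textbf{H5}, \textbf{H6}, \textbf{H11}/\textbf{H12} the result or head variable of the selected atom is obviously demanded, so $\eta$ cannot send it to $\bot$; this pins down $R\eta \in \{true, false\}$ for \textbf{H1}/\textbf{H2} and forces $X\eta$ to be a genuine pattern whose head constructor decides between \textbf{H11} and \textbf{H12} (and, within \textbf{H11}, which argument disequation to retain); (2)~the Transparency Lemma~\ref{trl}(2), applied to the $m$-transparent head of a \emph{non-opaque} \textbf{H3} or \textbf{H7} and to the always-transparent constructor of \textbf{H11}/\textbf{H12}, which guarantees that the decomposed constraints remain well-typed under $\eta$, so that $\eta$ does belong to $WTSol$ of the relevant successor — and this is precisely the place where opacity has to be banned, an opaque decomposition being able to produce ill-typed constraints; (3)~unsatisfiability of the targets of the failure rules \textbf{H6} (by the occur check, together with obvious demand excluding a $\bot$ value) and \textbf{H9}, and the fact that \textbf{H8} only drops an always-satisfied conjunct. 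Lemma~\ref{psts}(4) then lifts this to global completeness of $solve^{\herbrand}$ for safe invocations, while rule \textbf{H13}, which sends satisfiable stores to the inconsistent store, already shows that completeness fails for general $\Pi$. Finally, for Discrimination I would argue that in any irreducible store every atom is either a totality constraint $Y\,\texttt{==}\,Y$ or normalizes (via \textbf{H4}/\textbf{H10}) to the form $Y\,\texttt{==}\,r$ or $Y\,\texttt{/=}\,r$ with variable head $Y$; if such an atom contains a critical variable then irreducibility forces its head $Y$ to be critical as well (otherwise \textbf{H5}, \textbf{H6}, or one of \textbf{H11}--\textbf{H13} would apply), and that head is obviously demanded by the atom — hence either some critical variable of the store is obviously demanded, or the store contains no critical variable at all, which is exactly the Discrimination requirement.
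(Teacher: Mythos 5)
Your overall route is the paper's own: verify the six local properties of Definition~\ref{defpsts} for the $sts$ of Table~\ref{htable}, invoke Lemma~\ref{psts} to get well-definedness, solved forms, fresh local variables, safe bindings, soundness and completeness for $\{{\bf OH3},{\bf OH7},{\bf H13}\}$-free invocations, and check Discrimination separately by a case analysis on which rule applies to an atom containing a variable of $\varx$; your soundness and completeness sketches (Substitution Lemma for the binding rules, obvious demand to pin down the value of the head or result variable, the Transparency Lemma for non-opaque decompositions, unsatisfiability of the targets of ${\bf H6}$/${\bf H9}$) are exactly the ingredients the paper packages into its auxiliary Lemmata~\ref{ASL} and~\ref{ACL}. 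One small slip in your Discrimination argument: the rule guards are $X \notin \varx$, not ``$X$ non-critical'', so the conclusion you can draw from irreducibility is that the variable head belongs to $\varx$ (and is obviously demanded), which is what the property actually requires; with that rewording your shape analysis matches the paper's case distinction.

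The genuine gap is termination, which you yourself flag as the main obstacle and then decline to carry out. This is not mere bookkeeping: the measure you sketch does not work as stated. Your leading component, the count of ${\bf H5}$-eliminable equations, is \emph{increased} by the decomposition rule ${\bf H3}$ (e.g.\ $c\,X$ \texttt{==} $c\,t$ decomposes into $X$ \texttt{==} $t$), so it cannot come first; and the component that ${\bf H3}$ does decrease, the total size of $\Pi$, is not decreased by ${\bf H5}$, ${\bf H11}$ or ${\bf H12}$, whose substitutions can arbitrarily enlarge the remaining constraints — so neither ordering of your components gives a lexicographic descent, and your hedge ``arrange the order so that the component decreased by variable elimination dominates'' is precisely the unresolved issue. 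The paper resolves it with a finer five-component measure $(P_1,\ldots,P_5)$: the number of atoms to which some $str$ is still applicable, the summed depths of occurrences of $\varx$-variables inside patterns (what ${\bf H11}$ shrinks), the total pattern size (for ${\bf H3}$/${\bf H7}$), the number of unsolved occurrences of obviously demanded variables (for ${\bf H1}$/${\bf H2}$), and the number of misplaced variable occurrences (for ${\bf H4}$/${\bf H10}$), verified rule by rule in Table~\ref{hordering}. Without such a measure (or an equivalent argument) the finiteness of $\mathcal{SF}_{\herbrand}(\Pi,\varx)$, hence the very well-definedness of $solve^{\herbrand}$, is not established, so the proposal as it stands proves the theorem only modulo its hardest step.
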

\vspace*{-.1cm}

The proof of the previous theorem is rather technical and can be
found in Appendix \ref{pSolvCdom}. At this point, we just make a few
remarks related to the discrimination and completeness  properties,
that may help to understand some differences between our
$\herbrand$-solver and more classical methods for solving
unification and disunification problems. On the one hand,
transformations {\bf H11} and {\bf H12} are designed to ensure the
discrimination property while preserving completeness w.r.t.
well-typed solutions. On the other hand, transformation  {\bf H13}
trivially ensures discrimination, but it sacrifices completeness
because it fails without making sure that no well-typed solutions
exist. This  corresponds to situations unlikely to occur in practice
and such that no practical way of preserving completeness is at
hand. The other two failing transformations given in Table
\ref{htable} (namely   {\bf H6} and {\bf H9}) respect completeness,
because they are applied to unsatisfiable stores. Finally, the other
cases where completeness may be lost correspond to unsafe
decomposition steps performed with the opaque instances {\bf OH3}
and {\bf OH7} of the $str$s {\bf H3} and {\bf H7}. Due to the
termination property of the $\herbrand$-$sts$, it is decidable
wether a  given $\herbrand$-store $\Pi\, \Box\, \sigma$ is
hereditarily $\mathcal{URS}$-irreducible, in which case no opaque
decompositions will occur when solving the store. However,
computations in the cooperative goal solving calculus presented in
Section \ref{cooperative} can sometimes give rise to
$\herbrand$-stores whose resolution involves opaque decomposition
steps. Due to theoretical results proved in \cite{GHR01}, the
eventual occurrence of opaque decomposition steps during  goal
solving is an undecidable problem.  In case that opaque
decompositions occur, they should be signaled as warnings to the
user.

\begin{example}[Behaviour of $solve^{\herbrand}$] \label{HsolverExample}
In order to illustrate the behaviour of  $solve^{\herbrand}$,
consider the disequality constraint {\tt L /= X:Xs} discussed in item 4. of  Example \ref{pConExample}.
Remember that variable {\tt L} is obviously demanded, while variables {\tt X} and {\tt Xs} are both critical.
Therefore, there are four possible choices for the set $\varx$ of critical variables
to be used within the solver invocation, namely: $\emptyset$, $\{{\tt X}\}$, $\{{\tt Xs}\}$ and $\{{\tt X,\, Xs}\}$.
Let us discuss these cases one by one.
\vspace*{-.4cm}
\begin{itemize}
\item
Choosing $\varx = \emptyset$ means that the solver is not asked to discriminate w.r.t. any critical variable. In this case,
$solve^{\herbrand}$({\tt L/=X:Xs},$\emptyset)$ returns
{\tt L/=X:Xs} $\Box\, \varepsilon$, showing that {\tt L/=X:Xs} is seen as a solved form w.r.t. the
 empty set of critical variables.
\item
Choosing $\varx = \{{\tt X}\}$ asks the solver to discriminate w.r.t. the critical variable {\tt X}.
$solve^{\herbrand}$({\tt L/=X:Xs},\{{\tt X}\}) returns a disjunction of alternatives
$$(\lozenge\, \Box \{{\tt L}\mapsto[\,]\})\lor (\textnormal{{\tt X'/= X}}\, \Box \{{\tt L}\mapsto{\tt X':Xs'}\})
\lor (\textnormal{{\tt Xs'/= Xs}}\, \Box \{{\tt L}\mapsto{\tt
X':Xs'}\})$$ whose members correspond to the three different stores
$\Pi'\, \Box\, \sigma'$ such that the step {\tt L/=X:Xs} $\Box\,
\varepsilon \vdash\!\!\vdash_{\herbrand,\, \{{\tt X}\}}\, \Pi'\,
\Box\, \sigma'$ can be performed with transformation {\bf H12}.
Since these stores are solved w.r.t. $\{{\tt X}\}$, no further
transformations are required. Note that {\tt X} does not occur in
the first and third alternatives, while it has become obviously
demanded in the second one. In this way, the discrimination property
required for solvers is fulfilled.
\item
For each of the two choices $\varx = \{{\tt Xs}\}$ and $\varx =
\{{\tt X,\,Xs}\}$, it is easy to check that the solver invocation
$solve^{\herbrand}(${\tt L/=X:Xs}$,\varx)$ returns the same
disjunction of three alternatives as in the previous item, and the
discrimination property is also fulfilled w.r.t. the chosen set
$\varx$ in both cases.
\end{itemize}

\end{example}
\vspace*{-.4cm}

\subsubsection{The Pure Constraint Domain $\mathcal{R}$} \label{rdom}

% The domain \rdom.

The $\rdom$ domain supporting  computation with arithmetic constraints
over real numbers is a familiar idea, used in the well-known
instance $\clp{\rdom}$ of the $CLP$ scheme \cite{JMSY92}. In the
context of our $CFLP$ framework, a convenient formal definition of
the domain $\rdom$ is as follows:
\begin{itemize}
\item
Specific signature  $\Sigma = \langle TC,\ SBT,\ DC,\ DF,\ SPF \rangle$,
where $SBT$ =  \{{\tt real}\} includes just one base type whose values represent real numbers,
and $SPF$ includes the following binary primitive symbols, all of them intended to be
used in infix notation:
\begin{itemize}
\item
The strict equality operator {\tt == :: A ->  A ->  bool}.
\item
The arithmetical operators {\tt +, -, *, / :: real -> real -> real}.
\item
The inequality operator {\tt <= :: real -> real -> bool}.
\end{itemize}
\item
Set of base values $\mathcal{B}_{{\tt real}}^{\rdom} = \mathbb{R}$.
\item
Interpretation {\tt ==}$^{\rdom}$, defined as for any domain whose specific signature includes {\tt ==}.
\item
Interpretation {\tt +}$^{\rdom}$, defined so that for all $t_1,\, t_2,\, t \in  \mathcal{U}_{\rdom}$: \\
$t_1$ {\tt +}$^{\rdom} t_2 \to t$ is defined to hold iff some of the following cases holds:\\
Either $t_1$, $t_2$ and $t$ are real numbers, $t$  being equal to the addition of $t_1$ and $t_2$,
or else $t = \bot$. The interpretations of {\tt -, *} and {\tt /} are defined analogously.
\item
Interpretation {\tt <=}$^{\rdom}$, defined so that
for all $t_1,\, t_2,\, t \in  \mathcal{U}_{\rdom}$: \\
$t_1$ {\tt <=}$^{\rdom} t_2 \to t$ is defined to hold iff some of the following cases holds:\\
Either $t_1$, $t_2$ are real numbers such that $t_1$ is less than or equal to $t_2$, and $t$ = {\tt true};
or else $t_1$, $t_2$ are real numbers such that $t_1$ is greater than $t_2$, and $t$ = {\tt false};
or else $t = \bot$.
\end{itemize}

% \rdom constraints

Atomic $\rdom$-constraints have the form $e_1 \odot e_2\, \to!\, t$,
where $\odot$ is the strict equa\-lity operator or the inequality
operator or an arithmetical operator. An atomic $\rdom$-constraint
is called {\em proper} iff $\odot$ is not the strict equality
operator, and an extended Herbrand constraint otherwise. As already
explained in previous sections, {\em strict equality constraints}
$e_1$ {\tt ==} $e_2$ and  {\em strict disequality constraints} $e_1$
{\tt /=} $e_2$ can be understood as abbreviations of extended
Herbrand constraints. Moreover, various kinds of {\em inequality
constraints} can also be defined as abbreviations, as follows:

\begin{itemize}
\item $e_1$ {\tt <} $e_2$ =$_{def}$ $e_2$ {\tt <=} $e_1$ $\to!$ {\tt false}
\qquad  $e_1$ {\tt <=} $e_2$ =$_{def}$ $e_1$ {\tt <=} $e_2$ $\to!$ {\tt true}
\item $e_1$ {\tt >} $e_2$ =$_{def}$ $e_1$ {\tt <=} $e_2$ $\to!$ {\tt false}
\qquad  $e_1$ {\tt >=} $e_2$ =$_{def}$ $e_2$ {\tt <=} $e_1$ $\to!$ {\tt true}
\end{itemize}

% Critical variables and \rdom solver.

Concerning the solver $solve^{\rdom}$, we expect that it is able to deal
with $\rdom$-specific constraint sets $\Pi \subseteq APCon_{\rdom}$
consisting of atomic primitive constraints $\pi$ of the two following kinds:

\begin{itemize}
\item
Proper $\rdom$-constraints
$t_1 \odot t_2\, \to!\, t$, where $\odot$ is either the inequality operator or an arithmetical operator.
\item
$\rdom$-specific Herbrand constraints having the form
 $t_1$ {\tt ==} $t_2$ or  $t_1$ {\tt /=} $t_2$,
where each of the two patterns $t_1$ and $t_2$ is either a real constant value or a variable
whose type is known to be {\tt real} prior to the solver invocation.
\end{itemize}

For any finite  $\rdom$-specific $\Pi \subseteq APCon_{\rdom}$, it is clear that $dvar_{\rdom}(\Pi) = var(\Pi)$. Therefore, it is safe to define $odvar_{\rdom}(\Pi) = var(\Pi)$ and thus $cvar_{\rdom}(\Pi) = \emptyset$. Consequently, invocations to $solve^\rdom$
can be assumed to be always of the form $solve^{\rdom}(\Pi,\emptyset)$ (abbreviated as
$solve^{\rdom}(\Pi)$), and the discrimination requirements for critical variables become trivial.
Assuming that $solve^{\rdom}$ is used under the restrictions described above and implemented as a black-box solver on top of SICStus Prolog, we are confident that the postulate stated below is a reasonable one. In particular,
we  assume that SICStus Prolog solves $\rdom$-specific Herbrand constraints in a way
compatible with the behaviour of the extensible $\herbrand$-solver described in the previous subsection.

\begin{postulate}[Assumptions on the $\rdom$ Solver]  \label{rsolver}
$solve^{\rdom}$ satisfies five of the six properties required for solvers in Definition \ref{defSolver}
(namely {\em Fresh Local Variables}, {\em Solved Forms}, {\em Safe Bindings}, {\em Discrimination} and
{\em Soundness}), although the {\em Completeness} property may fail for some choices of the
$\rdom$-specific $\Pi \subseteq APCon_{\rdom}$ to be solved.
Moreover, whenever $\Pi \subseteq APCon_{\rdom}$ is $\rdom$-specific and
$\Pi \vdash\!\!\vdash_{solve^{\rdom}} \exists \overline{Y'} (\Pi'\, \Box\, \sigma')$,
the constraint set $\Pi'$ is also $\rdom$-specific, and for all $X \in vdom(\sigma')$: Either $\sigma'(X)$ is a real
value, or else $X$ and  $\sigma'(X)$ belong to $var(\Pi)$.
\end{postulate}

\begin{example}[Behaviour of the $\rdom$ Solver]\label{RsolverExample}
Let us now illustrate the behaviour of $solve^\rdom$ by considering the set of primitive atomic constraints
{\tt RY >= d-0.5, RY-RX <= 0.5, RY+RX <= n+0.5} occurring in item 2. of Example \ref{pConExample}.
The solver invocation $solve^{\rdom}(\Pi_R)$  returns one single alternative
$\Pi'_R\, \Box\, \varepsilon$ with $\Pi'_R = \Pi_R\, \cup$ $\{${\tt RY <= d+0.5}$\}$.
In this case, the new constraint {\tt RY <= d+0.5} has been inferred by adding the two former constraints
{\tt RY-RX <= 0.5} and {\tt  RY+RX <= n+0.5} and taking into account that
{\tt n = 2*d}. In other cases, the $\rdom$ solver can perform
other inferences by means of arithmetical reasoning valid in the
mathematical theory of the real number field. In general, solved
forms computed by solvers help to make more explicit the
requirements on variable values already `hidden' in the
constraints prior to solving (as the upper bound {\tt RY <= d+0.5}
in this example).
\end{example}
\vspace*{-.4cm}

\subsubsection{The Pure Constraint Domain $\mathcal{FD}$} \label{fdom}

The idea of a $\fd$ domain supporting  computation with arithmetic
and finite domain constraints over the integers is a familiar one
within the $CLP$ community, see e.g. \cite{HSD92,HSD98}. In the
context of our $CFLP$ framework, a convenient formal definition of
the domain $\fd$ is as follows:
\begin{itemize}
\item
Specific signature  $\Sigma = \langle TC,\ SBT,\ DC,\ DF,\ SPF \rangle$,
where $SBT$ =  \{{\tt int}\} includes just one base type whose values represent integer numbers,
and $SPF$ includes the following primitive symbols:
\begin{itemize}
\item
The strict equality operator {\tt == :: A ->  A ->  bool}.
\item
The arithmetical operators {\tt \#+, \#-, \#*, \#/ :: int -> int ->
int}.
\item The following primitive symbols related to computation with finite domains:
\begin{itemize}
\item {\tt domain:: [int] -> int -> int -> bool}
\item {\tt belongs:: int -> [int] -> bool}
\item {\tt labeling:: [labelType] -> [int] ->bool},
where {\tt labelType} is an enumerated datatype used to represent labeling strategies.
\end{itemize}
\item
The inequality operator {\tt \#<= :: int -> int -> bool}.
\end{itemize}
\item
Set of base values $\mathcal{B}_{{\tt int}}^{\fd} = \mathbb{Z}$.
\item
Interpretation {\tt ==}$^{\fd}$, defined as  for any domain whose specific signature includes {\tt ==}.
\item
Interpretation {\tt \#+}$^{\fd}$, defined so that for all $t_1,\, t_2,\, t \in  \mathcal{U}_{\fd}$: \\
$t_1$ {\tt \#+}$^{\fd} t_2 \to t$ is defined to hold iff some of the following cases holds:\\
Either $t_1$, $t_2$ and $t$ are integer numbers, $t$  being equal to the addition of $t_1$ and $t_2$,
or else $t = \bot$. The interpretations of {\tt \#-, \#*} and {\tt \#/} are defined analogously.
\item
Interpretation  {\tt domain}$^{\fd}$, defined so that for all $t_1,\, t_2,\, t_3,\, t \in  \mathcal{U}_{\fd}$: \\
{\tt domain}$^{\fd}$ $t_1\, t_2\, t_3\, \to\, t$ is defined to hold iff some of the following cases holds:\\
Either $t_2$ and $t_3$ are integer numbers $a$ and $b$ such that $a \leq b$,
$t_1$ is a non empty finite list of integers belonging to the interval $a..b$ and $t$ = {\tt true};
or else $t_2$ and $t_3$ are integer numbers $a$ and $b$ such that $a \leq b$,
$t_1$ is a non empty list of integers some of which does not belong to the interval $a..b$ and
$t$ = {\tt false}; or else $t_2$ and $t_3$ are integer numbers $a$ and $b$ such that $a > b$ and
$t$ = {\tt false}; or else $t = \bot$.
\item
Interpretation  {\tt belongs}$^{\fd}$, defined so that for all $t_1,\, t_2,\, t \in  \mathcal{U}_{\fd}$: \\
{\tt belongs}$^{\fd}$ $t_1\, t_2\, \to\, t$ is defined to hold iff some of the following cases holds:\\
Either $t_1$ is an integer, $t_2$ is a finite list of integers including $t_1$ as element, and $t$ = {\tt true};
or else $t_1$ is an integer, $t_2$ is a finite list of integers not including $t_1$ as element, and
$t$ = {\tt false}; or else $t = \bot$.
\item
Interpretation  {\tt labeling}$^{\fd}$, defined so that for all $t_1,\, t_2,\, t \in  \mathcal{U}_{\fd}$: \\
{\tt labeling}$^{\fd}$ $t_1\, t_2\, \to\, t$ is defined to hold iff some of the following cases holds:\\
Either $t_1$ is a defined value of type {\tt labelType}, $t_2$ is a finite list of integers, and $t$ = {\tt true};
or else $t = \bot$.
\item
Interpretation {\tt \#<=}$^{\fd}$, defined so that for all $t_1,\, t_2,\, t \in  \mathcal{U}_{\fd}$: \\
$t_1$ {\tt \#<=}$^{\fd} t_2 \to t$ is defined to hold iff some of the following cases holds:\\
Either $t_1$, $t_2$ are integer numbers such that $t_1$ is less than or equal to $t_2$, and $t$ = {\tt true}; or else $t_1$, $t_2$ are integer numbers such that $t_1$ is greater than $t_2$, and
$t$ = {\tt false}; or else $t = \bot$.
\end{itemize}

% \fd constraints

Atomic $\fd$-constraints include those of the form $e_1 \odot e_2
\to!\, t$, where $\odot$ is either the strict equality operator or
the inequality operator or an arithmetical operator, as well as {\em
domain constraints} {\tt domain} $e_1\, e_2\, e_3 \to!\, t$, {\em
membership constraints} {\tt belongs} $e_1\, e_2 \to!\, t$ and {\em
labeling constraints} {\tt labeling} $e_1\, e_2 \to!\, t$. Atomic
$\fd$-constraints are called extended Herbrand if they have the form
$e_1\, \textnormal{{\tt ==}}\, e_2\, \to!\, t$, and {\em proper}
$\fd$-constraints otherwise. As already explained in previous
sections, {\em strict equality constraints} $e_1$ {\tt ==} $e_2$ and
{\em strict disequality constraints} $e_1$ {\tt /=} $e_2$ can be
understood as abbreviations of extended Herbrand constraints.
Moreover, various kinds of {\em inequali\-ty constraints} can also
be defined as abbreviations, as follows:

\begin{itemize}
\item $e_1$ {\tt \#<} $e_2$ =$_{def}$ $e_2$ {\tt \#<=} $e_1$ $\to!$ {\tt false}
\qquad $e_1$ {\tt \#<=} $e_2$ =$_{def}$ $e_1$ {\tt \#<=} $e_2$
$\to!$ {\tt true}
\item $e_1$ {\tt \#>} $e_2$ =$_{def}$ $e_1$ {\tt \#<=} $e_2$ $\to!$ {\tt false}
\qquad $e_1$ {\tt \#>=} $e_2$ =$_{def}$ $e_2$ {\tt \#<=} $e_1$ $\to!$ {\tt true}
\end{itemize}

% Critical variables and \fd solver.

Concerning the solver $solve^{\fd}$, we expect that it is able to deal
with $\fd$-specific constraint sets $\Pi \subseteq APCon_{\fd}$
consisting of atomic primitive constraints $\pi$ of the two following kinds:
\begin{itemize}
\item
Proper $\fd$ atomic primitive constraints
(which may be  $t_1 \odot t_2\, \to!\, t$, where $\odot$ is either an integer arithmetical primitive or an inequality primitive,
or primitive domain, membership and labeling constraints).
\item
$\fd$-specific Herbrand constraints having the form
 $t_1$ {\tt ==} $t_2$ or  $t_1$ {\tt /=} $t_2$,
where each of the two patterns $t_1$ and $t_2$ is either an integer constant value or a variable
whose type is known to be {\tt int} prior to the solver invocation.
\end{itemize}

For any finite  $\fd$-specific $\Pi \subseteq APCon_{\fd}$, it is clear that $dvar_{\fd}(\Pi) = var(\Pi)$.
Therefore, it is safe to define $odvar_{\fd}(\Pi) = var(\Pi)$ and thus $cvar_{\fd}(\Pi) = \emptyset$.
Consequently, invocations to $solve^\fd$ can be assumed to be always of the form $solve^{\fd}(\Pi,\emptyset)$
(abbreviated as $solve^{\fd}(\Pi)$), and the discrimination requirements for critical variables become trivial.
Assuming that $solve^{\fd}$ is used under the restrictions described above and implemented as a black-box solver on top of SICStus Prolog, we are confident that the postulate stated below is a reasonable one. In particular,
we  assume that SICStus Prolog solves $\fd$-specific Herbrand constraints in a way
compatible with the behaviour of the extensible $\herbrand$-solver described in the previous subsection.

\begin{postulate}[Assumptions on the $\fd$ Solver]  \label{fsolver}
$solve^{\fd}$ satisfies five of the six properties required for solvers in Definition \ref{defSolver}
(namely {\em Fresh Local Variables}, {\em Solved Forms}, {\em Safe Bindings}, {\em Discrimination} and
{\em Soundness}), although the {\em Completeness} property may fail for some choices of the
$\fd$-specific $\Pi \subseteq APCon_{\fd}$ to be solved.
Moreover, whenever $\Pi \subseteq APCon_{\fd}$ is $\fd$-specific and
$\Pi \vdash\!\!\vdash_{solve^{\fd}} \exists \overline{Y'} (\Pi'\, \Box\, \sigma')$,
the constraint set $\Pi'$ is also $\fd$-specific,  and for all $X \in vdom(\sigma')$: Either $\sigma'(X)$ is an integer
value, or else $X$ and  $\sigma'(X)$ belong to $var(\Pi)$.
\end{postulate}

In particular, labeling constraints are solved by a systematic enumeration of possible values for certain integer variables. Therefore, $solve^\fd$ is unable to solve a labeling constraint $\pi$ unless the current constraint store includes domain or membership constraints for all the variables occurring in $\pi$. The next example shows a typical situation.

\begin{example}[Behaviour of the $\fd$ Solver]\label{FDsolverExample}
In order to illustrate the behaviour of $solve^{\fd}$, let us
consider the set of primi\-tive atomic constraints $\Pi_F$ =
$\{${\tt domain [X,Y] 0 n, labeling [\,] [X,Y]}$\}$ occurring also
in item 2. of Example \ref{pConExample}. The solver invocation
$solve^{\fd}(\Pi_F)$ must solve the conjunction of a domain
constraint and a labeling constraint, both involving the integer
variables {\tt X} and {\tt Y}. The solver proceeds by enumerating
all the possible values  of both variables {\tt X} and {\tt Y}
within their respective domains (determined in this case by the
domain constraint {\tt domain [X,Y] 0 n}) and returns a disjunction
of {\tt (n+1)}$^2$ alternatives, each of which describes one single
solution:
$$(\lozenge\, \Box \{{\tt X} \mapsto 0,\, {\tt Y} \mapsto 0\}) \lor \cdots \lor (\lozenge\, \Box \{{\tt X} \mapsto n,\, {\tt Y} \mapsto n\})$$
\end{example}

In general, solving labeling constraints can give rise to very
expensive enumera\-tions of solutions, unless the finite domains of
the integer variables involved have been pruned by some precedent
computation. As already discussed in Subsection \ref{examples}, the
efficiency of solving the constraint system occurring in item 2. of
Example \ref{pConExample} can be greatly improved by  cooperation
among the the domains $\herbrand$, $\rdom$ and $\fd$. We propose to
use the {\em coordination domains} defined in the next subsection as
a vehicle for domain cooperation in $CFLP$ programming.
\vspace*{-.3cm}

\subsection{Coordination Domains} \label{cdomains}

% Motivation.

{\em Coordination domains} $\mathcal{C}$ are a kind of `hybrid'  domains built from various component domains $\cdom_i$, intended to cooperate. The construction of coordination domains also involves  a so-called {\em mediatorial domain} $\mathcal{M}$, whose purpose is to supply {\em bridge constraints} for communication among the component domains. In practice, the component domains will be chosen as pure domains equipped with solvers, and the communication provided by the mediatorial domain will also benefit the solvers.

% Joinable Domains and amalgamated sums.

Mathematically, the construction of coordination domains relies on a
joinabi\-lity condition. Two given constraint domains $\cdom_1$ and
$\cdom_2$ with specific signatures $\Sigma_1 = \langle TC,\ SBT_1,\
DC,\ DF,\ SPF_1 \rangle$ and $\Sigma_2 = \langle TC,\ SBT_2,\ DC,\
DF,\ SPF_2 \rangle$, respectively, are called {\em joinable} iff the
two following conditions hold:
\begin{itemize}
\item
$SPF_1 \cap SPF_2 \subseteq \{${\tt ==}$\}$;
i.e., the only primitive function symbol $p$ allowed to belong both to  $SPF_1$ and  $SPF_2$
is the strict equality operator {\tt ==}.
\item
For every common base type $d \in SBT_1 \cap SBT_2$, one has $\mathcal{B}_d^{\cdom_1} = \mathcal{B}_d^{\cdom_2}$.
\end{itemize}

The {\em amalgamated sum} $\mathcal{S}Ê= \cdom_1 \oplus \cdom_2$ of two joinable domains $\cdom_1$ and $\cdom_2$
is defined as a new domain with signature
$\Sigma = \langle TC,\ SBT_1 \cup SBT_2,\ DC,\ DF,\ SPF_1 \cup SPF_2 \rangle$,
constructed as follows:
\begin{itemize}
\item
For $i = 1,2$ and for all $d \in SBT_i$: $\mathcal{B}_d^{\mathcal{S}} = \mathcal{B}_d^{\cdom_i}$. \\
(no conflict will arise for those $d \in SBT_1 \cap SBT_2$, because of joinability)
\item
For $i = 1,2$, for all $p \in SPF_i$, $p$ other than {\tt ==}, and for all $\overline{t}_n,\, t \in
\mathcal{U}_{\mathcal{S}}$: \\
$p^{\mathcal{S}} \overline{t}_n \to t$ is defined to hold iff one of
the two following cases holds: \\
Either $t = \bot$ or else there exist
$\overline{t'}_n,\, t' \in \mathcal{U}_{\cdom_i}$ such that
$\overline{t'}_n \sqsubseteq \overline{t}_n$, $t' \sqsupseteq t$ and
$p^{\cdom_i} \overline{t'}_n \to t'$.
\end{itemize}
Note that the value universe $\mathcal{U}_{\mathcal{S}}$ underlying an amalgamated sum
$\mathcal{S}Ê= \cdom_1 \oplus \cdom_2$ is a superset of $\mathcal{U}_{\cdom_i}$ for $i =1,2$.
The interpretation of {\tt ==} in $\mathcal{S}$ will behave as defined for any constraint domain, see
Subsection \ref{dom}. For primitive functions $p \in SPF_i$ other than {\tt ==}, the definition of
$p^{\mathcal{S}}$ is designed to obtain an extension of $p^{\cdom_i}$ which satisfies the
technical conditions required by Definition \ref{dcdom}.

% Fundamental properties of amalgamated sums.

The amalgamated sum $\cdom_1 \oplus \cdots \oplus \cdom_n$ of $n$ pairwise joinable domains
$\cdom_i$ $(1 \leq i \leq n)$ can be defined analogously. The following definition and theorem
guarantee the expected behaviour of amalgamated sums as conservative extensions of their components. The proof of the theorem is given in Appendix \ref{pSolvCdom}.

\begin{definition}[Domain specific constraints and truncation operator]\label{dsc}
Assume $\mathcal{S} = \cdom_1 \oplus \cdots \oplus \cdom_n$ of signature $\Sigma$,
constructed as the amalgamated sum of $n$  pairwise joinable domains $\cdom_i$ of signatures 
$\Sigma_i$. Let any $1 \leq i \leq n$ be arbitrarily fixed.
\begin{enumerate}
\item
A set $\Pi \subseteq APCon_{\cdom_i}$ is called $\cdom_i$-{\em specific} iff every valuation $\eta \in Val_{\sdom}$
such that $\eta \in Sol_{\sdom}(\Pi)$ satisfies $\eta(X) \in \uni{\cdom_i}$ for all $X \in var(\Pi)$.
Note that the $\rdom$-specific and $\fd$-specific sets of constraints previously introduced in subsections
\ref{rdom} and \ref{fdom} are also specific in the sense just defined.
\item
Consider the information ordering $\sqsubseteq$ over $\uni{\sdom}$.
The $\cdom_i$-{\em truncation} of a given $\sdom$ value $t \in \uni{\sdom}$
is defined as the $\sqsubseteq$-greatest $\cdom_i$ value $\trunc{t}{\cdom_i} \in \uni{\cdom_i}$ which satisfies
$\trunc{t}{\cdom_i} \sqsubseteq t$, so that any other $\cdom_i$ value $\hat{t} \in \uni{\cdom_i}$
such that $\hat{t} \sqsubseteq t$ must satisfy $\hat{t} \sqsubseteq \trunc{t}{\cdom_i}$.
An effective construction of $\trunc{t}{\cdom_i}$ from $t$ can be obtained by substituting $\bot$
 in place of any subpattern of $t$ which has any of the two following forms: a basic value $u$ which does not belong
 to $\uni{\cdom_i}$, or a partial application of a primitive function which does not belong to $\cdom_i$ specific signature.
 Note that $\trunc{t}{\cdom_i} = t$ if and only if $t \in \uni{\cdom_i}$.
\item
The $\cdom_i$-{\em truncation} of a given $\sdom$-valuation $\eta \in Val_{\sdom}$
 is the $\cdom_i$ valuation $\trunc{\eta}{\cdom_i}$ defined by the condition
 $\trunc{\eta}{\cdom_i}(X) = \trunc{\eta(X)}{\cdom_i}$, for all $X \in \var$.
 Note that $\trunc{\eta}{\cdom_i} = \eta$ if and only if $\eta \in Val_{\cdom_i}$.
\end{enumerate}
\end{definition}

\begin{theorem}[Properties of Amalgamated Sums]\label{sumProperties}
For any $\mathcal{S} = \cdom_1 \oplus \cdots \oplus \cdom_n$ of signature $\Sigma$ constructed as the amalgamated sum of $n$  pairwise joinable domains $\cdom_i$ of signatures $\Sigma_i$ $(1 \leq i \leq n)$:
\begin{enumerate}
\item
$\mathcal{S}$ is well-defined as a constraint domain; i.e., the
interpretations of primitive  function symbols in $\mathcal{S}$
satisfy the four conditions listed in Definition \ref{dcdom}
from Subsection \ref{dom}.
\item
$\mathcal{S}$ is a conservative extension of  $\cdom_i$ for all $(1 \leq i \leq n)$;
i.e., for all $1 \leq i \leq n$, for any $p \in SPF_i^m$ other than {\tt ==},
and for every $\overline{t}_m, t \in  \mathcal{U}_{\cdom_i}$,
one has $p^{\cdom_i}\, \overline{t}_m \to t$ iff $p^{\sdom}\, \overline{t}_m \to t$.
\item
For all $1 \leq i \leq n$, for any set of primitive constraints $\Pi
\subseteq  APCon_{\cdom_i}$ and for every valuation  $\eta  \in
Val_{\cdom_i}$, one has $\eta \in (WT)Sol_{\cdom_i}(\Pi)$ iff $\eta \in (WT)Sol_{\mathcal{S}}(\Pi)$.
\item
For all $1 \leq i \leq n$, for any set of $\cdom_i$-specific primitive constraints
$\Pi \subseteq  APCon_{\cdom_i}$ and for every valuation
$\eta  \in Val_{\sdom}$, one has $\eta \in (WT)Sol_{\sdom}(\Pi)$ iff $\trunc{\eta}{\cdom_i} \in (WT)Sol_{\cdom_i}(\Pi)$.
\end{enumerate}
\end{theorem}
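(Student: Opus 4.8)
The plan is to prove the four items in order, reducing first to the binary case $\mathcal{S} = \cdom_1 \oplus \cdom_2$ with a fixed $i \in \{1,2\}$; the $n$-ary case is entirely analogous (alternatively one iterates the binary construction, noting that joinability propagates to partial sums and that nested witnesses collapse by transitivity of $\sqsubseteq$). Throughout, the only nontrivial work concerns the primitives $p \in SPF_i$ with $p$ other than {\tt ==}, since {\tt ==}$^{\mathcal{S}}$ is fixed by the general definition and the base-value families $\mathcal{B}_d^{\mathcal{S}}$ are unambiguous by joinability.

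For item~1 I would check the four conditions of Definition~\ref{dcdom} for $p^{\mathcal{S}}$ by chasing its defining disjunction. When the result $t \neq \bot$ there is a witness $\overline{t'}_n \sqsubseteq \overline{t}_n$, $t' \sqsupseteq t$ with $p^{\cdom_i}\,\overline{t'}_n \to t'$; for \emph{Polarity} the same witness serves any larger arguments and any smaller result, and for \emph{Radicality} one feeds $\overline{t'}_n, t'$ into radicality of $\cdom_i$ to get a total $t'' \sqsupseteq t$ with $p^{\cdom_i}\,\overline{t'}_n \to t''$, which is total in $\mathcal{U}_{\mathcal{S}}$ too; the case $t = \bot$ is immediate. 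For \emph{Well-typedness}, given a monomorphic instance $\overline{\tau'}_n \to \tau'$ of $p$'s principal type, $\overline{t}_n$ of those types, and $p^{\mathcal{S}}\,\overline{t}_n \to t \neq \bot$, I would push the types down to the witness $\overline{t'}_n$ via the Type Preservation Lemma~\ref{tpl}, apply well-typedness of $p^{\cdom_i}$ (typing a value of $\mathcal{U}_{\cdom_i}$ does not depend on using $\Sigma_i$ or $\Sigma$, since all its symbols lie in $\Sigma_i$) to get $t' :: \tau'$, and push the type of $t' \sqsupseteq t$ back to $t$ by Lemma~\ref{tpl} again.

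Items~2 and~3 then follow quickly. For item~2, the forward direction takes $\overline{t}_m, t$ as their own witnesses, while the backward direction extracts a witness from $p^{\mathcal{S}}\,\overline{t}_m \to t$ (with $\overline{t}_m, t \in \mathcal{U}_{\cdom_i}$, $t \neq \bot$) and applies \emph{polarity of $\cdom_i$} to conclude $p^{\cdom_i}\,\overline{t}_m \to t$. Item~3 lifts this to constraints: for an atomic primitive constraint $p\,\overline{t}_n \to!\, t$ and $\eta \in Val_{\cdom_i}$, both $\overline{t}_n\eta$ and $t\eta$ lie in $\mathcal{U}_{\cdom_i}$, so $p^{\cdom_i}(\overline{t}_n\eta) \to t\eta \iff p^{\mathcal{S}}(\overline{t}_n\eta) \to t\eta$ by item~2 (for the case $p$ = {\tt ==} one also uses that two patterns of $\mathcal{U}_{\cdom_i}$ have a common upper bound in $\mathcal{U}_{\cdom_i}$ iff they have one in $\mathcal{U}_{\mathcal{S}}$, the least upper bound of two compatible patterns being assembled from their own subpatterns), and totality and groundness of $t\eta$ are ambient-independent; conjoining over $\Pi$ and noting that well-typedness of $\Pi\eta$ mentions only $\Sigma_i$-symbols gives both the plain and the well-typed equivalences.

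Item~4 is the crux. The key observation is that $\cdom_i$-specificity of $\Pi$ forces every $\eta \in Sol_{\mathcal{S}}(\Pi)$ to have $\eta(X) \in \mathcal{U}_{\cdom_i}$ for all $X \in var(\Pi)$, hence $\eta =_{var(\Pi)} \trunc{\eta}{\cdom_i}$ and $\Pi\eta = \Pi\,\trunc{\eta}{\cdom_i}$; I would combine this with the fact that the solution set of a set of atomic primitive constraints depends only on the valuation's restriction to its variables (immediate from Definition~\ref{defPrimSol}). For ($\Rightarrow$), $\eta \in Sol_{\mathcal{S}}(\Pi)$ gives $\trunc{\eta}{\cdom_i} =_{var(\Pi)} \eta$, so $\trunc{\eta}{\cdom_i} \in Sol_{\mathcal{S}}(\Pi)$, and being a $\cdom_i$-valuation it lies in $Sol_{\cdom_i}(\Pi)$ by item~3, with well-typedness carried by $\Pi\,\trunc{\eta}{\cdom_i} = \Pi\eta$. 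For ($\Leftarrow$), $\trunc{\eta}{\cdom_i} \in Sol_{\cdom_i}(\Pi)$ gives $\trunc{\eta}{\cdom_i} \in Sol_{\mathcal{S}}(\Pi)$ by item~3, and since $\trunc{\eta}{\cdom_i} \sqsubseteq \eta$ the Monotonicity Lemma~\ref{ml} applied in $\mathcal{S}$ yields $\eta \in Sol_{\mathcal{S}}(\Pi)$; then $\cdom_i$-specificity (now legitimately applicable, as $\eta$ is a solution) gives $\Pi\eta = \Pi\,\trunc{\eta}{\cdom_i}$, which is well-typed by hypothesis. The main obstacles I anticipate are the bookkeeping around the strict-equality clause in items~1 and~3 and making the appeals to the type system fully precise; as the paper notes, the latter can be done along the lines of \cite{GHR01}.
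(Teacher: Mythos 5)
Your proposal is correct and follows essentially the same route as the paper's own proof: item 1 by chasing the witness in the definition of $p^{\sdom}$ together with the Type Preservation Lemma \ref{tpl}, item 2 by polarity of $p^{\cdom_i}$, item 3 by applying item 2 atomwise (with the {\tt ==} case handled separately), and item 4 via $\cdom_i$-specificity, item 3, and the Monotonicity Lemma \ref{ml}. Your extra remark that compatibility of two $\mathcal{U}_{\cdom_i}$-patterns is the same in $\mathcal{U}_{\cdom_i}$ and in $\mathcal{U}_{\sdom}$ only makes explicit a point the paper leaves implicit in its treatment of strict equality.
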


% Sums $\herbrand \oplus \cdom$ and mediatorial Domains

Note that amalgamated sums of the form $\herbrand \oplus \cdom$ are
always possible, and give rise to compound domains that  can profit
from the extensible Herbrand solver. However, in order to construct
more interesting sums tailored to the communication among several
pure domains, so-called {\em mediatorial domains} are needed. Given
$n$ pairwise joinable domains $\cdom_i$ with specific signatures
$\Sigma_i = \langle TC,\ SBT_i,\ DC,\ DF,\ SPF_i \rangle$ ($1 \leq i
\leq n$), a  {\em mediatorial domain} for  the communication among
$\cdom_1, \ldots, \cdom_n$ is defined as any domain $\mathcal{M}$
with specific signature $\Sigma_0 = \langle TC,\ SBT_0,\ DC,\ DF,\
SPF_0 \rangle$ constructed in such a way that the following
conditions are satisfied:
\begin{itemize}
\item
$SBT_0 \subseteq \bigcup_{i = 1}^n SBT_i$, and $SPF_0 \cap SPF_i = \emptyset$ for all
$1 \leq i \leq n$.
\item
For each $d \in SBT_0$ and for any $1 \leq i \leq n$ such that $d \in SBT_i$,
$\mathcal{B}_d^{\mathcal{M}} = \mathcal{B}_d^{\cdom_i}$.\\
(no confusion can arise, since the domains $\cdom_i$ are pairwise
joinable).
\item
Each $p \in SPF_0$ is a so-called {\em equivalence primitive} {\tt \#==}$_{d_i, d_j}$
with declared principal type $d_i \to d_j \to bool$, for some $1 \leq i, j \leq n$ and some
$d_i \in SBT_i$, $d_j \in SBT_j$.
\item
Moreover, each equivalence primitive {\tt \#==}$_{d_i, d_j}$ is used in infix notation
and there is an  injective partial mapping
$inj_{d_i,d_j} : \mathcal{B}^{\cdom_i}_{d_i} \to \mathcal{B}^{\cdom_j}_{d_j}$ used to define the interpretation of {\tt \#==}$_{d_i, d_j}$ in $\mathcal{M}$ as follows: For all
$s, t, r \in \mathcal{U}_{\mathcal{M}}$, $s$ {\tt \#==}$_{d_i, d_j}^{\mathcal{M}}\, t\, \to r$  iff some
of the three cases listed bellow holds:
\vspace*{-.2cm}
\begin{enumerate}
\item
$s \in dom(inj_{d_i,d_j})$, $t \in \mathcal{B}^{\cdom_j}_{d_j}$, $t = inj_{d_i,d_j}(s)$ and $true \sqsupseteq r$.
\item
$s \in dom(inj_{d_i,d_j})$, $t \in \mathcal{B}^{\cdom_j}_{d_j}$, $t \neq inj_{d_i,d_j}(s)$ and $false \sqsupseteq r$.
\item
$r = \bot$.
\end{enumerate}
\end{itemize}
\vspace*{-.2cm}

% Bridges and antibridges

Equivalence primitives {\tt \#==}$_{d_i, d_j}$ allow to write well-typed atomic 
{\em mediatorial constraints} of the form $a$ {\tt \#==}$_{d_i, d_j}\, b\, \to!\, c$, using expressions 
$a :: d_i$, $b :: d_j$ and $c :: bool$. Constraints of the form $a$ {\tt \#==}$_{d_i, d_j}\, b\, \to!$ {\tt  true} resp. $a$ {\tt \#==}$_{d_i,d_j}\, b\, \to!$ {\tt false} are abbreviated as 
$a$ {\tt \#==}$_{d_i,d_j}\, b$ resp. $a$ {\tt \#/==}$_{d_i, d_j}\, b$ and
called {\em bridges} and {\em antibridges}, respectively. The
usefulness of bridges for cooperative goal solving in $CFLP$ has
been motivated in the introduction and will elaborated when
presenting the cooperative goal solving calculus $\cclnc{\ccdom}$ in
Section \ref{cooperative}. Antibridges and mediatorial constraints
$a$ {\tt \#==}$_{d_i, d_j}\, b\, \to!\, R$, where $R$ is a variable,
can also occur in $\cclnc{\ccdom}$ computations, but  they are not
so directly related to domain cooperation as bridges.

% Coordination Domains

Each particular choice of injective partial mappings $inj_{d_i,d_j}$
and their corresponding equivalence primitives {\tt \#==}$_{d_i,
d_j}$ gives rise to the construction of a particular mediatorial
domain $\mathcal{M}$, suitable for communication among the
$\cdom_i$. Moreover, it is clear by construction that the $n+1$
domains $\mathcal{M},\, \cdom_1,\, \ldots,\, \cdom_n$ are pairwise
joinable, and it is possible to build the amalgamated sum $\ccdom =
\mathcal{M} \oplus  \cdom_1 \oplus \cdots \oplus \cdom_n$. This
`hybrid' domain supports the communication among the domains
$\cdom_i$ via bridge constraints provided by $\mathcal{M}$.
Therefore, $\mathcal{M}$ is called a {\em coordination domain} for
$\cdom_1, \ldots, \cdom_n$.

In practice, it is advisable to include the Herbrand domain $\herbrand$ as one of the component domains $\cdom_i$  when building a coordination domain $\ccdom$. In application programs over such a coordination domain, the $\herbrand$ solver is typically helpful for solving symbolic equality and disequality constraints over user defined datatypes, while the solvers of other component domains $\cdom_i$ whose specific signatures include the primitive
{\tt ==} may be helpful for computing with equalities and disequalities related to $\cdom_i$'s specific base types.
\vspace*{-.3cm}

\subsection{The Coordination Domain $\ccdom = \mathcal{M} \oplus \herbrand \oplus \fd \oplus \rdom$} \label{ourcdom}

In this subsection, we explain the construction of a coordination
domain for coope\-ration among the three pure domains $\herbrand$,
$\fd$ and $\rdom$.

First,  we define a mediatorial domain $\mathcal{M}$ suitable to
this purpose. It is built with specific signature $\Sigma_0 =
\langle TC,\, SBT_0,\, DC,\, DF,\, SPF_0 \rangle$, where $SBT_0 =
\{int,real\}$ and $SPF_0 = \{${\tt \#==}$_{int, real}\}$. The
equivalence primitive {\tt \#==}$_{int, real}$ is interpreted with
respect to the total injective mapping $inj_{int,real} :: \mathbb{Z}
\to \mathbb{R}$, which maps each integer value into the equivalent
real value. In the sequel, we will write {\tt \#==} in place of {\tt
\#==}$_{int, real}$ when referring to this equivalence primitive. We
will use the same abbreviation for writing mediatorial constraints.

Next, we use this mediatorial domain for building $\ccdom =
\mathcal{M} \oplus \herbrand \oplus \fd \oplus \rdom$. In the rest
of the paper, $\ccdom$ will always stand for this particular
coordination domain, whose usefulness has been motivated in Section
\ref{introduction} and will become more evident in Section
\ref{cooperative}. Note that bridges {\tt X \#== RX} and antibridges
{\tt X \#/== RX} can be useful just as constraints; in particular,
{\tt X \#== RX} acts as an {\em integrality constraint} over the
value of the real variable {\tt RX}. More importantly, in Section
\ref{cooperative} the mediatorial domain $\ccdom$ will serve as a
basis for useful cooperation facilities, including the projection of
$\rdom$ constraints to the $\fd$ solver (and vice versa) using
bridges, the specialization of  $\herbrand$ constraints to become
$\rdom$-specific  or $\fd$-specific in some computational contexts, and some
other special mechanisms designed for processing the mediatorial
constraints occurring in computations.

In particular, computation rules for simplifying mediatorial
constraints will be needed. Although $\mathcal{M}$ is not a  'pure'
domain, simplifying $\mathcal{M}$-constraints is most conveniently
thought of as the task of a $\mathcal{M}$-solver.
This solver is expected to deal with $\mdom$-specific constraint sets $\Pi \subseteq APCon_{\mdom}$
consisting of atomic primitive constraints $\pi$ of the form
$t$ {\tt \#==}$ s\, \to!\, b$, where $b$ is either a variable or a boolean constant and
each of the two patterns $t$ and $s$ is either a variable or a numeric value of the proper type
({\tt int} for $t$ and {\tt real} for $s$).
For any finite set $\Pi \subseteq APCon_{\mathcal{M}}$ of such $\mdom$-specific constraints,
it is clear that $dvar_{\mathcal{M}}(\Pi) = var(\Pi)$.
Therefore, it is safe to define $odvar_{\mathcal{M}}(\Pi) = var(\Pi)$ and thus
$cvar_{\mathcal{M}}(\Pi) = \emptyset$. We define a glass-box solver
$solve^{\mathcal{M}}$  by means of the store transformation
technique explained in Subsection \ref{csolvers}, using the $str$s
for $\mathcal{M}$-stores shown in Table \ref{mtable}. Due to the
absence of critical variables, one-step transformations of
$\mathcal{M}$-stores do not depend on a parametrically given set
$\varx$ of critical variables and have the form $\pi, \Pi\, \Box\,
\sigma \vdash\!\!\vdash_{\mathcal{M}}\, \Pi'\, \Box\, \sigma'$,
indicating the transformation of any store $\pi, \Pi\, \Box\,
\sigma$, which includes the atomic constraint $\pi$ plus other
constraints  $\Pi$; no sequential ordering is intended. We say that
$\pi$ is the {\em selected atomic constraint} for this
transformation step.

%Table M
 \vspace*{-.3cm}
\begin{table}[h*]
\begin{center}
\begin{tabular}{p{11.2cm}}
\hline \begin{itemize} \item[{\bf M1}] $(t $ {\tt \#==} $s)$ $\to!$
$B,~\Pi$ $\Box$ $\sigma$ $\red_{\mathcal{M}}$ $(t$ {\tt \#==}$
~s,~\Pi)\sigma_1$ $\Box$ $\sigma\sigma_1$

if $t \in \mathcal V\!\!ar \cup \mathbb{Z}$, $s \in \mathcal V\!\!ar
\cup \mathbb{R}$, $B \in \mathcal V\!\!ar$, where $\sigma_1= \{B
\mapsto true\}$. \vspace*{.2cm}

\item[{\bf M2}] $(t $ {\tt \#==} $s)$ $\to!$ $B,~\Pi$ $\Box$ $\sigma$ $\red_{\mathcal{M}}$
$(t$ {\tt \#/==}$ ~s,~\Pi)\sigma_1$ $\Box$ $\sigma\sigma_1$

if $t \in \mathcal V\!\!ar \cup \mathbb{Z}$, $s \in \mathcal V\!\!ar
\cup \mathbb{R}$, $B \in \mathcal V\!\!ar$, where $\sigma_1= \{B
\mapsto false\}$. \vspace*{.2cm}

\item[{\bf M3}] $X$ {\tt \#==}$ u',~\Pi$ $\Box$ $\sigma$ $\red_{\mathcal{M}}$
$\Pi\sigma_1$ $\Box$ $\sigma\sigma_1$

if $u' \in \mathbb{R}$, and there is $u \in \mathbb{Z}$ such that
$u$ {\tt \#==}$^{\mathcal{M}}~u' \to true$, where $\sigma_1 = \{X
\mapsto u\}$.
\vspace*{.05cm}

\item[{\bf M4}] $X$ {\tt \#==} $ u',~\Pi$ $\Box$ $\sigma$ $\red_{\mathcal{M}}$ $\blacksquare$

if $u' \in \mathbb{R}$, and there is no $u \in \mathbb{Z}$ such that
$u$ {\tt \#==}$^{\mathcal{M}}~u' \to true$. \vspace*{.2cm}

\item[{\bf M5}] $u$ {\tt \#==} $RX,~\Pi$ $\Box$ $\sigma$ $\red_{\mathcal{M}}$
$\Pi\sigma_1$ $\Box$ $\sigma\sigma_1$

if $u \in \mathbb{Z}$ and $u'$$\in$$\mathbb{R}$ is so chosen that
$u$ {\tt \#==}$^{\mathcal{M}}~u'$ $\to$ $true$, where $\sigma_1 =
\{RX \mapsto u'\}$.
\vspace*{.1cm}

\item[{\bf M6}] $u$ {\tt \#==} $u',~\Pi$ $\Box$ $\sigma$ $\red_{\mathcal{M}}$
$\Pi$ $\Box$ $\sigma$ ~~ if $u \in \mathbb{Z}$, $u' \in \mathbb{R}$,
and $u$ {\tt \#==}$^{\mathcal{M}}~u' \to true$. \vspace*{.2cm}

\item[{\bf M7}] $u$ {\tt \#==} $u',~\Pi$ $\Box$ $\sigma$ $\red_{\mathcal{M}}$
$\blacksquare$ ~~~~~~~~  if $u \in \mathbb{Z}$, $u' \in \mathbb{R}$, and
$u$ {\tt \#==}$^{\mathcal{M}}~u' \to false$.
\vspace*{.2cm}

\item[{\bf M8}] $u$ {\tt \#/==} $u',~\Pi$ $\Box$ $\sigma$ $\red_{\mathcal{M}}$
$\Pi$ $\Box$ $\sigma$ ~~if $u \in \mathbb{Z}$, $u' \in \mathbb{R}$, and
$u$ {\tt \#==}$^{\mathcal{M}}~u' \to false$
\vspace*{.2cm}

\item[{\bf M9}] $u$ {\tt \#/==} $u',~M$ $\Box$ $\sigma$ $\red_{\mathcal{M}}$
$\blacksquare$~~~~~~~~if $u \in \mathbb{Z}$, $u' \in \mathbb{R}$, and
$u$ {\tt \#==}$^{\mathcal{M}}~u' \to true$.
\end{itemize}\\
\hline
\end{tabular}
\caption{ Store Transformations for $solve^\mathcal{M}$}
\label{mtable}
\end{center}
\end{table}
 \vspace*{-.4cm}

% Formal properties of the Mediatorial solver

The following theorem ensures that the $sts$ for $\mdom$-stores
can be accepted as a correct specification of a glass-box solver for the domain $\mdom$.

\begin{theorem}[Formal Properties of the $\mdom$ Solver] \label{msolver}
The $sts$ with transition relation $\ests{\mdom}$ is finitely branching and terminating, and therefore
$$solve^{\mdom}(\Pi) =
\bigvee \{\exists \overline{Y'} (\Pi'\, \Box\, \sigma') \mid \Pi'\, \Box\, \sigma' \in
\mathcal{SF}_{\mdom}(\Pi),\, \overline{Y'} = var(\Pi'\, \Box\, \sigma') \setminus var(\Pi)\}$$
is well defined for any finite $\Pi \subseteq APCon_{\mdom}$ of $\mdom$-specific constraints.
The solver $solve^{\mdom}$ satisfies all the requirements for solvers enumerated in  Definition \ref{defSolver}.
Moreover, whenever $\Pi \subseteq APCon_{\mdom}$ is $\mdom$-specific and
$\Pi \vdash\!\!\vdash_{solve^{\mdom}} \exists \overline{Y'} (\Pi'\, \Box\, \sigma')$,
the constraint set $\Pi'$ is also $\mdom$-specific and $\sigma'(X)$ is either a boolean value, an  integer value or a real value
for all $X \in vdom(\sigma')$.
\end{theorem}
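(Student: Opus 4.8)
The plan is to derive the theorem from Lemma~\ref{psts}, so that it suffices to check that the $sts$ of Table~\ref{mtable} is finitely branching and terminating and, in addition, enjoys the fresh local variables, safe bindings, local soundness and local completeness properties of Definition~\ref{defpsts} (with the selected set $\mathcal{RS}$ taken to be empty, since for $\mdom$ there are no decomposition rules nor ``desperate failure'' rules and completeness is expected to hold without exception). Given these, Lemma~\ref{psts}(1)--(4) yields well-definedness of $solve^{\mdom}$ together with the \emph{solved forms}, \emph{fresh local variables}, \emph{safe bindings}, \emph{soundness} and \emph{completeness} requirements of Definition~\ref{defSolver}; the \emph{discrimination} requirement is trivial here because $cvar_{\mdom}(\Pi) = \emptyset$, so $\varx = \emptyset$ in every invocation and the second disjunct of the discrimination condition holds vacuously. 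The final assertion of the theorem, on preservation of $\mdom$-specificity and the shape of the bindings in $\sigma'$, will then follow by an easy induction on the number of transformation steps.

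Finite branching and termination are straightforward. For finite branching I would run a case analysis on the \emph{selected} atomic constraint $\pi$: if $\pi$ has a variable result ($\pi = (t\, \textnormal{\tt \#==}\, s)\to!\, B$ with $B$ a variable) then both {\bf M1} and {\bf M2} apply and nothing else; in every other non-solved case ($\pi = X\, \textnormal{\tt \#==}\, u'$, or $\pi = u\, \textnormal{\tt \#==}\, RX$, or $\pi$ a mediatorial constraint between two constants) exactly one rule applies, the preimage condition on $u'$ in {\bf M3}/{\bf M4} and the value of $u\, \textnormal{\tt \#==}^{\mdom}\, u'$ in {\bf M6}/{\bf M7} and {\bf M8}/{\bf M9} selecting which; and a mediatorial constraint between two variables is irreducible. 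For termination I would use the lexicographic measure assigning to a store $\Pi\, \Box\, \sigma$ the pair (number of atomic constraints in $\Pi$, number of those whose result is a variable): {\bf M1} and {\bf M2} leave the first component fixed and strictly decrease the second, while all of {\bf M3}--{\bf M9} (and the transitions to $\blacksquare$) strictly decrease the first; no rule ever creates a new atomic constraint or a new variable result, since substituting a boolean, integer or real constant for a variable of the corresponding type cannot do so. Along the way I would record two facts used later: no rule introduces fresh variables, so the fresh local variables property holds vacuously with $\overline{Y'} = \emptyset$; and in each rule the auxiliary binding $\sigma_1$ sends a single variable to a boolean, integer or real constant, so the safe bindings property is immediate and the bindings accumulated in any composed $\sigma'$ are again such constants.

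For local soundness and local completeness I would argue rule by rule, using the Substitution Lemma~\ref{sl} to transport solutions across the bindings $\sigma_1$ and the definition of $\textnormal{\tt \#==}^{\mdom}$ via the total injection $inj_{int,real}$ to evaluate $\pi$. The key observations are: (i) the only $\mdom$-specific atomic constraints that are irreducible are bridges {\tt X \#== RX} and antibridges with a variable on at least one side, and these are genuine, satisfiable solved forms, so every non-solved store does admit an applicable rule; and (ii) a \emph{total} result of $\textnormal{\tt \#==}^{\mdom}$ is necessarily $true$ or $false$, so for a constraint with a variable result the two alternatives produced by {\bf M1} and {\bf M2} jointly capture all of its (well-typed) solutions, and likewise {\bf M3}/{\bf M4}, {\bf M6}/{\bf M7} and {\bf M8}/{\bf M9} partition the possibilities for their respective shapes, the failing branches collapsing to $\blacksquare$ exactly when $\pi$ is unsatisfiable. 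Soundness for each rule is the analogous, easier direction. Finally, for the concluding assertion: each rule maps a store built from constraints of the restricted form ``(variable or numeric constant) {\tt \#==} (variable or numeric constant) $\to!$ (variable or boolean constant)'' to a store of the same form, and in $\ccdom$ any such set of constraints is automatically $\mdom$-specific (indeed $\uni{\mdom} = \uni{\ccdom}$ there), so $\mdom$-specificity is preserved; and the bindings in $\sigma'$ are, as noted, boolean, integer or real constants.

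I expect the main obstacle to be the local completeness case analysis, specifically handling the rôle of $\bot$: one must invoke the requirement that a $\to!$ constraint forces its result pattern to be \emph{total}, together with the polarity and radicality of $\textnormal{\tt \#==}^{\mdom}$, to exclude $\bot$ as a result value and thereby conclude that {\bf M1}/{\bf M2} (and the analogous pairs) genuinely cover all solutions; and one must delimit the solved forms precisely so that no non-solved store is overlooked when forming the union over one-step successors.
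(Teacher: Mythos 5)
Your proposal follows essentially the same route as the paper, which omits the detailed proof and simply observes that the $\mdom$-$sts$ is finitely branching, terminating, locally sound and locally complete (with no decompositions and no unsafe rules, so $\mathcal{RS}=\emptyset$), and then invokes Lemma~\ref{psts} exactly as you do, with the discrimination requirement trivial since $cvar_{\mdom}(\Pi)=\emptyset$. Your rule-by-rule case analysis, termination measure, and the induction for $\mdom$-specificity and the shape of the bindings in $\sigma'$ are a correct filling-in of the details the paper leaves to the reader by analogy with Theorem~\ref{hsolver}.
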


The proof is omitted, because it is completely similar to that of
Theorem \ref{hsolver} but much easier. In fact, the $sts$ for $\mdom$-stores involves
no decompositions. Actually, this $sts$ is finitely branching, terminating, locally sound
and locally complete. Therefore, Lemma \ref{psts} can be applied.

% Use of $\ccdom$ in the rest of the paper.

The framework for cooperative programming and the cooperative goal solving calculus $\cclnc{\ccdom}$
presented in Section \ref{cooperative} essentially rely on the coordination domain $\ccdom$ just discussed,
and the instance $\cflp{\mathcal{C}}$ of the $CFLP$ scheme \cite{LRV07}
provides a declarative semantics for proving the soundness and completeness of $\cclnc{\ccdom}$.
As  we will see, some cooperative goal solving rules in $\cclnc{\ccdom}$ rely on the  identification
of certain atomic primitive Herbrand constraints $\pi$ as $\fd$-specific or $\rdom$-specific, respectively,
on the basis of the mediatorial constraints available in a given $\mdom$-store $M$.
The notations  $M \vdash \pi$ {\tt in} $\fd$  and $M \vdash \pi$ {\tt in} $\rdom$ defined below serve to
this purpose.

\begin{definition} [Inference of domain-specific extended Herbrand constraints] \label{defSpecific}
Assume a mediatorial store $M$ and a well-typed atomic extended
Herbrand constraint $\pi$ having the form $t_1$ {\tt ==}
$t_2$ or  $t_1$ {\tt /=} $t_2$, where each of the two patterns $t_1$
and $t_2$ is either a numeric constant $v$ or a variable $V$. Then
we define:
\begin{enumerate}
\item
$M \vdash \pi$ {\tt in} $\fd$ (read as `M allows to infer that $\pi$
is $\fd$-specific') iff some of the three following conditions
holds:
\begin{enumerate}
\item
$t_1$ or $t_2$ is an integer constant.
\item
$t_1$ or $t_2$ is a variable that occurs as the left argument of the
opera\-tor {\tt \#==} within some mediatorial constraint belonging
to $M$.
\item
$t_1$ or $t_2$ is a variable that has been recognized to have type {\tt int}  by
some implementation dependent device.
\end{enumerate}
\item
$M \vdash \pi$ {\tt in} $\rdom$ (read as `M allows to infer that
$\pi$ is $\rdom$-specific') iff some of the three following
conditions holds:
\begin{enumerate}
\item
$t_1$ or $t_2$ is a real constant.
\item
$t_1$ or $t_2$ is a variable that occurs as the right argument of the operator
{\tt \#==} within some mediatorial constraint belonging to $M$.
\item
$t_1$ or $t_2$ is a variable that has been recognized to have type {\tt real} by
some implementation dependent device.
\end{enumerate}
\end{enumerate}
\end{definition}

\vspace*{-.3cm}

      % \label{coordination}
%%%%%%%%%%%%%%%%%%%%%%%%%%%%%%%%%%%%%%%%%%%%%%%%%%%%%%%%%%%%%%%%%%%%%%%%%%%%%%%
%
% 3-COOPERATIVE
%
%%%%%%%%%%%%%%%%%%%%%%%%%%%%%%%%%%%%%%%%%%%%%%%%%%%%%%%%%%%%%%%%%%%%%%%%%%%%%%%

\section{Cooperative Programming and Goal Solving in $CFLP(\mathcal{C})$} \label{cooperative}

This section presents our cooperative computation model for goal solving.
After introducing programs and goals in the first subsection,
the subsequent subsections deal with goal solving rules,
illustrative computation examples, and results concerning the formal properties of
the computation model.

Our goal solving rules work by transforming initial goals into final
goals in solved form which serve as computed answers, as in the
previously published {\em Constrained Lazy Narrowing Calculus}
$CLNC(\mathcal{D})$ \cite{LMR04}, which works over any
parametrically given domain $\cdom$ equipped with a solver. We have
substantially extended $CLNC(\mathcal{D})$ with various mechanisms
for domain cooperation via bridges, projections and some more ad hoc
operations. The result is a  {\em Cooperative Constrained Lazy
Narrowing Calculus} $\cclnc{\ccdom}$ which is sound and complete
(with some limitations) w.r.t. the instance $\cflp{\ccdom}$ of the
generic $CFLP$ scheme \cite{LRV07}. For the sake of typographic
simplicity, we have restricted our presentation of $\cclnc{\ccdom}$
to the case $\ccdom$ $=$ $\mathcal{M}$ $\oplus$ $\herbrand$ $\oplus$
$\fd$ $\oplus$ $\rdom$, although it could be easily extended to
other coordination domains, as sketched in our previous paper
\cite{DBLP:journals/entcs/MartinFHRV07}.
\vspace*{-.3cm}

\subsection{Programs and Goals} \label{programa}

$\cflp{\mathcal{C}}$-{\em programs} are sets of constrained
rewrite rules that define the beha\-vior of possibly higher-order
and/or non-deterministic lazy functions over $\mathcal{C}$, called
{\em program rules}. More precisely, a program rule $Rl$ for a
defined function symbol $f$ $\in$ $DF_{\Sigma}^n$ with principal
type $f$ $::$ $\overline{\tau}_n$ $\to$ $\tau$ has the form $f\,
\overline{t}_n$ $\to$ $r$ $\Leftarrow$ $\Delta$, where
$\overline{t}_n$ is a linear sequence of patterns, $r$ is an
expression, and $\Delta$ is a finite conjunction
$\delta_1,\ldots,\delta_m$ of atomic constraints $\delta_i \in ACon_{\ccdom}$.
Each program rule $Rl$ is required to be well-typed, i.e.,
there must exist some type environment $\Gamma$
for the variables occurring in $Rl$ such that
$\Sigma,\, \Gamma\, \vdash_{WT}\, t_i :: \tau_i$  for all $1 \leq i \leq n$,
$\Sigma,\, \Gamma\, \vdash_{WT}\, r :: \tau$ and
$\Sigma,\, \Gamma\, \vdash_{WT}\, \delta_i$  for all $1 \leq i \leq m$.

The left-linearity requirement is quite common in functional
and functional logic programming. As in constraint logic
programming, the conditional part of a program rule needs no
explicit occurrences of existential quantifiers. A program rule $Rl$
is said to include {\em free occurrences of higher-order logic variables} iff there is some variable
$X$ which does not occur  in the left-hand side of $Rl$ but has some
occurrence in a context of the form $X\, \tpp{e}{m}$ (with $m > 0$) somewhere else
in $Rl$. A program $\prog$ includes free occurrences of higher-order logic variables
iff some of the program rules in $\prog$ does.

As in functional languages such as Haskell \cite{haskell}, our
programs rules can deal with higher-order functions and are not expected to be always terminating.
Moreover, in contrast to Haskell and most other functional languages,
we do not require program rules to be confluent. Therefore,
some program defined functions can be {\em non-deterministic} and return several values for
a fixed  choice of  arguments in some cases.
As a concrete example of typed $\cflp{\mathcal{C}}$-program
written in the concrete syntax of the $\toy$ system, we
refer to the program rules presented in Subsection \ref{examples}.

Programs are used to solve {\em goals} using a cooperative goal solving calculus which will be described in 
subsections \ref{lnr}, \ref{dcr} and \ref{csr} below. Goals over the coordination domain $\ccdom$ have the general form
$G$ $\equiv$ $\exists \overline{U}.$ $P$ $\Box$ $C$ $\Box$ $M$ $\Box$ $H$ $\Box$ $F$ $\Box$ $R$,
where the symbol $\Box$ is interpreted as conjunction and:
\vspace*{-.2cm}

\begin{itemize}
\item $\overline{U}$ is a finite set of so-called {\em existential variables}, intended to represent local
varia\-bles in the computation.
\item $P$ is a set of so-called {\em productions} of the form $e_1 \to t_1, \ldots, e_m \to t_m$,
where $e_i \in Exp_{\ccdom}$ and $t_i \in Pat_\ccdom$ for all $1\leq i \leq m$. 
The set of {\em produced variables} of $G$ is defined as
the set $pvar(P)$ of variables occurring in $t_1 \ldots t_m$.
During goal solving, productions are used to compute values for the produced variables
insofar as demanded, using the goal solving rules for constrained lazy narrowing presented in Subsection \ref{lnr}. 
We consider a {\em production relation} between variables, such that $X \gg_{P} Y$ holds  iff
$X, Y \in pvar(P)$ and  there is some
 $1 \leq i \leq m$ such that $X \in var(e_i)$ and $Y \in var(t_i)$.
\item $C$ is  the so-called {\em constraint pool},  a finite set of constraints to be solved,
possibly including active occurren\-ces of defined functions symbols.
\item $M = \Pi_M\, \Box \, \sigma_M$ is the {\em mediatorial store}, including a finite set of atomic
primi\-tive constraints $\Pi_M \subseteq APCon_{\mathcal{M}}$ and a
substitution $\sigma_M$. We will write $B_M \subseteq \Pi_M$ for the
set of all $\pi \in \Pi_M$ which are {\em bridges} $t$ {\tt \#==}
$s$, where each of the two patterns $t$ and $s$ may be either a
variable or a numeric constant.
\item $H = \Pi_H\, \Box \, \sigma_H$ is the {\em Herbrand store}, including a
finite set of atomic primitive constraints $\Pi_H \subseteq APCon_{\herbrand}$
and a substitution $\sigma_H$.
\item $F = \Pi_F\, \Box \, \sigma_F$ is the  {\em finite domain store}, including a
finite set of atomic primi\-tive constraints $\Pi_F \subseteq
APCon_{\fd}$ and a substitution $\sigma_F$.
\item $R = \Pi_R\, \Box \, \sigma_R$ is the  {\em real arithmetic store}, including a
finite set of atomic primitive constraints $\Pi_R \subseteq APCon_{\rdom}$
and a substitution $\sigma_R$.
\end{itemize}

A goal $G$ is said to have {\em free occurrences of higher-order logic variables}
iff there is some variable $X$ occurring in $G$ in some context of the form
$X\, \tpp{e}{m}$, with $m > 0$.
Two special kinds of goals are particularly interesting.
{\em Initial goals} just consist of a well-typed constraint pool $C$.
More precisely, the existential prefix $\overline{U}$,
productions in $P$, and stores $M$, $H$, $F$ and $R$ are empty.
{\em Solved goals} (also called {\em solved forms}) have empty $P$ and $C$ parts
and cannot be transformed by any goal solving rule.
Therefore, they  are used to represent {\em computed answers}.
We will also write $\blacksquare$ to denote an {\em inconsistent goal}.

\begin{example}[Initial and Solved Goals]\label{example_goals}

Consider the initial goals {\bf Goal 1},  {\bf Goal 2} and  {\bf Goal 3} presented in $\toy$
syntax in Subsection  \ref{examples}, for the choice {\tt d = 2}, {\tt n = 4}. When written with
the abstract syntax for general $\cflp{\mathcal{C}}$-goals they become 
\vspace*{.2cm}

\noindent $~~~~~~~~~1) ~~\Box$ bothIn (triangle (2, 2.75) 4 0.5) (square 4) (X,Y) $\Box$$\Box$$\Box$$\Box$ \\
$~~~~~~~~~~2) ~~\Box$ bothIn (triangle (2, 2.5) 2 1) (square 4) (X,Y) ~~~~$\Box$$\Box$$\Box$$\Box$ \\
$~~~~~~~~~~3) ~~\Box$ bothIn (triangle (2, 2.5) 8 1) (square 4) (X,Y) ~~~~$\Box$$\Box$$\Box$$\Box$ 
\vspace*{.2cm}

The expected solutions for these goals have been explained in Subsection  \ref{examples}.
A general notion of solution for goals will be defined in Subsection \ref{SC}.
The resolution of these example goals  in our cooperative goal solving calculus $\cclnc{\ccdom}$ 
will be discussed in detail in Subsection \ref{exampletriangle}.
The respective solved forms obtained as computed answers
(restricted to the variables in the initial goal) will be: 
\vspace*{.2cm}

\noindent $~~~~~~~~~1)~$ $\blacksquare$\\
$~~~~~~~~~~2)~$ $\Box$$\Box$$\Box$$\Box$ ($\true$ $\Box$ $\{X
\mapsto 2, Y \mapsto 2\}$) $\Box$\\
$~~~~~~~~~~3)~$ $\Box$$\Box$$\Box$$\Box$ ($\true$ $\Box$ $\{X \mapsto 0, Y \mapsto 2\}$) $\Box$ \\
$~~~~~~~~~~~~~~$ $\Box$$\Box$$\Box$$\Box$ ($\true$ $\Box$ $\{X \mapsto 1, Y \mapsto 2\}$) $\Box$ \\
$~~~~~~~~~~~~~~$ $\Box$$\Box$$\Box$$\Box$ ($\true$ $\Box$ $\{X \mapsto 2, Y \mapsto 2\}$) $\Box$ \\
$~~~~~~~~~~~~~~$ $\Box$$\Box$$\Box$$\Box$ ($\true$ $\Box$ $\{X \mapsto 3, Y \mapsto 2\}$) $\Box$ \\
$~~~~~~~~~~~~~~$ $\Box$$\Box$$\Box$$\Box$ ($\true$ $\Box$ $\{X \mapsto 4, Y \mapsto 2\}$) $\Box$
\vspace*{.2cm}
\end{example}
\vspace*{-.2cm}

The goal  solving rules of the $\cclnc{\ccdom}$ calculus presented
in the rest of this section has been designed as an extension of an
existing goal solving calculus for the $CFLP$ scheme \cite{LMR04},
adding the new features needed to support solver coordination via
bridge constraints. In contrast to previous related works such as
\cite{LLR93,AEH94,AEH00,vado+:ppdp03,Vad05,vado:ictac07}, we have
omitted the use of so-called {\em definitional trees} to ensure an
optimal selection of needed narrowing steps. This feature could be
easily added to $\cclnc{\ccdom}$ following the ideas from
\cite{Vad05}, but  we have decided not do so in order to avoid
technical complications which do not contribute to a better
understanding of domain coope\-ration. Moreover, the design of
$\cclnc{\ccdom}$ is tailored to programs and goals having no free
occurrences of higher-order logic variables. As shown in
\cite{GHR01}, goal solving rules for dealing with free higher-order
logic variables give rise to ill-typed solutions very often. If
desired, they could be easily added to our present setting.

Let us finish this subsection with a brief discussion of some technical issues
needed in the sequel. The set $odvar(G)$ of {\em obviously demanded variables}
in a given goal $G$ is defined as the least subset of $var(G)$ which
satisfies the two following conditions:
\begin{enumerate}
\item $odvar(G)$ includes $odvar_{\mdom}(\Pi_M)$, $odvar_{\herbrand}(\Pi_H)$, $odvar_{\fd}(\Pi_F)$ and $odvar_{\rdom}(\Pi_R)$
which are defined as explained in Subsections  \ref{dom} and \ref{pdom}.
\item $X \in odvar(G)$ for any production $(X \tpp{a}{k}\rightarrow t)\in P$
such that $k>0$ and either $t \notin \var$ or else $t \in odvar(G)$.
\end{enumerate}

Note that $odvar(G)$ boils down to  $odvar_{\mdom}(\Pi_M) \cup odvar_{\herbrand}(\Pi_H) \cup
odvar_{\fd}(\Pi_F) \cup odvar_{\rdom}(\Pi_R)$
in the case that $G$ has no free occurrences of higher-order variables.
Productions $e \to$ $X$ such that $e$ is an active expression and $X \notin odvar(G)$ is a
not obviously demanded variable are called {\em suspensions},
and play an important role during  goal solving.

Certain properties are trivially satisfied by initial goals and kept  invariant through the application of goal transformations. Such {\em goal invariant properties} include those formalized in previous works as e.g. \cite{LMR04}: Each produced variable is produced only once, all the produced variables must be existential, the transitive closure $\gg_{P}^{+}$ of the relation between produced variables must be irreflexive, and no produced variable occurs in  the answer substitutions. Other goal invariants are more specific of our current cooperative setting based on the coordination domain $\ccdom$:

\begin{itemize}
\item
The domains of the substitutions $\sigma_M$, $\sigma_H$, $\sigma_F$ and
$\sigma_R$ are pairwise disjoint.
\item
For any store $S$ in $G$, the application of $\sigma_S$ causes no modification to the goal.
\item
For any $X \in vdom(\sigma_M)$, $\sigma_M(X)$ is either a boolean value, an integer value or a real value.
\item
For any $X \in vdom(\sigma_F)$, $\sigma_F(X)$ is either an integer value or a variable occurring in $\Pi_F$.
\item
For any $X \in vdom(\sigma_R)$, $\sigma_R(X)$ is either a real value or a variable occurring in $\Pi_R$.
\end{itemize}

These properties remain invariant through goal transformations because of Theorem \ref{msolver}
and  Postulates \ref{fsolver} and \ref{rsolver}, and also because the bindings computed by each particular solver are properly propagated to the rest  of the goal.
In particular, whenever a variable binding $\{X \mapsto t\}$ arises in one of the stores during goal solving, 
the propagation of this binding to the goal {\em applies} the binding everywhere, but {\em places} it  only within the substitution of this particular store,
so that the first item above is ensured.

At this point we must introduce some auxiliary notations in order to make this idea more precise.
Let $\cdom$ stand for any of the four domains $\mdom$, $\herbrand$, $\fd$ or $\rdom$
and consider the store  $S = \Pi_S\,\Box\,\sigma_S$ corresponding to $\cdom$.
We will note as $(P\,\Box\,C\,\Box\,M \,\Box\,H\,\Box\,F\,\Box\,R)@_{\cdom}\sigma'$
the result of {\em applying} $\sigma'$ to $P\,\Box\,C\,\Box\,M \,\Box\,H\,\Box\,F\,\Box\,R$
and {\em composing} $\sigma_S$ with $\sigma'$. More formally, in the particular case that $\cdom$ is chosen as $\fd$, we define $(P\,\Box\,C\,\Box\,M \,\Box\,H\,\Box\,F\,\Box\,R)@_{\fd}\sigma'$ as 
$P\sigma'\,\Box\,C\sigma'\,\Box\,M\star\sigma' \,\Box\,H\star\sigma'\,\Box\,F@\sigma'\,\Box\,R\star\sigma'$,
where $F@\sigma'$ is defined as $\Pi_F\sigma'\,\Box\,\sigma_F\sigma'$
and $S\star\sigma'$ is defined as $\Pi_S\sigma'\,\Box\,\sigma_S\star\sigma'$ for $S$ being $M$, $H$ or $R$. 
Recall that the {\em application} of $\sigma'$ to $\sigma_S$ has been defined as $\sigma_S \star \sigma' = \sigma_S\sigma' \restrict vdom(\sigma_S)$ in Subsection \ref{expressions}, and note that $\sigma_S \star \sigma'$ retains the same domain as $\sigma_S$.

The notations explained in the previous paragraph will be used for presenting several goal transformation rules in the next subsections. The formal definition for the other three possible choices of $\cdom$ is completely analogous.
In the rest of the paper, we will restrict our attention to so-called {\em admissible goals} $G$ that arise from initial goals through the iterated application of goal transformation rules and enjoy the goal invariant properties just described.
\vspace*{-.2cm}

\subsection{Constrained Lazy Narrowing Rules}\label{lnr}

% Aim of the Lazy Narrowing Rules

The core of our cooperative goal solving calculus $\clnc{\ccdom}$
consists of the goal sol\-ving rules displayed in Table
\ref{table3}. Roughly speaking, these rules model the behaviour of
constrained lazy narrowing ignoring domain cooperation and solver
invocation. They have been adapted from \cite{LMR04} and can be
classified as follows: The first four rules encode unification
transformations similar to those found in the $\herbrand$ $sts$ (see
Subsection \ref{hdom}) and other related formalisms; rule {\bf EL}
discharges unneeded suspensions, rule {\bf DF} (presented in two
cases in order to optimize the $k = 0$ case) applies program rules
to deal with  calls to program defined functions; rule {\bf PC}
transforms demanded calls to primitive functions into atomic
constraints that are placed in the pool; and rule {\bf FC}, working
in interplay with {\bf PC}, transforms the atomic constraints in the
pool into a flattened form consisting  of a conjunction of atomic
primitive constraints with new existential variables.

% Table of the Lazy Narrowing Rules

\begin{table}[h!]
\begin{center}
\begin{tabular}{p{12.5cm}}
\hline\\
{\scriptsize {\bf DC DeComposition}}\vspace*{0.15cm}

{\scriptsize $\exists \overline{U}.$ $h~\overline{e_m}$ $\to$
$h~\overline{t_m},~P$ $\Box$ $C$ $\Box$ $M$ $\Box$ $H$ $\Box$ $F$
$\Box$ $R$ $\vdash\!\!\vdash_{\bf DC}$ $\exists \overline{U}.$
$\overline{e_m \to t_m},~P$ $\Box$ $C$ $\Box$ $M$ $\Box$ $H$ $\Box$
$F$ $\Box$ $R$}\\\\

{\scriptsize {\bf CF Conflict Failure}}\vspace*{0.15cm}

{\scriptsize $\exists \overline{U}.$ $e$ $\to$ $t,P$ $\Box$ $C$
$\Box$ $M$ $\Box$ $H$ $\Box$ $F$ $\Box$ $R$ $\vdash\!\!\vdash_{\bf
CF}$ $\blacksquare$}\vspace*{0.1cm}

{\scriptsize If $e$ is rigid and passive, $t$ $\notin$ $\mathcal
{V}\!\!ar$, $e$ and $t$ have conflicting roots}.\\\\

{\scriptsize {\bf SP Simple Production}}\vspace*{0.15cm}

{\scriptsize $\exists \overline{U}.$ $s$ $\to$ $t,~P$ $\Box$ $C$
$\Box$ $M$ $\Box$ $H$ $\Box$ $F$ $\Box$ $R$ $\vdash\!\!\vdash_{\bf
SP}$ $\exists \overline{U}'.$ $(P$ $\Box$ $C$ $\Box$ $M$ $\Box$ $H$
$\Box$ $F$ $\Box$ $R)@_{\herbrand}\sigma'$}\vspace*{0.1cm}

{\scriptsize  If $s = X \in \var$, $t \notin \var$, $\sigma' = \{X \mapsto t\}$ and  $\overline{U'} = \overline{U}$
or else $s \in Pat_{\mathcal{C}}$, $t = X \in \var$, $\sigma' = \{X \mapsto s\}$ and  $\overline{U}'=\overline{U}\backslash \{X\}$}.\\\\

{\scriptsize {\bf IM IMitation}\vspace*{0.15cm}

{\scriptsize $\exists X,\overline{U}.$ $h~\overline{e_m}$$\to$$X,~P$
$\Box$ $C$ $\Box$ $M$ $\Box$ $H$ $\Box$ $F$ $\Box$ $R$
$\vdash\!\!\vdash_{\bf IM}$} $\exists \overline{X_m},\overline{U}.$
$(\overline{{e_m}{\to}{X_m}},~P$ $\Box$ $C$ $\Box$ $M$ $\Box$ $H$
$\Box$ $F$ $\Box$ $R)\sigma'$}\vspace*{0.1cm}

{\scriptsize If $h~\overline{e_m}$ $\notin$ $Pat_{\mathcal{C}}$ is
passive, $X \in odvar(G)$ and $\sigma' = \{X \mapsto h~\overline{X_m}\}$.}\\\\

{\scriptsize {\bf EL ELimination}}\vspace*{0.15cm}

{\scriptsize $\exists X,\overline{U}.$ $e \to X,~P$ $\Box$ $C$
$\Box$ $M$ $\Box$ $H$ $\Box$ $F$ $\Box$ $R$ $\vdash\!\!\vdash_{\bf
EL}$ $\exists \overline{U}.$ $P$ $\Box$ $C$ $\Box$ $M$ $\Box$ $H$
$\Box$ $F$ $\Box$ $R$}\vspace*{0.1cm}

{\scriptsize If $X$ does not occur in the rest of the goal.}\\\\

{\scriptsize {\bf DF Defined Function}}\vspace*{0.15cm}

{\scriptsize $\exists \overline{U}.$ $f~\overline{e_n}$ $\to$ $t,~P$
$\Box$ $C$ $\Box$ $M$ $\Box$ $H$ $\Box$ $F$ $\Box$ $R$
$\vdash\!\!\vdash_{\bf DF_{\emph{f}}}$}

{\scriptsize
$~~~~~~~~~~~~~~~~~~~~~~~~~~~~~~~~~~~~~~~~~~~~~~~~~~~~~~~~~~~~~~\exists
\overline{Y},\overline{U}.$ $\overline{e_n \to t_n},$ $r \to t,~P$
$\Box$ $C',~C$ $\Box$ $M$ $\Box$ $H$ $\Box$ $F$ $\Box$
$R$}\vspace*{0.1cm}

{\scriptsize If $f$ $\in$ $DF^n$, $t$ $\notin$ $\mathcal {V}\!\!ar$
or $t \in odvar(G)$ and $Rl : f~\overline{t_n} \to r$ $\Leftarrow$ $C'$ is a fresh variant of
a rule in $\mathcal{P}$, with $\overline{Y} = var(Rl)$ new variables.} \\\\

{\scriptsize $\exists \overline{U}.$
$f~\overline{e_n}\overline{a_k}$ $\to$ $t,~P$ $\Box$ $C$ $\Box$ $M$
$\Box$ $H$ $\Box$ $F$ $\Box$ $R$ $\vdash\!\!\vdash_{\bf
DF_{\emph{f}}}$}

{\scriptsize ~~~~~~~~~~~~~~~~~~~~~~~~~~~~~~~~~~~~~~~~~~~$\exists
X,\overline{Y},\overline{U}.$ $\overline{e_n \to t_n},$ $r \to X,$
$X~\overline{a_k} \to t,~P$ $\Box$ $C',~C$ $\Box$ $M$ $\Box$ $H$
$\Box$ $F$ $\Box$ $R$}\vspace*{0.1cm}

{\scriptsize If $f$ $\in$ $DF^n$ $(k$ $>$ $0),$ $t$ $\notin$
$\mathcal {V}\!\!ar$ or $t \in odvar(G)$ and
$Rl : f~\overline{t_n} \to r$ $\Leftarrow$ $C'$ is a fresh
variant of a rule in $\mathcal{P}$,
with $\overline{Y} = var(Rl)$ and $X$ new variables.}\\\\

{\scriptsize {\bf PC Place Constraint}}\vspace*{0.15cm}

{\scriptsize $\exists \overline{U}.$ $p~\overline{e_n}$ $\to$ $t,~P$
$\Box$ $C$ $\Box$ $M$ $\Box$ $H$ $\Box$ $F$ $\Box$ $R$
$\vdash\!\!\vdash_{\bf PC}$ $\exists \overline{U}.$ $~P$ $\Box$
$p~\overline{e_n}$ $\to$! $t,~C$ $\Box$ $M$ $\Box$ $H$ $\Box$ $F$
$\Box$ $R$}\vspace*{0.1cm}

{\scriptsize If $p \in PF^n$ and $t \notin \var$ or $t \in odvar(G)$.}\\\\

{\scriptsize {\bf FC Flatten Constraint}}\vspace*{0.15cm}

{\scriptsize $\exists \overline{U}.$ $P$ $\Box$ $p~\overline{e_n}$
$\to!$ $t,~C$ $\Box$ $M$ $\Box$ $H$ $\Box$ $F$ $\Box$ $R$
$\vdash\!\!\vdash_{\bf FC}$} \\
{\scriptsize~~~~~~~~~~~~~~~~~~~~~~~~~~~~~~~~~~~~~~~~~~~~~~~~$\exists
\overline{V}_m, \overline{U}.$ $\overline{a_m \to V_m},$ $P$ $\Box$
$p~\overline{t}_n$ $\to!$ $t,~C$ $\Box$ $M$ $\Box$ $H$ $\Box$ $F$
$\Box$ $R$}\vspace*{0.1cm}

{\scriptsize If $p$ $\in$ $PF^n$, some $e_i$ $\notin$
$Pat_{\mathcal{C}}$, $\overline{a_m}$ ($m$ $\leq$ $n$) are those
$e_i$ which are not patterns, $\overline{V_m}$ are new variables,
and $p~\overline{t_n}$ is obtained from $p~\overline{e_n}$ by
replacing each $e_i$ which is not a pattern by $V_i$.}\\\\
\hline
\end{tabular}
\end{center}
\caption{Rules for Constrained Lazy Narrowing}\label{table3}
\end{table}

% Flattening: example and comments

The behaviour of the main rules in Table \ref{table3} will be illustrated in Subsection \ref{exampletriangle}.
Example \ref{exCFlat}  below focuses on the transformation rules {\bf PC} and {\bf FC}.
Their iterated application flattens the atomic $\rdom$-constraint  {\tt (RX + 2*RY)*RZ <= 3.5} into a conjunction of four
atomic primitive $\rdom$-constraints involving three new existential variables,
that are placed in the constraint pool. Note that \cite{LMR04} and other previous related calculi
also include rules that can be used to achieve constraint flattening, but the resulting atomic primitive constraints are
placed in a constraint store. In our present setting, they are kept in the pool in order that the
domain cooperation rules described in the next subsection can process them.

\begin{example}[Constraint Flattening]\label{exCFlat}
\vspace*{.2cm}

{\scriptsize
\noindent $~~~~~~~~~\Box$ $\underline{(RX+2*RY)*RZ \textnormal{\em ~{\tt <= }} 3.5}$ $\Box$$\Box$$\Box$$\Box$ $\vdash\!\!\vdash_{\bf FC}$\\

\noindent $~~~~~~~~~\exists RA.$ $\underline{(RX+2*RY)*RZ~\to~RA}$
$\Box$ $RA$ {\tt<=} $3.5$
$\Box$$\Box$$\Box$$\Box$ $\vdash\!\!\vdash_{\bf PC}$\\

\noindent $~~~~~~~~~\exists RA.$ $\Box$
$\underline{(RX+2*RY)*RZ~\to!~RA},$ $RA$ {\tt <=} $3.5$
$\Box$$\Box$$\Box$$\Box$ $\vdash\!\!\vdash_{\bf FC}$\\

\noindent $~~~~~~~~~\exists RB,$ $RA.$$\underline{RX+2*RY\to RB}$
$\Box$ $RB*RZ$$\to!$$RA,RA$ {\tt <=} $3.5$ $\Box$$\Box$$\Box$$\Box$ $\vdash\!\!\vdash_{\bf PC}$\\

\noindent $~~~~~~~~~\exists RB, RA.$ $\Box$
$\underline{RX+2*RY~\to!~RB},$ $RB*RZ$
$\to!$ $RA,$ $RA$ {\tt <=} $3.5$ $\Box$$\Box$$\Box$$\Box$ $\vdash\!\!\vdash_{\bf FC}$\\

\noindent $~~~~~~~~~\exists RC, RB, RA.$ $\underline{2*RY \to RC}$ $\Box$ $RX+RC$$\to!$$RB,$ $RB*RZ$$\to!$$RA,$ $RA$ {\tt <=} $3.5$ $\Box$$\Box$$\Box$$\Box$ $\vdash\!\!\vdash_{\bf PC}$\\

\noindent $~~~~~~~~~\exists RC, RB, RA.$ $\Box$ $2*RY \to !RC,$
$RX+RC$$\to!$$RB,$ $RB*RZ$$\to!$$RA,$ $RA$ {\tt <=} $3.5$
$\Box$$\Box$$\Box$$\Box$}

\end{example}

% Comments on lazy evaluation, sharing

Note that   suspensions $e \to X$ can be discharged by rule {\bf EL} in case that $X$
does not occur in the rest of the goal. Otherwise, they must wait until $X$ gets bound to a non-variable 
pattern or becomes obviously demanded, and then they can be processed by using either rule {\bf DF} 
or rule {\bf PC}, according to the syntactic  form of $e$. Moreover, all the substitutions produced by the transformations bind variables $X$ to patterns $t$, standing for computed values that are shared by all the occurrences of $t$ in the current goal. In this way, the goal transformation rules encode a lazy narrowing strategy.
%\vspace*{-.3cm}

\subsection{Domain Cooperation Rules} \label{dcr}

This subsection presents the goal transformation rules in $\cclnc{\ccdom}$
which take care of domain cooperation. The core of the subsection deals
with bridges and projections. A few more ad hoc cooperation rules are
presented at the end of the subsection.

% Cooperation rules based on bridges and projections

Given a goal $G$ whose pool $C$ includes an atomic primitive constraint $\pi \in APCon_{\fd}$
and whose mediatorial store $M$  includes a set of bridges $B_M$,  we will consider three
possible goal transformations intended to convey useful information from $\pi$
to the $\rdom$-solver:
\begin{itemize}
\item
To compute new bridges $bridges^{\mathcal{FD} \to \mathcal{R}}(\pi,B_M)$ to add to $M$,
by means of a {\em bridge generation} operation  $bridges^{\mathcal{FD} \to \mathcal{R}}$
defined to this  purpose.
\item
To compute projected $\rdom$-constraints $proj^{\mathcal{FD} \to \mathcal{R}}(\pi,B_M)$ to be added to $R$,
by means of a {\em projection} operation  $proj^{\mathcal{FD} \to \mathcal{R}}$
defined to this  purpose.
\item
To place $\pi$ into the $\fd$ store $F$.
\end{itemize}

Similar goal transformations  based on two operations $bridges^{\mathcal{R} \to \mathcal{FD}}$
and $proj^{\mathcal{R} \to \mathcal{FD}}$ can be used to convey useful information from a
primitive atomic constraint $\pi \in PCon_{\rdom}$ to the $\fd$-solver.
Rules {\bf SB}, {\bf PP} and {\bf SC} in  Table \ref{table4} formalize these transformations,
while tables \ref{table2} and \ref{table5} give an effective
specification of the bridge generation and projection operations.

\begin{table}[h]
\begin{center}
\begin{tabular}{p{12.5cm}} \hline\\
{\scriptsize {\bf SB Set Bridges}}\vspace*{0.4mm}

{\scriptsize $~~~~~~\exists \overline{U}.$ $P$ $\Box$ $\pi, ~C$
$\Box$ $M$ $\Box$ $H$ $\Box$ $F$ $\Box$ $R$ $\vdash\!\!\vdash_{\bf
SB}$} {\scriptsize$\exists\overline{V}',\overline{U}.$ $P$ $\Box$
$\pi, ~C$ $\Box$ $M'$ $\Box$ $H$ $\Box$ $F$ $\Box$
$R$}\\\vspace*{0.4mm}
{\scriptsize If $\pi$ is a primitive atomic constraint and either
(i) or (ii) holds, where}
{\scriptsize \begin{enumerate}
\item [(i)] $\pi$ is a proper $\mathcal{FD}$-constraint or else an extended $\herbrand$-constraint such that
$M \vdash \pi$ {\tt in} $\fd$, and $M' = B', M$, where $\exists
\overline{V'}\, B' = bridges^{\mathcal{FD} \to \mathcal{R}}( \pi,
B_M) \neq \emptyset$.
\item [(ii)] $\pi$ is a proper $\mathcal{R}$-constraint or else an extended $\herbrand$-constraint such that
$M \vdash \pi$ {\tt in} $\rdom$, and $M' = B', M$, where $\exists
\overline{V'}\, B' = bridges^{\mathcal{R} \to \mathcal{FD}}(\pi,
B_M) \neq \emptyset$.
\end{enumerate}}\\

{\scriptsize {\bf PP Propagate Projections}}\vspace*{0.4mm}

{\scriptsize $~~~~~~\exists \overline{U}.$ $P$ $\Box$ $\pi, ~C$
$\Box$ $M$ $\Box$ $H$ $\Box$ $F$ $\Box$ $R$ $\vdash\!\!\vdash_{\bf
PP}$} {\scriptsize$\exists\overline{V'},\, \overline{U}.$ $P$ $\Box$
$\pi,~C$ $\Box$ $M$ $\Box$ $H$ $\Box$ $F'$ $\Box$
$R'$}\vspace*{0.4mm}

{\scriptsize If $\pi$ is a primitive atomic constraint and either
(i) or (ii) holds, where}

{\scriptsize \begin{enumerate}

\item [(i)] $\pi$ is a proper $\mathcal{FD}$-constraint or else an extended $\herbrand$-constraint such that
$M \vdash \pi$ {\tt in} $\fd$, $\exists \overline{V'}\, \Pi' =
proj^{\mathcal{FD} \to \mathcal{R}}(\pi,B_M) \neq \emptyset$, $F' =
F$, and $R' = \Pi',R$, or else,
\item [(ii)] $\pi$ is a proper $\mathcal{R}$-constraint or else an extended $\herbrand$-constraint such that
$M \vdash \pi$ {\tt in} $\rdom$,  $\exists \overline{V'}\, \Pi' =
proj^{\mathcal{R} \to \mathcal{FD}}(\pi,B_M) \neq \emptyset$, $F' =
\Pi', F$, and $R' = R$.

\end{enumerate}}\\

{\scriptsize {\bf SC Submit Constraints}}\vspace*{0.4mm}

{\scriptsize $~~~~~~\exists \overline{U}.$ $P$ $\Box$ $\pi, ~C$
$\Box$ $M$ $\Box$ $H$ $\Box$ $F$ $\Box$ $R$ $\vdash\!\!\vdash_{\bf
SC}$ $\exists \overline{U}.$ $P$ $\Box$ $C$ $\Box$ $M'$ $\Box$ $H'$
$\Box$ $F'$ $\Box$ $R'$}\\\vspace*{0.4mm}
{\scriptsize If $\pi$ is a primitive atomic constraint and one of
the following cases applies:}

{\scriptsize \begin{enumerate}

\item [(i)] $\pi$ is a $\mdom$-constraint, $M' = \pi, M$, $H' = H$, $F' = F$, and $R' = R$.
\item [(ii)] $\pi$ is an extended $\herbrand$-constraint such that neither $M \vdash \pi$ {\tt in} $\fd$
nor $M \vdash \pi$ {\tt in} $\rdom$, $M' = M$, $H' = \pi, H$, $F' =
F$, and $R' = R$.
\item [(iii)] $\pi$ is a proper $\mathcal{FD}$-constraint or else an extended $\herbrand$-constraint such that
$M \vdash \pi$ {\tt in} $\fd$,  $M' = M$, $H' = H$, $F' = \pi, F$,
and $R' = R$.
\item [(iv)] $\pi$ is a proper $\mathcal{R}$-constraint or else an extended $\herbrand$-constraint such that
$M \vdash \pi$ {\tt in} $\rdom$,  $M' = M$, $H' = H$, $F' = F$, and
$R' = \pi, R$.
\end{enumerate}}\\
\hline
\end{tabular}
\end{center}
\caption{Rules for Bridges and Projections}\label{table4}
\end{table}

The formulation of  {\bf SB}, {\bf PP} and {\bf SC} in Table \ref{table4} relies on the identification
of certain atomic primitive Herbrand constraints $\pi$ as $\fd$-specific or $\rdom$-specific,
as indicated by the notations $M \vdash \pi$ {\tt in} $\fd$
and $M \vdash \pi$ {\tt in} $\rdom$, previously explained in Subsection \ref{ourcdom}.
The notation $\Pi, S$ is used at several places to indicate the new store
obtained by adding the set of constraints $\Pi$ to the constraints
within store $S$. The notation $\pi, S$ (where $\pi$ is a single constraint) must be understood similarly.
In practice, {\bf SB}, {\bf PP} and {\bf SC}  are best applied in this  order.
Note that {\bf PP} places the projected constraints in their corresponding stores,
while constraints in the pool that are not useful anymore for computing additional bridges or projections
will be eventually placed into their stores by means of transformation {\bf SC}.

The functions $bridges^{\cdom \to \cdom'}$ and $proj^{\cdom \to \cdom'}$
are specified in Table \ref{table2} for the case $\cdom = \fd,\, \cdom' = \rdom$ and
in Table \ref{table5} for the case $\cdom = \rdom,\, \cdom' = \fd$.
Note that the primitive {\tt \#/} is not  considered in Table \ref{table2} because
integer division constraints cannot be projected into  real division constraints.
The notations $\lceil a \rceil$ (resp. $\lfloor a \rfloor$) used in Table \ref{table5}
 stand for the least integer upper bound (resp. the greatest integer lower bound) of $a \in \mathbb{R}$.
 Constraints $t_1$ {\tt >} $t_2$, $t_1$ {\tt >=} $t_2$
 are not explicitly considered in Table \ref{table5};
 they are  treated  as $t_2$ {\tt <} $t_1$, $t_2$ {\tt <=} $t_1$, respectively.
 In tables \ref{table2} and \ref{table5}, the existential quantification of the
 new variables $\overline{V'}$ is left implicit, and  results displayed as an empty set of
constraints must be read as an empty (and thus trivially true) conjunction.

\vspace*{-.3cm}
\begin{table}[h]
\begin{center}
\begin{tabular}{p{3.cm} p{4.cm} p{4.2cm}}
\hline {\small ~~~~~~~~~~$\pi$} & {\small
~~~~~$bridges^{\mathcal{FD} \to \mathcal{R}}(\pi, B)$} &
{\small ~~~~~~~~$proj^{\mathcal{FD} \to \mathcal{R}}(\pi, B)$}\\
\hline {\scriptsize {\tt domain} [$X_1, \ldots, X_n]$ $a$ $b$} &
{\scriptsize\{$X_i$ {\tt \#==} $RX_i~|$ $1 \leq i \leq n$,  $X_i$
has no bridge in $B$ and $RX_i$ new\}} &{\scriptsize\{$a$ {\tt <=}
$RX_i$, $RX_i$ {\tt <=} $b$ ~ $|$~ $1 \leq i \leq n$ and
$(X_i$ {\tt \#==} $RX_i)$ $\in$ $B$\}}\\
\hline {\scriptsize {\tt belongs} $X$ [$a_1, \ldots, a_n$] }  &
{\scriptsize \{$X$ {\tt \#==} $ RX~|$ $X$ has no bridge in $B$ and
$RX$ new\} }& {\scriptsize  \{$min(a_1,\dots, a_n)$ {\tt <=} $RX$,
$RX$ {\tt <=} $max(a_1,\dots, a_n)$ $|$ $1 \leq i \leq n$ and $(X$ {\tt \#==} $RX)$ $\in$ $B$\} }\\
\hline
 {\scriptsize $t_1$ {\tt \#<} $t_2$} \\ {\scriptsize(resp. {\tt \#<=}, {\tt \#>}, {\tt \#=>})}
& {\scriptsize \{$X_i$ {\tt \#==} $RX_i$~ $|$ ~$1 \leq i \leq 2$,
$t_i$ is a variable $X_i$ with no bridge in $B$, and $RX_i$ new\} }&
{\scriptsize \{$t^{\mathcal{R}}_1$ {\tt <} $t^{\mathcal{R}}_2$ $|$
For $1 \leq i \leq 2$: Either  $t_i$ is an integer constant $n$ and
$t^{\mathcal{R}}_i$ is the integral real $n$, or else $t_i$ is a
variable $X_i$ with $(X_i$ {\tt \#==} $RX_i)$ $\in$ $B$,
and $t^{\mathcal{R}}_i$ is $RX_i$\}}\\
\hline
 {\scriptsize $t_1$ {\tt ==} $t_2$} & {\scriptsize \{$X$ {\tt
\#==} $RX$ $|$ either $t_1$ is an integer constant and $t_2$ is a
variable $X$ with no bridges in $B$ (or vice versa) and $RX$ is
new\}} &
 {\scriptsize \{$t^{\mathcal{R}}_1$ {\tt ==} $t^{\mathcal{R}}_2$ $|$ For
$1 \leq i \leq 2$:  $t^{\mathcal{R}}_i$ is determined as  in the {\tt \#<}  case\} }\\
\hline
 {\scriptsize $t_1$ {\tt /=} $t_2$} & {\scriptsize \{$X$ {\tt
\#==} $RX$ $|$ either $t_1$ is an integer constant and $t_2$ is a
variable $X$ with no bridges in $B$ (or vice versa) and $RX$ is
new\}} &
 {\scriptsize \{$t^{\mathcal{R}}_1$ {\tt /=} $t^{\mathcal{R}}_2$ $|$ For
$1 \leq i \leq 2$:  $t^{\mathcal{R}}_i$ is determined as  in the {\tt \#<}  case\} }\\
\hline
 {\scriptsize  $t_1$ {\tt \#+} $t_2$ $\to!$ $t_3$} \\ {\scriptsize(resp. {\tt \#-}, {\tt \#*})
}& {\scriptsize  \{$X_i$ {\tt \#==} $RX_i$~$|$ $1 \leq i \leq 3$,
$t_i$ is a variable $X_i$ with no bridge in $B$ and $RX_i$ new\} }&
{\scriptsize \{$t^{\mathcal{R}}_1$ {\tt +} $t^{\mathcal{R}}_2$
$\to!$ $t^{\mathcal{R}}_3$ $|$ For $1 \leq i \leq 3$:
$t^{\mathcal{R}}_i$ is determined as  in the {\tt \#<}  case\}}\\
 \hline
\end{tabular}
\end{center}
\caption{Computing Bridges and Projections from $\mathcal{FD}$ to
$\mathcal{R}$}\label{table2}
\end{table}
\vspace*{-.3cm}

The next result states some basic properties of  $bridges^{\cdom \to \cdom'}$ and
$proj^{\cdom \to \cdom'}$. The easy proof is omitted.

\begin{proposition} [Properties of Bridges and Projections between $\fd$ and $\rdom$] \label{propBP}
Let $\cdom$ and $\cdom'$ be chosen as $\fd$ and $\rdom$, or vice
versa. Then:
\begin{enumerate}
\item
$bridges^{\cdom \to \cdom'}(\pi,B)$ and $proj^{\cdom \to \cdom'}(\pi,B)$
make sense for any atomic primitive constraint $\pi$
which is either $\cdom$-proper or extended Herbrand and $\cdom$-specific,
and for any finite set $B$ of bridges.
\item
$bridges^{\cdom \to \cdom'}(\pi,B)$ returns a possibly empty finite set $B'$ of new bridges
involving new variables $\overline{V'}$.
In particular, $bridges^{\cdom \to \cdom'}(\pi,B) = \emptyset$
is assumed whenever Tables \ref{table2} and \ref{table5}
do not include any row covering $\pi$.
The {\em completeness condition}
$WTSol_{\mathcal{C}}(\pi \wedge B) \subseteq
WTSol_{\mathcal{C}}(\exists \overline{V'}(\pi \wedge B \wedge B'))$
holds, where $B$ and $B'$ are interpreted as conjunctions.
Note that the {\em correctness condition}
$Sol_{\mathcal{C}}(\pi \wedge B) \supseteq
Sol_{\mathcal{C}}(\exists\overline{V'}(\pi \wedge B \wedge B'))$
also holds trivially.
\item
$proj^{\cdom \to \cdom'}(\pi,B)$ returns a finite set $\Pi' \subseteq APCon_{\cdom'}$
of atomic primitive $\cdom'$-constraints
involving new variables $\overline{V'}$.
In particular, $proj^{\cdom \to \cdom'}(\pi,B) = \emptyset$
is assumed whenever Tables \ref{table2} and \ref{table5}
do not include any row covering $\pi$.
The {\em completeness condition}
$WTSol_{\mathcal{C}}(\pi \wedge B) \subseteq
WTSol_{\mathcal{C}}(\exists \overline{V'}(\pi \wedge B \wedge \Pi'))$
holds, where $B$ and $\Pi'$ are interpreted as conjunctions.
Note that the {\em correctness condition}
$Sol_{\mathcal{C}}(\pi \wedge B) \supseteq
Sol_{\mathcal{C}}(\exists \overline{V'}(\pi \wedge B \wedge \Pi'))$
also holds trivially.
\end{enumerate}
\end{proposition}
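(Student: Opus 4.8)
The plan is to argue by a case analysis on the syntactic shape of $\pi$, i.e.\ on which row of Table~\ref{table2} (for the direction $\cdom = \fd$, $\cdom' = \rdom$) or of Table~\ref{table5} (for $\cdom = \rdom$, $\cdom' = \fd$) governs the definition of $bridges^{\cdom \to \cdom'}(\pi, B)$ and $proj^{\cdom \to \cdom'}(\pi, B)$, keeping in mind the convention that both functions return the empty set when no row covers $\pi$. For part~1 I would first check that the rows jointly cover every $\pi$ that is $\cdom$-proper, or extended Herbrand and $\cdom$-specific, and then observe that each row's right-hand side inspects only decidable, finite information about $\pi$ and $B$: the outermost primitive symbol of $\pi$; which argument positions of $\pi$ hold a variable and which hold a numeric constant; and, for each such variable, whether it already occurs on the $\cdom$-side of some bridge in the finite set $B$ (used to reuse the partner variable when projecting, or to suppress a duplicate when generating bridges). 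Hence both functions are well defined. Since $\pi$ has a bounded number of argument positions and $B$ is finite, only finitely many fresh variables $\overline{V'}$ are introduced, which yields the finiteness assertions of parts~2 and~3; and an inspection of the two tables shows that $B'$ is a set of bridges while $\Pi'$ is a set of atomic primitive constraints built from $\cdom'$-primitives applied to patterns, so $\Pi' \subseteq APCon_{\cdom'}$ as claimed.

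The correctness conditions I would dispatch uniformly and with no case analysis: $B'$ (resp.\ $\Pi'$) merely add conjuncts to $\pi \wedge B$ over the fresh variables $\overline{V'}$, so by Definition~\ref{defPrimSol} (conjunction as intersection, existential quantification over $\overline{V'}$) together with the observation that the solutions of a primitive constraint depend only on its free variables --- which follows from the Substitution and Monotonicity Lemmas~\ref{sl},~\ref{ml} --- any $\eta \in Sol_{\ccdom}(\exists \overline{V'}(\pi \wedge B \wedge B'))$ restricts to an element of $Sol_{\ccdom}(\pi \wedge B)$, and likewise with $\Pi'$ in place of $B'$.

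The substantive part is the completeness conditions, which I would establish by constructing, for each row and each $\eta \in WTSol_{\ccdom}(\pi \wedge B)$, an extension $\eta'$ that agrees with $\eta$ off $\overline{V'}$ and lies in $WTSol_{\ccdom}(\pi \wedge B \wedge B')$ (resp.\ $\pi \wedge B \wedge \Pi'$). The engine of the argument is this: whenever a fresh bridge $X$ {\tt \#==} $RX$ is generated for a variable argument $X$ of $\pi$, or $X$ is matched through $B$ to a real partner $RX$, the value $\eta(X)$ must already be an integer constant. If $\pi$ is a proper $\fd$-constraint this is read off the interpretations of {\tt domain}, {\tt belongs}, the integer arithmetic operators and {\tt \#<=} from Subsection~\ref{fdom}: a solution that makes the constraint succeed with a total boolean or numeric result pins the relevant arguments to $\mathbb{Z}$. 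If $\pi$ is extended Herbrand and $\fd$-specific, then Definition~\ref{dsc}(1) gives $\eta(X) \in \uni{\fd}$, and well-typedness together with the fact that $X$ has type {\tt int} forces $\eta(X) \in \mathbb{Z}$ --- with a small extra observation in the disequality case, namely that two ground well-typed patterns of a base type with no common upper bound must both be (distinct) constants of that type, which in particular rules out $\bot$. Granting this, in the $\fd \to \rdom$ direction I set $\eta'(RX) = inj_{int,real}(\eta(X))$ on each fresh real variable: this is defined because $\eta(X) \in dom(inj_{int,real})$, it makes the new bridges succeed by the first clause of the interpretation of {\tt \#==} in $\mdom$ (Subsection~\ref{cdomains}), and it makes the projected constraints succeed because the integer operators and their real counterparts agree under $inj_{int,real}$, while the {\tt domain}/{\tt belongs} projections $a \leq RX_i \leq b$ hold since $\eta(X_i)$ lies in the prescribed integer interval. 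Well-typedness is preserved since each fresh $RX$ carries type {\tt real} and the new atomic constraints are well-typed by construction. The $\rdom \to \fd$ direction is symmetric, using the partial inverse of $inj_{int,real}$ and the outward-rounding operators $\lceil \cdot \rceil$, $\lfloor \cdot \rfloor$ of Table~\ref{table5}.

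I expect the main obstacle to be exactly the $\rdom \to \fd$ case of completeness: one must verify, row by row, that rounding a real bound $a$ outward to an integer bound under an available bridge $X$ {\tt \#==} $RX$ never discards a well-typed solution --- i.e.\ that $\eta(RX) \leq a$ with $\eta(RX) = inj_{int,real}(\eta(X))$ forces $\eta(X) \leq \lfloor a \rfloor$, and dually for $\lceil \cdot \rceil$ --- and, more delicately, that the bridge-generation rows of Table~\ref{table5} only ever introduce a fresh bridge $X$ {\tt \#==} $RX$ in situations where the real variable $RX$ is already pinned to an integral value, so that forcing its integrality through the new bridge loses nothing. The remaining work --- checking coverage of the rows, and tracking type environments as fresh base-typed variables are added --- is routine, along the lines of the type-discipline reasoning the paper defers to \cite{GHR01}.
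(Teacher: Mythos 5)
Your proposal is correct, and it essentially reconstructs the argument the paper has in mind: the paper states this proposition with ``the easy proof is omitted,'' and the intended justification is exactly your routine row-by-row inspection of Tables \ref{table2} and \ref{table5} --- trivial correctness because $B'$ and $\Pi'$ only add conjuncts over fresh variables, and completeness by extending each well-typed solution via $inj_{int,real}$ (using that $\fd$-properness or $\fd$/$\rdom$-specificity plus well-typedness pins the relevant variable values to $\mathbb{Z}$, resp.\ to integral reals in the bridge-generating rows of Table \ref{table5}), together with the outward-rounding checks for $\lceil \cdot \rceil$ and $\lfloor \cdot \rfloor$. No gap; the points you flag as delicate (the $\rdom \to \fd$ rounding and the integrality of the result variable in the arithmetic row) are precisely the only non-immediate checks, and your treatment of them is sound.
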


\vspace*{-.3cm}
\begin{table}[h]
\begin{center}
\begin{tabular}{p{2.0cm} p{4.0cm} p{5.0cm}}
\hline {\scriptsize ~~~~~~$\pi$} & {\scriptsize
~~~~~~~~$bridges^{\mathcal{R} \to \mathcal{FD}}(\pi, B)$} &
{\scriptsize ~~~~~~~~~~~~~~~$proj^{\mathcal{R} \to \mathcal{FD}}(\pi, B)$}\\
\hline {\scriptsize $RX$ {\tt <} $RY$} & {\scriptsize
~~~$\emptyset$~~({\em no bridges are created})}
&{\scriptsize \{$X$ {\tt \#<} $Y$ $|$ $(X$ {\tt \#==} $RX)$,$(Y$ {\tt \#==} $RY)\in B$\}}\\
\hline {\scriptsize $RX$ {\tt <} $a$} & {\scriptsize
~~~$\emptyset$~~({\em no bridges are created})}
&  {\scriptsize\{$X$ {\tt \#<} $\lceil a \rceil$ $|$ $a$ $\in$ $\mathbb{R}$, $(X$ {\tt \#==} $RX)\in B$\}}\\
\hline {\scriptsize $a$ {\tt <} $RY$} & {\scriptsize
~~~$\emptyset$~~({\em no bridges are created})}
&  {\scriptsize\{$\lfloor a \rfloor$ {\tt \#<} $Y$ $|$ $a$ $\in$ $\mathbb{R}$, $(Y$ {\tt \#==} $RY) \in B$\}}\\
\hline {\scriptsize $RX$ {\tt <=} $RY$} & {\scriptsize
~~~$\emptyset$~~({\em no bridges are created})}
& {\scriptsize \{$X$ {\tt \#<=} $Y$$|$$(X$ {\tt \#==} $RX)$,$(Y$ {\tt \#==} $RY)\in B$\}}\\
\hline {\scriptsize$RX$ {\tt <=} $a$} & {\scriptsize
~~~$\emptyset$~~({\em no bridges are created})}
& {\scriptsize \{$X$ {\tt \#<=} $\lfloor a \rfloor$ $|$ $a$ $\in$ $\mathbb{R}$, $(X$ {\tt \#==} $RX)\in B$\}}\\
\hline {\scriptsize $a$ {\tt <=} $RY$} & {\scriptsize
~~~$\emptyset$~~({\em no bridges are created})}
&  {\scriptsize\{$\lceil a \rceil$ {\tt \#<=} $Y$ $|$ $a$ $\in$ $\mathbb{R}$, $(Y$ {\tt \#==} $RY) \in B$\}}\\
\hline {\scriptsize $t_1$ {\tt ==} $t_2$} & {\scriptsize \{$X$ {\tt
\#==} $RX$ $|$ either $t_1$ is an integral real constant and $t_2$
is a variable $RX$ with no bridges in $B$ (or vice versa) and $X$ is
new\}} &
 {\scriptsize \{$t^{\mathcal{FD}}_1$ {\tt ==} $t^{\mathcal{FD}}_2$ $|$ For
$1 \leq i \leq 2$: Either $t_i$ is an integral real constant $n$ and
$t^{\mathcal{FD}}_i$ is the integer $n$, or else  $t_i$ is a
variable $RX_i$ with
$(X_i$ {\tt \#==} $RX_i)\in B$, and $t^{\mathcal{FD}}_i$ is $X_i$\} }\\
\hline {\scriptsize $t_1$ {\tt /=} $t_2$} & {\scriptsize
~~~$\emptyset$~~({\em no bridges are created})} &
 {\scriptsize \{$t^{\mathcal{FD}}_1$ {\tt /=} $t^{\mathcal{FD}}_2$ $|$ For
$1 \leq i \leq 2$: Either $t_i$ is an integral real constant $n$ and
$t^{\mathcal{FD}}_i$ is the integer $n$, or else  $t_i$ is a
variable $RX_i$ with
$(X_i$ {\tt \#==} $RX_i)\in B$, and $t^{\mathcal{FD}}_i$ is $X_i$\} }\\
\hline {\scriptsize $t_1$ {\tt +} $ t_2$ $\to!$ $t_3$  (resp. {\tt
-}, {\tt *})} & {\scriptsize\{$X$ {\tt \#==} $RX$ $|$ $t_3$ is a
variable $RX$ with no bridge in $B$, $X$ new,  for $1 \leq i \leq
2$, $t_i$ is either an integral real constant or  a variable $RX_i$
with bridge $(X_i $ {\tt \#==} $ RX_i) \in B$\}} &
 {\scriptsize\{$t^{\mathcal{FD}}_1$ {\tt \#+} $t^{\mathcal{FD}}_2$$\to!$ $t^{\mathcal{FD}}_3$ $|$
For $1 \leq i \leq 3$: $t^{\mathcal{FD}}_i$ is determined as in the previous case\}}\\
\hline
 {\scriptsize$t_1$ {\tt /} $t_2$ $\to!$ $t_3$} & {\scriptsize ~~~$\emptyset$~~({\em no bridges are
created})} &
  {\scriptsize\{$t^{\mathcal{FD}}_2$ {\tt \#*}
$t^{\mathcal{FD}}_3$ $\to!$ $t^{\mathcal{FD}}_1$ $|$
For $1 \leq i \leq 3$ is determined as in the previous case\}}\\
 \hline
\end{tabular}
\end{center}
\caption{Computing Bridges and Projections from $\mathcal{R}$ to
$\mathcal{FD}$}\label{table5}
\end{table}
\vspace*{-.3cm}

% Example: Computation of Bridges and Projections

Example \ref{exCBP} below illustrates the operation of the goal
transformation rules from Table \ref{table4} for computing bridges
and projections with the help of the functions speci\-fied in Tables
\ref{table2} and \ref{table5}.

\begin{example} [Computation of Bridges and Projections] \label{exCBP}
\vspace*{.2cm}

{\scriptsize
\noindent $\Box$ $\underline{(RX+2*RY)*RZ\textnormal{~\em {\tt <=}
3.5}}$ $\Box$ $X$ {\tt \#==} $RX,$ $Y$ {\tt \#==} $RY,$ $Z$ {\tt
\#==} $RZ$
$\Box$$\Box$$\Box$ $\vdash\!\!\vdash_{\bf FC^{3},PC^{3}}$\\
$\exists RC,RB,RA.$ $\Box$ $\underline{2*RY~\to!~RC},$
$\underline{RX+RC~\to!~RB},$ $\underline{RB*RZ~\to!~RA},$ $RA$ {\tt
<=} $3.5$ $\Box$\\
$~~~~~~~~~~~~~~~~~~~~~X$ {\tt \#==} $RX,$ $Y$ {\tt \#==} $RY,$ $Z$
{\tt \#==} $RZ$ $\Box$$\Box$$\Box$ $\vdash\!\!\vdash_{\bf SB^{3}}$\\
$\exists C,B,A,RC,RB,RA.$ $\Box$ $\underline{2*RY~\to!~RC},$
$\underline{RX+RC~\to!~RB},$ $\underline{RB*RZ~\to!~RA},$
$\underline{RA\textnormal{~\em {\tt <= }} 3.5}$ $\Box$\\
$~~~~~~~~~~~~~~~~~~~~~C$ {\tt \#==} $RC,$ $B$ {\tt \#==} $RB,$ $A$
{\tt \#==} $RA,$ $X$ {\tt \#==} $RX,$ $Y$ {\tt \#==} $RY,$ $Z$ {\tt
\#==} $RZ$ $\Box$$\Box$$\Box$ $\vdash\!\!\vdash_{\bf PP^{4}}$\\
$\exists C,B,A,RC,RB,RA.$ $\Box$ $\underline{2*RY~\to!~RC},$
$\underline{RX+RC~\to!~RB},$ $\underline{RB*RZ~\to!~RA},$
$\underline{RA\textnormal{~\em {\tt <= }} 3.5}$ $\Box$\\
$~~~~~~~~~~~~~~~~~~~~~C$ {\tt \#==} $RC,$ $B$ {\tt \#==} $RB,$ $A$
{\tt \#==} $RA,$ $X$ {\tt \#==} $RX,$ $Y$ {\tt \#==}
$RY,$ $Z$ {\tt \#==} $RZ$ $\Box$$\Box$\\
$~~~~~~~~~~~~~~~~~~~~~2$ {\tt \#*} $Y$ $\to!$ $C,$ $X$ {\tt \#+} $C$
$\to!$ $B,$ $B$ {\tt \#*} $Z$ $\to!$ $A,$ $A$ {\tt \#<=} $3$ $\Box$ $\vdash\!\!\vdash_{\bf SC^{4}}$\\
$\exists C,B,A,RC,RB,RA.$ $\Box$$\Box$ $C$ {\tt \#==} $RC,$ $B$ {\tt
\#==} $RB,$ $A$ {\tt \#==} $RA,$ $X$ {\tt \#==} $RX,$ $Y$ {\tt \#==}
$RY,$ $Z$ {\tt \#==} $RZ$ $\Box$$\Box$\\
$~~~~~~~~~~~~~~~~~~~~~2$ {\tt \#*} $Y$ $\to!$ $C,$ $X$ {\tt \#+} $C$
$\to!$ $B,$ $B$ {\tt \#*} $Z$ $\to!$ $A,$ $A$ {\tt \#<=} $3$ $\Box$\\
$~~~~~~~~~~~~~~~~~~~~~2$ {\tt *} $RY \to!~RC,$ $RX$ {\tt +} $RC
\to!~RB,$ $RB$ {\tt *} $RZ \to!~RA,$ $RA$ {\tt <=}
$3.5$}\vspace*{0.25cm}
\end{example}

Note that the initial goal in this current example  is an extension
of the initial goal in Example \ref{exCFlat}. The first six steps of
the current computation are similar to those in Example
\ref{exCFlat}, taking care of flattening the $\rdom$-constraint {\tt
(RX+2*RY)*RZ <= 3.5}. The subsequent steps use the transformation
rules from Table \ref{table4} until no further bridges and
projections can be computed and no constraints remain in the
constraint pool.\vspace*{0.25cm}

We have borrowed the projection idea from Hofstedt's work, see e.g.
\cite{Hofstedt:phd-thesis-2001,HP07}, but our proposal of using
bridges to compute projections is a novelty. In Hofstedt's approach,
projecting constraints from one domain into another depends on
common variables present in both stores. In our approach,
well-typing requirements generally prevent one and the same variable
to occur in constraints from different domains. In order to improve
the opportunities for computing projections, our cooperative goal
solving calculus $\cclnc{\ccdom}$ provides the goal transformation
rule {\bf SB} for creating new bridges during the computations. Some
other differences between $\cclnc{\ccdom}$ and the cooperative
computation model proposed by Hofstedt et al. are as follows:
\begin{itemize}
\item
All the projections presented in this paper return just one
$\exists\overline{V'}\, \Pi'$. In Hofs\-tedt's terminology, such
projections are called {\em weak}, while projections returning
disjunctions $\bigvee_{k=1}^{l}\exists\overline{V'}_k \Pi'_k$ with
$l > 1$ are called {\em strong}.  
The use of strong projections is illustrated in
\cite{HP07} by means of a problem dealing with the computation of resistors 
that have a certain capacity. 
The strong projection used in this example is a finite disjunction 
of conjunctions of the form $X == x \land Y == y$ for various numeric values $x$ and $y$.
Solving this disjunction gives rise to an enumeration of solutions.
In \cite{DBLP:journals/entcs/MartinFHRV07} we have presented
a solution of the resistors problem where an equivalent enumeration of solutions 
can be computed by the $\fd$-solver via backtracking,  without building any strong projection.
This is possible in our framework due to the presence of labeling constraints,
that are not used in the resistor example as presented in \cite{HP07}.
Therefore, strong projections are not necessary for this 
particular example of cooperation  between $\mathcal{FD}$ and $\mathcal{R}$.
Theoretically, strong projections could be useful in other problems, and
rule {\bf PP} in our $\cclnc{\ccdom}$ calculus could be very straightforwardly adapted to work 
with strong projections. However, we decided not to do so because we are not aware of 
any useful extension to extend tables \ref{table2} and \ref{table5} for computing strong projections.
We could find no formulation of practical  procedures for computing projections
in \cite{HP07} and related works, where all projections used in examples are presented in an {\em ad hoc} manner.
\item
Currently, our $\cclnc{\ccdom}$ calculus projects $\mathcal{FD}$
(resp.  $\mathcal{R}$) constraints from the pool $C$ into the
$\mathcal{R}$ store $R$ (resp. $\mathcal{FD}$ store $F$). Hofstedt's
proposal also allows to compute projections from constraints placed into the stores.
In our previous paper \cite{DBLP:journals/entcs/MartinFHRV07},
we have sketched a cooperative goal solving calculus
  where an arbitrary coordination domain was assumed and
 projections could act over the constraints within constraint stores.
 In fact, the resistor problem mentioned in the previous item was solved
 in  \cite{DBLP:journals/entcs/MartinFHRV07} by making essential use of projections 
 that acted over constraints within the $\mathcal{FD}$ and $\mathcal{R}$ stores.
 In the current paper, goal solving is restricted to the coordination domain
 $\ccdom$ $=$ $\mathcal{M}$ $\oplus$ $\herbrand$ $\oplus$ $\fd$ $\oplus$ $\rdom$
 and  projections can be applied only to the constraints placed in the  constraint pool.
 These two limitations correspond to the state of the current $\toy$ implementation.
 In particular, projections acting over stored constraints are not yet handled because
 the current $\toy$ system has no convenient mechanisms for processing the constraint 
 stores handled by the underlying SICStus Prolog.  
\item
Goal solving in $\cclnc{\ccdom}$ enjoys the soundness and completeness properties
presented in Subsection \ref{SC}. In our opinion, these are
more elaborate than the soundness and completeness results
provided in Hofstedt's work.
\end{itemize}

\vspace*{-.3cm}
\begin{table}[h]
\begin{center}
\begin{tabular}{p{11.5cm}} \hline\\

{\scriptsize

\indent {\bf IE Infer Equalities}} \vspace*{0.4mm}

{\scriptsize $~~\exists \overline{U}.$ $P \Box C$ $\Box$ $X$
{\tt \#==} $RX$, $X'$ {\tt \#==} $RX$, $M$ $\Box$ $H$ $\Box$ $F
\Box R$ $\vdash\!\!\vdash_{\bf UB}$}\\

{\scriptsize
~~~~~~~~~~~~~~~~~~~~~~~~~~~~~~~~~$\exists
\overline{U}.$ $P$ $\Box$ $C$ $\Box$ $X$ {\tt \#==} $RX$, $M$
$\Box$ $H$ $\Box$ $X$ {\tt ==} $X', F \Box R$}.\\

{\scriptsize $~~\exists \overline{U}.$ $P \Box C$ $\Box$ $X$
{\tt \#==} $RX$, $X$ {\tt \#==} $RX'$, $M$ $\Box$ $H$ $\Box$ $F
\Box R$ $\vdash\!\!\vdash_{\bf UB}$}\\

{\scriptsize
~~~~~~~~~~~~~~~~~~~~~~~~~~~~~~~~~$\exists
\overline{U}.$ $P$ $\Box$ $C$ $\Box$ $X$ {\tt \#==} $RX$, $M$
$\Box$ $H$ $\Box$ $F \Box RX ${\tt ==}$ RX', R$}.\\ \\

{\scriptsize {\bf ID Infer Disequalities}} \vspace*{0.4mm}

{\scriptsize
$\exists \overline{U}.$ $P \Box C$ $\Box$
$X${\tt \#/==}$u'$, $M$ $\Box$ $H$ $\Box$ $F \Box R$ $\vdash\!\!\vdash_{\bf ID}$
$\exists \overline{U}.$ $P$ $\Box$ $C$ $\Box$ $M$ $\Box$ $H$ $\Box$
$X${\tt /=}$u, F \Box R$}
\vspace*{0.4mm}

{\scriptsize if $u \in \mathbb{Z}$, $u' \in \mathbb{R}$ and $u$ {\tt
\#==}$^{\mathcal{M}}~u' \to true$}. \vspace*{0.4mm}

{\scriptsize $\exists \overline{U}.$ $P \Box C$ $\Box$ $u${\tt
\#/==}$RX$, $M$ $\Box$ $H$ $\Box$ $F \Box R$ $\vdash\!\!\vdash_{\bf
ID}$ $\exists\overline{U}.$ $P$ $\Box$ $C$ $\Box$ $M$ $\Box$ $H$
$\Box$ $F$ $\Box$ $RX${\tt /=}$u', R$} \vspace*{0.4mm}

{\scriptsize
if  $u \in \mathbb{Z}$, $u' \in \mathbb{R}$ and
$u$ {\tt \#==}$^{\mathcal{M}}~u' \to true$}.\\ \\
\hline
\end{tabular}
\end{center}
\caption{Rules for Inferring $\herbrand$-constraints from
$\mathcal{M}$-constraints}\label{table7}
\end{table}
\vspace*{-.3cm}

% Ad hoc cooperation rules.

To finish this subsection, we present the goal transformation rules
in Table \ref{table7}, which can be used to infer
$\herbrand$-constraints from the $\mdom$-constraints placed in the
store $M$. The inferred $\herbrand$-constraints happen to be
$\fd$-specific or $\rdom$-specific, according to the case, and can
be placed in the corresponding store. Therefore, the rules in this
group model domain cooperation mechanisms other than bridges and
projections.
\vspace*{-.3cm}

\subsection{Constraint Solving Rules}\label{csr}

The presentation of $\cclnc{\ccdom}$ finishes with the constraint solving rules displayed in
Table \ref{Stable}.  Rule  {\bf SF} models the detection of failure by a solver, and the other
rules describe the possible transformation of a goal by a solver's invocation.
Each time a new constraint from the pool is placed into its store by means
of transformation {\bf SC}, it is pragmatically convenient to invoke the corresponding solver
by means of the rules in this table. The solvers for the four domains $\mathcal{M}$, $\mathcal{H}$,
$\mathcal{FD}$ and $\mathcal{R}$ involved in the coordination domain $\ccdom$ are considered.
The availability of the $\mdom$-solver means that solving mediatorial constraints
contributes to the cooperative goal solving process, in addition to the role of bridges
for guiding projections.

\vspace*{-.3cm}
\begin{table}[h]
\begin{center}
\begin{tabular}{p{12.5cm}}
\hline\\

{\scriptsize {\bf MS $\mathcal{M}$-Constraint Solver (\em {glass-box})}}\\\vspace*{0.1mm}

{\scriptsize $~~~~~~\exists \overline{U}.$ $P$ $\Box$ $C$ $\Box$ $M$
$\Box$ $H$ $\Box$ $F$ $\Box$ $R$ $\vdash\!\!\vdash_{\bf MS}$
$\exists \overline{Y'}, \overline{U}.$ $(P$ $\Box$ $C$ $\Box$
$(\Pi'$ $\Box$ $\sigma_M)$ $\Box$ $H$ $\Box$ $F$ $\Box$
$R)@_{\mdom}\sigma'$}\\\vspace*{0.1mm}

{\scriptsize If  $pvar(P) \cap var(\Pi_M) = \emptyset$,
$(\Pi_M\, \Box\, \sigma_M)$ is not solved,
$\Pi_M \vdash\!\!\vdash_{solve^{\mdom}} \exists \overline{Y'} (\Pi'\, \Box\, \sigma')$.} \\\\

{\scriptsize {\bf HS $\mathcal{H}$-Constraint Solver ({\em glass-box})}}\\\vspace*{0.1mm}

{\scriptsize $~~~~~~\exists \overline{U}.$ $P$ $\Box$ $C$ $\Box$ $M$
$\Box$ $H$ $\Box$ $F$ $\Box$ $R$ $\vdash\!\!\vdash_{\bf HS}$
$\exists \overline{Y'}, \overline{U}.$ $(P$ $\Box$ $C$ $\Box$ $M$
$\Box$ $(\Pi'$ $\Box$ $\sigma_H)$ $\Box$ $F$ $\Box$
$R)@_{\herbrand}\sigma'$}\\\vspace*{0.1mm}

{\scriptsize If $pvar(P) \cap odvar_{\herbrand}(\Pi_H) = \emptyset$,
$\varx =_{def} pvar(P) \cap var(\Pi_H)$,
$(\Pi_H\, \Box\, \sigma_H)$ is not $\chi$-solved,} \\
{\scriptsize $\Pi_H \vdash\!\!\vdash_{solve^{\herbrand}_{\varx}} \exists \overline{Y'} (\Pi'\, \Box\, \sigma')$.}\\\\

{\scriptsize {\bf FS $\mathcal{FD}$-Constraint Solver ({\em
black-box})}}\\\vspace*{0.1mm}
{\scriptsize $~~~~~~\exists \overline{U}.$ $P$ $\Box$ $C$ $\Box$ $M$
$\Box$ $H$ $\Box$ $F$ $\Box$ $R$ $\vdash\!\!\vdash_{\bf FS}$
$\exists \overline{Y'}, \overline{U}.$ $(P$ $\Box$ $C$ $\Box$ $M$
$\Box$ $H$ $\Box$ $(\Pi'$ $\Box$ $\sigma_F)$ $\Box$
$R)@_{\fd}\sigma'$}\\\vspace*{0.1mm}

{\scriptsize If $pvar(P) \cap var(\Pi_F) = \emptyset$,
$(\Pi_F\, \Box\, \sigma_F)$ is not solved,
$\Pi_F \vdash\!\!\vdash_{solve^{\fd}} \exists \overline{Y'} (\Pi'\, \Box\, \sigma')$.}\\\\

{\scriptsize {\bf RS $\mathcal{R}$-Constraint Solver ({\em
black-box})}}\\\vspace*{0.1mm}

{\scriptsize $~~~~~~\exists \overline{U}.$ $P$ $\Box$ $C$ $\Box$ $M$
$\Box$ $H$ $\Box$ $F$ $\Box$ $R$ $\vdash\!\!\vdash_{\bf RS}$
$\exists \overline{Y'}, \overline{U}.$ $(P$ $\Box$ $C$ $\Box$ $M$
$\Box$ $H$ $\Box$ $F$ $\Box$ $(\Pi'$ $\Box$
$\sigma_R))@_{\rdom}\sigma'$}\\\vspace*{0.1mm}

{\scriptsize If $pvar(P) \cap var(\Pi_R) = \emptyset$,
$(\Pi_R\, \Box\, \sigma_R)$ is not solved,
$\Pi_R \vdash\!\!\vdash_{solve^{\rdom}} \exists \overline{Y'} (\Pi'\, \Box\, \sigma')$.}\\\\
{\scriptsize {\bf SF Solving Failure}}\\\vspace*{0.1mm}
{\scriptsize $~~~~~~\exists \overline{U}.$ $P$ $\Box$ $C$ $\Box$ $M$
$\Box$ $H$ $\Box$ $F$ $\Box$ $R$ $\vdash\!\!\vdash_{\bf SF}$
$\blacksquare $}\\\vspace*{0.1mm}

{\scriptsize If $S$ is the $\mathcal{D}$-store ($\mathcal{D}$ being $\mathcal{M}$, $\mathcal{H}$, $\mathcal{FD}$ or
$\mathcal{R}$), $pvar(P) \cap odvar_{\cdom}(\Pi_S) = \emptyset$, $\varx =_{def} pvar(P) \cap var(\Pi_S)$,
$(\Pi_S\, \Box\, \sigma_S)$ is not $\chi$-solved and
$\Pi_S \vdash\!\!\vdash_{solve^{\cdom}_{\varx}} \blacksquare$. Note that $\varx \neq \emptyset$ is possible only in the case
$\mathcal{D} = \mathcal{H}$.}\\\\
\hline
\end{tabular}
\end{center}
\caption{Rules for $\mathcal{M}$, $\herbrand$, $\mathcal{FD}$ and
$\mathcal{R}$ Constraint Solving}\label{Stable}
\end{table}
\vspace*{-.2cm}

Let $\cdom$ be any of the four domains, and let $\Pi$ be the set of constraints included in the
$\cdom$ store in a given goal $G$ with productions $P$. As explained in Subsection  \ref{csolvers}, each invocation 
$solve^{\mathcal{D}}(\Pi,\varx)$ depends on a set  of critical variables
$\varx \subseteq cvar_{\cdom}(\Pi)$ which must be properly chosen. On the other hand, the goal invariants explained 
in Subsection \ref{programa} require that no produced variable is bound to a
non-linear pattern, and the {\em safe binding}  condition satisfied by any solver ensures that a solver invocation never 
binds any variable {\tt X} $\in \varx$,  except to a constant.

Because of these reasons, the rules in Table \ref{Stable} allow a solver invocation $solve^{\mathcal{D}}(\Pi,\varx)$
only if the following two conditions are satisfied:
\begin{enumerate}
\item[(a)] $pvar(P) \cap odvar_{\cdom}(\Pi) = \emptyset$.\\
Motivation: If this condition does not hold, for any choice of $\varx \subseteq cvar_{\cdom}(\Pi)$ there is some
variable {\tt X} $\in pvar(P) \setminus \varx$, and the solver invocation could bind {\tt X}  to a non-linear pattern.
\item[(b)] $\varx = pvar(P) \cap var(\Pi)$.\\
Motivation: Because of condition (a), this $\varx$ is a subset of $cvar_{\cdom}(\Pi)$, and the safe binding
condition of solvers ensures that the invocation $solve^{\mathcal{D}}(\Pi,\varx)$ will bind no produced variable, except to a constant.
\end{enumerate}

When $\cdom$ is not $\herbrand$, we know from Section \ref{coordination} that all the variables in
$\Pi$ can be assumed to be obviously demanded. Then $odvar_{\cdom}(\Pi) = var(\Pi)$,
condition (a) becomes $pvar(P) \cap var(\Pi) = \emptyset$, (b) becomes
$\varx = \emptyset$, and $solve^{\mathcal{D}}(\Pi,\emptyset)$ can be abbreviated
as $solve^{\mathcal{D}}(\Pi)$. The rules related to $\mathcal{M}$, $\mathcal{FD}$
and  $\mathcal{R}$ in Table \ref{Stable} assume the simplified form of condition (a), (b).
The notations $\Pi \vdash\!\!\vdash_{solve^{\cdom}_{\varx}} \exists \overline{Y'} (\Pi'\, \Box\, \sigma')$
and $\Pi \vdash\!\!\vdash_{solve^{\cdom}_{\varx}} \blacksquare$
introduced in Subsection \ref{csolvers} are used to indicate the
non-deterministic choice of an alternative returned by a successful $\cdom$-solver invocation
and a failed $\cdom$-solver invocation, respectively.
Note also the use of the notation $( \ldots )@_{\cdom}\sigma'$
explained near the end of Subsection \ref{programa}.

At this point, we can precise the notion of {\em solved goal} as follows:
a goal $G$ is solved iff it has the form
$\exists \overline{U}.$ $\Box$ $\Box$ $M$ $\Box$ $H$ $\Box$ $F$ $\Box$ $R$
(with empty $P$ and $C$) and the $\clnc{\ccdom}$-transformations in Tables \ref{table7} and \ref{Stable}
cannot be applied to $G$. The $\clnc{\ccdom}$-transformations in Tables \ref{table3} and \ref{table4}
are obviously not applicable to solved goals, since they refer to $P$ and $C$.
\vspace*{-.2cm}

\subsection{One Example of Cooperative Goal Solving} \label{exampletriangle}

In order to illustrate the overall behavior of our cooperative goal solving
calculus, we present a $\cclnc{\ccdom}$ computation solving the goal  {\bf Goal 2}
discussed in Subsection \ref{examples}.
The reader is referred to Figure \ref{goals} for a graphical representation of the
problem and to Subsection \ref{programa} for a formulation of the goal and the
expected solution in the particular case {\tt d = 2}, {\tt n = 4}.
However, the solution is the same for any choice
of  positive integer values {\tt d} and {\tt n} such that {\tt n = 2*d},
and here we will discuss the general case.

The $\cclnc{\ccdom}$ calculus leaves ample room for choosing a particular goal transformation
at each step, so that many different computations are possible in principle.
However, the $\toy$ implementation follows a particular strategy.
The part $P\, \Box\, C$ of the current goal is treated as a sequence and processed from left to right,
with the only exception of suspensions $e \to X$ that are delayed until they can be
safely eliminated by means of rule {\bf EL} or the goal is so transformed that they cease to be suspensions.
As long as the current goal is not in solved form, a subgoal is selected and processing according to a strategy
which can be roughly described as follows:
\begin{enumerate}
\item
If $P$ includes some production which can be handled by the
constrained lazy narrowing rules in Table \ref{table3}, the leftmost
such production is selected and processed. Note that the selected
production must be either a suspension $e \to X$ that can be
discharged by rule {\bf EL}, or else a production that is not a
suspension. The applications of rule {\bf DF} are performed in an
optimized way by using definitional trees \cite{Vad05,vado:ictac07}.
\item
If $P$ is empty or consists only of productions $e \to X$ that cannot be processed by means of the  constrained 
lazy narrowing rules in Table \ref{table3}, and moreover some of the stores $M$, $H$, $F$ or $R$ is not in solved 
form and its constraints include no produced variables, then the solvers for such stores are invoked, choosing the 
set $\varx$ of critical variables as explained in Table \ref{Stable}.
\item
If neither of the two previous items applies and $C$ is not empty, the leftmost atomic constraint $\delta$ in $C$ is selected. 
In case it is not primitive, the flattening rule {\bf FC} from Table \ref{table3} is applied.
Otherwise, $\delta$ is a primitive atomic constraint $\pi$, and exactly one of the following cases applies:
\begin{enumerate}
\item
If $\pi$ is a proper $\fd$-constraint or else an extended $\herbrand$-constraint such that $M \vdash \pi$ {\tt in} $\fd$,
then $\pi$ is processed by means of the rules {\bf SB}, {\bf PP} and {\bf SC} from Table \ref{table4}.
This generates bridges and projected constraints $\pi'$, if possible, and submits $\pi$ to the store $F$.
Then, the rules from Table \ref{Stable} are used for invoking the $\fd$-solver
(in case that the constraints in $F$  include no produced variables)
and  the $\rdom$-solver (in case that the constraints in $R$  include no produced variables).
\item
If $\pi$ is a proper $\rdom$-constraint or else an extended $\herbrand$-constraint such that $M \vdash \pi$ {\tt in} $\rdom$,
then $\pi$ is processed by means of the rules {\bf SB}, {\bf PP} and {\bf SC} from Table \ref{table4}.
This generates bridges and projected constraints $\pi'$, if possible, and submits $\pi$ to the store $R$.
Then, the rules from Table \ref{Stable} are used for invoking the $\rdom$-solver
(in case that the constraints in $R$  include no produced variables)
and  the $\fd$-solver (in case that the constraints in $F$  include no produced variables).
\item
If $\pi$ is an extended $\herbrand$-constraint such that neither  $M \vdash \pi$ {\tt in} $\fd$ nor $M \vdash \pi$ {\tt in} $\rdom$,
then $\pi$ is submitted to the store $H$ by means of rule {\bf SC}, and the $\herbrand$-solver is invoked in case that
the constraints in $H$ include no obviously demanded produced variables.
\item
If $\pi$ is a $\mdom$-constraint, then $\pi$ is submitted to the store $M$ by means of rule {\bf SC},
the rules of Table \ref{table7} are applied if possible,
and the $\mdom$-solver is invoked in case that
the constraints in $M$ include no produced variables.
\end{enumerate}
\end{enumerate}

The series of goals $G_0$ up to $G_{12}$ displayed below correspond to the initial goal, the
final solved goal and a selection of intermediate goals in a computation which roughly models the
strategy of the $\toy$ implementation, working with the projection functionality activated.
In the initial goal,  {\tt d} and {\tt n}  are arbitrary positive integers such that {\tt n = 2*d} and {\tt d' = d+0.5}. \\

{\scriptsize

\noindent
$G_0:$ $\Box$ $\underline{bothIn\,(triangle\,(d,d')\,2\,1)\,(square\,n)\,(X,Y)\,{\tt == true}}$ $\Box$$\Box$$\Box$$\Box$  $\vdash\!\!\vdash_{\bf FC}$\\

\noindent
$G_1:$ $\exists \overline{U_1}.$ $bothIn\,(triangle\,(d,d')\,2\,1)\,(square\,n)\,(X,Y)\, \to A$ $\Box$
$\underline{A {\tt == true}}$ $\Box$$\Box$$\Box$$\Box$  $\vdash\!\!\vdash_{\bf SC(ii)}$\\

\noindent
$G_2:$ $\exists \overline{U_2}.$ $\underline{bothIn\,(triangle\,(d,d')\,2\,1)\,(square\,n)\,(X,Y)}\, \to A$
$\Box$ $\Box$$\Box$ $A$ {\tt == true}$\Box$$\Box$
$\vdash\!\!\vdash_{\bf DF_{\textnormal{\em bothIn}}}$\\ 

\noindent
$G_3:$ $\exists \overline{U_3}.$$\underline{triangle\,(d,d')\,2\,1 \to R}$, $\underline{square\, n \to G}$,
$\underline{(X,Y) \to (X',Y')}$, $\underline{{\tt true} \to A}$ $\Box$\\
\hspace*{2.cm}$X'$ {\tt \#==} $RX$, $Y'$ {\tt \#==} $RY$, $isIn\,R\, (RX,RY)$ {\tt == true}, $isIn\, G\,(X',Y')$ {\tt == true},\\
\hspace*{2.cm}$labeling\,[\,]\,[X',Y']$ $\Box$$\Box$ \underline{$A$ {\tt == true}} $\Box$$\Box$
$\vdash\!\!\vdash^{*}_{\bf SP^2, DC, SP^3,HS}$\\

\noindent
$G_4:$ $\exists \overline{U_4}.$  $\Box$\underline{$X$ {\tt \#==} $RX$}, \underline{$Y$ {\tt \#==} $RY$},
$isIn\,(triangle\,(d,d')\,2\,1)\, (RX,RY)$ {\tt == true}, \\
\hspace*{2.cm}$isIn\,(square\, n)\,(X,Y)$ {\tt == true}, $labeling\,[\,]\,[X,Y]$ $\Box$$\Box$ $\sigma_H$ $\Box$$\Box$  $\vdash\!\!\vdash^{*}_{\bf SC(i)^2,MS}$\\

\noindent
$G_5:$ $\exists \overline{U_5}.$  $\Box$ $\underline{isIn\,(triangle\,(d,d')\,2\,1)\, (RX,RY) {\tt == true}}, \underline{isIn\,(square\, n)\,(X,Y) {\tt == true}}$,\\
\hspace*{2.cm}$labeling\,[\,]\,[X,Y]$ $\Box$$X$ {\tt \#==} $RX$, $Y$ {\tt \#==} $RY$$\Box$
$\sigma_H$ $\Box$$\Box$  $\vdash\!\!\vdash^{*}_{\bf CLN}$\\

\noindent
$G_6:$ $\exists \overline{U_6}.$  $\Box$ \underline{$RY>=d'-1$}, $2*RY-2*1*RX<=2*d'-2*1*d$,\\
\hspace*{2.cm}$2*RY+2*1*RX<=2*d'+2*1*d$, $domain\,[X,Y]\,0\,n$, \\
\hspace*{2.cm}$labeling\,[\,]\,[X,Y]$ $\Box$ $X$ {\tt \#==} $RX$, $Y$ {\tt \#==} $RY$$\Box$ $\sigma'_H$ $\Box$$\Box$  $\vdash\!\!\vdash^{*}_{\bf FC,PC}$\\

\noindent
$G_7:$ $\exists \overline{U_7}.$  $\Box$ \underline{$d'-1\, \to! RA$}, $RY>=RA$,
$2*RY-2*1*RX<=2*d'-2*1*d$,\\
\hspace*{2.cm}$2*RY+2*1*RX<=2*d'+2*1*d$, $domain\,[X,Y]\,0\,n$, \\
\hspace*{2.cm}$labeling\,[\,]\,[X,Y]$ $\Box$ $X$ {\tt \#==} $RX$, $Y$ {\tt \#==} $RY$$\Box$ $\sigma'_H$ $\Box$$\Box$  $\vdash\!\!\vdash^{*}_{\bf SC(iv), RS}$\\

\noindent
$G_8:$ $\exists \overline{U_8}.$  $\Box$   \underline{$RY>=d''$}, $2*RY-2*1*RX<=2*d'-2*1*d$,\\
\hspace*{2.cm}$2*RY+2*1*RX<=2*d'+2*1*d$, $domain\,[X,Y]\,0\,n$, \\
\hspace*{2.cm}$labeling\,[\,]\,[X,Y]$ $\Box$ $X$ {\tt \#==} $RX$, $Y$ {\tt \#==} $RY$$\Box$ $\sigma'_H$ $\Box$$\Box$ $S_R$
$\vdash\!\!\vdash^{*}_{\bf BP, CS}$\\

\noindent
$G_9:$ $\exists \overline{U_9}.$  $\Box$ \underline{$2*RY-2*1*RX<=2*d'-2*1*d$},\\
\hspace*{1.cm}\underline{$2*RY+2*1*RX<=2*d'+2*1*d$}, $domain\,[X,Y]\,0\,n$, $labeling\,[\,]\,[X,Y]$ $\Box$\\
\hspace*{1.cm} $X$ {\tt \#==} $RX$, $Y$ {\tt \#==} $RY$ $\Box$ $\sigma'_H$ $\Box$ $Y\#>=d$ $\Box$ $RY>=d'', S_R$ $\vdash\!\!\vdash^{*}_{\bf FR, BP}$\\

\noindent
$G_{10}:$ $\exists \overline{U_{10}}.$  $\Box$ $domain\,[X,Y]\,0\,n$, $labeling\,[\,]\,[X,Y]$ $\Box$ \\
\hspace*{1.cm}\underline{$X$ {\tt \#==} $RX$, $Y$ {\tt \#==} $RY$ $B$ {\tt \#==} $RB$, $C$ {\tt \#==} $RC$, $S'_M$}
$\Box$  $\sigma'_H$ $\Box$ \\
\hspace*{1.cm}\underline{$Y\#>=d, 2\#*Y\#-2\#*X \to! B, B \#<= 1, 2\#*Y\#+2\#*X \to! C, C  \#<= n', S'_F$} $\Box$ \\
\hspace*{1.cm}\underline{$RY>=d'', 2*RY-2*RX \to! RB, RB <= 1, 2*RY+2*RX \to! RC, RC <= n', S'_R$} $\vdash\!\!\vdash^{*}_{\bf CS}$\\

\noindent
$G_{11}:$ $\exists \overline{U_{11}}.$  $\Box$ \underline{$domain\,[d,d]\,0\,n$},
\underline{$labeling\,[\,]\,[d,d]$} $\Box$
$S''_M$ $\Box$  $\sigma'_H$ $\Box$ $S''_F$ $\Box$ $S''_R$ $\vdash\!\!\vdash^{*}_{\bf SC(iii), FS, SC(iii), FS}$\\

\noindent
$G_{12}:$ $\exists \overline{U_{12}}.$  $\Box$ $\Box$ $S''_M$ $\Box$  $\sigma'_H$ $\Box$ $S''_F$ $\Box$ $S''_R$ \\

}
\vspace*{.2cm}

The local existential variables $\exists \overline{U_i}$ of each goal $G_i$ are not explicitly displayed,
and the notation $G_{i-1}\, \vdash\!\!\vdash^{*}_{\bf \mathcal{RS}}\, G_i$ is used to indicate the transformation
of $G_{i-1}$ into $G_i$ using the goal solving rules indicated by ${\bf \mathcal{RS}}$.
At some steps, ${\bf \mathcal{RS}}$ indicates a particular sequence of individual rules, named as explained
in the previous subsections. In other cases, namely for $i = 6$ and $9 \leq i \leq 11$,
${\bf \mathcal{RS}}$ indicates sets of goal transformation rules, named according to the following conventions:

\begin{itemize}
\item
{\bf CLN} names the set of constrained lazy narrowing rules presented in Table \ref{table3}.
\item
{\bf FR} names the set consisting of the two rules {\bf FC} and {\bf PC} displayed at the end
of Table \ref{table3}, used for constraint flattening.
\item
{\bf BP} names the set of rules for bridges and projections presented in Table \ref{table4}.
\item
{\bf CS} names the set of constraint solving rules presented in Table \ref{Stable}.
\end{itemize}

We finish with some comments on the computation steps:

\begin{itemize}
\item
Transition from $G_0$ to $G_1$:
The only constraint in $C$ is flattened, giving rise to one suspension and one flat constraint in the new goal.
The produced variable {\tt A} is not obviously demanded because the constraint {\tt A == true}
is not yet placed in the $\herbrand$-store.
\item
Transition from $G_1$ to $G_2$: The only suspension is delayed,
and the only constraint in the pool is processed by submitting  it to the
$\herbrand$-store. However, the $\herbrand$-solver cannot be invoked at this point,
because $A$ has become an obviously demanded variable that is also produced.
\item
Transition from $G_2$ to $G_3$: The former suspension has become a production
which is processed by applying the program rule defining the function {\tt bothIn},
which introduces new productions in $P$ and new constraints in $C$.
\item
Transition from $G_3$ to $G_4$: The four productions in $P$ are processed by
binding propagations and decompositions (rules {\bf SP} and {\bf DC}),
until $P$ becomes empty. Then the  $\herbrand$-solver can be invoked.
At this point,  the $\herbrand$-store just contains a substitution $\sigma_H$ resulting from
the previous binding steps.
\item
Transition from $G_4$ to $G_5$: $P$ is empty, and the two first constraints in $C$ are bridges.
They are submitted to the $\mdom$-store and the $\mdom$-solver is invoked,
which has no effect in this case.
\item
Transition from $G_5$ to $G_6$: There are no productions, and the two first constraints in the pool are processed
by steps similar to those used in the transition going from $G_0$ to $G_4$.
Upon completing this process, the new pool includes a number of new constraints coming
from the conditions in the program rules defining the functions {\tt isIn}, {\tt triangle} and {\tt square},
and the substitution stored in $H$ has changed.
At this point, $P$ is empty again and the constraints in $C$ plus the bridges in $M$ amount
to a system equivalent to the one used in Subsection \ref{examples} for an informal discussion of the resolution
of {\bf Goal 2}.
\item
Transition from $G_6$ to $G_7$ and from $G_7$ to $G_8$: There are no productions, and flattening the first constraint in $C$
gives rise to the primitive constraint {\tt d'-1\, $\to!$  RA}. This is submitted to the $\rdom$-store and the
$\rdom$-solver is invoked, which computes {\tt d''} as the numeric value of {\tt d'-1} and propagates
 the variable binding {\tt RA $\mapsto$ d''} to the whole goal,
possibly causing some other internal changes in the $\rdom$-store.
\item
Transition from $G_8$ to $G_9$: There are no productions, and the first constraint in $C$ is now {\tt RY >= d''}.
Since {\tt d'' = d'-1 = d+0.5-1 = d-0.5}, we have {\tt $\lceil$d''$\rceil$ = d}.
Therefore, projecting {\tt RY >= d''} with the help of the available bridges (including  {\tt Y \#== RY}) allows to compute
{\tt Y \#>= d} as a projected $\fd$-constraint. Both {\tt RY >= d''} and {\tt Y \#>= d} are submitted to their
respective stores and the two solvers are invoked, having no effect in this case.
\item
Transition from $G_9$ to $G_{10}$: There are no productions, and the
two first atomic constraints in the pool of $G_9$ (two
$\rdom$-constraints  $\delta_1$ and $\delta_2$) are processed by
steps similar to those used in the transition going from $G_6$ to
$G_9$, except that the solver invocations are delayed to the
transition from $G_{10}$ to $G_{11}$ and commented in the next item.
(Actually, the $\toy$ implementation would invoke the
solvers two times: The first time when processing $\delta_1$ and the
second time when processing $\delta_2$. Here we explain the overall
effect of the two invocations for the sake of simplicity.) Upon
completing this process, $G_{10}$ stays as follows: $P$ is empty,
$C$ includes the two other constraints which were there in $G_9$,
and the stores $M$, $F$ and $R$ have changed because of new bridges
and projections. In fact, the constraints within the stores $F$ and
$R$ in $G_{10}$ would be equivalent but not identical to the ones
shown in this presentation, due to intermediate flattening steps
that we have not shown explicitly. In parti\-cular, the
$\rdom$-constraint {\tt 2*RY-2*RX $\to!$ RB} and its
$\fd$-projection {\tt 2\#*Y\#-2\#*X $\to!$ B} would really not occur
in this form, but a conjunctions of primitive constraints obtained
by flattening them  would occur at their place.
\item
Transition from $G_{10}$ to $G_{11}$: At this point, the $\fd$-solver is able to infer that the constraints
in the $\fd$ store imply one single solution for  the variables {\tt X} and {\tt Y}, namely {\tt \{X $\mapsto$ d, Y $\mapsto$ d\}}.
Therefore, the $\fd$-solver propagates these bindings to the whole goal, affecting in particular to the bridges in $M$.
Then, the $\mdom$-solver propagates the corresponding bindings {\tt \{RX $\mapsto$ rd, RY $\mapsto$ rd\}}.
({\tt rd} being the representation of {\tt d} as an integral real number),
and the $\rdom$-solver succeeds.
\item
Transition from $G_{11}$ to $G_{12}$: The two constraints in $C$ have now become trivial.
Submitting them to their stores and invoking the respective solvers leads to a solved goal,
whose restriction to the variables in the initial goal  is the computed answer
$\Box$$\Box$$\Box$$\Box$ ($\true$ $\Box$ {\tt \{X $\mapsto$ d, Y $\mapsto$ d\}}) $\Box$.
Note that no labeling whatsoever has been performed, independently of the size of {\tt n}.
\end{itemize}
\vspace*{-.2cm}

\subsection{Properties of the Cooperative Goal Solving Calculus $CCLNC(\mathcal{C})$} \label{SC}

This final subsection presents the main semantic results of the
paper, namely {\em soundness} and {\em limited  completeness} of the coopera\-tive goal solving calculus
$\cclnc{\ccdom}$  w.r.t. the declarative semantics of
$\cflp{\ccdom}$ given in  \cite{LRV07}. To start with, we define the notion of solution for a
given goal.

\begin{definition} [Solutions of Goals and their Witnesses]\label{defGoalSol}
\begin{enumerate}
\item
Let $G$ $\equiv$ $\exists \overline{U}.$ $P$ $\Box$ $C$ $\Box$ $M$ $\Box$ $H$ $\Box$ $F$ $\Box$ $R$
be an admissible goal for  a given $\cflp{\mathcal{C}}$-program $\prog$.
The {\em set of solutions}   $Sol_{\prog}(G)$ of $G$ w.r.t. $\prog$ includes
all those $\mu \in Val_{\mathcal{C}}$ such that there is some
$\mu' \in Val_{\mathcal{C}}$ verifying
$\mu'$ $=_{\backslash \overline{U}}$ $\mu$
and $\mu' \in Sol_{\prog}(P\, \Box\, C \, \Box \,M\,\Box\,H\,\Box\,F\,\Box\,R)$,
which holds iff the following two conditions are satisfied:
\begin{enumerate}
\item
$\mu' \in Sol_{\prog}(P\, \Box\, C)$. By definition, this means
$\prog \vdash_{\crwl{\mathcal{C}}}(P\,\Box\,C)\mu'$, which is
equivalent to $\prog \vdash_{\crwl{\mathcal{C}}}P\mu'$ and $\prog
\vdash_{\crwl{\mathcal{C}}}C\mu'$. This notation refers to the
existence of proofs in the instance $\crwl{\mathcal{C}}$ of  the
{\em Constrained Rewriting Logic} $CRWL$,
whose inference rules can be found in
\cite{LRV07}.
\item
$\mu' \in Sol_{\mathcal{C}}(M\,\Box\,H\,\Box\,F\,\Box\,R)$, which is equivalent to
$\mu' \in Sol_{\mathcal{C}}(M) \cap Sol_{\mathcal{C}}(H) \cap Sol_{\mathcal{C}}(F) \cap Sol_{\mathcal{C}}(R)$.
\end{enumerate}
\item
If $\mathcal{M}$ is a multiset having as its members the $CRWL(\mathcal{C})$-proofs mentioned in item 1.(a) above,
we will say that $\mathcal{M}$ is a {\em witness} for the fact that $\mu \in Sol_{\prog}(G)$,
and we will write  $\mathcal{M} : \mu \in Sol_{\prog}(G)$.
\item
A  solution $\mu \in Sol_{\prog}(G)$
is called {\em well-typed} iff the valuation $\mu'$
$=_{\backslash \overline{U}}$ $\mu$ mentioned in item 1. can be so chosen
that $(P\, \Box\, C \, \Box \,M\,\Box\,H\,\Box\,F\,\Box\,R)\mu'$ is well-typed,
which is noted as $\mu' \in WTSol_{\prog}(P\, \Box\, C \, \Box \,M\,\Box\,H\,\Box\,F\,\Box\,R)$.
The set of all well-typed solutions of $G$ w.r.t. $\prog$ is written as $WTSol_{\prog}(G)$.
In case that $\mathcal{M}$ is a witness for $\mu \in Sol_{\prog}(G)$, we also say that
$\mathcal{M}$ is a witness for $\mu \in WTSol_{\prog}(G)$ and we write $\mathcal{M} : \mu \in WTSol_{\prog}(G)$.\\
\end{enumerate}

In case that $G$ is a solved goal $S$, we write $Sol_{\ccdom}(S)$ (resp. $WTSol_{\ccdom}(S)$)
in place of $Sol_{\prog}(S)$ (resp. $WTSol_{\prog}(S)$).
\end{definition}

Concerning item 1.(b) in the previous definition, note that the equivalence
$\eta \in Sol_{\mathcal{C}}(M) \cap Sol_{\mathcal{C}}(H) \cap
Sol_{\mathcal{C}}(F) \cap Sol_{\mathcal{C}}(R)$ $ \Leftrightarrow$
$\eta \in Sol_{\mathcal{M}}(M) \cap Sol_{\mathcal{H}}(H) \cap
Sol_{\mathcal{F}}(F) \cap Sol_{\mathcal{R}}(R)$  does not make sense in general,
because a given valuation $\eta \in Val_{\ccdom}$ is not always a $\cdom$ valuation
when $\cdom$ is chosen as one of the four components of $\ccdom$.
However, Theorem \ref{sumProperties} from Subsection \ref{cdomains} allows
to reason with solutions known for $\ccdom$ in terms of solutions known for the four
components, as we will see in the mathematical proofs of Appendix \ref{PropertiesCalculus}.

Before presenting our soundness and completeness results for $\cclnc{\ccdom}$
let us comment  on some limitations concerning completeness:
\begin{itemize}
\item
As already said in Subsection \ref{programa}, the design of $\cclnc{\ccdom}$ is tailored to
programs and goals having no free occurrences of higher-order logic variables.
Therefore, the completeness results of this subsection are limited to this kind of programs and goals.
\item
The completeness of $\cclnc{\ccdom}$ is obviously conditioned  by the completeness of the solvers
invoked by the goal transformation rules in Table \ref{Stable}.
On the other hand, the completeness requirement for solvers in Definition \ref{defSolver}
is limited to well-typed solutions.
Therefore, the completeness results of this subsection refer only to well-typed solutions of the initial goal.
\item
As  discussed in Subsections \ref{hdom}, \ref{rdom} and \ref{fdom}, certain  invocations of constraint  solvers
can be incomplete even w.r.t.  well-typed solutions. Therefore, the completeness results of this subsection are also limited
by the assumption that no incomplete solver invocations occur during goal solving.
\item
Finally, the goal transformation rule {\bf DC} from Table \ref{table3} can give rise to {\em opaque decompositions}.
Similarly  to the opaque decompositions caused by  the transformation rules  {\bf H3} and {\bf H7}
for $\herbrand$-stores (see Subsection \ref{hdom}),  the opaque decompositions caused by {\bf DC} can lose well-typed solutions.
In what follows, we will say that an application of the goal transformation rule {\bf DC}
is {\em transparent} iff the expression $h\, \tpp{e}{m}$ involved in the rule application is such that
$h$ is $m$-transparent (or equivalently, $h$ is not $m$-opaque).
Only transparent  applications of the  rule {\bf DC}  can be trusted to preserve well-typed solutions.
For this reason, the completeness results of this subsection are limited by the assumption that
no opaque applications of  rule {\bf DC} occur during goal solving.
Unfortunately,  the eventual occurrence of opaque decomposition steps during  goal solving
(be they due to rule {\bf DC} from Table \ref{table3} or to the $sts$s {\bf H3} and {\bf H7} of the $\herbrand$-solver)
is an undecidable problem, because of theoretical results proved in \cite{GHR01}.
\end{itemize}

% Notation for one step with selected rule and goal part.

In the sequel we will use the notation $G \red_{{\bf RL},\gamma, \prog}G'$ to indicate that the
admissible goal $G$ for the $\cflp{\mathcal{C}}$-program $\prog$ is transformed into the new goal $G'$
by an application of the {\em selected rule} {\bf RL} applied to the {\em selected part} $\gamma$ of $G$.
It is important  to note that the  selected part $\gamma$ of $G$ must have the form expected by the selected rule {\bf RL}.
More precisely, $\gamma$ must be selected as one of the stores in case that {\bf RL} is some the
transformations in Table \ref{Stable}, as a pair of bridges in case that {\bf RL} is the transformation
{\bf IE} from Table \ref{table7}, and as an atom in any other case.
We will use also the notation $G \red_{{\bf RL},\gamma, \prog}^{+} G'$ to indicate the existence of some
computation of the form $G \red_{{\bf RL},\gamma, \prog} G_1 \red_{\prog}^{*}G'$
transforming $G$ in $G'$ in $n$ steps for some $n \geq 1$.

 We are now in a position to present the main results of this subsection.
First, we state a theorem which guarantees {\em local}
soundness and completeness for the {\em one-step} transformation of
a given goal. A proof is given in Appendix \ref{PropertiesCalculus}.

% Local Soundness and Completeness

\begin{theorem}[Local Soundness and Limited Local Completeness]\label{localSC}
Assume a given program $\prog$ and an admissible goal $G$ for $\prog$ which is not in solved form.
Choose any rule {\bf RL} applicable to $G$ and select a part $\gamma$  of $G$ suitable for applying {\bf RL}.
Then there are finitely many possible transformations $G \red_{{\bf RL},\gamma, \prog}G'_j$  ($1 \leq j \leq k$), and moreover:
\begin{enumerate}
\item {\bf Local Soundness:} $Sol_{\prog}(G) \supseteq \bigcup_{j=1}^{k} Sol_{\prog}(G'_j)$.
\item {\bf Limited Local Completeness:} $WTSol_{\prog}(G) \subseteq \bigcup_{j=1}^{k} WTSol_{\prog}(G'_j)$,
provided that the application of {\bf RL}  to the selected part $\gamma$ of $G$ is {\em safe} in the following sense:
it is neither an opaque application of {\bf DC}
nor an application of a rule from Table \ref{Stable} involving an incomplete solver invocation.
\end{enumerate}
\end{theorem}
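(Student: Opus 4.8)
The plan is to prove both parts by a case analysis on the selected rule \textbf{RL}, organized into four groups according to the structure of the calculus: (i) the constrained lazy narrowing rules of Table~\ref{table3}, (ii) the bridge/projection rules \textbf{SB}, \textbf{PP}, \textbf{SC} of Table~\ref{table4}, (iii) the mediatorial inference rules \textbf{IE}, \textbf{ID} of Table~\ref{table7}, and (iv) the constraint solving rules of Table~\ref{Stable}. For each rule I first note that there are finitely many outcomes $G'_j$: the lazy narrowing rules are essentially deterministic except \textbf{DF} (bounded by the finitely many program rules for the selected function symbol) and \textbf{IM}/\textbf{SP} (deterministic up to the choice already fixed by $\gamma$); the bridge/projection and mediatorial rules are deterministic given $\gamma$; and the solving rules produce the finite disjunction returned by a solver invocation, which is finite by the \emph{Solved Forms} property of Definition~\ref{defSolver} (and, for the glass-box solvers, by the finitely-branching-plus-terminating hypotheses verified in Theorems~\ref{hsolver} and~\ref{msolver}). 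So the finiteness claim reduces to bookkeeping.

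For \textbf{Local Soundness} the strategy for each rule is: take $\mu \in Sol_{\prog}(G'_j)$ with witness $\mathcal{M}'_j$, and construct a witness $\mathcal{M}$ showing $\mu \in Sol_{\prog}(G)$. For the narrowing rules this is the standard argument from \cite{LMR04}: decomposition, imitation, simple production and conflict failure reflect the inference rules of $CRWL(\ccdom)$ (with \textbf{CF} vacuously sound since $Sol_{\prog}(\blacksquare) = \emptyset$), and \textbf{DF}, \textbf{PC}, \textbf{FC} are justified by substituting the program rule into the relevant $CRWL$ derivation resp. by the definition of atomic constraints. The store transformation machinery — \textbf{MS}, \textbf{HS}, \textbf{FS}, \textbf{RS} and \textbf{SF} — is handled by the \emph{Soundness} clause of Definition~\ref{defSolver} together with the Substitution Lemma~\ref{sl} (to move the propagated binding $\sigma'$ in and out via the $@_{\cdom}$ operator) and the Monotonicity Lemma~\ref{ml}; here I must also invoke Theorem~\ref{sumProperties}(3)(4) to pass between solutions of the component domain $\cdom$ and solutions of $\ccdom$, because a $\ccdom$-valuation need not be a $\cdom$-valuation and the $\cdom$-specific sets of constraints in the stores let us truncate valuations. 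For the bridge/projection rules soundness is immediate: \textbf{SB} and \textbf{PP} only \emph{add} constraints (the ``correctness condition'' in Proposition~\ref{propBP}(2)(3) is the trivial inclusion), and \textbf{SC} merely relocates a constraint from the pool to a store, which does not change the solution set; similarly \textbf{IE} and \textbf{ID} replace a mediatorial constraint by a logically weaker Herbrand/$\fd$/$\rdom$ constraint that it entails, using the interpretation of {\tt \#==} fixed in Subsection~\ref{cdomains}.

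For \textbf{Limited Local Completeness} the direction is reversed: given a well-typed $\mu \in WTSol_{\prog}(G)$ with witness $\mathcal{M}$, produce some $j$ and a witness $\mathcal{M}'_j$ for $\mu \in WTSol_{\prog}(G'_j)$. The narrowing rules again follow \cite{LMR04}, the new subtlety being that \emph{opaque} applications of \textbf{DC} can destroy well-typedness — this is exactly why the theorem restricts to \emph{safe} applications, and under transparency I use the Transparency Lemma~\ref{trl}(2) to recover well-typed types $\tau_i$ for the decomposed arguments and the Type Preservation Lemma~\ref{tpl} where semantic approximations intervene. For \textbf{SB}/\textbf{PP} completeness is the ``completeness condition'' of Proposition~\ref{propBP}(2)(3), i.e. $WTSol_{\ccdom}(\pi \wedge B) \subseteq WTSol_{\ccdom}(\exists \overline{V'}(\pi \wedge B \wedge B'))$ resp. with $\Pi'$ — note this already refers to well-typed solutions, matching the hypothesis; \textbf{SC}, \textbf{IE}, \textbf{ID} are handled as in the soundness case but need the $M \vdash \pi$ {\tt in} $\fd$ / {\tt in} $\rdom$ analysis of Definition~\ref{defSpecific} to confirm that the relocated constraint lands in a store where it remains $\cdom$-specific. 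For the solver rules completeness comes from the \emph{Completeness} clause of Definition~\ref{defSolver} (valid for safe/$\mathcal{RS}$-free invocations by Theorem~\ref{hsolver} and Postulates~\ref{rsolver}, \ref{fsolver}, and unconditionally for $solve^{\mdom}$ by Theorem~\ref{msolver}), again combined with Lemma~\ref{sl}, Lemma~\ref{ml} and Theorem~\ref{sumProperties}(3)(4); the case \textbf{SF} is where a failed invocation $\Pi_S \red_{solve^{\cdom}_\varx} \blacksquare$ must not lose a well-typed solution, which is precisely the completeness-for-safe-invocations guarantee.

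The main obstacle I expect is the solver-rule cases, specifically the careful interaction of three things at once: the critical-variable set $\varx = pvar(P) \cap var(\Pi_S)$ chosen in Table~\ref{Stable}, the \emph{safe bindings} and \emph{discrimination} properties that control how $\sigma'$ touches produced variables, and the need to translate between $Sol_{\ccdom}$ and $Sol_{\cdom}$ through truncation $\trunc{\eta}{\cdom}$ (Definition~\ref{dsc} and Theorem~\ref{sumProperties}) while simultaneously tracking that the goal invariants of Subsection~\ref{programa} (disjoint substitution domains, only-produced-once, $\gg_P^+$ irreflexive) are preserved — since completeness is stated for \emph{admissible} goals and the transformed goals must stay admissible for the statement even to typecheck. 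Getting the $@_{\cdom}$ propagation bookkeeping exactly right, so that the witness-manipulation on the $P \Box C$ side and the valuation-manipulation on the store side fit together, will be the delicate part; everything else is a fairly mechanical, if lengthy, rule-by-rule verification that I would relegate to the appendix.
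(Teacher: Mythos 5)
Your plan is correct and follows essentially the same route as the paper's proof in Appendix~\ref{PropertiesCalculus}: a rule-by-rule case analysis in which the Table~\ref{table3} rules are handled by constructing/rearranging $CRWL(\ccdom)$ witnesses as in \cite{LMR04}, the rules \textbf{SB}/\textbf{PP} are discharged by the correctness and completeness conditions of Proposition~\ref{propBP}, \textbf{SC}/\textbf{IE}/\textbf{ID} by direct (essentially equivalence-preserving) store bookkeeping, and the solver rules by the soundness/completeness clauses of Definition~\ref{defSolver} (Theorems~\ref{hsolver}, \ref{msolver}, Postulates~\ref{rsolver}, \ref{fsolver}) combined with the Substitution Lemma, truncation and Theorem~\ref{sumProperties}(3)(4), with safety (no opaque \textbf{DC}, no incomplete invocation) entering exactly where you place it. You also correctly single out the solver cases (the $@_{\cdom}$ propagation, safe bindings, $\fd$/$\rdom$-specificity and truncation interplay) as the delicate part, which is precisely where the paper's appendix spends its effort.
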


% Soundness

A global soundness result for $\cclnc{\ccdom}$  follows easily from the first item of Theorem \ref{localSC}.
In particular, it ensures that the solved forms obtained as computed
answers for an initial goal using the rules of the cooperative goal
solving calculus are indeed semantically valid answers of $G$.

\begin{theorem}[Soundness Theorem]\label{globalS}
Assume a $\cflp{\mathcal{C}}$-program $\prog$,
an admissible goal $G$ for $\prog$,
and a solved goal $S$  such that $G$ $\red_{\prog}^{*}$ $S$.
Then, $Sol_{\mathcal{C}}(S) \subseteq Sol_{\prog}(G)$.
\end{theorem}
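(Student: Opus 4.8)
The plan is to derive the Soundness Theorem as an immediate consequence of the Local Soundness part of Theorem~\ref{localSC} by a straightforward induction on the length of the computation $G \red_{\prog}^{*} S$. First I would set up the induction on $n$, the number of goal transformation steps in $G \red_{\prog}^{*} S$. The base case $n = 0$ is trivial: then $G$ and $S$ are literally the same goal, and since $G$ is admissible, $Sol_{\mathcal{C}}(S) = Sol_{\prog}(S) = Sol_{\prog}(G)$ by the convention that for a solved goal $S$ we write $Sol_{\ccdom}(S)$ in place of $Sol_{\prog}(S)$ (Definition~\ref{defGoalSol}). Here I should note the minor technical point that a solved form is in particular an admissible goal with empty $P$ and $C$, so the equivalence between $Sol_{\prog}(S)$ and $Sol_{\mathcal{C}}(S)$ really is just unfolding the definition, using that $\prog \vdash_{\crwl{\mathcal{C}}} (P\, \Box\, C)\mu'$ holds vacuously when $P$ and $C$ are empty.

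For the inductive step, suppose $G \red_{\prog}^{*} S$ takes $n+1 \geq 1$ steps, and write it as $G \red_{{\bf RL},\gamma,\prog} G' \red_{\prog}^{*} S$, where the latter computation has $n$ steps. Since $G$ is admissible and not in solved form (otherwise no rule would apply and we would be in the base case) and $\gamma$ is a part of $G$ suitable for the selected rule {\bf RL}, Theorem~\ref{localSC}(1) applies: the one-step transformation from $G$ yields finitely many successor goals $G'_1, \ldots, G'_k$, our $G'$ is one of them, and $Sol_{\prog}(G) \supseteq \bigcup_{j=1}^{k} Sol_{\prog}(G'_j) \supseteq Sol_{\prog}(G')$. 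A small point worth checking is that $G'$ is again admissible, so that the induction hypothesis applies to the computation $G' \red_{\prog}^{*} S$; this is guaranteed by the goal-invariant properties discussed in Subsection~\ref{programa}, which are preserved by every goal transformation rule, so that admissibility propagates along computations. By the induction hypothesis, $Sol_{\mathcal{C}}(S) \subseteq Sol_{\prog}(G')$. Chaining the two inclusions gives $Sol_{\mathcal{C}}(S) \subseteq Sol_{\prog}(G') \subseteq Sol_{\prog}(G)$, which is exactly what we need.

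I do not expect any serious obstacle here: all the real work has already been done in proving Theorem~\ref{localSC}, and this theorem is a pure bookkeeping corollary. The only things requiring a moment's care are (i) the degenerate handling of the zero-step case and the identification $Sol_{\prog}(S) = Sol_{\mathcal{C}}(S)$ for solved goals, and (ii) confirming that admissibility is an invariant of the transformation relation so that the induction hypothesis is legitimately applicable at each stage. Neither of these is delicate. I would present the argument in two or three sentences in the paper itself, perhaps relegating the remark about admissibility being preserved to a parenthetical reference back to Subsection~\ref{programa}. One might additionally remark, as the surrounding text already does, that specializing to the case where $G$ is an \emph{initial} goal $\Box\, C\, \Box\Box\Box\Box$ gives the intended reading: every computed answer $S$ for $C$ is a semantically valid answer, i.e.\ $Sol_{\mathcal{C}}(S) \subseteq Sol_{\prog}(\Box\, C\, \Box\Box\Box\Box) = \{\mu \in Val_{\mathcal{C}} \mid \prog \vdash_{\crwl{\mathcal{C}}} C\mu\}$.
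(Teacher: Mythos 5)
Your proof is correct and follows essentially the same route as the paper: an induction on the length of the computation, with each step handled by the Local Soundness part of Theorem~\ref{localSC} and the base case settled by the identification $Sol_{\prog}(S) = Sol_{\mathcal{C}}(S)$ for solved goals. The extra care you take about admissibility being preserved along the computation is a reasonable (if routine) point that the paper leaves implicit.
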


\vspace*{-.4cm}
\begin{proof}

As an obvious consequence of Theorem \ref{localSC} (item 1.), one gets
$Sol_{\prog}(G')$ $\subseteq$ $Sol_{\prog}(G)$ for any $G'$ such
that $G$ $\red_{\prog}$ $G'$. From this, an easy induction shows
that $Sol_{\prog}(S)$ $\subseteq$ $Sol_{\prog}(G)$ holds for each
solved form $S$ such that $G$ $\red_{\prog}^{*}$ $S$. Since
$Sol_{\prog}(S)$ $=$ $Sol_{\mathcal{C}}(S)$, the soundness result is proved. \hfill
\end{proof}

% Progress lemma.

Note that the local completeness part (item 2.) of Theorem \ref{localSC}
also implies that failing goals have no solution; i.e., from a
failing transformation step $G\, \red_{{\bf RL},\prog}\, \blacksquare$
we can conclude $WTSol_{\prog}(G) = \emptyset$,
provided that the application of {\bf RL} is safe.
However, a global completeness result for $\cclnc{\ccdom}$
does not immediately follow from item 2. of Theorem \ref{localSC}.
For an arbitrarily given $\mu \in WTSol_{\prog}(G)$,
completeness needs to ensure a terminating $\cclnc{\ccdom}$ computation
ending up with a solved form $S$ such that $\mu \in WTSol_{\ccdom}(S)$.
According to Definition \ref{defGoalSol}, $\mu \in WTSol_{\prog}(G)$ implies
the existence of a witness $\mathcal{M} : \mu \in WTSol_{\prog}(G)$.
In Appendix \ref{PropertiesCalculus} we have defined a {\em well-founded progress
ordering} $\vartriangleright$ between pairs $(G,\mathcal{M})$ formed
by a goal $G$ and a witness, and we have proved the following result:

\begin{lemma}[Progress Lemma]\label{progress}
Consider  an admissible goal $G$  for a $CFLP(\mathcal{C})$-program
$\prog$ and  a witness $\mathcal{M} : \mu \in WTSol_{\prog}(G)$.
Assume that neither $\prog$ nor $G$ have free occurrences of higher-order variables,
and that $G$ is not in solved form. Then:
\begin{enumerate}
\item There is some  {\bf RL} applicable to $G$ which is not a failing rule.
\item Assume any choice of  a rule {\bf RL} (not a failure rule) and a part $\gamma$ of $G$,
such that {\bf RL} can be applied  to $\gamma$ in a safe manner.
Then there is some finite computation  $G \red_{{\bf RL},\gamma, \prog}^{+} G'$ such that:
\begin{itemize}
\item
$\mu \in WTSol_{\prog}(G')$.
\item
There is a witness $\mathcal{M'} : \mu \in WTSol_{\prog}(G')$ verifying
$(G,\mathcal{M}) \vartriangleright (G',\mathcal{M}')$.
\end{itemize}
\end{enumerate}
\end{lemma}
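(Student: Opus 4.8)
The plan is to prove both parts by a careful case analysis on the structure of the admissible goal $G$, guided by the witness $\mathcal{M}$, using the well-founded progress ordering $\vartriangleright$ defined in Appendix~\ref{PropertiesCalculus}. First I would establish part~1. Since $G$ is admissible but not solved, at least one of its $P$ or $C$ parts is non-empty, or some store is not in (critical-variable) solved form. If some store $S$ is not solved and its constraints contain no produced variables, then one of the solver rules of Table~\ref{Stable} applies, and it is not a failure rule: indeed, since $\mu \in WTSol_{\prog}(G)$ the store $S$ has a well-typed solution, so by the \emph{completeness} requirement for solvers (Definition~\ref{defSolver}, item~6) the solver invocation cannot return the empty disjunction, hence {\bf SF} does not fire and one of {\bf MS}, {\bf HS}, {\bf FS}, {\bf RS} does. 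If instead $P$ is non-empty, then because $\prog$ and $G$ have no free occurrences of higher-order logic variables, the leftmost production (that is not a delayed suspension) has a head which is a data constructor, a defined function symbol, a primitive, or a variable in $odvar(G)$, so exactly one of the constrained lazy narrowing rules of Table~\ref{table3} applies; among {\bf DC} and {\bf CF}, the witness $\mathcal{M}$ rules out {\bf CF} (a $CRWL(\ccdom)$ proof of a production with conflicting roots cannot exist). If $P$ is empty (or only delayed suspensions remain) and $C$ is non-empty, the leftmost atomic constraint is either non-primitive — so {\bf FC} applies — or primitive, in which case one of {\bf SB}, {\bf PP}, {\bf SC} (followed eventually by a solver rule) applies; again the witness excludes an immediate failure.

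For part~2, I would argue by cases on the selected safe rule {\bf RL} and selected part $\gamma$, in each case exhibiting a concrete finite continuation $G \red_{{\bf RL},\gamma,\prog}^{+} G'$, a witness $\mathcal{M}'$ for $\mu \in WTSol_{\prog}(G')$, and the decrease $(G,\mathcal{M}) \vartriangleright (G',\mathcal{M}')$. The construction of $\mathcal{M}'$ from $\mathcal{M}$ is the routine but essential bookkeeping: for {\bf DC} we split the $CRWL(\ccdom)$ proof of $h\,\overline{e_m}\to h\,\overline{t_m}$ into proofs of the $e_i\to t_i$ (using transparency of $h$, which is exactly what the safety hypothesis guarantees, so well-typedness is preserved — invoking the Transparency Lemma~\ref{trl} and Type Preservation Lemma~\ref{tpl}); for {\bf DF} we use the $CRWL(\ccdom)$ rule for defined functions to extract the instance of the program rule used and the subproofs for its conditions; for {\bf SP}, {\bf IM}, {\bf EL} the witness is reshaped under the induced substitution, using the Substitution Lemma~\ref{sl}; for {\bf PC} and {\bf FC} the proof of the compound constraint is reassembled from proofs of the flattened pieces. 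For the cooperation rules {\bf SB}, {\bf PP}, {\bf IE}, {\bf ID}, completeness is delivered by Proposition~\ref{propBP} (the \emph{completeness condition} $WTSol_\ccdom(\pi\wedge B)\subseteq WTSol_\ccdom(\exists\overline{V'}(\pi\wedge B\wedge B'))$ and its analogue for projections) together with the soundness/completeness of the interpretations of the mediatorial primitive and the extended Herbrand constraints; here Theorem~\ref{sumProperties} (items~3 and~4) is what lets us pass between solutions over $\ccdom$ and solutions over the components $\mdom,\herbrand,\fd,\rdom$. For the solver rules of Table~\ref{Stable}, the needed continuation is a single step: $\mu'$ restricted appropriately is a well-typed solution of the store, so by solver completeness it lies in $WTSol_\ccdom$ of one of the returned alternatives $\exists\overline{Y}_j(\Pi_j\,\Box\,\sigma_j)$, and we pick that $G'$; the Substitution Lemma handles the propagation of $\sigma'$ through $(\cdots)@_\cdom\sigma'$, and the \emph{safe bindings} condition guarantees the goal invariants survive.

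The point where more than one step may be required in part~2 — hence the phrasing $\red^{+}$ rather than $\red$ — is the handling of suspensions and of chains such as {\bf FC}/{\bf PC} or {\bf SB}/{\bf PP}/{\bf SC} that must be followed by a solver invocation to achieve a strict decrease in $\vartriangleright$; one has to check that the intermediate goals still carry a witness and that the composite transformation is strictly decreasing even if some individual steps are not. The termination of each such finite continuation follows from the termination and finite-branching of the underlying $sts$'s (Theorems~\ref{hsolver} and~\ref{msolver}, Postulates~\ref{rsolver} and~\ref{fsolver}) and from the fact that flattening strictly reduces the number of nested primitive applications.

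\textbf{Main obstacle.} The hard part will be defining — and then verifying the well-foundedness of — the progress ordering $\vartriangleright$ so that \emph{every} safe rule application (with an appropriate finite continuation) strictly decreases it, while the ordering still measures genuine progress toward a solved form. The delicate cases are: {\bf DF}, where applying a program rule may enlarge $P$ and $C$ but must be offset by a decrease in the size of the witness multiset $\mathcal{M}$ (the applied program-rule instance consumes one $CRWL(\ccdom)$ subproof); the cooperation rules {\bf SB}/{\bf PP}, which \emph{add} constraints and bridges to $M$, $F$, $R$ without consuming anything from the witness, so progress must be measured by something like the number of pool constraints not yet submitted, lexicographically above the store-solvedness status; and ensuring that solver invocations, which may replace a store by a disjunction of genuinely simpler stores, decrease the component of $\vartriangleright$ that tracks "unsolved stores" without the substitutions they introduce re-triggering unboundedly many further transformations. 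Getting a single multiset/lexicographic combination that dominates all these behaviours simultaneously, and proving it well-founded, is the technical crux; the rest is the (lengthy but mechanical) witness-reconstruction casework sketched above.
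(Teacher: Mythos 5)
Your plan points in the right general direction (case analysis for part 1; local completeness plus witness reconstruction plus a lexicographic measure for part 2), but it has two concrete gaps. For part 1, your case split does not cover the residual configuration in which $C$ is empty, all stores are solved, and $P$ consists solely of suspensions $e \to X$ with $X \notin odvar(G)$: there none of {\bf DF}, {\bf PC}, {\bf IM} is applicable (their side conditions require the right-hand side to be a non-variable or an obviously demanded variable), no constraint rule fires because the pool is empty, and no solver rule fires, yet the goal is not solved. The paper's proof spends most of its effort exactly here: it first shows $pvar(P)\cap odvar(G)=\emptyset$, then uses the \emph{discrimination} property of the solvers to conclude that the solved stores contain no produced variables at all, and finally uses the goal invariants (each produced variable produced only once, irreflexivity of $\gg_{P}^{+}$, produced variables absent from answer substitutions) to pick a production $e \to X$ with $X$ minimal w.r.t.\ $\gg_{P}$, so that $X$ occurs nowhere else in the goal and {\bf EL} applies. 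Without this argument part 1 is unproved; note also that applicability of the rules in Table \ref{Stable} is governed by their produced-variable side conditions (again via discrimination), not by the solver-completeness argument you invoke to exclude failure.

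For part 2 you correctly sketch the witness-reconstruction casework and you correctly see that the decisive ingredient is a measure that every safe (composite) step strictly decreases — but you explicitly leave that ingredient unresolved, and your guesses about it diverge from what actually works. In the paper, $\vartriangleright$ compares $7$-tuples whose head component is the multiset of \emph{restricted sizes} of the $CRWL(\ccdom)$ proof trees in the witness: {\bf DF} strictly decreases this head because the root inference for the defined-function call is consumed, and the only multi-step computations needed are the {\bf SB}/{\bf PP} chains terminated by {\bf SC} — not followed by a solver invocation, as you suggest — since it is the final {\bf SC} step, moving the primitive constraint from the pool into a store, that removes its proof obligation from the witness and yields the strict head decrease. {\bf FC} and {\bf PC} are single steps handled by lower components (a flattening measure and the sizes of production right-hand sides), {\bf IM} by a count of passive non-pattern expressions, the solver rules by a sum of solvability flags, and {\bf IE}/{\bf ID} by bridge and antibridge counts. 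Constructing such a measure, proving it well-founded, and checking every rule (and every composite) against it is precisely what the lemma adds on top of Theorem \ref{localSC}; since your proposal names this as an open obstacle rather than solving it, it does not yet constitute a proof.
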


% Limited Completeness

Using the former lemma, we can prove the following completeness result:

\begin{theorem}[Limited Completeness Theorem]\label{globalC}
Let an admissible goal $G$ for a program $\prog$ and  a well-typed solution $\mu \in WTSol_{\prog}(G)$ be given.
Assume that neither $\prog$ nor $G$ have free occurrences of higher-order variables.
Then, unless prevented by some unsafe rule application,
one can find a $CCLNC(\mathcal{C})$-computation $G \red_{\prog}^{*} S$ ending with
a goal in solved form $S$ such that $\mu \in WTSol_{\ccdom}(S)$.
\end{theorem}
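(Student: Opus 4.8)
The plan is to derive the Limited Completeness Theorem from the Progress Lemma (Lemma \ref{progress}) by a well-founded induction on the progress ordering $\vartriangleright$ introduced in Appendix \ref{PropertiesCalculus}. First I would observe that the hypotheses of the theorem are exactly those under which Lemma \ref{progress} applies: $G$ is an admissible goal for the program $\prog$, $\mu \in WTSol_{\prog}(G)$, and neither $\prog$ nor $G$ has free occurrences of higher-order logic variables. By Definition \ref{defGoalSol}, the assumption $\mu \in WTSol_{\prog}(G)$ guarantees the existence of a witness $\mathcal{M} : \mu \in WTSol_{\prog}(G)$. So the induction will really be on pairs $(G,\mathcal{M})$ ordered by $\vartriangleright$, which is well-founded by construction in the appendix.

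Next I would set up the induction. If $G$ is already in solved form, take $S = G$ and the empty computation; then $\mu \in WTSol_{\prog}(G) = WTSol_{\ccdom}(S)$ by the last clause of Definition \ref{defGoalSol}, and we are done. If $G$ is not in solved form, then by item 1 of Lemma \ref{progress} there is some non-failing rule \textbf{RL} applicable to $G$; as long as we are not prevented by an unsafe rule application, we may choose \textbf{RL} and a part $\gamma$ of $G$ so that \textbf{RL} can be applied to $\gamma$ in a safe manner (no opaque \textbf{DC}, no incomplete solver invocation). By item 2 of Lemma \ref{progress} there is then a finite computation $G \red_{{\bf RL},\gamma,\prog}^{+} G'$ with $\mu \in WTSol_{\prog}(G')$ and a witness $\mathcal{M}' : \mu \in WTSol_{\prog}(G')$ satisfying $(G,\mathcal{M}) \vartriangleright (G',\mathcal{M}')$. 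Since $(G',\mathcal{M}')$ is strictly smaller in the well-founded ordering, the induction hypothesis yields a computation $G' \red_{\prog}^{*} S$ ending in a solved form $S$ with $\mu \in WTSol_{\ccdom}(S)$. Concatenating, $G \red_{\prog}^{+} G' \red_{\prog}^{*} S$ is the required computation, which in particular has the form $G \red_{\prog}^{*} S$.

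I would then add a short remark making explicit that the phrase ``unless prevented by some unsafe rule application'' in the statement is exactly what licenses the safe choice of \textbf{RL} and $\gamma$ at each step: the construction is effective modulo the undecidable question of whether an opaque decomposition (from \textbf{DC} or from the $\herbrand$-rules \textbf{H3}, \textbf{H7}) or an incomplete solver invocation is forced, as discussed before Theorem \ref{localSC}. It is also worth noting that the relevant instances of Theorem \ref{sumProperties} are used inside the proof of the Progress Lemma rather than here, so no further domain-theoretic reasoning is needed at this level.

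The main obstacle is not in this theorem itself — given Lemma \ref{progress} and the well-founded ordering $\vartriangleright$, the argument is a routine well-founded induction — but in what is deferred to the appendix: defining the progress ordering $\vartriangleright$ on $(G,\mathcal{M})$ pairs so that it is genuinely well-founded while still decreasing along every safe computation produced by Lemma \ref{progress}. That ordering must simultaneously account for the size of the witness multiset $\mathcal{M}$ of $CRWL(\ccdom)$ proofs (decreasing when productions are consumed or program rules unfolded), the structure of the constraint pool $C$ (flattening, bridge/projection generation, submission to stores), and the state of the four stores (solver invocations reducing unsolved stores), with care that auxiliary steps such as \textbf{SB}, \textbf{PP}, and the bridge-inference rules of Table \ref{table7} do not create infinite ascending behaviour. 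All of that is the content of the appendix; at the level of the present statement, only the clean inductive packaging remains.
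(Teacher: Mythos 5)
Your proof is correct and follows essentially the same route as the paper: fix a witness $\mathcal{M} : \mu \in WTSol_{\prog}(G)$, argue by well-founded induction on the progress ordering $\vartriangleright$, handle the solved case trivially, and in the unsolved case invoke the Progress Lemma to obtain a safe step $G \red_{{\bf RL},\gamma,\prog}^{+} G'$ with a strictly smaller pair $(G',\mathcal{M}')$, then apply the induction hypothesis and concatenate. Your added remarks on the role of the ``unless prevented by some unsafe rule application'' clause and on where the ordering is constructed are consistent with the paper's treatment.
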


\vspace*{-.4cm}

\begin{proof}
The thesis of the theorem can be rephrased  by writing $\mu \in WTSol_{\prog}(S)$
 in place of the equivalent condition $\mu \in WTSol_{\ccdom}(S)$.
The hypothesis allow us to choose a witness $\mathcal{M} : \mu \in WTSol_{\prog}(G)$.
In order to prove the rephrased thesis we reason by induction on the well-founded ordering
$\vartriangleright$ (see e.g. \cite{BN98} for an explanation of this proof technique).
In case that $G$ is a solved goal, the rephrased thesis holds trivially with $S$ taken as $G$ itself.
In case that $G$ is not solved,  we apply the Progress Lemma \ref{progress} to $\prog$ and
$\mathcal{M} : \mu \in WTSol_{\prog}(G)$ and we obtain a rule {\bf RL}  and a part $\gamma$ of $G$
such that {\bf RL} can be applied to $\gamma$. Assuming that this rule application is a safe one,
Lemma \ref{progress} also provides a finite computation $G \red_{{\bf RL},\gamma, \prog}^{+} G'$ such that
there is a witness $\mathcal{M'} : \mu \in WTSol_{\prog}(G')$ fulfilling $(G,\mathcal{M}) \vartriangleright (G',\mathcal{M}')$.
Since neither $\prog$ nor $G$ have free occurrences of higher-order variables,
the same must be true for $G'$.
By well-founded induction hypothesis we can then conclude that,
unless  prevented by some unsafe goal transformation step, one can find a computation
$G' \red_{\prog}^{*} S$ ending with a goal in solved form $S$ such that $\mu \in WTSol_{\prog}(S)$.
The desired computation is then $G \red_{{\bf RL},\gamma, \prog}^{+} G' \red_{\prog}^{*} S$. \hfill
\end{proof}

       % \label{cooperative}
%%%%%%%%%%%%%%%%%%%%%%%%%%%%%%%%%%%%%%%%%%%%%%%%%%%%%%%%%%%%%%%%%%%%%%%%%%%%%%%
%
% 4-IMPLEMENTATION
%
%%%%%%%%%%%%%%%%%%%%%%%%%%%%%%%%%%%%%%%%%%%%%%%%%%%%%%%%%%%%%%%%%%%%%%%%%%%%%%%

\section{Implementation}\label{implementation}

This section sketches the implementation of the
$\cclnc{\ccdom}$ computational model on top of the $\toy$ system.
The current implementation has evolved from older versions that supported the
domains $\herbrand$ and $\rdom$, but not yet $\fd$ and its cooperation with $\herbrand$ and $\rdom$.  
We describe the architectural components of the current $\toy$ system and we briefly discuss
the implementation of the main cooperation mechanisms provided by $\cclnc{\ccdom}$, namely bridges and projections.
The reader is referred to \cite{toyreport,EFS06,ciclops'07,estevez+:esop08} for more details.
 
Instead of using an abstract machine for running byte-code or
intermediate code, $\toy$ programs are compiled to and executed in
Prolog, as in other related systems \cite{antoy00compiling}. 
The compilation generates Prolog code that implements
goal solving by constrained lazy narrowing guided by {\em definitional trees},
a well known device for ensuring an optimal behaviour  of  lazy narrowing
\cite{LLR93,AEH94,AEH00,vado+:ppdp03,Vad05,vado:ictac07}.
$\toy$ relies on an efficient Prolog system, SICStus Prolog \cite{SP}, which provides many
libraries, including constraint solvers for the domains $\fd$ and $\rdom$.

$\toy$ is distributed ({\tt http://toy.sourceforge.net}) as a free open-source 
Sourceforge project and runs on several platforms.
Installation is quite simple. Console and windows executables are
provided, no further software is required. 
In addition, $\toy$ can be used inside ACIDE \cite{ACIDE}, an emerging
multiplatform and configurable integrated development environment
(alpha development status).
\vspace*{-0.2cm}

\subsection{Architectural Components of the Cooperation Schema}\label{architecture}

Fig.  \ref{fig:Arq_toy} shows the architectural components of the
cooperation schema in $\toy$. As explained in Subsection
\ref{ourcdom}, the three pure constraint
domains $\herbrand$, $\rdom$ and $\fd$ are combined with a
mediatorial domain $\mathcal{M}$ to yield the coordination domain
$\ccdom = \mathcal{M} \oplus \herbrand \oplus \fd \oplus \rdom$
which supports our cooperation model. Therefore, these four domains
are supported by the implementation. Moreover, the set of primitives
supported by the domains $\rdom$ and $\fd$ in the $\toy$
implementation is wider than the simplified description given in
subsections \ref{rdom} and \ref{fdom}.

\begin{figure}[htbp]
\begin{center}
\includegraphics[scale=0.7,angle=0]{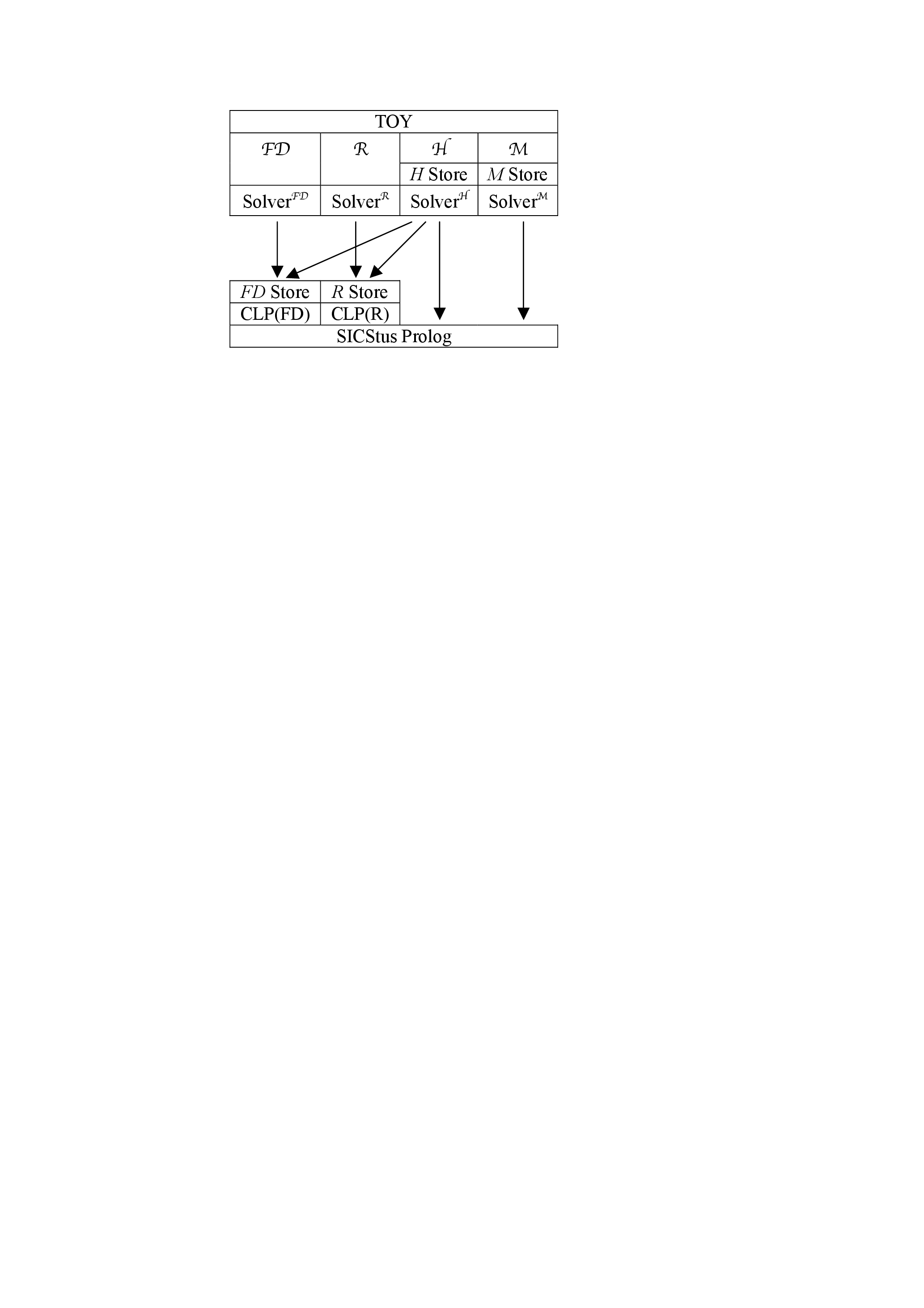}
\end{center}
\caption{Architectural Components of the Cooperation Schema in $\toy$}\label{fig:Arq_toy}
\end{figure}

The solvers and constraint stores for the domains $\mathcal{FD}$ and
$\mathcal{R}$ are provided by the SICStus Prolog constraint libraries. The impedance
mismatch problem among the host language constraint primitives and
these solvers is tackled by glue code (see Subsection \ref{coop}).
Proper $\mathcal{FD}-$ and $\mathcal{R}-$constraints,
as well as Herbrand constraints specific to $\fd$ and $\rdom$
(see Subsections \ref{fdom} and \ref{rdom}) are posted
to the respective stores and handled by the respective SICStus Prolog solvers.
On the other hand, the stores and solvers for  the domains ${\cal H}$ and ${\cal M}$
are built into the code of the $\toy$ implementation, rather than being provided by the
underlying SICStus Prolog system.
\vspace*{-0.3cm}

\subsection{Implementing Domain Cooperation} \label{coop}

This subsection explains the implementation of the fundamental
mechanisms for domain cooperation: Bridges and projections.
The constraints provided by the mediatorial domain $\mdom$
and their semantics have been explained in Subsections
\ref{cdomains} and \ref{ourcdom}.
{\em Mediatorial constraints} have the general form
$a$ {\tt \#==}$\, b\, \to!\, c$, with $a :: ${\tt int}, $b :: $ {\tt real} and $c :: $ {\tt bool},
while  {\em bridges} $a$ {\tt \#==} $b$ and {\em antibridges} $a$ {\tt \#/==} $b$
abbreviate $a$ {\tt \#==}$\, b\, \to!$ {\tt  true}
and $a$ {\tt \#==}$\, b\, \to!$ {\tt false}, respectively.

In order to deal with $\herbrand$ and $\mdom$ constraints, the $\toy$ system uses a so-called
{\em mixed store} which keeps a representation of the $\herbrand$ and $\mathcal{M}$ stores as one single Prolog structure.
It includes encodings of $\herbrand$-constraints in solved form
(i.e., totality constraints {\tt X == X} and disequality constraints {\tt X /= t}),
as well as encodings of  bridges and antibridges.
The implementation of the $\herbrand$ and $\mathcal{M}$ solvers in $\toy$ is plugged into the Prolog code of various predicates which control the transformation of the mixed store (passed as argument)  by means of two auxiliary arguments  {\tt Cin} and {\tt Cout}.

In the next three subsections we discuss the implementation of mediatorial constraints and projections.
We will show and comment selected fragments of Prolog code, involving various predicates with auxiliary arguments
 {\tt Cin} and {\tt Cout}, as explained above.
 Regarding projections, the $\toy$ implementation has been designed to support two modes of use:
A {\em `disabled projections'} mode which allows to solve mediatorial
constraints, but computes no projections; and a
{\em `enabled projections'} mode which also computes  projections.
For each particular problem, the user can analyze the trade-off
between communication flow and performance gain and decide the best
option to execute a goal in the context of a given program.
\vspace*{-0.2cm}

\subsubsection{The Equivalence Primitive  {\tt \#==}}\label{sec: implementacion bridge}

The  equivalence primitive  {\tt \#== :: int $\to$ real $\to$ bool}
used for building media\-torial constraints is implemented as a
Prolog predicate (also named  {\tt \#==}) with five arguments, whose
explanation follows. Arguments  {\tt L} and {\tt R} stand for the
left (integer) and right (real) parameters of the primitive {\tt
\#==}. Argument  {\tt Out} stands for its result. Finally, arguments
{\tt Cin} and {\tt Cout} stand for the state of the  mixed store
before and after performing a call to the primitive  {\tt \#==},
respectively. Fig. \ref{bridge_impl} shows the Prolog code for the
predicate {\tt \#==}, and the comments below explain why this code
implements the $\mathcal{M}$ solver described in Table \ref{mtable}
of Subsection \ref{ourcdom} and the special cooperation rules {\bf IE}
and {\bf ID} of the $\cclnc{\ccdom}$ calculus specified in Table
\ref{table7} from Subsection \ref{dcr}.

Lines {\tt (2)} and {\tt (3)} compute the head normal forms (hnfs)  of {\tt L} and {\tt R}
into  {\tt HL} and {\tt HR},  respectively. This process may generate
new Herbrand constraints that will  be added to the mixed store.
The value of {\tt HL} resp. {\tt HR} will be either a variable or a number,
ensuring that no suspensions will occur in the Prolog code from line {\tt (4)} on.
This code is intended to process the constraint {\tt HL \#== HR $\to!$ Out}
according to the behaviour of the $\mdom$-solver specified in in Table \ref{mtable}, Subsection \ref{ourcdom}.
Due to rules {\bf M1}  and {\bf M2} in Table \ref{mtable},
the constraint is handled as a bridge  {\tt HL \#== HR} when {\tt Out} equals {\tt true},
and as an antibridge  {\tt HL \#/== HR} when {\tt Out} equals {\tt false}.
 For this reason, one can say that the  {\tt \#==} primitive accepts {\em reification}.
 Indeed, in Fig. \ref{bridge_impl} we find that a bridge  {\tt HL \#== HR} is posted to the mixed store
 if the value for {\tt Out} can be unified with {\tt true} (line {\tt (6)}),
 whereas an antibridge  {\tt HL \#/== HR} is posted if the value for {\tt Out} can be unified with 
 {\tt false} (line {\tt (10)}).

\vspace*{-0.1cm}
\begin{figure}[htb]
{\small
\begin{verbatim}
(1) #==(L, R, Out, Cin, Cout):-
(2)    hnf(L, HL, Cin, Cout1),
(3)    hnf(R, HR, Cout1, Cout2),
(4)    tolerance(Epsilon),
(5)    ( (Out=true,
(6)          Cout3 = ['#=='(HL,HR)|Cout2],
(7)          freeze(HL, {HL - Epsilon =< HR, HR =< HL + Epsilon} ),
(8)          freeze(HR, (HL is integer(round(HR)))));
(9)      (Out=false,
(10)         Cout3 = ['#/=='(HL,HR)|Cout2],
(11)         freeze(HL, (F is float(HL), {HR =\= F})),
(12)         freeze(HR, (0.0 is float_fractional_part(HR) ->
(13)                  (I is integer(HR), HL #\= I); true)))),
(14)   cleanBridgeStore(Cout3,Cout).
\end{verbatim}
}
\caption{Implementation of Mediatorial Constraints ({\tt \#== / 2})}\label{bridge_impl}
\end{figure}
\vspace*{-0.2cm}

Solving  both bridges and antibridges is accomplished by using the
concurrent predicate {\tt freeze}, which suspends the evaluation of
its second argument until the first one becomes ground. Solving a
bridge   {\tt HL \#== HR} amounts to impose the equiva\-lence of its
two arguments (variables or constants), which are of different type
(integer and real), so that type casting is needed. Variable  {\tt
HL} is assigned to the integer version of {\tt HR} (line {\tt (8)})
via the Prolog functions {\tt round} and {\tt integer}, implementing
rule {\bf M3} in Table \ref{mtable}. Similarly, line {\tt (7)} is
roughly intended to assign the float version of {\tt HL} to {\tt HR}
in order to implement rule {\bf M5} in Table \ref{mtable}. However,
due to the imprecise nature of real solvers, occasionally {\tt HR}'s
value  will be an approximation to an integer value. Therefore, line
{\tt (7)} actually constrains the real variable {\tt HR} to take a
value between {\tt HL - Epsilon} and {\tt HL + Epsilon}, where {\tt
Epsilon} (line {\tt (4)}) is a user-defined parameter (zero by
default) which introduces a tolerance and avoids undesirable
failures due to inexact computations of integer values.
Lines  {\tt (7)} and {\tt (8)} also cover the implementation of rule  {\bf M6}
in Table \ref{mtable}. On the other hand, solving an antibridge
{\tt HL \#/== HR}  amounts to impose that both arguments are not equivalent.
Therefore, as soon as {\tt HL} or {\tt HR}  becomes bound to one
numeric value,  a disequality constraint between the (suitably type-casted)
value of the bound variable and its mate argument is posted to the proper
SICStus Prolog solver (lines {\tt 11-13}). The code in these lines implements
rule {\bf M8} in Table \ref{mtable} and rule {\bf ID} in Table \ref{table7}.

Moreover,  the failure rules in Table \ref{mtable}  (namely {\bf M4}, {\bf M7} and {\bf M9})
are also implemented by the frozen goals in  lines {\tt (7)-(8)} and {\tt (11)-(13)} of Fig. \ref{bridge_impl}.
Indeed,  whenever  {\tt HL} and  {\tt HR} become bound, the corresponding frozen goal is triggered and the equivalence
(resp. non-equivalence) is checked, which may yield to success or failure,
thus implementing rules {\bf M7} and {\bf M9};
and wherever {\tt HR} becomes bound to a non integral real value,
the frozen goal in line {\tt (8)} yields failure, thus implementing rule {\bf M4}.
Finally, line {\tt (14)}) invokes a predicate that simplifies the mixed store by
implementing the effect of rule {\bf IE} in  Table \ref{table7} applied as much as possible to all the
available (encodings of) bridges between variables.
\vspace*{-0.3cm}

\subsubsection{Projection: $\mathcal{FD}$ to $\mathcal{R}$}

If the user has enabled projections with the command {\tt /proj},
the $\toy$ system can process  a given atomic primitive $\fd$-constraint
by computing bridges and projected $\rdom$-constraints as explained in Subsection \ref{dcr}.
The Prolog implementation has a different piece of code (Prolog clause) for each
${\cal FD}$ primitive which can be used to build projectable constraints.
The information included in Table \ref{table2} for computing bridges and projections from
different kinds of $\fd$-constraints, as well as the effect of the goal transformation rules
in Table \ref{table4},  is  plugged into these pieces of Prolog code.
The code excerpt below shows the basic  behaviour of the implementation for the case of $\fd$-constraints
built with the inequality primitive  {\tt \#<}, without considering optimizations:

{\small
\begin{verbatim}
(1) #<(L, R, Out, Cin, Cout):-
(2)     hnf(L, HL, Cin, Cout1), hnf(R, HR, Cout1, Cout2),
(3)     ((Out=true,  HL #<  HR); (Out=false, HL #>= HR)),
(4)     (proj_active ->
(5)        (searchVarsR(HL, Cout2, Cout3, RHL),
(6)         searchVarsR(HR, Cout3, Cout, RHR),
(7)        ((Out==true,  { RHL < RHR });
(8)         (Out==false, { RHL >= RHR })));
(9)       Cout=Cout2).
\end{verbatim}
}

Following a technique similar to that explained for {\tt \#==} above, the primitive  {\tt \#<} is
implemented by a Prolog predicate with five arguments  (line {\tt (1)}). Its two input
arguments ({\tt L} and {\tt R}) are reduced to hnf (line {\tt (2)}), and a primitive
constraint is posted to the SICStus ${\cal FD}$-solver, depending on the Boolean
result ({\tt Out}) returned by  {\tt \#<} (line {\tt (3)}). Moreover, if projection is active (indicated by the dynamic predicate {\tt proj\_active} in line {\tt (4)}), then, the predicate {\tt searchVarsR}
(lines  {\tt (5-6)}) inspects  the mixed store looking  for bridges relating the $\fd$ variable {\tt HL}
and {\tt HR}  to the $\rdom$ variables {\tt RHL} and {\tt RHR}, respectively. In case that some
of these variables is bound to a numeric variable, the relation to the mate variable just means
that their numeric values are equivalent. Predicate  {\tt searchVarsR} also creates new bridges if necessary, according to the specifications  in Table \ref{table2}, and returns the modified state of the mixed store in its third argument. Finally, the projected constraints computed as specified in
Table \ref{table2} (in this case, a single constraint,  which is either {\tt RHL < RHR } or {\tt RHL >= RHR} depending on the value of {\tt Out}) are sent to the SICStus ${\cal R}$-solver.
\vspace*{-0.2cm}

\subsubsection{Projection: $\mathcal{R}$ to $\mathcal{FD}$}

If the user has enabled projections, the $\toy$ system can also
process  a given atomic primitive $\rdom$-constraint by computing
bridges and projected $\fd$-constraints as explained in Subsection
\ref{dcr}. The Prolog implementation is similar to that discussed in
the previous subsection, with a different piece of code (Prolog
clause) for each $\rdom$ primitive which can be used to build
projectable constraints, and encoding the information from Table
\ref{table5}. A comparison between
Tables \ref{table2} and \ref{table5} shows that there are less
opportunities for building bridges from $\rdom$ to $\fd$ than the
other way round, but more opportunities for building projections.
The code excerpt below shows the basic  behaviour of the
implementation for the case of $\rdom$-constraints built with the
inequality primitive  {\tt >}, ignoring optimizations:
\vspace*{0.2cm}

{\small
\begin{verbatim}
(1) >(L, R, Out, Cin, Cout):-
(2)  hnf(L, HL, Cin, Cout1), hnf(R, HR, Cout1, Cout),
(3)  (Out = true, {HR > HL} ; Out = false, {HL =< HR}),
(4)   (proj_active ->
(5)    (searchVarsFD(HL, Cout, BL, FDHL),
(6)     searchVarsFD(HR, Cout, BR, FDHR),
(7)     ((BL == true,  BR == true,  Out == true,  FDHL #>  FDHR);
(8)     (BL == true,  BR == true,  Out == false, FDHL #=< FDHR);
(9)     (BL == true,  BR == false, Out == true,  FDHL #>  FDHR);
(10)     (BL == true,  BR == false, Out == false, FDHL #=< FDHR);
(11)     (BL == false, BR == true,  Out == true,  FDHL #> FDHR);
(12)     (BL == false, BR == true,  Out == false, FDHL #=<  FDHR);
(13)      true); true).
\end{verbatim}
}

As in the previous subsection, the primitive  {\tt >} is
implemented by a Prolog predicate with five arguments  (line {\tt (1)}). Its two input
arguments ({\tt L} and {\tt R}) are reduced to hnf (line {\tt (2)}), and a primitive
constraint is posted to the SICStus $\rdom$-solver, depending
on the Boolean result ({\tt Out}) returned by  {\tt >} (line {\tt (3)}).
Moreover, if projection is active (line {\tt (4)}),
then predicate {\tt searchVarsFD} (lines  {\tt (5-6)}) inspects  the mixed store looking  for
bridges relating the $\rdom$-variables {\tt HL} and {\tt HR} to $\fd$-variables.
As shown in Table \ref{table5}, no new bridges can be created during this process.
Therefore, in contrast to the predicate {\tt searchVarsR} presented in the previous subsection,
the third argument of predicate {\tt searchVarsFD} does not represent a modified state of the mixed store.
Instead, it is a Boolean value that indicates whether a bridge has been found or not.
More precisely, in line {\tt (5)} there are two possibilities:
Either {\tt BL} is {\tt true} and {\tt HL} is a non-bound $\rdom$-variable related to the $\fd$-variable
{\tt FDHL} by means of some bridge in the mixed store {\tt Cout};
or else  {\tt BL} is {\tt false}, {\tt HL} is bound to a real value $u$, and {\tt FDHL} is
computed as $\lceil u \rceil$.
Analogously, in line {\tt (6)} there are two possibilities:
Either {\tt BR} is {\tt true} and {\tt HR} is a non-bound $\rdom$-variable related to the $\fd$-variable
{\tt FDHR} by means of some bridge in the mixed store {\tt Cout};
or else  {\tt BR} is {\tt false}, {\tt HR} is bound to a real value $u$, and {\tt FDHR} is
computed as $\lfloor u \rfloor$.
Finally, lines {\tt (7-12)} perform a distinction of cases corresponding to all the
possiblities for projecting the constraint {\tt HL > HR} $\to!$ {\tt Out} according to
Table \ref{table5} and the various values of {\tt BL}, {\tt BR} and {\tt Out}.
In each case, the projected $\fd$-constraint is posted to the SICStus $\fd$-solver.

As a concrete example, when solving the
conjunctive goal {\tt X \#== RX, RX > 4.3},
line {\tt (11)} in the Prolog code for {\tt >} just explained will
eventually work for solving the right subgoal. In this case, viewing
{\tt RX}  as  {\tt HL} and {\tt 4.3} as {\tt HR}, the value computed for  {\tt BL} will be {\tt true}
because the bridge {\tt X \#== RX} will be available in the mixed store, and
{\tt FDHL} will be {\tt X}. On the other hand, the value computed for
{\tt BR} will be {\tt false}, and the value of {\tt FDHR} will be computed as
$\lfloor${\tt 4.3}$\rfloor$, i.e. {\tt 4}. Applying the proper case in Table   \ref{table5},
the projected constraint  {\tt X \#> 4} will be posted to the SICStus $\fd$-solver.
\vspace*{-0.5cm}

    % \label{implementation}
%%%%%%%%%%%%%%%%%%%%%%%%%%%%%%%%%%%%%%%%%%%%%%%%%%%%%%%%%%%%%%%%%%%%%%%%%%%%%%%
%
% 5-PERFORMANCE 
%
%%%%%%%%%%%%%%%%%%%%%%%%%%%%%%%%%%%%%%%%%%%%%%%%%%%%%%%%%%%%%%%%%%%%%%%%%%%%%%%

\section{Performance Results}
\label{performance}

In this section we study the performance of  the systems  $\toy$
\cite{toyreport,estevez+:esop08} and META-S
\cite{DBLP:conf/ki/FrankHM03,DBLP:conf/flairs/FrankHM03,DBLP:conf/wlp/FrankHR05},
i.e., the closest related approach we are aware of, when solving
various problems requiring domain cooperation. After presenting a
set of benchmarks in the first subsections, the three following
subsections deal with  an analysis of the benchmarks in each of the
two systems and an a comparison between both.

\subsection{The Benchmarks}\label{subsect:The Benchmarks}

We have selected a reasonably wide set of benchmarks which allows to
analyze what happens when the set of constraints involved in the
formulation of a programming problem is solved differently depending
on the combination of domains that are involved in their solving.
A concise description of the benchmarks is presented below.
\vspace*{-0.1cm}

\begin{itemize}
\item {\bf Donald (donald)}:  A cryptoarithmethic problem with 10 ${\cal FD}$ variables, one linear
equation, and one {\em alldifferent} constraint. It consists of
solving the equation $\text{DONALD} + \text{GERALD} =
\text{ROBERT}$.

\item {\bf Send More Money (smm)}: Another cryptoarithmethic problem with 8 ${\cal FD}$
variables ranging over [0,9], one linear equation, 2 disequations
and one {\em alldifferent}  constraint. It consists of solving the
equation $\text{SEND} + \text{MORE} = \text{MONEY}$.

\item {\bf Non-Linear Crypto-Arithmetic (nl-csp)}: A problem with 9 ${\cal FD}$ variables
and non-linear equations.

\item {\bf Wrong-Wright (wwr)}: Another  cryptoarithmethic problem with
8 ${\cal FD}$ variables ranging over [1,9], one linear equation, and
one {\em alldifferent} constraint. It consists of solving the
equation $\text{WRONG} + \text{WRONG} = \text{RIGHT}$.

\item {\bf 3 $\times$ 3 Magic Square (mag.sq.)}: A problem that involves 9 ${\cal FD}$ variables and 7 linear
equations.

\item {\bf Equation 10 (eq.10)}: A system of 10 linear equations with
7 ${\cal FD}$ variables ranging over [0,10].

\item {\bf Equation 20 (eq.20)}: A system of 20 linear equations with
7 ${\cal FD}$ variables ranging over [0,10].

\item {\bf Knapsack (knapsack)}: A classical knapsack problem taken from
\cite{hooker2000}. We considered two versions: One as a constraint satisfaction
problem (labeled as {\bf csp}) and another one as an optimization one (labeled as {\bf opt}).

\item {\bf Electrical Circuit (circuit)}: A problem taken from \cite{hofstedt:tigher-cooperation-cl00},
in which one has an electric circuit with some connected resistors
(i.e., ${\cal R}$ variables) and a set of capacitors (i.e., ${\cal
FD}$ variables). The goal consists of knowing which capacitor has
to be used so that the voltage reaches the 99\% of the final
voltage between a given time range.

\item {\bf bothIn (goal2)}: The problem of solving the goal presented as
{\bf Goal 2} in Subsection~\ref{examples} for several values of {\tt n}.
Instances {\tt goal2(n)} of this benchmark correspond to solving
an instance of {\bf Goal 2}  for the corresponding {\tt n}.

\item {\bf bothIn (goal3)}: The problem of solving the goal presented as
{\bf Goal 3} in Subsection~\ref{examples} for several values of {\tt
n}. Instances {\tt goal3(n)} of this benchmark correspond to solving
an instance of {\bf Goal 3}  for the corresponding  {\tt n}.

\item {\bf Distribution (distrib)}. An optimized distribution problem
involving the cooperation of the domains $\rdom$ and $\fd$.
The problem deals with a communication network  where
{\tt NR} continuous and {\tt ND} discrete suppliers of raw material
have an attached cost to be minimized (see Appendix 8 in \cite{toyreport}).
The global optimum is computed.
The various instances  {\tt distrib(ND,NR)} of this benchmark
 correspond to different choices of values for {\tt ND} and {\tt NR}.
\end{itemize}
\vspace*{-0.1cm}

All the benchmarks were coded using ${\cal FD}$ variables and most
of them demand the solving of (non-)linear equations. Only the last
four of them strictly require cooperation between $\fd$ and $\rdom$
and  cannot be solved by using just one of these domains. However,
the rest of the benchmarks are also useful to evaluate the overhead
introduced when the different solvers are enabled. The formulation
of benchmarks {\bf nl-csp}, {\bf mag.sq}, {\bf circuit} and {\bf
smm} was taken from the distribution of META-S. Full details and
code of the benchmarks (written in both $\toy$ and META-S) are  available 
at {\tt http://www.lcc.uma.es/$\sim$afdez/cflpfdr/}.

All the benchmarking process was done using the same Linux machine
(under the professional version of Suse Linux 9.3) with an Intel
Pentium M processor running at 1.70GHz and with a RAM memory of 1 GB. 
In the rest of this section, performance numbers, in milliseconds,
have been computed as the average result of ten runs for each
benchmark.
In all tables, the best result obtained for each benchmark among
those computed under the various configurations has been highlighted
in {\it boldface}. \vspace*{-0.3cm}

\subsection{Benchmark Analysis in $\toy$} \label{subsect:TOY}

In this section we briefly present empirical  support for two claims:
a) that the activation of the cooperation mechanism
between ${\cal FD}$ and ${\cal R}$ does not penalize the execution time
in problems which can be solved  by using  the domain ${\cal FD}$;
and b)  that the cooperation mechanism using projections helps to speed-up the execution time
in problems where both the domain $\fd$ and the domain $\rdom$ are needed.

Tables \ref{toyfd}-\ref{toy all} show the performance of each
benchmark for several {\em configurations} of the $\toy$ system, as
explained below. The first column in each table displays the name of the benchmark to
be solved, and the next columns corresponds to  different {\em
activation modes}  of the $\toy$ system, namely:
\begin{itemize}
\item
${\cal TOY(FD)}$, an activation mode where the $\fd$ solver (but not
the $\rdom$ solver) is enabled. Actually, this corresponds to an
older version of the $\toy$ system which did not provide
simultaneous support for $\rdom$-constraints.
\item
${\cal TOY(FD + R)}$, an activation mode where both the $\fd$ solver
and the $\rdom$ solver are enabled, but the projection mechanism is
disabled.
\item
${\cal TOY(FD + R)}_p$, an activation mode where the $\fd$ solver,
the $\rdom$ solver and the projection mechanism are all
enabled.%
\end{itemize}

\vspace*{-0.3cm}
\begin{table}[h]
\begin{center}
\begin{tabular}{l|rr|rr|rr|}
\hline
                    \multicolumn{7}{c}{${\cal FD}$ Constraint Solving}                                                                                           \\
\hline
         & \multicolumn{2}{c|}{${\cal TOY(FD)}$}  & \multicolumn{2}{c|}{${\cal TOY(FD+R)}$}  &
         \multicolumn{2}{c|}{${\cal TOY(FD+R)}_{p}$}  \\
\hline
BENCHMARK           & {na\"{i}ve}     & ff            & {na\"{i}ve}          & {ff}               & {na\"{i}ve}                & {ff}            \\
\hline
donald              & 1078                  & 195           & 1040                 & {\bf 188}          & 7476                       &   678           \\
smm                  & 16                    & 15            & {\bf 14}             & 16                 & 47                         &   49            \\
nl-csp                & {\bf 15}              & 20            & {\bf 15}             & 18                 & 39                         &   86            \\
wwr                    & {\bf 18}              & 19            & {\bf 18}             & 19                 & 58                         &   52            \\
maq.sq.             & 92                    & 91            & 89                   & 89                 & {\bf 87}                   &   91            \\
eq.10                 & {\bf 74}              & 90            & {\bf 74}             & 81                 & 284                        &   261           \\
eq.20                 & 138                   & 134           & 139                  & {\bf 131}          & 431                        &   421           \\
knapsack (csp)      & {\bf 5}               & {\bf 5}       & {\bf 5}              & {\bf 5}            & {\bf 5}                    &   {\bf 5}       \\
knapsack (opt)      & 40                    & {\bf 15}      & 35                   & {\bf 15}           & 70                         &   40            \\
\hline
\end{tabular}
\end{center}
\caption{Solving ${\cal FD}$ Benchmarks in $\toy$ (First Solution Search). 
Overload evaluation.}\label{toyfd}
\end{table}
\vspace*{-0.3cm}

The heading `${\cal FD}$ Constraint Solving' in Table~\ref{toyfd}
indicates that all the benchmarks have been formulated in such a way
that all the constraints needed to solve them  are submitted to the
${\cal FD}$ solver and the $\rdom$ solver is not invoked. Note that
although  the activation mode ${\cal TOY(FD)}$ is sufficient to
execute all  the benchmarks presented in this  table, the benchmarks
have been also executed in the modes ${\cal TOY(FD + R)}$ and ${\cal
TOY(FD + R)}_p$ with the aim of analyzing the overhead caused by the
activation of these more complex modes when solving problems that do
not need them.

The heading `${\cal FD \sim R}$ Constraint Solving' in Tables
\ref{toyfdrdom}-\ref{toy all} indicates that the formulations of the
benchmarks require both the $\fd$ solver and the $\rdom$ solver to
be enabled; more precisely, although the benchmarks shown in Table
\ref{toyfdrdom} admit a natural formulation that can be totally
solved by the $\fd$ solver, we have used an alternative formulation
in which the (non-)linear constraints were submitted to the $\rdom$
solver, whereas the rest of the constraints were sent to the $\fd$
solver; also solving the benchmarks shown in Tables \ref{toyfdrdom2}
and \ref{toy all} strictly requires cooperation between $\fd$ and
$\rdom$. These tables only consider the two activation modes of the
$\toy$ system which make sense for such benchmarks, namely ${\cal
TOY(FD + R)}$ and ${\cal TOY(FD + R)}_p$.

Tables~\ref{toyfd}-\ref{toy all} also include two columns
corresponding to two different {\em labeling strategies}:  {\em
na\"{i}ve}, in which $\fd$ variables are labeled in a prefix order
(i.e., the leftmost variable is selected);
 and {\em first fail (ff)}, in which the $\fd$ variable with the smallest domain is chosen
 first for enumerating. Combined with the distinct  activation modes,
 this yields a number of configurations (i.e., six in Table~\ref{toyfd} and four
 in the rest).

\begin{table}[h]
\begin{center}
\begin{tabular}{l|rr|rr|} \hline
                     \multicolumn{5}{c}{${\cal FD \sim R}$ Constraint Solving}                                                  \\
                    \hline
                    & \multicolumn{2}{c|}{${\cal TOY(FD+R)}$}  & \multicolumn{2}{c|}{${\cal TOY(FD+R)}_{p}$}  \\ \hline
BENCHMARK           &{na\"{i}ve}           & {ff}               &
{na\"{i}ve}                & {ff}             \\ \hline
donald              & 304970               & 288700             & 8305                       &   {\bf 727}          \\
smm                 & 22528                & 22627              & 41                         &   {\bf 40}            \\
nl-csp              & 411                  & 383                & {\bf 44}                   &   87                  \\
wwr                 & 411                  & 420                & {\bf 54}                   &   58                  \\
maq.sq.             & 166                  & 168                & {\bf 158}                  &   163                 \\
eq.10               & {\bf 266}            & 271                & 290                        &   269                 \\
eq.20               & 402                  & 408                & 433                        &   {\bf 397}           \\
knapsack (csp)      & {\bf 5}              & {\bf 5}            & {\bf 5}                    &   {\bf 5}             \\
knapsack (opt)      & 16                   & 15                 & {\bf 11}                   &   14                  \\
\hline
\end{tabular}
\end{center}
\caption{Solving $\cal FD \sim R$ Benchmarks in $\toy$ (First Solution Search). Evaluation of the constraint projection mechanism.}\label{toyfdrdom}
\end{table}
\vspace*{-0.3cm}

\begin{sloppypar}
Inspection of Table~\ref{toyfd} reveals that the performance of all
the benchmarks does not get worse when moving from ${\cal TOY(FD)}$
to ${\cal TOY(FD + R)}$ and ${\cal TOY(FD + R)}_p$, and it even
improves in some cases. For those benchmarks that are most naturally
coded in the domain ${\cal FD}$ (as, for instance, {\bf smm}, {\bf
wwr} and {\bf mag.sq}) the best results are not those obtained in
${\cal TOY(FD + R)}_p$, but even in such cases the appreciable
overload is not a great one.
\end{sloppypar}

% Table  \ref{toyfdrdom}: Analysis

Inspection of Tables~\ref{toyfdrdom} and~\ref{toyfdrdom2}  reveals
that the projection mechanism causes a significant speed-up of the solving process in most cases.
Note that  this mechanism behaves specially well in solving the {\bf goal2(n)}  and
{\bf goal3(n)} benchmarks, where the running time is stabilized in the range between 11 ms 
and 17 ms when projections are enabled. Significant speed-ups (i.e., at least two or more magnitude orders) are also detected in {\bf donald} and {\bf smm} benchmarks as well as in
the different {\bf distrib} benchmark instances.

\begin{table}[h]
\begin{center}
\begin{tabular}{l|rr|rr|} \hline
                     \multicolumn{5}{c}{${\cal FD \sim R}$ Constraint Solving}                                                  \\
                    \hline
                    & \multicolumn{2}{c|}{${\cal TOY(FD+R)}$}  & \multicolumn{2}{c|}{${\cal TOY(FD+R)}_{p}$}  \\ \hline
BENCHMARK           &{na\"{i}ve}           & {ff}               &
{na\"{i}ve}                & {ff}             \\ \hline
circuit             & 14                   & {\bf 13}           & 14                         &   20                  \\
distrib (2,5.0)     & 662                  & 506                & {\bf 144}                  &   504    \\
distrib (3,3.0)     & 1486                 & 810                & {\bf 132}                  &   814    \\
distrib (3,4.0)     & 2098                 & 1290               & {\bf 156}                  &   1178   \\
distrib (4,5.0)     & 20444                & 12670              & {\bf 240}                  &   12744  \\
distrib (5,2.0)     & 29108                & 5162               & {\bf 198}                  &   7340   \\
distrib (5,5.0)     & 141734               & 85856              & {\bf 272}                  &   86497  \\
distrib (5,10.0)    & 568665               & 464230             & {\bf 474}                  &   462980 \\
goal2 (100)        & 25                   & 28                 & {\bf 14}                   &   {\bf 14}            \\
goal2 (200)        & 40                   & 44                 & {\bf 13}                   &       15            \\
goal2 (400)        & 70                   & 72                 & {\bf 12}                   &       13            \\
goal2 (800)        & 131                  & 135                & {\bf 12}                   &       15            \\
goal2 (10000)      & 704                  & 713                & {\bf 14}                   &       16            \\
goal2 (20000)      & 1271                 & 1270               & {\bf 12}                   &       16            \\
goal2 (40000)      & 2325                 & 2333               & {\bf 11}                   &       16            \\
goal2 (80000)      & 4452                 & 4472               & {\bf 13}                   &       16            \\
goal2 (200000)     & 10725                & 10781              & {\bf 13}                   &       15            \\
goal3 (100)        & 18                 & 20               & {\bf 15}                   & 16            \\
goal3 (200)        & 26                 & 28               & {\bf 13}                   & {\bf 13}      \\
goal3 (400)        & 41                 & 44               & {\bf 15}                   & 16            \\
goal3 (800)       & 75                 & 77               & {\bf 16}                   & 17            \\
goal3 (5000)     & 354                & 360              & {\bf 14}                   & 16          \\
\hline
\end{tabular}
\end{center}
\caption{Solving $\cal FD \sim R$ Benchmarks in $\toy$ (First Solution Search). Evaluation of the constraint projection mechanism
on benchmarks necessarily demanding solver cooperation.}\label{toyfdrdom2}
\end{table}

% Table  \ref{toy all}: Presentation and Analysis

Finally, Table  \ref{toy all}  presents the results corresponding to computing all the results
 for  the last five benchmarks in Table \ref{toyfdrdom2}. The execution times are naturally
 higher than those shown in Table \ref{toyfdrdom2}, where only first solutions were computed.
 However, the significant speed-up caused by the activation of projections remains clearly
 observable.

%%\vspace*{-0.3cm}
\begin{table}[htb]
\begin{center}
\begin{tabular}{l|rr|rr|} \hline
                     \multicolumn{5}{c}{${\cal FD \sim R}$ Constraint Solving}                                                      \\
                    \hline
                    & \multicolumn{2}{c|}{${\cal TOY(FD+R)}$}  & \multicolumn{2}{c|}{${\cal TOY(FD+R)}_{p}$}  \\ \hline
BENCHMARK           &{na\"{i}ve}           & {ff}               &
{na\"{i}ve}                & {ff}             \\ \hline
goal3 (100)      &   673               &   625               & 265            &  {\bf 242} \\
goal3 (200)      &   1867             &  1844              & {\bf 329}    &  352      \\
goal3 (400)      &   6527             &  6573              & 583            &  {\bf 579}       \\
goal3 (800)      &   24460           &  24727           & {\bf 976}     &  994       \\
goal3 (5000)    &   911880    &  920670     & {\bf 5365}   &  6135      \\
\hline
\end{tabular}
\end{center}
\caption{Solving {\bf goal3(n)} Benchmarks in $\toy$ (All Solutions Search).}\label{toy all}
\end{table}
%%\vspace*{-0.2cm}

\subsection{Benchmark Analysis in META-S}\label{subsect:META-S}

In this subsection, we present the results of executing benchmarks
in META-S, a flexible meta-solver framework that implements the ideas proposed
in \cite{Hofstedt:phd-thesis-2001,HP07} for the dynamic integration
of external stand-alone solvers to enable the collaborative
processing of constraints. As already mentioned in Sections
\ref{introduction} and \ref{cooperative}, the cooperative framework
underlying META-S bears some analogies with the approach described in this paper. Both
META-S and $\toy$ provide means for different numeric constraints
domains to cooperate. $\toy$ supports cooperation between the
domains $\herbrand$,  $\fd$ and $\rdom$, while META-S connects
several kind of solvers, such as:
\begin{itemize}
\begin{sloppypar}
\item A $\fd$ solver (for floats, strings, and rationals) that was implemented in
Common Lisp using as reference a library of routines for solving
binary constraint satisfaction problems provided by Peter van Beek
and available from {\tt http://www.ai.uwaterloo.ca/$\sim$vanbeek/software/\-csplib.tar.gz}.
\end{sloppypar}

\item A solver for linear arithmetic, i.e., the
constraint solver L{\small IN}A{\small R} described in
\cite{Krzikalla97}. This solver is based on the Simplex algorithm
and was implemented in the language C. It handles linear equations,
inequalities, and disequations over rational numbers.

\item
\begin{sloppypar}
An interval arithmetic solver, that uses the sound math library
(available at {\tt http://interval.sourceforge.net/interval/index.html}), an ANSI C
library implemented on the basis of the solver for interval
arithmetic of Timothy J. Hickey from Brandeis University (available
from {\tt http://www.cs.brandeis.edu/$\sim$tim/}).
\end{sloppypar}
\end{itemize}

The interested reader is referred to \cite{frankmai:td2002} for more details
on the META-S solvers.
There are also some other significant differences between both systems.
META-S is implemented in Common Lisp whereas $\mathcal{TOY}$ is
implemented in Prolog. In contrast to $\toy$, META-S does not
support different activation modes (corresponding to ${\cal
TOY(FD)}$, ${\cal TOY(FD + R)}$ and ${\cal TOY(FD + R)}_p$ in
$\toy$), neither explicit labeling strategies, nor  facilities for
optimization. On the other hand, META-S supports the choice of
different constraint solving strategies
\cite{DBLP:conf/ershov/FrankHPR06}, which is not the case in $\toy$.
More details regarding the comparison between $\toy$ and META-S can
be found in Subsection \ref{subsect:versus}.

% The benchmarks

We have investigated  the performance of META-S in solving the
benchmarks already considered for $\mathcal{TOY}$ in the previous
section and the performance results are shown in
Tables~\ref{META-S}-\ref{META-S all}. The organization of rows and
columns is also similar to the $\toy$ tables (but considering the
two different strategies explained below). The occurrences of the
symbol `$-$'  indicate that the corresponding benchmark (namely, the
knapsack optimization and the distribution problem) could not be
executed because the META-S system provides no optimization
facilities; the term `error' corresponds with a failure returned by
the system, that was not able to solve the goal.
We have used the version 1.0 of META-S (kindly provided by its
implementors on our request) compiled using SuSE Linux version 9.3
(professional version), based on CMU Common Lisp 18d.

\vspace*{-0.3cm}
\begin{center}
\begin{table}[htb]
\begin{tabular}{l|rr|rr|} \hline
                     \multicolumn{5}{c}{META-S}                                   \\ \hline
                    & \multicolumn{2}{c|}{eager}    & \multicolumn{2}{c|}{heuristic}  \\ \hline
BENCHMARK           & standard        & ordered     & standard        & ordered       \\ \hline
donald              & 268510          & 469370      & {\bf 5290}      & 6140          \\
smm                 & 950             & 620         & 590             & {\bf 580}     \\
nl-csp              & 344800          & 1230        & 302314          & {\bf 970}     \\
wwr                 & 10930           & 650         & {\bf 620}       & {\bf 620}     \\
maq.sq.             & 1160            & 1220        & {\bf 520}       & 540           \\
eq.10               & {\bf 60}        & {\bf 60}    & 70              & 70            \\
eq.20               & {\bf 60}        & {\bf 60}    & 70              & 70            \\
knapsack (csp)      & {\bf 60}        & {\bf 60}    & 70              & 70            \\
knapsack (opt)      & -               & -           & -               & -             \\
distrib (2,5.0)     & -               & -           & -               & -              \\
distrib (3,3.0)     & -               & -           & -               & -              \\
distrib (3,4.0)     & -               & -           & -               & -              \\
distrib (4,5.0)     & -               & -           & -               & -              \\
distrib (5,2.0)     & -               & -           & -               & -              \\
distrib (5,5.0)     & -               & -           & -               & -              \\
distrib (5,10.0)    & -               & -           & -               & -              \\
circuit             & {\bf 70}        & {\bf 70}    & {\bf 70}        & {\bf 70}      \\
goal2 (100)   & {\bf 330}       & {\bf 330}   & {\bf 330}       & {\bf 330}    \\
goal2 (200)   & {\bf 730}       & 740         & 740             & 740          \\
goal2 (400)   & {\bf 2340}      & {\bf 2340}  & {\bf 2340}      & 2350         \\
goal2 (800)   & 8550            & {\bf 8540}  & 8560            & 8560         \\
goal3 (100)  & {\bf 410}      & {\bf 410}    & 460             &  460           \\
goal3 (200)  & {\bf 900}      & {\bf 900}    & 1080            &  1080          \\
goal3 (400)  & {\bf 2870}     & 2880         & 3520            &  3540          \\
goal3 (800)  & {\bf 10630}    & 10720        & 13140           &  13370         \\
\hline
\end{tabular}
\caption{Solving the Benchmarks in META-S (First Solution Search).}\label{META-S}
\end{table}
\end{center}
\vspace*{-0.5cm}

% TOY configurations versus META-S strategies

For the META-S benchmarks we have utilized the combination of the
${\cal FD}$ solver (usually for rationals) and an arithmetic solver
which was found analogous to the $\fd$ plus $\rdom$ combination used
in the corresponding $\toy$ benchmark. In fact, for META-S, we have
selected the linear arithmetic solver since the interval arithmetic
solver yielded poorer results in all cases. In addition, we have
considered the best problem formulation (in terms of the target
solver for each constraint) that yielded the best running time.
Moreover, we have executed each META-S benchmark  under four
different constraint solving strategies:
\begin{itemize}
\item {\em Standard eager}, in which all constraint information is propagated
as early as possible.
\item {\em Ordered  eager}, working as the previous one complemented with
user-given information for determining the order of projection operations.
\item {\em Standard heuristic}, working as the standard eager strategy complemented
with an heuristic for giving priority to those variable bindings more likely to lead to
failure.
\item {\em Ordered heuristic}, working as the previous one complemented with
user-given information for determining the order of projection operations.
\end{itemize}

In certain form, na\"{i}ve and ff labeling in ${\cal TOY}$ are
similar, respectively, to eager and heuristic strategies in META-S.
For the sake of  a fair comparison, whenever possible we have
encoded the META-S benchmarks using exactly the same problem
formulations as well as the same constraints that were used in the
corresponding $\mathcal{TOY}$ benchmarks.
Benchmarks were coded using the functional logic language FCLL of
META-S. Also, we took care that the variable orders were identical
for the different resolution/labeling strategies in both systems.

\vspace*{-0.3cm}
\begin{center}
\begin{table}[htb]
\begin{tabular}{l|rr|rr|} \hline
                     \multicolumn{5}{c}{META-S}
                     \\ \hline
                    & \multicolumn{2}{c|}{eager}    & \multicolumn{2}{c|}{heuristic}  \\ \hline
BENCHMARK           & standard        & ordered     & standard &
ordered       \\ \hline
goal3 (100)  &      8930      &       8880   & {\bf 6940}      &  {\bf 6940}    \\
goal3 (200)  & {\bf 60700}    &       60870  & 47190           &  {\bf 46880}   \\
goal3 (400)  & {\bf 453330}   &      459980  & {\bf 346930}    &   348900       \\
goal3 (800) & error          &   error     & error                &   error        \\
\hline
\end{tabular}
\caption{Solving {\bf goal3(n)} Benchmarks in META-S (All Solutions Search).} \label{META-S all}
\end{table}
\end{center}
\vspace*{-.5cm}

Note that the META-S benchmarks shown in Table \ref{META-S} (resp.
Table \ref{META-S all}) correspond to the $\toy$ benchmarks in
Tables \ref{toyfd}-\ref{toyfdrdom2} (resp. Table \ref{toy all}), all
of which refer to first solution search (resp. all solutions search).
\vspace*{-0.3cm}

\subsection{$\toy$ versus META-S}\label{subsect:versus}

The tables displayed in this subsection are intended to compare the performance of  ${\cal TOY}$ and META-S. 
Table \ref{META-S vs TOY} compares the behaviour of both systems when computing the first solution of various benchmarks, 
while the results in  Table \ref{META-S vs TOY all} correspond to the computation of all the solutions for a few instances of the 
benchmark {\bf goal3(n)}. More precisely, the execution times and META-S/$\toy$ rates displayed in Table \ref{META-S vs TOY} 
correspond to the best results for each benchmark under those obtained for the various configurations in Tables 
\ref{toyfdrdom}-\ref{toyfdrdom2} and \ref{META-S}, respectively;  while Table \ref{META-S vs TOY all} has been built from 
the information displayed in Tables \ref{toy all} and \ref{META-S all} in a similar way.

\vspace*{-0.3cm}
\begin{center}
\begin{table}[htb]
\begin{tabular}{lrrr} \hline
SYSTEM            &  ${\cal TOY}$ &  META-S   & META-S/${\cal TOY}$ \\ \hline
donald                & 188                 & 5290          & 28.13               \\
smm                    & 14                   & 580             & 41.42               \\
nl-csp                  & 15                   & 970             & 64.66               \\
wwr                     & 18                   & 620              & 34.44               \\
maq.sq.              & 87                   & 520              & 5.97                \\
eq.10                  & 74                   & 60                & 0.81                \\
eq.20                  & 131                 & 60                & 0.45                \\
knapsack (csp) & 5                      & 60                & 12                  \\
knapsack (opt)  & 11                   & -                    & -            \\
circuit                 & 13                     & 70                & 5.38                \\
goal2 (100)       & 14                     & 330             & 23.57                \\
goal2  (200)      & 13                     & 730             & 56.15                \\
goal2  (400)      & 12                     & 2340           & 195.00                 \\
goal2  (800)      & 12                     & 8540           & 711.66                 \\
goal3 (100)  & 15                          & 410             & 27.33               \\
goal3 (200)  & 13                         & 900              & 69.23               \\
goal3 (400)  & 15                         & 2870            & 191.33                 \\
goal3 (800) & 16                          & 10630          & 664.375                 \\
\hline
\end{tabular}
\caption{Solving Benchmarks in ${\mathcal TOY}$ vs. META-S (First Solution Search)} \label{META-S vs TOY}
\end{table}
\end{center}
\vspace*{-0.5cm}

The analogies and differences between the domain cooperation
mechanisms supported by $\toy$ and META-S have been discussed at the
end of  Subsection \ref{dcr}.  In both cases, projections play a key
role, and the information displayed on Tables \ref{META-S vs TOY}
and \ref{META-S vs TOY all} allows mainly to draw  conclusions on
the computational performance of both systems. META-S seems to
behave particularly well in the solving of linear equations,
especially when the problem requires no global constraints
(such as an {\em alldifferent} constraint used in benchmarks  {\bf eq10} and {\bf eq20}).
The reason maybe two-fold: First, that the linear arithmetic solver
of META-S performs better than its $\fd$ solver, and, second, that flattening
a nested constraint in ${\cal TOY}$ generates as many flat
constraints as the number of operators it includes.

However, in general, $\toy$ shows an improvement of about one order
of magnitude with respect to the META-S system, for the benchmarks
used in our comparison. As an extreme case, the computation time for
obtaining the first solution of  the benchmark {\bf goal3(800)}
increases more than three orders of magnitude with respect to ${\cal
TOY}$, and computing all the solutions for this benchmark in META-S
does not succeed. In certain form, the experimental results suggest
that our proposal is not only promising but also interesting in its
current state.

\vspace*{-.3cm}
\begin{center}
\begin{table}[htb]
\begin{tabular}{lrrr} \hline
SYSTEM       &    ${\cal TOY}$ &    META-S   & META-S/${\cal TOY}$ \\ \hline
goal3 (100)  &  242         & 6940       & 28.67            \\
goal3 (200)  &  329         & 46880     & 142.49             \\
goal3 (400)  &  579         & 346930   & 599.18                \\
goal3 (800)  &  976         & error        &  -                 \\
\hline
\end{tabular}
\caption{Solving {\bf goal3(n)}  in   ${\mathcal TOY}$ vs. META-S (All Solutions Search)} \label{META-S vs TOY all}
\end{table}
\end{center}
\vspace*{-.5cm}

In any case, the `superior' performance of $\toy$ with respect to
META-S has to be interpreted carefully. One reason for $\toy$'s
advantage may be that the numerical solvers connected in the current
version of META-S have been implemented just
to experiment with the concepts of the underlying theoretical framework
described in \cite{Hofstedt:phd-thesis-2001,HP07}, without much concern for optimization,
while $\toy$ relies on the optimized solvers provided by SICStus Prolog. 
Another advantage of $\toy$ is the availability of global constraints such as  
{\em alldifferent}, that  are lacking in META-S. 
Admittedly, a better comparison of the performance results in both systems 
would be obtained by comparing independently the integrated solvers in each of the
systems, and then normalizing the global results for the systems; or
alternatively, by  connecting the same solvers to both systems.
This would be possible if all the integrated solvers were effectively
black-boxes that can be unplugged from the systems. Unfortunately,
this is not the case, as the solvers attached to $\toy$ are used as
provided by SICStus Prolog and they were not internally adjusted to
work in a cooperation system, whereas the solvers used in META-S
were implemented with regard to their integration into the
implementation of META-S as a system with cooperating components.

In favor of META-S, we mention that the cooperation model proposed
in META-S seems to be more flexible than the cooperation model
currently implemented in $\toy$, and provides facilities not yet
available in $\toy$. For instance, META-S allows to integrate and/or
redefine evaluation strategies \cite{DBLP:conf/flairs/FrankHM03}
whereas $\toy$ relies on a fixed strategy for goal solving and
constraint evaluation. Also, the projection mechanism currently
implemented in $\toy$ is less powerful than in META-S, because
projections cannot be applied to the constraints inside the constraint
stores. Finally, META-S enables the integration of different host
languages \cite{DBLP:conf/wlp/FrankHR05}, whereas
the $\cclnc{\ccdom}$ goal solving calculus implemented  in $\toy$
is intended for declarative languages fitting the  $CFLP$ scheme.

       % \label{performance}
%%%%%%%%%%%%%%%%%%%%%%%%%%%%%%%%%%%%%%%%%%%%%%%%%%%%%%%%%%%%%%%%%%%%%%%%%%%%%%%
%
% 6-RELATEDWORK
%
%%%%%%%%%%%%%%%%%%%%%%%%%%%%%%%%%%%%%%%%%%%%%%%%%%%%%%%%%%%%%%%%%%%%%%%%%%%%%%%

\section{Related Work}
\label{relatedwork}

In this section, in addition to already mentioned related works, we
extend the discussion to other proposals developed in the area of
cooperative constraint solving. Of course, the issues of
communication and cooperation are relevant to many aspects of
computation. Here, we discuss a selection of the literature
concerning proposals for communication and cooperation in constraint
and declarative programming. Existing cooperative systems are very
diverse and range from domain combinations to a mix of distinct
techniques for solving constraints over the same domain. Moreover,
the cooperating systems may be very different in nature: Some of
them perform complete constraint solving whereas others just
execute basic forms of propagation. In general, depending on the
nature of the cooperation, we catalogue cooperative constraint
solving in four non-disjoint categories:

\begin{enumerate}
\item Cooperation of (built-in) domains coexisting in the same
system.

\item Interchange of information between different solvers/domains via special
constructs.

\item Interoperability or communication between independent
solvers.

\item Combination or integration of entities with distinct nature (i.e., methods and/or solvers based on different
algorithms, or languages with different resolution mechanisms).

\end{enumerate}

In the following four subsections we discuss some of the relevant work done in each
of these categories, as well as their relation to our own approach.
\vspace*{-0.2cm}

\subsection{Cooperation of (Built-In) Domains Coexisting in the Same System} \label{bd}

There are a number of constraint systems that provide support for
the interaction between built-in and predefined domains. In these
systems, a solver is viewed as a device that
transforms the original set of constraints to an equivalent
reduced set. As examples, we can cite the following systems:

\begin{itemize}
\item CLP(BNR) \cite{benhamou:applying-jlp97}, Prolog III \cite{colmerauer:prolog-iii-cacm90}
     and Prolog IV \cite{n'dong:prologIV-jfplc97}
     allow solver cooperation, mainly limited to booleans,
     reals and naturals (as well as term structures such as lists and trees).

\item The language NCL \cite{zhou:the-language-NCL-jlp2000}
      provides an integrated constraint framework that strongly combines boolean logic,
      integer constraints and set reasoning.
      Currently, NCL also integrates efficient CP domain cutting techniques and OR
      algorithms.
\end{itemize}

Most existing systems of this kind have two main problems:
firstly, the cooperation is restricted to a limited set
of computation domains supported by the system;
and secondly, the cooperation mechanism is very dependant on the involved computation
domains and thus presents difficulties to be generalized to other
computation domains.

Our computational model for the cooperation of the domains $\herbrand$, $\fd$ and $\rdom$
and its current  $\toy$ implementation can be  cata\-logued in this category, and insofar it shares
the two limitations just mentioned.
However, our approach is more general because it is based on a generic scheme for $CFLP$ programming over
a parametrically given coordination domain $\ccdom$.
The cooperative goal solving calculus $\clnc{\ccdom}$ presented in
Section \ref{cooperative} refers to the particular coordination domain
$\ccdom = \mdom \oplus \herbrand \oplus \fd \oplus \rdom$,
but it  can be easily extended to other coordination domains,
as sketched in our previous paper
\cite{DBLP:journals/entcs/MartinFHRV07}.
\vspace*{-0.2cm}

\subsection{Interchange of Information between Computation Domains and/or Solvers via Special Constructs} \label{sc}

Another cooperation technique consists of providing special
built-in constructs designed to propagate information
among different computation domains that coexist in the same system.
For example, this is  the case with the reified constraints that
enable a communication between arithmetic computations  and a Boolean domain.

Within this type  of cooperation  we can cite {\em Conjunto}
\cite{gervet:clp(sets)-constraints97}, a constraint
language for propagating interval constraints defined over finite sets of integers.
This language provides so-called {\em graduated constraints} which map
sets onto arithmetic terms, thus allowing a one-way cooperative channel
from the set domain to the integer domain. Graduated constraints can be used
in a number of applications as, for instance, to handle optimization problems
by applying a cost function to quantifiable terms (i.e., arithmetic terms which are associated to set terms).

Also, a generic framework for defining and solving interval
constraints on any set of domains (finite or infinite) with a
lattice structure is formulated in \cite{DBLP:journals/toplas/FernandezH04,DBLP:journals/jucs/FernandezH06}.
This approach also belongs to the cooperation category described in Subsection \ref{bd}.
It enables the construction of new (compound) constraint solvers from existing
solvers using lattice combinators, so that different solvers (possibly on distinct domains) can communicate
and cooperate in solving a problem.
The $\mathit{clp({\cal L})}$ language presented in \cite{DBLP:journals/toplas/FernandezH04} is
a  prototype implementation of this framework and allows information to be transmitted among different
(possibly user-defined) computation domains.

Our proposal in this paper can be also considered to fit into the special
constructs category, by viewing  bridge constraints as channels that
enable the propagation of information between different computation domains.
\vspace*{-0.2cm}

\subsection{Interoperability}\label{sect:interoperafbility}

A number of recent publications deal with approaches to solver cooperation requiring {\em interoperability},
understood as the behaviour of some  coordinating system that  supports communication  between  several autonomous systems.
In such settings,  cooperation relies on suitable interfaces,
which have to be specified and implemented according to the specific formats required by the various domains and solvers.

For instance,
 \cite{goualard:componet-programing-ercim01}
proposes a C++ constraint solving library called aLiX for
communicating between different solvers, possibly written in
different languages. Two of the main aims of aLiX are to permit the
transparent communication  of solvers and ensure {\em type safety},
that is to say, the capacity to prevent {\em a priori} the
connection of a solver  that does not conform to
 the input format of the interface with another solver. The current version of
aLiX is not mature yet,  although its interoperability
 approach offers
interesting possibilities. One of the main  shortcomings of the
current aLiX version is that a component for solving continuous
constraints is not yet integrated into the system, and thus real
constraints  cannot be processed.

In the same spirit,  many constraint systems provide both a linear and a non-linear solver for the real domain.
As the linear solver is more efficient of the two, it should be used whenever the constraints are linear,
and there is a need for communication between the two real solvers.
As an example,
\cite{Monfroy+:implmenting-nn-linear-cons-coop-researchreport95}
describes a client/server architecture to enable communication
between the component solvers. This consists of managers for
the system and the solvers that must be defined on the same
computational domain (the real numbers, for example) but with
different classes of admissible constraints (i.e., linear and
non-linear constraints). The constraint logic programming (CLP)
system {\em CoSAc} is an implementation of this architecture. A
built-in platform permits the integration and connection of the
components. The exchange of information is managed by means of
pipes  and the exchanged data are character strings. One
of the main drawbacks of this system is the lack of type safety.
Moreover, the cooperation happens at a fixed level that prevents the
communication of solvers in a transparent way, since the solvers
cannot obtain additional information from the structure  of the
internal constraint store.
As already discussed at the end of Subsection \ref{dcr}, the current $\toy$ implementation
of our cooperative computation model suffers from a similar limitation,
preventing the constraints already placed into the $\fd$ and $\rdom$ stores to be projected.
This issue should be addressed in future improvements of our system.

As CoSAc does not permit solver combination, Monfroy designed a
domain independent environment for {\em solver collaboration}, and
he used this concept in order to unify solver cooperation and
combination. Basically, solver cooperation means the use of several
solvers with data exchange between them, whereas solver combination
is understood as the construction of new solvers  from other previously
defined solvers. In his PhD thesis \cite{monfroy:phd96}, Monfroy developed the system BALI
(Binding Architecture for Solver Integration)
that facilitates the integration of heterogeneous solvers as well as the specification of
solver cooperation via a number of cooperation primitives. Monfroy's approach assumes that
all the solvers work over a common store, while our own proposal
requires communication among different stores.
Monfroy also designed SoleX \cite{MR99},
a domain-independent scheme for constraint solver extension. This
schema consists of a set of rules for transforming constraints that
cannot be managed by a solver into constraints that can be treated by that solver,
thus extending  the range of  solvable constraints. Unfortunately, as commented
in \cite{monfroy:phd96} (page 195), SoleX and BALI were not
integrated. Such an integration could lead to a
framework including both solver collaboration and solver extension.

The interoperability category also includes a line of research dealing with
the development of coordination languages, aiming at the specification of cooperation between solvers.
There exist several proposals whose main goal is to study the use of
control languages to specify elementary constraint solvers as well
as the collaboration of solvers in a uniform and flexible way.
For instance, \cite{DBLP:conf/sac/ArbabM98} proposes to use the
coordination language MANIFOLD for improving the constraint solver collaboration language of BALI.
More recent works such as \cite{DBLP:conf/advis/MonfroyC04,DBLP:conf/aimsa/CastroM04}
aim at providing means of designing strategies that are
more complex than simple master­-slave approaches.
Basically, Castro and Monfroy propose an asynchronous language composed of
interaction components that control external agents (in particular solvers) by managing the data flow.
A software framework for constructing distributed constraint solvers,
implemented  in the coordination language MANIFOLD,
has been described in \cite{DBLP:conf/sac/Zoeteweij03}.
A different point of view regarding solver cooperation is analyzed in \cite{PM03},
where a paradigm to enable the user to separate computation
strategies from the search phases is presented.

%\begin{sloppypar}
Also it is worth mentioning the project COCONUT 
\footnote{See {\tt http://www.mat.univie.ac.at/~neum/glopt/coconut/}}
whose goal was to integrate techniques from mathematical programming, constraint
programming and interval analysis (and thus it can also be
catalogued in the category of cooperation via techniques combination
as described in Section~\ref{da}). A modular solver environment,
that can be extended with open-source and commercial solvers, was
provided for nonlinear continuous global optimization. This
framework was also designed for distributed computing and has a
strategy engine that can be programmed using a specific interpreted
language based on Python.
%\end{sloppypar}

Mircea Marin has developed in his PhD thesis \cite{Mar00} a $CFLP$ scheme that
combines Monfroy's approach to solver cooperation \cite{monfroy:phd96} with a
higher-order lazy narrowing calculus somewhat similar to
\cite{LMR04} and the goal solving calculus presented in Section \ref{cooperative} of this
paper. In this setting, Monfroy's ideas are used to provide various primitives for solver combination,
and the $CFLP$ scheme allows to embed the resulting solvers into a Functional and Logic Programming language.
In contrast to our proposal, Marin's approach allows for
higher-order unification, which  leads both to greater expressivity
and to less efficient implementations.
Another difference w.r.t. our approach is the intended application domain.
The instance of $CFLP$ implemented by Marin and others \cite{MIS99b} combines four
solvers over a constraint domain for algebraic symbolic computation.
This line of research has been continued in works such as
\cite{DBLP:conf/aplas/KobayashiMI01,kobayashi+:Open-cflp-wflp02,kobayashi+:collaborative-cflp-ieice-2003,kobayashi1}.
These papers describe a collaborative $CFLP$ system, called Open $CFLP$, which solves symbolic constraints by collaboration between
distributed constraint solvers in an open environment such as Internet.
The solvers act as providers of constraint solving services, and
Open $CFLP$ is able to use them without knowing their location and implementation details.
The common communication
infrastructure (i.e., the protocol) and the specification language
were implemented using CORBA and MathML respectively.

Another recent proposal for the combination of solvers in a declarative programming language
can be found in  \cite{banda+:building-cs-in-hal-clp01}.
This paper deals with the construction of solvers  in the HAL system,
which supports the extension of existing solvers and the
construction of hybrid ones.  HAL provides semi-optional type, mode
and determinism declarations for predicates and functions as well as
a system of type classes over which constraint solvers' capabilities are specified.
In particular, HAL type classes can require that the types belonging to them must
have a suitable associated constraint solver.

A quite general scheme for solver cooperation fitting the
interoperability category has been proposed by Hofstedt et al. in
\cite{hofstedt:tigher-cooperation-cl00,DBLP:conf/cp/Hofstedt00,Hofstedt:phd-thesis-2001,HP07}.
Here, constraint domains are formalized by using $\Sigma$-structures
in a sorted language, constraints are modeled as n-ary relations,
and cooperation of solvers is achieved  by two mechanisms:
Constraint propagation, that submits a  constraint belonging to its
corresponding store; and projection of constraint  stores, that
consults  the contents of a given store  $S_{\mathcal{D}}$ and
deduces constraints for another domain. Relying on these mechanisms,
different constraint solvers (possibly working over different
domains, and implemented in various languages) can be used as
components of  an overall system, whose architecture provides a
uniform interface for constraint solvers which allows a fine-grain
formal specification of information exchange between them. This
approach has been implemented  in the system META-S
\cite{DBLP:conf/ki/FrankHM03,DBLP:conf/flairs/FrankHM03,DBLP:conf/wlp/FrankHR05},
 that supports the dynamic integration of arbitrary external (stand-alone) solvers to enable the
collaborative processing of constraints.
Some analogies and differences between this approach and our own
have been discussed already at several places of this paper;
see the Introduction, and sections \ref{dcr} and \ref{performance}.

As a more theoretical line of work related to the interoperability category,
there are a number of formal  approaches to the
combination of constraint solvers on domains modeled as algebraic structures.
This kind of research stems from a seminal paper by Nelson and Oppen \cite{NO79}.
More recent relevant work includes several papers by Baader and Schulz.
For instance, \cite{baader+:combination-cp95} provides an abstract framework to
combine constraint languages and constraint solvers and focuses on
ways in which different and independently defined solvers may be
combined. This paper does not really deal with the constraint
cooperation mechanism, but it focuses in defining algebraic properties needed
for the combination of constraint languages and solvers.
Later on, \cite{baader+:combination-cs-free-quasi-struct-tcs98}
generalized a proposal from a previous paper \cite{baader+:unification-disjoint-equ-theories-jsc96}
and presented a general method for the combination of constraint systems,
which is is applicable to so-called {\em quasi-structures}.
This general notion comprises various instances, such as
(quotient) term algebras, rational trees, lists, sets, etc.
The methods proposed in
\cite{baader+:unification-disjoint-equ-theories-jsc96,baader+:combination-cs-free-quasi-struct-tcs98}
can be seen as extensions of previous approaches to the combination of unification
algorithms for equational theories, viewing them as instances of constraint solvers \cite{KR92,KR94}.
As pointed out in \cite{kepser+:optimisation-combining-solvers-frocos98},
a weak point of these approaches is the lack of practical use.

Our proposal can clearly be catalogued in the interoperability category,
because it aims at the cooperation of several constraint domains equipped with their respective solvers.
Our main communication mechanism, namely bridges, has the
advantage of  syntactic simplicity, while being compatible with the static type systems used by many declarative languages.
Moreover, our notion of coordination domain allows us to use a generic scheme for $CFLP$ programming as a formal foundation.
\vspace*{-0.2cm}

\subsection{Combining Methods and/or Solvers based on Different Algorithms} \label{da}

One popular approach to cooperation consists of
combining solvers or methods based on different algorithms. In
this category we include the integration of different paradigms in
one language. In the following, we provide a (non-exhaustive) list
of proposals of this kind.

For instance, one of the initial forms of cooperating constraint
solving consisted of using different problem solvers (viewed as
algorithms) to work individually over different sub-parts of an overall problem.
This was the approach used  in \cite{durfee+:hai89} in order to integrate within
a network a number of individual solvers intended to work  over different parts of a problem.
In a similar way,
\cite{khedro+:cooperation-distributed-csp-ecai94} proposed a
multi-agent model where each agent acts independently to solve a
distributed set of constraints that constitutes a distributed
constraint satisfaction problem.
The paper  \cite{hong:tech-report94} also  studied the confluence of solvers
to solve a common problem, suggesting to manage a set of algorithms
each of which should be repeatedly applied on the problem until reaching a stable form.

Within the area of Constraint Programming,
\cite{benhamou:heterogeneous-cs-alp96} described a unified framework
for heterogeneous constraint solving. Here, the cooperation comes
from the combination of different algorithms, possibly defined over
distinct structures. The main idea is to represent the solvers as
constraint narrowing operators (CNO) that are closure operators, and
to use a generalized notion of arc-consistency. Conditions on the
CNOs needed to ensure the main properties of the principal algorithm
are identified. Solver communication involving  shared common
variables and sending and receiving information to each other is
described. The paper also gives a fixed point semantics to describe
the cooperation process. One of the main drawbacks of this proposal
is that termination of the central algorithm  relies on the
finiteness of an {\em approximate domain} $A$ built as a subset of
the powerdomain $\wp(D)$ of the domain $D$ under consideration,
including $D$ among its members and closed under intersection. For
instance, termination  cannot be guaranteed in case that $D$ is  the
domain of sets of real numbers, which is useful for dealing with
real interval constraints.

In relation to the problem  of solving real constraints,
\cite{benhamou+:revising-hull-box-iclp99} has proposed the combination
of hull consistency and box consistency with the objective of
reducing the computation time achieved by using box consistency alone.
This idea was reflected in DecLic~\cite{benhamou+:declic-programming-wic97,goualard+:hybrid-jflp99},
a $CLP$ language that mixes boolean, integer and real constraints in
the framework of intervals. This system was shown to be fairly
efficient on classical benchmarks but at the expense of decreasing
the declarativity of the language as a consequence of allowing the
programmer to choose the best kind of consistency to use for each
constraint.

The combination of interval techniques for solving
non-linear systems is also tackled in  \cite{granvilliers+:combination-interval-solvers-realiable2001},
Granvilliers describes a cooperative strategy to combine the interval-based local consistencies methods
(i.e., box and hull consistency) with the multi-dimensional interval Newton
method and shows the efficiency of the main algorithm.

Another proposal for developing a cooperation technique for
solving large scale combinatorial optimization problems was
described in \cite{DBLP:conf/sbia/CastroMR04}. This paper
introduces a framework for designing cooperative strategies,
describing a scheme for the sequential cooperation
between Forward Checking and Hill­-Climbing. A set of benchmarks
for the Capacity Vehicle Routing Problem shows the advantages of
using this framework, that always outperforms a single solver.

The combination of linear programming solvers and interval
solvers has also been specially fertile in the last decades
\cite{marti95distributed}. Many of the cooperating systems resulting
from this combination have been implemented as (prototype) declarative systems,
as e.g., ICE \cite{DBLP:books/el/beierleP95/BeringerB95},
Prolog IV \cite{n'dong:prologIV-jfplc97},
CIAL \cite{DBLP:journals/rc/ChiuL02}
and CCC \cite{DBLP:journals/rc/RueherS97}, among others.

The integration of mathematical programming techniques in the $CLP$ scheme
\cite{vanHoeve:integrating-clp-mp2000} may be considered another form of cooperation that
has been treated extensively in the literature;
see  e.g.  the integration of Mixed Integer programming and $CLP$
\cite{rodosek+:integrating-MIL-CLP-annaoper97,harjunkoski+:hybrid-mil-CLP-cce2000,DBLP:conf/cp/Thorsteinsson01},
the combination of $CLP$ and Integer Programming
\cite{bockmayr+:unifyinf-integer-fd-informs98} and the combination
of $CLP$ and Linear Programming \cite{vandecasteele-modelling}, among others.

The domain cooperation framework presented in this paper is quite generic,
 and its current implementation in $\toy$ relies on the availability of black box solvers provided by SICStus Prolog \cite{SP}.
Therefore, it cannot be catalogued into the cooperation category considered in this subsection,
which is very specific and relies on a  detailed control of the techniques and solvers involved.
Nevertheless, the work described in this subsection points to combination techniques
which lead to improved performance and may be useful for future implementations of our approach.
\vspace*{-0.3cm}
       % \label{relatedwork}
%%%%%%%%%%%%%%%%%%%%%%%%%%%%%%%%%%%%%%%%%%%%%%%%%%%%%%%%%%%%%%%%%%%%%%%%%%%%%%%
%
% 7-CONCLUSIONS
%
%%%%%%%%%%%%%%%%%%%%%%%%%%%%%%%%%%%%%%%%%%%%%%%%%%%%%%%%%%%%%%%%%%%%%%%%%%%%%%%

\section{Conclusions and Future Work}
\label{conclusions}

The work presented in this paper is aimed as a contribution to the efficient use
of constraint domains and solvers in declarative languages and systems.
We have investigated foundational and practical issues
concerning a computational framework for the cooperation of
constraint domains in Constraint Functional Logic Programming,
using constraint projection guided by bridge constraints as the main cooperation tool.
Taking a generic scheme as a formal basis,
we have focused on a particular case of practical importance,
namely the cooperation among the symbolic Herbrand domain $\herbrand$
and the two numeric domains $\rdom$ and $\fd$.

The relation to our previous related work
and some pointers to related work by other researchers
have been presented in Section \ref{introduction},
and a more detailed discussion of the state-of-the-art concerning cooperation
of constraint domains can be found   in Section \ref{relatedwork}.
In the rest of this section we give a  summary of the main results presented in the other sections of the paper,
followed by some considerations concerning current limitations and possible lines of future work.
\vspace*{-0.3cm}

\subsection{Summary of Main Results}

Our results include a formal computation model  for cooperative goal solving in  Constraint Functional Logic Programming,
the development of an implemented system,
and  experimental evidence on the implementation's performance
and its comparison with the closest related system we are aware of.
More precisely:

\begin{itemize}
\item
In Section \ref{coordination} we have presented a formal
framework for the cooperation of constraint domains in an improved version of
an already existing $CFLP$ scheme for Constraint Functional Logic Programming.
We have formalized a notion of constraint solver suitable for $CFLP$ programming,
as well as a mathematical construction of coordination domains,
a special kind of hybrid domains built as a combination of several pure domains intended to
cooperate. In addition to the facilities provided by their components,
coordination domains supply special primitives for building bridge constraints to allow communication between
different component domains. As particular case of practical
interest, we have formalized a coordination domain
$\ccdom = \mdom \oplus \herbrand \oplus \fd \oplus \rdom$ tailored to the
cooperation of three useful pure domains: the Herbrand domain
$\herbrand$ which supplies equality and disequality constraints over
symbolic terms, the domain $\rdom$ which supplies arithmetic
constraints over real numbers, and the domain $\fd$ which supplies
finite domain constraints over integer numbers. Practical
applications involving more that one of these pure domains can be
naturally treated within the $\cflp{\ccdom}$ instance of the $CFLP$ scheme.
From a programmer's viewpoint, the domain $\herbrand$ supports generic equality and disequality constraints
over arbitrary user defined datatypes, while $\rdom$ and $\fd$ provide more specific numeric constraints.

 \item
Section \ref{cooperative} presents a formal calculus for cooperative goal solving in $\cflp{\ccdom}$.
The main programming features available to  $\cflp{\ccdom}$ programmers
include a Milner's like polymorphic type system,
lazy and possibly higher-order functions, predicates,
and the cooperation of the three domains witihin $\ccdom$.
The goal solving calculus is presented as a set of goal transformation rules
for reducing initial goals into solved forms.
There are rules that use  lazy narrowing to process program defined function calls in a demand-driven way,
domain cooperation rules dealing among other things  with bridges and projections,
and constraint solving rules to invoke the solvers of the various pure domains involved in the cooperation.
The section concludes with theoretical results ensuring soundness and completeness of the goal solving calculus,
where completeness is guaranteed for well-typed solutions as far as permitted by the completeness of the underlying solvers
and some other more technical requirements.

\item
Section \ref{implementation} presents the implementation of the cooperative goal solving calculus for
$\cflp{\ccdom}$  in a state-of-the-art declarative programming system.
In addition to describing general aspects such as the software architecture, 
we have focused on the implementation of domain cooperation mechanisms,
illustrating the correspondence between code generation in the implemented system and
the goal transformation rules for cooperation formalized in the previous section.

\item
Section \ref{performance} is devoted to performance analysis by means of a set of benchmarks.
The experimental results obtained lead us to several conclusions.
Firstly, we conclude  that the activation of the domain cooperation mechanisms between
$\fd$ and $\rdom$ does not penalize the execution time in
problems which can be solved by using the domain $\fd$ alone.
Secondly, we also conclude that the cooperation mechanism using projections
helps to speed-up the execution time in problems where a real cooperation between $\fd$ and $\rdom$ is needed.
Thirdly, our experiments show a good performance of our implementation
with respect to the closest related system we are aware of.
In summary, we conclude that our approach to the cooperation of constraint domains
has been effectively implemented in a practical system,
that is distributed as a free open-source Sourceforge project
({\tt http://toy.sourceforge.net}) and runs on several platforms.
\end{itemize}
\vspace*{-0.3cm}

\subsection{Some Current Limitations and Planned Future Work}

In the future we would like to improve some of the limitations of our current approach to domain cooperation,
concerning both the formal foundations and the implemented system.
More precisely:

\begin{itemize}

\item
The cooperative goal solving calculus $\cclnc{\ccdom}$ presented in Section \ref{cooperative} should be generalized to allow for
an arbitrary coordination domain  $\ccdom$ in place of the  concrete choice $\mdom \oplus \herbrand \oplus \fd \oplus \rdom$.
This is a straightforward task. However, for the purposes of the present paper we found more appropriate to deal just with the
coordination domain supported by the current implementation.

\item
The implemented system should be expanded to support some of these more general coordination domains,
which could include specific domains for boolean values, sets, or different types of numeric values.
More efficient and powerful constraint solvers for such domains should be also integrated within the implementation.

\item
$\cclnc{\ccdom}$ should be also expanded to allow the computation of projections
from the primitive constraints placed within the constraint stores.
These more powerful projections were allowed in the preliminary version
of $\cclnc{\ccdom}$ presented in \cite{DBLP:journals/entcs/MartinFHRV07},
but they were not implemented  and no completeness result was given.
Currently, projections are computed only from  the constraints placed in the constraint pool
(see rule {\bf PP} in Table \ref{table4} in Subsection \ref{dcr}) and the $\toy$ implementation only
supports this kind of projections.
Allowing projections to act over stored constraints will require to solve new problems both
on the formal level (where some substantial difficulties are expected for proving a completeness result)
and on the implementation level (where the current system will have to be modified to enable a transparent access to the constraint stores).

\item
As a consequence of the previous improvement, the cooperative goal solving process will show more complicated patterns of interaction
among solvers. Therefore, some means to describe goal solving strategies should be provided to enable users to specify some desired 
sequences of goal transformation rules, especially with regard to the activation of solvers and projections.
In addition to being implemented as part of  the practical system, goal solving strategies are expected to be helpful for proving 
the completeness of a cooperative goal solving calculus improved as described in the previous item.

\item
The experimentation with benchmarks and application cases should be further developed.

\item
Last but not least, the implemented system should be properly maintained and improved in various ways.
In particular, library management should be standardized, both with respect to loading already existing libraries and
with respect to developing new ones.

\end{itemize}

\vspace*{-0.3cm}

       % \label{conclusions}

%%%%%%%%%%%%%%%%%%%%%%%%%%%%%%%%%%%%%%%%%%%%%%%%%%%%%%%%%%%%%%%%%%%%%%%%%%%%%%
%
%   BIBLIOGRAPHY
%
%%%%%%%%%%%%%%%%%%%%%%%%%%%%%%%%%%%%%%%%%%%%%%%%%%%%%%%%%%%%%%%%%%%%%%%%%%%%%%%

\bibliographystyle{acmtrans}
\bibliography{ccs}

\appendix

\appendix\label{appendix}

\section{[Auxiliary Results and Proofs]}\label{proofs}

This Appendix collects proofs of the results stated in Section
\ref{coordination} and Section \ref{cooperative} omitted from the
main text. Some of them rely on previously stated auxiliary results,
especially Lemmata \ref{tpl} and \ref{trl} from Subsection
\ref{expressions} and  Lemma \ref{sl} from Subsection \ref{dom}.
In addition, some other auxiliary results will be included at the
proper places.

\subsection{Properties of Constraint Solvers and Coordination Domains} \label{pSolvCdom}

The first part of the Appendix includes the proofs of the main results stated in
Section \ref{coordination}. First we present the proof of Lemma
\ref{psts}, about general properties of proof transformation
systems.

\begin{proof}[Proof of Lemma \ref{psts}]
\begin{enumerate}
\item
The transition relation $\sts{\cdom}{\varx}$ of the $sts$  generates
a tree with root $\Pi\, \Box\, \varepsilon$, whose leaves correspond
to the stores belonging to $\mathcal{SF}_{\cdom}(\Pi,\varx)$. Since
$\sts{\cdom}{\varx}$ is finitely branching and terminating, this
tree is locally finite and has no infinite branches. By so-called
{\em K$\ddot{o}$nig's Lemma} (see  \cite{BN98}, Section 2.2) the
tree must be finite. Therefore, it must have finitely many leaves,
and $\mathcal{SF}_{\cdom}(\Pi,\varx)$ is finite. For later use, we
remark that $solve^{\cdom}(\Pi,\varx)$ can be characterized as
$$\bigvee \{\exists \overline{Y'} (\Pi'\, \Box\, \sigma') \mid
\Pi\, \Box\, \varepsilon \vdash\!\!\vdash_{\cdom,\, \varx}!\,\,
\Pi'\, \Box\, \sigma',\, \overline{Y'} = var(\Pi'\, \Box\, \sigma')
\setminus var(\Pi)\}$$
\item
Assume that the $sts$ has the fresh local variables property and the safe bindings property.
Due to the remark at the end of item 1.,
for each $\varx$-solved form $\exists \overline{Y'} (\Pi'\, \Box\, \sigma')$ computed by the call
$solve^{\cdom}(\Pi,\varx)$ there is some sequence of $\sts{\cdom}{\varx}$ steps
$$\Pi\, \Box\, \varepsilon =
\Pi'_0\, \Box\, \mu'_0\, \vdash\!\!\vdash_{\cdom,\, \varx}\,
\Pi'_1\, \Box\, \mu'_1\, \vdash\!\!\vdash_{\cdom,\, \varx}\,
\ldots\, \vdash\!\!\vdash_{\cdom,\, \varx}\,
\Pi'_n\, \Box\, \mu'_n$$
such that $\Pi'_n\, \Box\, \mu'_n = \Pi'\, \Box\, \sigma'$ is irreducible,
and the following conditions hold for all $1 \leq i \leq n$:
 $\Pi'_i\, \Box\, \mu'_i$ is a store with fresh local variables
 $\overline{Y'_i}  = var(\Pi'_i\, \Box\, \mu'_i) \setminus var(\Pi'_{i-1}\, \Box\, \mu'_{i-1})$;
 $\mu'_i = \mu'_{i-1} \mu_i$ for some substitution $\mu_i$ verifying
 $vdom(\mu_i) \cup vran(\mu_i)$  $\subseteq var(\Pi'_{i-1}) \cup  \overline{Y'_i}$; and
 $\mu_i(X)$ is a constant for all $X \in \varx \cap vdom(\mu_i)$.
 Then, $\overline{Y'} = \overline{Y'_1},\, \ldots,\, \overline{Y'_n}$, and
 an easy induction on $n$ allows to prove that
 $vdom(\sigma') \cup vran(\sigma') \subseteq var(\Pi) \cup  \overline{Y'}$
 and that $\sigma'(X)$ is a constant for all $X \in \varx \cap vdom(\sigma')$.
 Therefore, the solver $solve^{\cdom}$ also
 satisfies the  fresh local variables property and the safe bindings property.
\item
Assume that the $sts$ is locally sound. Because of the remark in
item 1., to prove soundness of $solve^{\cdom}$ it is sufficient to
show that the union
$$ \bigcup
\{Sol_{\cdom}(\exists\overline{Y'}(\Pi' \, \Box\, \sigma')) \mid \Pi
\, \Box \,  \sigma \vdash\!\!\vdash_{\cdom,\, \varx} ! \, \, \Pi'\,
\Box\, \sigma' , \overline{Y'}  = var(\Pi' \, \Box \, \sigma')
\setminus var(\Pi \,  \Box \, \sigma)\}$$ is a subset of
$\sol{\cdom}{\Pi\, \Box\, \sigma}$. In order to show this, we assume
$$\Pi\, \Box\, \sigma \vdash\!\!\vdash^{n}_{\cdom,\, \varx}!\,\,
\Pi'\, \Box\, \sigma',\, \overline{Y'}  = var(\Pi'\, \Box\,
\sigma')\setminus var(\Pi\, \Box\, \sigma)$$ and prove
$\sol{\cdom}{\exists \overline{Y'} (\Pi'\, \Box\, \sigma')} \subseteq
\sol{\cdom}{\Pi\, \Box\, \sigma}$ by induction  on $n$:

\underline{$n = 0$}: In this case $\overline{Y'} = \emptyset$, $\Pi'
\, \Box\, \sigma' = \Pi\, \Box\, \sigma$. The inclusion to be proved
is trivial.

\underline{$n > 0$}: In this case
 $\Pi\, \Box\, \sigma \vdash\!\!\vdash_{\cdom,\, \varx} \Pi'_1 \, \Box\, \sigma'_1
\vdash\!\!\vdash^{n-1}_{\cdom,\, \varx}! \, \, \Pi'\, \Box\,
\sigma'$ for some store $\Pi'_1 \, \Box\, \sigma'_1$. Let
$\overline{Y'_1}  = var(\Pi'_1 \Box \sigma'_1)\setminus var(\Pi\,
\Box\, \sigma)$ and $\overline{Y''}  = var(\Pi'\, \Box\,
\sigma')\setminus var(\Pi'_1 \Box \sigma'_1)$. Then $\overline{Y'} =
\overline{Y'_1}, \overline{Y''} = var(\Pi'\, \Box\,
\sigma')\setminus var(\Pi\, \Box\, \sigma)$. By induction
hypothesis, we can assume $\sol{\cdom}{ \exists \overline{Y''}
(\Pi'\, \Box\, \sigma')} \subseteq \sol{\cdom}{\Pi'_1\, \Box\, \sigma'_1}$.
Then, for any given $\eta \in
Sol_{\cdom}(\exists\overline{Y'} (\Pi' \, \Box\, \sigma'))$ we can
prove $\eta \in  Sol_{\cdom}(\Pi \, \Box\, \sigma)$ by the following
reasoning: by definition of $Sol_{\cdom}$, there is $\eta' \in
Sol_{\cdom}(\Pi' \, \Box\, \sigma')$ such that $\eta'  =_{\setminus
\overline{Y'}} \eta$ and hence $\eta'  =_{var(\Pi\, \Box\, \sigma)}
\eta$. Trivially, it follows that $\eta' \in
Sol_{\cdom}(\exists\overline{Y''} (\Pi' \, \Box\, \sigma'))$, which
implies
 $\eta' \in  Sol_{\cdom}(\Pi'_1 \, \Box\, \sigma'_1)$
by induction hypothesis. Trivially again, it follows that $\eta' \in
Sol_{\cdom}(\exists\overline{Y'_1} (\Pi'_1 \, \Box\, \sigma'_1))$
which implies $\eta' \in  Sol_{\cdom}(\Pi \, \Box\, \sigma)$ due to
local soundness. Since $\eta'  =_{var(\Pi\, \Box\, \sigma)} \eta$,
we can conclude that $\eta \in  Sol_{\cdom}(\Pi \, \Box\, \sigma)$.
\item
Assume now a selected set $\mathcal{RS}$ of $str$s such that the
$sts$ is locally complete for $\mathcal{RS}$-free steps. Because of
the remark in item 1., to prove completeness of $solve^{\cdom}$ for
$\mathcal{RS}$-free invocations it is sufficient to show that
$WTSol_{\cdom}(\Pi\, \Box\, \sigma)$ is a subset of the union
$$ \bigcup
\{WTSol_{\cdom}(\exists\overline{Y'}(\Pi' \, \Box\, \sigma')) \mid
\Pi \, \Box \,  \sigma \vdash\!\!\vdash_{\cdom,\, \varx} ! \, \,
\Pi'\, \Box\, \sigma',\, \overline{Y'}  = var(\Pi' \, \Box \,
\sigma') \setminus var(\Pi \,  \Box \, \sigma)\}$$ under the
additional assumption that $\Pi\, \Box\, \sigma$ is hereditarily
$\mathcal{RS}$-irreducible.
This can be viewed as a property of the store $ \Pi\, \Box\, \sigma$
that can be proved by {\em well-founded induction} (see  again
\cite{BN98}, Section 2.2) on the terminating  store transformation
relation $\sts{\cdom}{\varx}$:

\underline{Base Case}: $\Pi\, \Box\, \sigma$ is irreducible w.r.t. $
\vdash\!\!\vdash_{\cdom,\, \varx}$. In this case, the union reduces
to the set $WTSol_{\cdom}(\Pi\, \Box\, \sigma)$ and the inclusion to
be proved is trivial.

\underline{Inductive Case}: $\Pi\, \Box\, \sigma$ is reducible
w.r.t. $\vdash\!\!\vdash_{\cdom,\, \varx}$.
 In this case, since   $\Pi\, \Box\, \sigma$ is hereditarily $\mathcal{RS}$-irreducible and
 the $sts$ is locally complete for $\mathcal{RS}$-free steps,
for any $\eta \in  WTSol_{\cdom}(\Pi \, \Box\, \sigma)$ there is
some  hereditarily $\mathcal{RS}$-irreducible $(\Pi'_1 \, \Box\,
\sigma'_1)$ such that $\Pi\, \Box\, \sigma
\vdash\!\!\vdash_{\cdom,\, \varx} \Pi'_1 \, \Box\, \sigma'_1$ and
$\eta \in WTSol_{\cdom}(\exists\overline{Y'_1}(\Pi'_1 \, \Box\,
\sigma'_1))$ where  $\overline{Y'_1}  = var(\Pi'_1 \Box
\sigma'_1)\setminus var(\Pi\, \Box\, \sigma)$. Then, by definition
of  $Sol_{\cdom}$, there is $\eta'_1 \in  WTSol_{\cdom}(\Pi'_1 \,
\Box\, \sigma'_1)$ such that $\eta' _1 =_{\setminus \overline{Y'_1}}
\eta$. The induction hypothesis can be assumed for $\Pi'_1 \, \Box\,
\sigma'_1$, and there must be some $\Pi' \, \Box\, \sigma'$ such
that $\Pi'_1 \, \Box \,\sigma'_1 \vdash\!\!\vdash_{\cdom,\, \varx} !
\, \Pi'\, \Box\, \sigma'$, $\overline{Y''}  = $ $var(\Pi'\, \Box\,
\sigma')\setminus var(\Pi'_1 \Box \sigma'_1)$ and $\eta'_1 \in
WTSol_{\cdom}(\exists\overline{Y''}(\Pi' \, \Box\, \sigma'))$. By
definition of $Sol_{\cdom}$, there is $\eta' \in
WTSol_{\cdom}(\Pi'\, \Box\, \sigma')$ such that $\eta' =_{\setminus
\overline{Y''}}  \eta'_1$. Moreover,   we get $\Pi \, \Box \,
\sigma \vdash\!\!\vdash_{\cdom,\, \varx} ! \, \, \Pi'\, \Box\,
\sigma'$ and $\overline{Y'} = \overline{Y'_1}, \overline{Y''} =
var(\Pi' \, \Box \, \sigma') \setminus var(\Pi \,  \Box \, \sigma)$
such that $\eta' =_{\setminus \overline{Y'}} \eta$, and thus $\eta
\in  WTSol_{\cdom}(\exists\overline{Y'}(\Pi' \, \Box\, \sigma'))$.
\end{enumerate}
This completes the proof of the Lemma.\hfill
\end{proof}

Table \ref{hordering} displayed in the next page and the two
auxiliary Lemmata stated and proved immediately afterwards will be
used in the subsequent proof of Theorem \ref{hsolver}, the main
result in this subsection. It ensures that $solve^{\herbrand}$
satisfies the requirements for solvers listed in Definition
\ref{defSolver} (except for a technical limitation concerning
completeness). The proof of this theorem also relies on  Lemma
\ref{psts}.

\begin{lemma}[Auxiliary Soundness Lemma]\label{ASL}

Assume $\Pi \subseteq  PCon_{\cdom}$ and $\sigma, \sigma_1 \in
Sub_{\cdom}$ such that $\sigma$ is idempotent and  $\Pi\sigma =
\Pi$. Then $Sol_{\cdom}(\Pi \sigma_1) \cap
Sol_{\cdom}(\sigma\sigma_1) \subseteq Sol_{\cdom}(\Pi) \cap
Sol_{\cdom}(\sigma)$.
\end{lemma}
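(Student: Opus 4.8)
The statement to prove is an inclusion of solution sets: under the hypotheses that $\sigma$ is idempotent and $\Pi\sigma = \Pi$, we must show $Sol_{\cdom}(\Pi\sigma_1) \cap Sol_{\cdom}(\sigma\sigma_1) \subseteq Sol_{\cdom}(\Pi) \cap Sol_{\cdom}(\sigma)$. So fix an arbitrary valuation $\eta \in Sol_{\cdom}(\Pi\sigma_1) \cap Sol_{\cdom}(\sigma\sigma_1)$ and aim to derive $\eta \in Sol_{\cdom}(\Pi)$ and $\eta \in Sol_{\cdom}(\sigma)$. The plan is to handle the two conjuncts in order, first establishing membership in $Sol_{\cdom}(\sigma)$, because that fact is then reused to obtain membership in $Sol_{\cdom}(\Pi)$.

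For the first conjunct, recall that $Sol(\sigma) = \{\eta \in Val_{\cdom} \mid \eta = \sigma\eta\}$ by Definition \ref{defStoSol}(3). We are given $\eta = (\sigma\sigma_1)\eta = \sigma(\sigma_1\eta)$. The idempotence of $\sigma$ is the key: applying $\sigma$ to both sides of $\eta = \sigma(\sigma_1\eta)$ yields $\sigma\eta = \sigma\sigma(\sigma_1\eta) = \sigma(\sigma_1\eta) = \eta$, so $\eta = \sigma\eta$ and hence $\eta \in Sol(\sigma)$. (One should be a little careful here about the distinction between the substitution $\sigma\sigma_1$ and its restriction, and about the fact that $Sol_{\cdom}(\sigma\sigma_1)$ refers to $Sol$ applied to the substitution $\sigma\sigma_1$; but since $Sol(\tau) = \{\eta \mid \eta = \tau\eta\}$ the reasoning goes through directly once one unfolds the definition.)

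For the second conjunct, I would invoke the Substitution Lemma (Lemma \ref{sl}), which states $\eta \in (WT)Sol_{\cdom}(\Pi\sigma_1) \Leftrightarrow \sigma_1\eta \in (WT)Sol_{\cdom}(\Pi)$. From $\eta \in Sol_{\cdom}(\Pi\sigma_1)$ we therefore get $\sigma_1\eta \in Sol_{\cdom}(\Pi)$. Now using the hypothesis $\Pi\sigma = \Pi$ together with $\eta = \sigma\eta$ established above, we have $\Pi\eta = (\Pi\sigma)\eta = \Pi(\sigma\eta) = \Pi\eta$ — more usefully, since $\eta = \sigma\eta = \sigma(\sigma_1\eta)$ would be circular, the clean route is: $\Pi\eta = \Pi(\sigma\eta)$ (using $\eta = \sigma\eta$ from the first part) $= (\Pi\sigma)\eta = \Pi\eta$; so instead observe directly that $\sigma_1\eta$ and $\eta$ agree on $\Pi$ in the sense that $\Pi(\sigma_1\eta) = (\Pi\sigma_1)\eta$, and separately $\Pi\eta = (\Pi\sigma)\eta = \Pi(\sigma\eta) = \Pi\eta$. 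The cleanest argument is: $\Pi\eta = (\Pi\sigma)\eta = \Pi(\sigma\eta) = \Pi(\sigma\sigma_1\eta)$ is not quite what we want either. The honest shortest path is to note $\sigma_1\eta \in Sol_{\cdom}(\Pi)$ and then show $\Pi(\sigma_1\eta)$ and $\Pi\eta$ have the same solutions-membership status because $\eta$ restricted to $var(\Pi)$ equals $\sigma_1\eta$ restricted to $var(\Pi)$ — this follows since $var(\Pi\sigma) = var(\Pi)$ forces $vdom(\sigma) \cap var(\Pi) = \emptyset$ (as $\Pi\sigma = \Pi$), and similarly one reconciles the $\sigma_1$ part using $\eta = \sigma\sigma_1\eta$.

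\textbf{Main obstacle.} The genuine subtlety — and the step I expect to require the most care — is the bookkeeping reconciling $\Pi\eta$, $\Pi(\sigma_1\eta)$, and the hypothesis $\Pi\sigma = \Pi$. One must argue that $\Pi\sigma = \Pi$ implies $vdom(\sigma)$ is disjoint from $var(\Pi)$ (or at least that $\sigma$ acts as identity on $var(\Pi)$, which is what $\Pi\sigma = \Pi$ says when combined with idempotence), and then combine this with $\eta = \sigma\eta$ and $\sigma_1\eta \in Sol_{\cdom}(\Pi)$ to conclude $\eta \in Sol_{\cdom}(\Pi)$. Concretely: since $\eta = \sigma\eta = \sigma(\sigma_1\eta)$ and $\Pi\sigma = \Pi$, we get $\Pi\eta = \Pi\sigma(\sigma_1\eta) = \Pi(\sigma_1\eta)$, and the latter has the form $(\Pi)(\sigma_1\eta)$ which, because $\sigma_1\eta$ is itself a valuation in $Sol_{\cdom}(\Pi)$, gives exactly $\eta \in Sol_{\cdom}(\Pi)$ once one checks $\eta$ and $\sigma_1\eta$ agree on all of $fvar(\Pi)$ — which they do, because $\eta = \sigma(\sigma_1\eta)$ and $\sigma$ is the identity on $var(\Pi)$. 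Everything else is a routine unfolding of Definitions \ref{defPrimSol} and \ref{defStoSol}.
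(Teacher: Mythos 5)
Your proof is correct in substance, and the $\eta \in Sol_{\cdom}(\sigma)$ half is exactly the paper's argument: from $\sigma\sigma_1\eta = \eta$ and idempotence, $\sigma\eta = \sigma\sigma\sigma_1\eta = \sigma\sigma_1\eta = \eta$. Where you diverge is the $\eta \in Sol_{\cdom}(\Pi)$ half. The paper simply applies the Substitution Lemma \ref{sl} a \emph{second} time: from $\sigma_1\eta \in Sol_{\cdom}(\Pi)$ and $\Pi = \Pi\sigma$ one gets $\sigma_1\eta \in Sol_{\cdom}(\Pi\sigma)$, hence $\sigma\sigma_1\eta \in Sol_{\cdom}(\Pi)$ by Lemma \ref{sl}, which is $\eta \in Sol_{\cdom}(\Pi)$ by hypothesis (d) — no analysis of $vdom(\sigma)$ versus $var(\Pi)$ is needed at all. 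You instead argue that idempotence together with $\Pi\sigma = \Pi$ forces $\sigma$ to be the identity on $var(\Pi)$ (true: idempotence gives $vdom(\sigma)\cap vran(\sigma)=\emptyset$, so any $X \in vdom(\sigma)\cap var(\Pi)$ would vanish from $var(\Pi\sigma)=var(\Pi)$, a contradiction), conclude that $\eta$ and $\sigma_1\eta$ agree on $fvar(\Pi)$, and then invoke the fact that membership in $Sol_{\cdom}(\Pi)$ depends only on the values of free variables. That route works, but it leans on two auxiliary facts that the paper never states as lemmas and that you do not prove (both are routine), whereas the paper's route is a one-line reuse of the already-established Substitution Lemma; your own equation $\Pi\eta = (\Pi\sigma)(\sigma_1\eta) = \Pi(\sigma_1\eta)$ is in fact the same content and could have been closed off directly via Lemma \ref{sl}. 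One stylistic remark: the middle of your write-up contains several circular false starts (e.g. $\Pi\eta = (\Pi\sigma)\eta = \Pi\eta$) before the final ``Concretely'' paragraph, which is the part that actually carries the proof; in a polished version only that last argument should remain.
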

 \vspace*{-.3cm}
\begin{proof}[Proof of Lemma \ref{ASL}]
The hypothesis of the Lemma say that $\sigma = \sigma\sigma$ and
$\Pi\sigma = \Pi$. On the other hand, because of the Substitution
Lemma \ref{sl} and the definition of $Sol_{\cdom}$, any $\eta \in
Val_{\cdom}$ verifies $\eta \in Sol_{\cdom}(\Pi \sigma_1) \cap
Sol_{\cdom}(\sigma\sigma_1)$ iff $\sigma_1\eta \in \sol{\cdom}{\Pi}$
and $\sigma\sigma_1\eta = \eta$. Therefore, to prove the lemma it
suffices to assume

\begin{center}
$(a)~ \sigma = \sigma\sigma$ \qquad $(b)~ \Pi\sigma = \Pi$ \qquad
$(c)~ \sigma_1\eta \in \sol{\cdom}{\Pi}$ \qquad $(d)~
\sigma\sigma_1\eta = \eta$
\end{center}

\noindent and to deduce from these assumptions that $\eta \in
Sol_{\cdom}(\Pi) \cap Sol_{\cdom}(\sigma)$.

First we prove that $\eta \in Sol_{\cdom}(\Pi)$ as follows: by $(c)$
and $(b)$, we obtain  $\sigma_1 \eta \in Sol_{\cdom}(\Pi\sigma)$,
which amounts to $\sigma\sigma_1 \eta \in Sol_{\cdom}(\Pi)$ by the
Substitution Lemma. By $(d)$, this  is the same as $\eta \in
Sol_{\cdom}(\Pi)$.

Next, we note that $\eta \in Sol_{\cdom}(\sigma)$ is equivalent to
$\sigma \eta = \eta$, which can be proved by the following chain of
equalities: $\sigma\eta =_{(d)} \sigma\sigma\sigma_1 \eta =_{(a)}
\sigma\sigma_1\eta  =_{(d)} \eta$.\hfill
 \end{proof}

 \begin{lemma}[Auxiliary Completeness Lemma]\label{ACL}
Assume $\Pi \subseteq  PCon_{\cdom}$, $\sigma, \sigma_1 \in
Sub_{\cdom}$ and $\eta, \eta' \in  Val_{\cdom}$ such that $\eta \in
Sol_{\cdom}(\Pi) \cap Sol_{\cdom}(\sigma)$, $\sigma_1 \eta' = \eta'$
and $\eta' =_{\setminus \overline{Y'}} \eta$, where $\overline{Y'}$
are fresh variables away from $var(\Pi) \cup vdom(\sigma) \cup
vran(\sigma)$. Then $\sigma \eta' = \eta'$ and  $\eta' \in
Sol_{\cdom}(\Pi \sigma_1) \cap Sol_{\cdom}(\sigma\sigma_1)$.
\end{lemma}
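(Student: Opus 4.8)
The plan is to prove the two asserted facts in order, deriving the first as a lemma used for the second. Recall that, by Definition~\ref{defStoSol}, $Sol_{\cdom}(\sigma)$ is exactly the set of valuations $\zeta$ with $\sigma\zeta = \zeta$; so the first conclusion $\sigma\eta' = \eta'$ is the statement $\eta' \in Sol_{\cdom}(\sigma)$. Once this is established, the second conclusion will follow quickly from the hypothesis $\sigma_1\eta' = \eta'$ and the Substitution Lemma~\ref{sl}.

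First I would prove $\sigma\eta' = \eta'$ by checking $X\sigma\eta' = X\eta'$ for every $X \in \var$, splitting on whether $X \in vdom(\sigma)$. If $X \notin vdom(\sigma)$, then $X\sigma = X$ and there is nothing to show. If $X \in vdom(\sigma)$, write $s = X\sigma$; since $var(s) \subseteq vran(\sigma)$ and $\overline{Y'}$ is away from $vran(\sigma)$, the valuations $\eta$ and $\eta'$ agree on all variables of $s$, hence $s\eta' = s\eta$. Using $\sigma\eta = \eta$ (which holds because $\eta \in Sol_{\cdom}(\sigma)$), this equals $X\eta$; and since $\overline{Y'}$ is also away from $vdom(\sigma)$ we have $X \notin \overline{Y'}$, so $X\eta = X\eta'$. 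Chaining the equalities gives $X\sigma\eta' = X\eta'$, as required.

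For the second conclusion, the part concerning $\sigma\sigma_1$ is immediate: by associativity of substitution composition and then the hypotheses, $(\sigma\sigma_1)\eta' = \sigma(\sigma_1\eta') = \sigma\eta' = \eta'$, so $\eta' \in Sol_{\cdom}(\sigma\sigma_1)$. For the part concerning $\Pi\sigma_1$, I would apply the Substitution Lemma~\ref{sl} with substitution $\sigma_1$ and valuation $\eta'$, obtaining $\eta' \in Sol_{\cdom}(\Pi\sigma_1) \Leftrightarrow \sigma_1\eta' \in Sol_{\cdom}(\Pi)$; since $\sigma_1\eta' = \eta'$, this reduces the goal to $\eta' \in Sol_{\cdom}(\Pi)$. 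To get this from $\eta \in Sol_{\cdom}(\Pi)$, I would use that $\eta$ and $\eta'$ agree on $var(\Pi)$ (because $\overline{Y'}$ is away from $var(\Pi)$), together with the evident locality fact that $Sol_{\cdom}(\Pi)$ depends only on the values assigned to the variables occurring in $\Pi$. That locality fact is not an explicitly named lemma, but it is proved by a routine induction on the syntactic structure of the constraints in $\Pi$ following Definition~\ref{defPrimSol}; the only mildly delicate case is $\exists X\,\pi$, handled by modifying a witness valuation off $X$.

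I do not expect any genuine obstacle: the argument is essentially bookkeeping. The one point to handle with care is the threefold use of the freshness hypothesis on $\overline{Y'}$ — away from $vran(\sigma)$ (so $\eta,\eta'$ agree on ranges of bindings), away from $vdom(\sigma)$ (so $X \notin \overline{Y'}$ in the second case of the first step), and away from $var(\Pi)$ (so $\eta,\eta'$ agree on the variables of $\Pi$) — plus the minor locality observation just mentioned, which is the only ingredient not already supplied verbatim by the earlier definitions and the Substitution Lemma.
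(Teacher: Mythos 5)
Your proposal is correct and follows essentially the same route as the paper's proof: the same case split on $vdom(\sigma)$ using the freshness of $\overline{Y'}$ to establish $\sigma\eta'=\eta'$, the same one-line computation for $\sigma\sigma_1\eta'=\eta'$, and the same use of the Substitution Lemma~\ref{sl} plus $\sigma_1\eta'=\eta'$ for $\Pi\sigma_1$. The only difference is that you make explicit (and rightly flag as needing a routine structural induction) the locality fact that $Sol_{\cdom}(\Pi)$ depends only on the values of $var(\Pi)$, which the paper invokes implicitly in its "which amounts to" step.
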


\begin{proof}[Proof of Lemma \ref{ACL}]
In what follows  we can assume $\sigma \eta = \eta$ due to the
hypothesis $\eta \in Sol_{\cdom}(\sigma)$.

We prove $\sigma \eta' = \eta'$ by showing that $X \sigma \eta' = X
\eta'$ holds for any variable $X \in \var$. This is trivial for $X
\notin vdom(\sigma)$. For $X \in vdom(\sigma)$, we can assume that
$\overline{Y'}$ is away from $X$ and  $var(X\sigma)$; therefore
$\eta' =_{X, var(X\sigma)} \eta$ and hence $X \sigma \eta' = X
\sigma \eta = X  \eta =X  \eta'$ (where the assumption $\sigma \eta
= \eta$ has been used at the 2nd step).

Now we prove  $\eta' \in Sol_{\cdom}(\Pi \sigma_1)$. Because of the
Substitution Lemma \ref{sl}, this is equivalent to $\sigma_1\eta'
\in Sol_{\cdom}(\Pi)$, which amounts to $\eta \in Sol_{\cdom}(\Pi)$
due to the hypothesis $\sigma_1 \eta' = \eta'$, $\eta' =_{\setminus
\overline{Y'}} \eta$ and $\overline{Y'}$ away from $var(\Pi)$. But
$\eta \in Sol_{\cdom}(\Pi)$ is also ensured by the hypothesis.

Finally, $\eta' \in Sol_{\cdom}(\sigma\sigma_1)$ is equivalent to
$\sigma \sigma_1 \eta' = \eta'$, which can be proved as follows:
$\sigma \sigma_1 \eta' = \sigma  \eta' =  \eta'$ (where the 1st step
relies on the  assumption $\sigma_1\eta' = \eta'$ and the 2nd step
relies on a previously proved equality).\hfill
\end{proof}

\begin{proof}[Proof of Theorem  \ref{hsolver}]

Consider the $sts$ for $\herbrand$ stores with transition relation
$\sts{\herbrand}{\varx}$ as specified  in Table \ref{htable} in
Subsection \ref{hdom}, implicitly assuming that the notation used
for the various $str$s is exactly the same as there. We prove that
this $sts$ satisfies the six properties enumerated in Definition
\ref{defpsts}. The last  one (namely {\bf Local Completeness}) holds
for $\mathcal{URS}$-free steps, where
$\mathcal{URS} = \{{\bf OH3}, {\bf OH7}, {\bf H13}\}$
is the set of unsafe $\herbrand$-$str$s, as explained in Subsection \ref{hdom}.

\begin{enumerate}
\item
{\bf  Fresh Local Variables Property:} The specification of  the
$str$s in Table \ref{htable} clearly guarantees this property.
\item
{\bf Safe Bindings Property:} An inspection of Table \ref{htable}
shows that the $str$s {\bf H1} and {\bf H2} bind a variable to a
constant, and the other $str$s never bind a variable $X \in \varx$.
Therefore, this property is also satisfied.
\item
{\bf Finitely Branching Property:} This property holds because those
$str$s that  allow a non-deterministic choice of the next store
provide only finitely many possibilities.
\item
{\bf Termination Property:} Given  a $\herbrand$ store $\Pi\, \Box\,
\sigma$ and a set $\varx \subseteq cvar(\Pi)$, we define a 5-tuple
of natural numbers $||\Pi\, \Box\, \sigma||_{\varx}  =_{def} (P_1,
P_2, P_3, P_4, P_5) \in \mathbb{N}^5$ where
\begin{itemize}
\item [\bf P$_1$] is the  number of occurrences of atomic constraints in $\Pi$ which are {\em unsolved} w.r.t. $\varx$.
In this context, an atomic constraint $\pi$ occurring in $\Pi$ is
said to be unsolved w.r.t. $\varx$ iff some of the $str$s can be
applied taking $\pi$ as the selected atomic constraint.
\item [\bf P$_2$]  is the sum of the depths of all the occurrences of variables $X \in \varx$ within patterns in $\Pi$.
\item [\bf P$_3$]  is the sum of the syntactical sizes of all the patterns occurring in $\Pi$.
\item [\bf P$_4$]  is the number of {\em unsolved} occurrences of obviously demanded variables in $\Pi$.
In this context, an occurrence of an obviously demanded variable $X$
in $\Pi$ is called solved iff $X$ occurs in a constraint of the form
$X$ {\tt ==} $X$, and unsolved otherwise.
\item [\bf P$_5$]  is the number of occurrences of {\em misplaced} variables in $\Pi$.
In this  context, misplaced occurrences of $X$ in $\Pi$ are those
occurrences of the form $t$ {\tt  ==} $X$ or $t$ {\tt /=} $X$, with
$t \in \var$ and $X \neq t$.
\end{itemize}

\begin{table}[h]
\begin{center}
{\scriptsize
\begin{tabular}{lccccc}
\hline
{\bf RULES} ~~~~~~& ~~P$_1$~~ & ~~P$_2$~~ & ~~P$_3$~~ & ~~P$_4$~~ & ~~P$_5$\\
\hline
{\bf H$_1$} & $\geq$ & $\geq$ & $\geq$ &  $>$ &  \\
\hline
{\bf H$_2$} & $\geq$ & $\geq$ & $\geq$ &  $>$ &  \\
\hline
{\bf H$_3$} & $\geq$ & $\geq$ & $>$ &    &  \\
\hline
{\bf H$_4$} & $\geq$ & $\geq$ & $\geq$ &  $\geq$ & $>$\\
\hline
{\bf H$_5$} & $>$ &   &   &   &  \\
 \hline
{\bf H$_6$} & $>$ &  &  &    &  \\
\hline
{\bf H$_7$} & $\geq$ & $\geq$ & $>$ &  &  \\
 \hline
{\bf H$_8$} & $>$ &  &   &   &  \\
\hline
{\bf H$_9$} & $>$ &   &   &    &  \\
\hline
{\bf H$_{10}$} & $\geq$ & $\geq$ & $\geq$ &  $\geq$ & $>$\\
\hline
{\bf H$_{11a}$} & $\geq$ & $>$ &   &   &  \\
\hline
{\bf H$_{11b}$} & $>$ &  &   &   &  \\
\hline
{\bf H$_{12}$} & $>$ &  &   &  & \\
\hline
{\bf H$_{13}$} & $>$ &  &   &   &  \\
\hline
\end{tabular}
\caption{Well-founded progress ordering for
$>_{lex}$}\label{hordering} }
\end{center}
\end{table}
 \vspace*{-.3cm}

Let $>_{lex}$ be the lexicographic ordering induced by
$>_\mathbb{N}$ over $\mathbb{N}^5$. We claim that:
$$(\star)\,\, \Pi\, \Box\, \sigma \vdash\!\!\vdash_{\herbrand,\, \varx}\Pi'\, \Box\, \sigma'\,
\Rightarrow\, ||\Pi\, \Box\, \sigma||_{\varx} >_{lex} ||\Pi'\,
\Box\, \sigma'||_{\varx}$$ This is justified by Table
\ref{hordering}, which shows the behaviour of the different $str$s
w.r.t.  $>_{lex}$. In order to understand the table, note that two
different cases have been distinguished for the application of the
$str$ {\bf H$_{11}$}, namely:
\begin{itemize}
\item {\bf H$_{11a}$} Application of {\bf H$_{11}$} choosing a value of $i$ such that $\varx \cap var(t_i) \neq \emptyset$.
\item {\bf H$_{11b}$} Application of {\bf H$_{11}$} choosing a value of $i$ such that $\varx \cap var(t_i) = \emptyset$.
\end{itemize}
Since $>_{lex}$ is a well-founded ordering, termination of
$\sts{\herbrand}{\varx}$ can be concluded from $(\star)$. The reader
is referred to Section 2.3 in  \cite{BN98} for more information on
this proof technique.
\item
{\bf Local Soundness Property:} Given a $\herbrand$ store $\Pi\,
\Box\, \sigma$ and a set $\varx \subseteq odvar_{\herbrand}(\Pi)$,
we must prove that the union
$$\bigcup \{Sol_{\herbrand}(\exists\overline{Y'}(\Pi' \, \Box\, \sigma')) \mid
\Pi \, \Box \,  \sigma \vdash\!\!\vdash_{\herbrand,\, \varx}  \Pi'\,
\Box\, \sigma',\, \overline{Y'}  = var(\Pi' \, \Box \,  \sigma')
\setminus var(\Pi \,  \Box \, \sigma)\}$$ is a subset of
$\sol{\cdom}{\Pi\, \Box\, \sigma}$. Obviously, it suffices to prove
the inclusion
$$(\dagger)\, Sol_{\herbrand}(\exists\overline{Y'}(\Pi' \, \Box\, \sigma')) \subseteq Sol_{\herbrand}(\Pi\, \Box\, \sigma)$$
for each $\Pi' \, \Box\, \sigma'$ such that $\Pi \, \Box \,  \sigma
\vdash\!\!\vdash_{\herbrand,\, \varx} ! \, \, \Pi'\, \Box\, \sigma'
$ with $\overline{Y'}  = var(\Pi' \, \Box \, \sigma') \setminus
var(\Pi \,  \Box \, \sigma)$. However $(\dagger)$ is an easy
consequence of
$$(\dagger \dagger)\, Sol_{\herbrand}(\Pi' \, \Box\, \sigma') \subseteq Sol_{\herbrand}(\Pi\, \Box\, \sigma)$$
In fact, assuming $(\dagger \dagger)$ and an arbitrary $\eta \in
Sol_{\herbrand}(\exists\overline{Y'}(\Pi' \, \Box\, \sigma'))$,
there must be some $\eta' \in Sol_{\herbrand}(\Pi' \, \Box\,
\sigma')$ such that $\eta =_{\setminus\overline{Y'}} \eta'$. Then,
$\eta' \in Sol_{\herbrand}(\Pi \, \Box\, \sigma)$ because of $(\dagger
\dagger)$, and thus $\eta \in Sol_{\herbrand}(\Pi \, \Box\, \sigma)$
because $\eta =_{\setminus \overline{Y'}} \eta'$ and $\overline{Y'}
\cap var(\Pi \,  \Box \, \sigma) = \emptyset$.

Having proved that $(\dagger \dagger)$ entails $(\dagger)$, we proceed to
prove $(\dagger \dagger)$ by a case distinction according to the $str$
used in the step $\Pi \, \Box \,  \sigma
\vdash\!\!\vdash_{\herbrand,\, \varx}\, \Pi'\, \Box\, \sigma'$. In
each case, we assume that the stores $\Pi\, \Box\, \sigma$ and
$\Pi'\, \Box\, \sigma'$ occurring in $(\dagger \dagger)$ have exactly
the form displayed for the corresponding transformation in the Table
\ref{htable}  displayed in Subsection \ref{hdom}.  For instance, in
the case of transformation {\bf H1} we write $(t$ {\tt ==} $s)$
$\to!$ $R,~\Pi$ $\Box$ $\sigma$ in place of $\Pi\, \Box\, \sigma$.
Moreover, in all the cases we silently use the fact that the
constraints and variables within any store are not affected by the
substitution kept in that store.

\begin{itemize}
\item [\bf H1]
Assume $\eta \in Sol_{\herbrand}((t$ {\tt ==} $s,\, \Pi)\sigma_1\,
\Box\, \sigma\sigma_1)$. Then $\eta \in Sol_{\herbrand}((t$ {\tt ==}
$s,\,\Pi)\sigma_1)\, \cap\, Sol_{\herbrand}(\sigma\sigma_1)$. We
must prove $\eta \in Sol_{\herbrand}((t$ {\tt ==} $s)\, \to!\, R,\,
\Pi\, \Box\, \sigma)$.

Since $(t$ {\tt ==} $s,\Pi) = (t$ {\tt ==} $s,\,\Pi)\sigma$, we can
infer $\eta \in Sol_{\herbrand}(t$ {\tt ==} $s,~\Pi)\cap
Sol_{\herbrand}(\sigma)$ from our assumptions and Lemma \ref{ASL}.

It remains to prove that $\eta \in Sol_{\herbrand}((t$ {\tt ==}
$s)\, \to!\, R)$. Since we already know that $\eta \in
Sol_{\herbrand}(t$ {\tt ==} $s$), it suffices to prove that $R\eta
=$ {\tt true}. But $\eta \in Sol_{\herbrand}(\sigma\sigma_1)$ means
$\sigma\sigma_1\eta = \eta$, and therefore $R\eta = R
\sigma\sigma_1\eta = R \sigma_1\eta =$ {\tt true} $\eta =$ {\tt true}.
\item [\bf H2]
Very similar to {\bf H1}.
\item [\bf H3]
Trivial. Clearly, $Sol_{\herbrand}(\overline{t_m {\tt ==} s_m}) =
Sol_{\herbrand}(h\, \overline{t_m}$ {\tt ==} $\, h\, \overline{s_m})$.
\item [\bf H4]
Trivial. Clearly, $Sol_{\herbrand}(X$ {\tt  ==} $t) =
Sol_{\herbrand}(t$ {\tt  ==} $X)$.
\item [\bf H5]
Assume $\eta \in Sol_{\herbrand}(tot(t),\,\Pi\sigma_1 \Box
\sigma\sigma_1)$. Then $t\eta$ is a total pattern and $\eta \in
Sol_{\herbrand}(\Pi\sigma_1) \cap  Sol_{\herbrand}(\sigma\sigma_1)$.
We must prove $\eta \in Sol_{\herbrand}(X$ {\tt ==} $t,\,\Pi\,\Box
\sigma)$.

Since $\Pi = \Pi \sigma$, we can infer $\eta \in
Sol_{\herbrand}(\Pi) \cap  Sol_{\herbrand}(\sigma)$ from our
assumptions and Lemma \ref{ASL}. It remains to prove that $\eta \in
Sol_{\herbrand}(X$ {\tt ==} $t)$. But $\eta \in
Sol_{\herbrand}(\sigma\sigma_1)$ means $\sigma\sigma_1\eta = \eta$.
Thus, $X\eta = X\sigma\sigma_1\eta = X \sigma_1\eta = t\eta$, which
implies $\eta \in Sol_{\herbrand}(X$ {\tt ==} $t)$, because $t\eta$
is total.
\item [\bf H6]
Trivial, because $\eta \in  Sol_{\herbrand}(\blacksquare)$ is false
for any $\eta$.
\item [\bf H7]   Trivial.
Clearly, $Sol_{\herbrand}(\overline{t_i\,  {\tt /=}\, s_i})
\subseteq Sol_{\herbrand}(h\, \overline{t_m}$ {\tt /=} $h\,
\overline{s_m})$.
\item [\bf H8]
Trivial, because $\eta \in Sol_{\herbrand}(h\, \overline{t_n}$ {\tt
/=} $h'\, \overline{s_m})$ holds for any $\eta$.
\item [\bf H9]
Trivial, for the same reason as {\bf H6}.
\item [\bf H10]
Trivial. Clearly, $Sol_{\herbrand}(X$ {\tt  /=} $t) =
Sol_{\herbrand}(t$ {\tt  /=} $X)$.
\item [\bf H11]
Assume $\eta \in Sol_{\herbrand}((Z_i$ {\tt /=}
$t_i,\,\Pi)\sigma_1\,\Box\,\sigma\sigma_1)$. Then $\eta \in
Sol_{\herbrand}((Z_i$ {\tt /=} $t_i)\sigma_1)$ and $\eta \in
Sol_{\herbrand}(\Pi \sigma_1)\, \cap\,
Sol_{\herbrand}(\sigma\sigma_1)$. We must prove  $\eta \in
Sol_{\herbrand}(X$ {\tt  /=} $c\, \tpp{t}{n},\,\Pi\,\Box\,\sigma)$.

Since $\Pi = \Pi \sigma$, we can infer $\eta \in
Sol_{\herbrand}(\Pi)\, \cap\, Sol_{\herbrand}(\sigma)$ from our
assumptions and Lemma \ref{ASL}. It remains to prove that $\eta \in
Sol_{\herbrand}(X$ {\tt  /=} $c\, \tpp{t}{n})$. Because of $\eta \in
Sol_{\herbrand}(\sigma\sigma_1)$, we know that $\sigma\sigma_1\eta =
\eta$. Therefore, it suffices to prove $\sigma\sigma_1\eta \in
Sol_{\herbrand}(X$ {\tt  /=} $c\, \tpp{t}{n})$,
which can be reasoned as follows:\\

\hspace*{1.5cm} $\sigma\sigma_1\eta \in Sol_{\herbrand}(X$ {\tt  /=}
$c\, \tpp{t}{n})$
$\Leftrightarrow_{(1)}$ $\eta \in Sol_{\herbrand}(X$ {\tt  /=} $c\, \tpp{t}{n})\sigma\sigma_1$ \\
\hspace*{1.5cm} $\Leftrightarrow$ $\eta \in Sol_{\herbrand}(X$ {\tt
/=} $c\, \tpp{t}{n})\sigma_1$
$\Leftarrow_{(2)} $ $\eta \in Sol_{\herbrand}(Z_i$ {\tt  /=} $t_i\sigma_1)$ \\
\hspace*{1.5cm} $\Leftarrow_{(3)}$ $\eta \in Sol_{\herbrand}(Z_i$ {\tt  /=} $t_i)\sigma_1$\\

where $(1)$ holds because of the Substitution  Lemma \ref{sl}, $(2)$ and $(3)$ hold by
construction of $\sigma_1$, and $\eta \in Sol_{\herbrand}(Z_i$ {\tt
/=} $t_i)\sigma_1$ holds because of the assumptions of this case.
\item [\bf H12]
Assume $\eta \in Sol_{\herbrand}(\Pi\sigma_1 \Box \sigma\sigma_1)$.
Then $\eta \in Sol_{\herbrand}(\Pi\sigma_1) \cap
Sol_{\herbrand}(\sigma\sigma_1)$. We must prove  $\eta \in
Sol_{\herbrand}(X$ {\tt  /=} $c\, \tpp{t}{n},\,\Pi\,\Box\,\sigma)$.

Since $\Pi = \Pi \sigma$, we can infer $\eta \in
Sol_{\herbrand}(\Pi)\, \cap\, Sol_{\herbrand}(\sigma)$ from our
assumptions and Lemma \ref{ASL}. It remains to prove that $\eta \in
Sol_{\herbrand}(X$ {\tt  /=} $c\, \tpp{t}{n})$. This is the case
because $X\eta = X\sigma\sigma_1\eta = X \sigma_1\eta = (d\,
\tpp{Z}{m})\eta$, where the first equality holds because of the
assumption $\eta \in  Sol_{\herbrand}(\sigma\sigma_1)$ and the third
equality holds by construction of $\sigma_1$.
\item [\bf H13]
Trivial, for the same reason as {\bf H6}.
\end{itemize}
\item
{\bf Local Completeness Property} for $\mathcal{URS}$-free steps:
Recall the set of unsafe $str$s $\mathcal{URS} = \{{\bf OH3}, {\bf
OH7}, {\bf H13}\}$ defined in Subsection \ref{hdom}. Assume a
$\herbrand$ store $\Pi\, \Box\, \sigma$ and a set $\varx \subseteq
odvar_{\herbrand}(\Pi)$, such that $\Pi\, \Box\, \sigma$ is
$\mathcal{URS}$-irreducible but  not in $\varx$-solved form. We must
prove that $\wtsol{\cdom}{\Pi\, \Box\, \sigma}$ is a subset of the
union
$$\bigcup \{WTSol_{\cdom}(\exists\overline{Y'}(\Pi' \Box\sigma'))\mid\Pi\Box\sigma
\vdash\!\!\vdash_{\herbrand,\varx}\Pi'\Box \sigma',
\overline{Y'}=var (\Pi' \Box\sigma')\setminus var(\Pi\Box\sigma)\}$$
Given any well-typed solution $\eta \in WTSol_{\herbrand}(\Pi \,
\Box \, \sigma)$ (which satisfies in particular $\sigma\eta =
\eta$), we must find $\Pi'\, \Box\, \sigma'$ and $\eta'$ such that
$$(\ddagger)\,\, \Pi \, \Box \,  \sigma \vdash\!\!\vdash_{\herbrand,\varx}  \, \Pi'\, \Box\, \sigma',\,\,
\eta' \in WTSol_{\herbrand}(\Pi' \, \Box \, \sigma'),\,\, \eta
=_{\setminus \overline{Y'}} \eta'$$ so that  $\eta \in
WTSol_{\herbrand}(\exists \overline{Y'}(\Pi' \, \Box \, \sigma'))$
will be ensured. Because of the assumptions on $\Pi\, \Box\,
\sigma$, there must be some $str$ {\bf H}$_i \notin \mathcal{URS}$
that can be used to transform $\Pi \, \Box \, \sigma$. Below we
analyze all the  possibilities for {\bf H$_i$}, considering all the
$str$s shown in Table \ref{htable} in Subsection \ref{hdom} except
{\bf OH3}, {\bf OH7} and  {\bf H13}. In all the cases we conclude
that the conditions $(\ddagger)$  displayed above can be ensured. When
considering different $str$s that can be alternatively applied to
one and the same store (as e.g. {\bf H1} and {\bf H2}) we group all
the possibilities within the same case, arguing that some rule in
the group can be chosen to transform $\Pi \, \Box \, \sigma$
ensuring $(\ddagger)$. In all the cases, we assume that the stores
$\Pi\, \Box\, \sigma$ and $\Pi'\, \Box\, \sigma'$ occurring in
$(\ddagger)$ have exactly the form displayed for the corresponding
transformation in Table \ref{htable}, we note the selected atomic
constraint as $\pi$, and we silently use the fact that the
constraints and variables within any store are not affected by the
substitution kept in that store.

\begin{itemize}
\item [\bf H1, H2]
In this case $\pi$ is $(t$ {\tt ==} $s)\, \to!\, R$, $\eta \in
WTSol_{\herbrand}(t$ {\tt ==} $s\, \to!\, R,\, \Pi\, \Box\, \sigma)$
and $\overline{Y'} = \emptyset$. Because of $\eta \in
WTSol_{\herbrand}(\pi)$, one of the two following subcases must
hold:
\begin{itemize}
\item[(a)] $\eta(R) =$ {\tt true} and $\eta \in WTSol_{\herbrand}(t$ {\tt ==} $s)$ or else
\item[(b)] $\eta(R) =$ {\tt false} and $\eta \in WTSol_{\herbrand}(t$ {\tt /=} $s)$
\end{itemize}
Assume that subcase (a) holds. Then, $(\ddagger)$  can be ensured by
transforming the given store with {\bf H1} and  proving $\eta' =
\eta \in WTSol_{\herbrand}(\Pi\sigma_1\, \Box\, \sigma\sigma_1)$.
Note that Lemma \ref{ACL} can be applied with $\overline{Y'} =
\emptyset$, $\eta' = \eta$ and $\sigma_1 = \{R \mapsto true\}$,
because the condition $\sigma_1 \eta = \eta$ follows trivially from
$\eta(R) =$ {\tt true}. Then, $\eta \in Sol_{\herbrand}(\Pi\sigma_1)
\cap Sol_{\herbrand}(\sigma\sigma_1)$ is ensured by Lemma \ref{ACL},
and $\eta$ obviously remains a well-typed solution.

Assume now that subcase (b) holds. Then a similar argument can be
used, but choosing {\bf H2} instead of {\bf H1}.
\item [\bf H3]
In this case $\pi$ is $h\, \tpp{t}{m}$ {\tt ==} $h\, \tpp{s}{m}$ and
$(\ddagger)$ can be ensured by choosing to transform the given store
with {\bf H3} and taking $\overline{Y'} = \emptyset$ and $\sigma ' =
\sigma$. Note that $h$ must be $m$-transparent because of the
$\mathcal{URS}$-freeness assumption, and the Transparency Lemma
\ref{trl} can be applied to ensure that $\eta$ remains a well-typed
solution of the new store.
\item [\bf H4]
In this case $\pi$ is $t$ {\tt ==} $X$, where $t$ is not a variable,
and $(\ddagger)$ can be trivially ensured by choosing to transform the
given store with {\bf H4} and taking $\overline{Y'} = \emptyset$ and
$\sigma ' = \sigma$.
\item [\bf H5]
In this case $\pi$ is $t$ {\tt ==} $X$, with $X \notin \varx, X
\notin var(t), X \neq t$. Moreover, $\eta \in WTSol_{\herbrand}(X$
{\tt ==} $t,\, \Pi\, \Box\, \sigma)$ and $\overline{Y'} =
\emptyset$. Then $(\ddagger)$ can be ensured by transforming the given
store with {\bf H5} and  proving $\eta' = \eta \in
WTSol_{\herbrand}(tot(t), \Pi\sigma_1\, \Box\, \sigma\sigma_1)$. The
assumption $\eta \in WTSol_{\herbrand}(\pi)$ means that $\eta(X) =
t\eta$ is a total pattern, so that $\eta(Y)$ is also a total pattern
for each variable $Y \in var(t)$. In these conditions, $\eta \in
\sol{\herbrand}{tot(t)}$ and $\sigma_1 \eta = \eta$ holds for
$\sigma_1 = \{X \mapsto t\}$. This allows to apply Lemma \ref{ACL}
with $\overline{Y'} = \emptyset$, $\eta' = \eta$ and $\sigma_1$,
ensuring that $\eta \in Sol_{\herbrand}(\Pi\sigma_1) \cap
Sol_{\herbrand}(\sigma\sigma_1)$. Obviously, $\eta$ remains a
well-typed solution.
\item [\bf H7]
In this case, $\pi$ is $h\, \tpp{t}{m}$ {\tt /=} $h\, \tpp{s}{m}$.
Because of $\eta \in WTSol_{\herbrand}(\pi)$, there must be some
index $i$ such that $1 \leq i \leq m$ and $\eta \in
WTSol_{\herbrand}(t_i$ {\tt /=} $s_i)$. Then $(\ddagger)$ can be
ensured by choosing to transform the given store with {\bf H7} and
this particular value of $i$, and taking $\overline{Y'} =
\emptyset$, $\sigma ' = \sigma$. Note that $h$ must be
$m$-transparent because of the $\mathcal{URS}$-freeness assumption,
and the Transparency Lemma \ref{trl} can be applied to ensure that
$\eta$ remains a well-typed solution of the new store.
\item [\bf H8]
In this case $\pi$ is $h\, \tpp{t}{n}$ {\tt  /=} $h'\, \tpp{s}{m}$
with $h \neq h'$ or $n \neq m$, and $(\ddagger)$ can be trivially
ensured by choosing to transform the given store with {\bf H8},
taking $\overline{Y'} = \emptyset$ and  $\sigma' = \sigma$.
\item [\bf H10]
This is a trivial case, similar to {\bf H4}.
\item [\bf H11, H12]
In this case $\pi$ is $X$ {\tt /=} $c\, \tpp{t}{n}$, with $X \notin
\varx,\, c \in DC^n$ and  $\varx \cap var(c\, \tpp{t}{n}) \neq
\emptyset$, $\eta \in WTSol_{\herbrand}(X$ {\tt /=} $c\,
\tpp{t}{n}\, \Pi\, \Box\, \sigma)$. Because of $\eta \in
WTSol_{\herbrand}(\pi)$, one of the two following subcases must hold
for $\eta(X)$:
\begin{itemize}
\item[(a)] $\eta(X) = c\, \tpp{s}{n}$, where $s_i${\tt /=} $t_i \eta$ holds for some $1\leq i \leq n$.
\item[(b)] $\eta(X) = d\, \tpp{s}{m}$, where $d \in DC^m$ belongs to the same datatype as $c$,
but $d \neq c$.
\end{itemize}

Assume that subcase (a) holds. Then $(\ddagger)$ can be ensured by
choosing to transform the given store with {\bf H11} and a
particular value of $i$ such that $s_i${\tt /=} $t_i \eta$ holds,
taking  $\overline{Y'} = \tpp{Z}{n}$, defining $\eta'$ as the
valuation that satisfies $\eta'(Z_j) = s_j$ for all $1\leq j \leq n$
and $\eta'(Y) = \eta(Y)$ for any other variable $Y$ and proving
$\eta' \in WTSol_{\herbrand}((Z_i$ {\tt /=} $t_i,\, \Pi)\sigma_1\,
\Box\, \sigma\sigma_1)$.

Obviously, $\eta =_{\setminus \overline{Y'}} \eta'$. Moreover,
$\sigma_1 \eta' = \eta'$, since $X\sigma_1 \eta' = (c\,
\tpp{s}{n})\eta' = c\, \tpp{s}{n} = X \eta = X \eta'$ and $Y\sigma_1
\eta' = Y \eta'$ for any variable $Y \neq X$. Therefore, Lemma
\ref{ACL} can be applied to ensure that $\eta' \in
Sol_{\herbrand}(\Pi \sigma_1) \cap Sol_{\herbrand}(\sigma\sigma_1)$.
Since $\eta$ was a well-typed solution and data constructors have
the transparency property (see Subsection \ref{types}), $\eta'$ can
be also well-typed under appropriated type assumptions for the new
variables $\overline{Y'} = \tpp{Z}{n}$ introduced by the
transformation step.  It only remains to prove that $\eta' \in
Sol_{\herbrand}((Z_i${\tt /=}$t_i)\sigma_1)$.
This can be reasoned by a chain of equivalences, ending with the condition known to hold in subcase (a): \\

\hspace*{1.5cm} $\eta' \in Sol_{\herbrand}((Z_i${\tt /=}$t_i)\sigma_1)$
$\Leftrightarrow_{(1)}$ $\sigma_1\eta' \in Sol_{\herbrand}(Z_i${\tt /=}$t_i)$ $\Leftrightarrow_{(2)}$ \\
\hspace*{1.5cm} $\eta' \in Sol_{\herbrand}(Z_i${\tt /=}$t_i)$
$\Leftrightarrow$ $\eta'(Z_i)${\tt /=}$t_i \eta'$
$\Leftrightarrow_{(3)}$
$s_i${\tt /=}$t_i \eta'$ \\

Note that $(1)$ holds because of Lemma \ref{sl}, $(2)$ holds because
$\sigma_1 \eta' = \eta'$, and $(3)$ holds by construction of
$\eta'$. This finishes the proof for this subcase.

Finally, assume now that subcase (b) holds. Then $(\ddagger)$ can be
ensured by choosing to transform the given store with {\bf H12} and
the particular data constructor $d \in DC^m$ for which we know that
$\eta(X) = d \, \tpp{s}{m}$, taking  $\overline{Y'} = \tpp{Z}{m}$,
defining $\eta'$ as the valuation that satisfies $\eta'(Z_j) = s_j$
for all $1\leq j \leq m$ and $\eta'(Y) = \eta(Y)$ for any other
variable $Y$ and proving  $\eta' \in WTSol_{\herbrand}(\Pi\sigma_1\,
\Box\, \sigma\sigma_1)$.

Obviously, $\eta =_{\setminus \tpp{Z}{m}} \eta'$. Moreover,
$\sigma_1 \eta' = \eta'$ can be easily checked, as in subcase (a).
Therefore, Lemma \ref{ACL} can be applied to ensure that $\eta' \in
Sol_{\herbrand}(\Pi \sigma_1) \cap Sol_{\herbrand}(\sigma\sigma_1)$.
Finally,  since $\eta$ was a well-typed solution, $\eta'$ is clearly
also well-typed under appropriated type assumptions for the new
variables $\overline{Y'} = \tpp{Z}{n}$ introduced by the
transformation step.
\end{itemize}
 \end{enumerate} 
\vspace*{-0.1cm}

Using items 1. to 6. above and Lemma \ref{psts}, we can now claim
that $solve^{\herbrand}$ satisfies the requirements for solvers
enumerated in  Definition \ref{defSolver}, except that the {\bf
Completeness Property} is guaranteed to hold only for safe (i.e.,
$\mathcal{URS}$-free) solver invocations and the {\bf Discrimination
Property} has not  been proved  yet.

The remark in item 1. of the proof of Lemma \ref{psts} allows to
rephrase the {\em Discrimination Property} as follows: if a given
$\herbrand$ store $\Pi\, \Box\, \sigma$ satisfies neither $(a)\,\,
\varx\, \cap\, odvar(\Pi) \neq \emptyset$ nor $(b)\,\, \varx\,
\cap\, var(\Pi) = \emptyset$ then $\Pi\, \Box\, \sigma$ can be
reduced by some $\sts{\herbrand}{\varx}$ transformation. Assume that
$\Pi\, \Box\, \sigma$ satisfies neither $(a)$ nor $(b)$. Because of
$\lnot\, (b)$, there must be some $\pi \in \Pi$ such that $(c)\,\,
\varx\, \cap\, var(\pi) \neq \emptyset$. Because of $\lnot\, (a)$,
this $\pi$ must satisfy $(d)\,\, \varx\, \cap\, odvar(\pi) =
\emptyset$, which together with $(c)$ entails  $(e)\,\, \varx\,
\cap\, cvar(\pi) \neq \emptyset$. Using $(d), (e)$ and reasoning by
case distinction on the syntactic form of $\pi$, we find in all the
cases some $\sts{\herbrand}{\varx}$ transformation which can be used
to transform the store $\Pi\, \Box\, \sigma$ taking $\pi$ as the
selected atomic constraint. The cases are as follows:

\begin{itemize}
\item
$\pi$ is $(t$ {\tt ==} $s)$ $\to!$ $R$. In this case the store can
be transformed by means of {\bf H1} or {\bf H2}.
\item
$\pi$ is  $h\, \tpp{t}{m}$ {\tt ==} $h\, \tpp{s}{m}$. In this case
the store can be transformed by means of {\bf H3}.
\item
$\pi$ is $t$ {\tt  ==} $X$ with $t \notin \var$. In this case the
store can be transformed by means of {\bf H4}.
\item
$\pi$ is $X$ {\tt ==} $t$ with $X \notin var(t)$, $X \neq t$.
Because of $(d)$ above we know that $X \notin \varx$, and the store
can be transformed by means of {\bf H5}.
\item
$\pi$ is $X$ {\tt ==} $t$ with  $X \in var(t)$, $X \neq t$. In this
case the store can be transformed by means of {\bf H6}.
\item
$\pi$ is $h\, \tpp{t}{m}$ {\tt  /=} $h\, \tpp{s}{m}$. In this case
the store can be transformed by means of {\bf H7}.
\item
$\pi$ is $h\, \tpp{t}{n}$ {\tt  /=} $h'\, \tpp{s}{m}$ with $h \neq
h'$ or $n \neq m$. In this case the store can be transformed by
means of {\bf H8}.
\item
$\pi$ is $t$ {\tt  /=} $t$ with $t \in \mathcal {V}\!\!ar \cup DC
\cup DF \cup SPF$. In this case the store can be transformed by
means of {\bf H9}.
\item
$\pi$ is $t$ {\tt  /=} $X$ with $t \notin \var$. In this case the
store can be transformed by means of {\bf H10}.
\item
$\pi$ is $X$ {\tt  /=} $c\, \tpp{t}{n}$, with $c \in DC^n$. Because
of $(d), (e)$  above we know that $X \notin \varx$ and $\varx \cap
var(c\, \tpp{t}{n}) \neq \emptyset$. Therefore, the store can be
transformed by means of {\bf H11} or {\bf H12}.
\item
$\pi$ is $X$ {\tt  /=} $h\, \tpp{t}{m}$ with $h \notin DC^m$.
Because of $(d), (e)$  above we know that $X \notin \varx$ and
$\varx \cap var(h\, \tpp{t}{m}) \neq \emptyset$. Therefore, the
store can be transformed by means of {\bf H13}.
\end{itemize}
This completes the proof of the Discrimination Property and the
Theorem.\hfill
\end{proof}

We refrain to include in this Appendix a proof of Theorem
\ref{msolver}, stated in Subsection \ref{ourcdom} and ensuring the
properties required for the solver $solve^{\mdom}$. The proof would
follow exactly the same pattern as the previous one, but with much
simpler arguments, since the $sts$ for $\mdom$-stores involves no
decompositions. Actually, this $sts$ is finitely branching,
terminating, locally sound and locally complete. Therefore, Lemma
\ref{psts} can be applied to ensure all the properties required for
solvers, including unrestricted completeness.

We end this subsection with the proof of Theorem
\ref{sumProperties}, ensuring that the amalgamated sums presented in
Subsection \ref{cdomains} are well defined domains behaving as a
conservative extension of their components.

\begin{proof}[Proof of Theorem  \ref{sumProperties}]

Assume $\sdom = \cdom_1 \oplus \cdots \oplus \cdom_n$ of signature
$\Sigma$ constructed as the amalgamated sum of $n$  pairwise
joinable domains $\cdom_i$ of signatures $\Sigma_i$, $1 \leq i \leq
n$. Note that the information ordering $\sqsubseteq$ introduced in
Subsection \ref{expressions} has the same syntactic definition for
any specific domain signature. Note also that any arguments
concerning well typing needed for this  proof can refer to the
principal type declarations within signature $\Sigma$, which
includes those within signature $\Sigma_i$ for all $1 \leq i \leq n$.
Let us now prove the four claims of the theorem in order.

\begin{enumerate}
\item
$\mathcal{S}$ is well-defined as a constraint domain; i.e., the
interpretations of primitive  function symbols $p \in SPF$ in
$\mathcal{S}$ satisfy the four conditions listed in Definition
\ref{dcdom} from Subsection \ref{dom}. We consider them one by  one,
assuming that $p$ is not the primitive {\tt ==} except in the fourth
condition.
\begin{enumerate}
\item {\bf Polarity}: Assume $p \in SPF^m$ and  $\overline{t}_m, \overline{t'}_m, t, t' \in  \mathcal{U}_{\sdom}$
such that $p^{\sdom}\, \overline{t}_m \to t$, $\overline{t}_m
\sqsubseteq \overline{t'}_m$ and $t \sqsupseteq t'$. In case that
$t$ is $\bot$, we trivially conclude $p^{\sdom}\, \overline{t'}_n
\to t'$ because $t'$ must be also $\bot$. Otherwise, by the first
assumption and the definition of $p^{\sdom}$, there must be some $1
\leq i \leq n$ and some $\overline{t''}_m, t'' \in
\mathcal{U}_{\cdom_i}$ such that $\overline{t''}_m \sqsubseteq
\overline{t}_m$, $t'' \sqsupseteq t$ and $p^{\cdom_i}\,
\overline{t''}_m \to t''$. Since $\overline{t''}_m \sqsubseteq
\overline{t}_m \sqsubseteq \overline{t'}_m$ and $t'' \sqsupseteq t
\sqsupseteq t'$, $p^{\cdom_i}\, \overline{t''}_m \to t''$ implies
$p^{\sdom}\, \overline{t'}_m \to t'$ by definition of $p^{\sdom}$.
\item {\bf Radicality}:  Assume $p \in SPF^m$ and $\overline{t}_m, t \in \mathcal{U}_{\sdom}$
such that $p^{\sdom}\, \overline{t}_m \to t$ and $t$ is not $\bot$.
By the definition of $p^{\sdom}$ there must be some $1 \leq i \leq
n$ and some $\overline{t''}_m,\, t'' \in  \mathcal{U}_{\cdom_i}$
such that $\overline{t''}_m \sqsubseteq \overline{t}_m$, $t''
\sqsupseteq t$ and $p^{\cdom_i}\, \overline{t''}_m \to t''$. By the
radicality condition for  $\cdom_i$, there must be some total $t'
\in \mathcal{U}_{\cdom_i}$ such that $p^{\cdom_i}\, \overline{t''}_m
\to t' \sqsupseteq t''$. Note that $t' \sqsupseteq t'' \sqsupseteq
t$, and because of $\overline{t''}_m \sqsubseteq \overline{t}_m$ and
$t' \sqsupseteq t'$, $p^{\cdom_i}\, \overline{t''}_m \to t'$ implies
$p^{\sdom}\, \overline{t}_m \to t'$ by definition of $p^{\sdom}$.
\item {\bf Well-typedness}: Assume $p \in SPF^m$,
a monomorphic instance $\overline{\tau'}_m \to \tau'$ of $p$'s
principal type and $\overline{t}_m, t \in \mathcal{U}_{\sdom}$ such
that $\Sigma\, \vdash_{WT} \overline{t}_m :: \overline{\tau'}_m$ and
$p^{\sdom}\, \overline{t}_m \to t$. In case that $t$ is $\bot$, the
type judgement $\Sigma\, \vdash_{WT} \bot :: \tau'$ holds trivially.
Otherwise,  by the assumption $p^{\sdom}\, \overline{t}_m \to t$ and
the definition of $p^{\sdom}$ there exist $1 \leq i \leq n$ and
$\overline{t'}_m, t' \in  \mathcal{U}_{\cdom_i}$ such that
$\overline{t'}_m \sqsubseteq \overline{t}_m$, $t' \sqsupseteq t$ and
$p^{\cdom_i}\, \overline{t'}_m \to t'$. Moreover, since
$\overline{t'}_m \sqsubseteq \overline{t}_m$ the assumption
$\Sigma\, \vdash_{WT} \overline{t}_m :: \overline{\tau'}_m$ and the
Type Preservation Lemma \ref{tpl} imply $\Sigma\, \vdash_{WT}
\overline{t'}_m :: \overline{\tau'}_m$ Then, the well-typedness
assumption for $\cdom_i$ guarantees $\Sigma\, \vdash_{WT} t' ::
\tau'$, which implies $\Sigma\, \vdash_{WT} t :: \tau'$ because of
$t \sqsubseteq t'$ and Lemma \ref{tpl}.
\item {\bf Strict Equality}: The primitive {\tt ==} (in case that it belongs to $SPF$)
is interpreted as {\em strict equality} over $\mathcal{U}_{\sdom}$.
This is automatically guaranteed by the amalgamated sum
construction.
\end{enumerate}
\item
Given an index $1 \leq i \leq n$, a primitive function symbol $p \in
SPF_i^m$ and values $\overline{t}_m, t \in  \mathcal{U}_{\cdom_i}$,
we must prove: $p^{\cdom_i}\, \overline{t}_m \to t$ iff $p^{\sdom}\,
\overline{t}_m \to t$. By definition of $p^{\sdom}$, we know that
$p^{\sdom}\, \overline{t}_m \to t$ holds iff there are some
$\overline{t'}_m, t' \in  \mathcal{U}_{\cdom_i}$ such that
$\overline{t'}_m \sqsubseteq \overline{t}_m$, $t' \sqsupseteq t$ and
$p^{\cdom_i}\, \overline{t'}_m \to t'$. But this condition is
equivalent to $p^{\cdom_i} \overline{t}_m \to t$ because
$p^{\cdom_i}$ satisfies the polarity property.
\item
Given an index $1 \leq i \leq n$, a set of primitive constraints
$\Pi  \subseteq  APCon_{\cdom_i}$ and a valuation  $\eta  \in
Val_{\cdom_i}$, we will prove: $\eta \in Sol_{\cdom_i}(\Pi)
\Leftrightarrow \eta \in Sol_{\mathcal{S}}(\Pi)$.
The corresponding equivalence for the case of well-typed solutions follows then easily.
Since
$$\eta \in Sol_{\cdom_i}(\Pi)
\Leftrightarrow \forall \pi \in \Pi:\, \eta \in Sol_{\cdom_i}(\pi)
\Leftrightarrow  \forall \pi \in \Pi:\, \eta \in
Sol_{\sdom}(\pi) \Leftrightarrow\eta \in Sol_{\sdom}(\Pi)$$
it suffices to prove the equivalence
$$(\star)\,  \eta \in Sol_{\cdom_i}(\pi) \Leftrightarrow \eta \in Sol_{\sdom}(\pi)$$
for a fixed $\pi \in \Pi$.
Note that $\pi$ must have the form $p\, \overline{t}_m\, \to!\, t$ for some $p \in SPF_i^m$, $\overline{t}_m
\in  Pat_{\cdom_i}$ and total $t \in Pat_{\cdom_i}$. In case that
$p$ is {\tt ==}, $(\star)$ is trivially true because $t_1\eta${\tt
==}$^{\cdom_i} t_2\eta\, \to! t\eta$ and $t_1\eta${\tt ==}$^{\sdom}
t_2\eta\, \to! t\eta$ hold under the same conditions, as specified
in Definition \ref{dcdom} from Subsection \ref{dom}. In case that
$p$ is not {\tt ==}, let $\overline{t'}_m = \overline{t}_m\eta$ and
$t' = t\eta$. If $t'$ is not a total pattern, then neither $\eta \in
Sol_{\cdom_i}(\pi)$ nor $\eta \in Sol_{\sdom}(\pi)$ hold. Otherwise,
$$\eta \in Sol_{\cdom_i}(\pi)
\Leftrightarrow p^{\cdom_i} \overline{t'}_m\, \to\, t'
\Leftrightarrow_{(\star\star)} p^{\sdom} \overline{t'}_m\, \to\, t'
\Leftrightarrow \eta \in Sol_{\sdom}(\pi)$$ where the $(\star\star)$
step holds by the second item of this theorem, because
$\overline{t'}_m, t' \in \mathcal{U}_{\cdom_i}$.
\item
Given an index $1 \leq i \leq n$, a set of $\cdom_i$-specific primitive constraints
$\Pi  \subseteq  APCon_{\cdom_i}$ and a valuation  $\eta  \in
Val_{\sdom}$, we will prove:
$\eta \in Sol_{\sdom}(\Pi)\, \Leftrightarrow\, \trunc{\eta}{\cdom_i} \in Sol_{\cdom_i}(\Pi)$.
The corresponding equivalence for the case of well-typed solutions follows then easily.

First we prove $\eta \in Sol_{\sdom}(\Pi)\, \Leftarrow\, \trunc{\eta}{\cdom_i} \in Sol_{\cdom_i}(\Pi)$.
Assume $\trunc{\eta}{\cdom_i} \in Sol_{\cdom_i}(\Pi)$. Applying the previous item of this theorem,
we obtain $\trunc{\eta}{\cdom_i} \in Sol_{\sdom}(\Pi)$. Since $\trunc{\eta}{\cdom_i} \sqsubseteq \eta$,
we can apply the Monotonicity Lemma \ref{ml} and get $\eta \in Sol_{\sdom}(\Pi)$, as desired.

Now we prove $\eta \in Sol_{\sdom}(\Pi)\, \Rightarrow\, \trunc{\eta}{\cdom_i} \in Sol_{\cdom_i}(\Pi)$.
Assume $\eta \in Sol_{\sdom}(\Pi)$. Since $\Pi$ is $\cdom_i$-specific,
we can also assume that $\eta(X) \in \uni{\cdom_i}$ for all $X \in var(\Pi)$.
Then $\eta(X) = \trunc{\eta}{\cdom_i}(X)$ holds for all $X \in var(\Pi)$,
and therefore $\trunc{\eta}{\cdom_i} \in Sol_{\sdom}(\Pi)$,
which implies $\trunc{\eta}{\cdom_i} \in Sol_{\cdom_i}(\Pi)$,
again because of the previous item of this theorem.
\end{enumerate}
\hfill
\end{proof}
\vspace*{-0.5cm}

\subsection{Properties of the $CCLNC(\cdom)$ Calculus} \label{PropertiesCalculus}

The second part of the Appendix  includes the proofs of the main results stated in Subection \ref{SC}.
First we present an auxiliary result which is not stated in the main text of the article.
The $(WT)Sol$ notation is intended to indicate that the lemma holds both for plain solutions and
for well-typed solutions.

\begin{lemma}[Auxiliary Result for Checking Goal Solutions] \label{cgs}
Let $G$ $\equiv$ $\exists \overline{U}.$ $P$ $\Box$ $C$ $\Box$ $M$ $\Box$ $H$ $\Box$ $F$ $\Box$ $R$
be an admissible goal for  a given $\cflp{\mathcal{C}}$-program $\prog$.
Assume new variables $\overline{Y'}$ away from $\overline{U}$ and the other variables in $G$,
and two valuations $\mu, \hat{\mu} \in Val_{\ccdom}$ such that
$\hat{\mu} =_{\backslash \overline{U}, \overline{Y'}} \mu$
and $\hat{\mu} \in (WT)Sol_{\prog}(P\, \Box\, C \, \Box \,M\,\Box\,H\,\Box\,F\,\Box\,R)$.
Then $\mu \in (WT)Sol_{\prog}(G)$.
\end{lemma}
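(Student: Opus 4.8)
The statement is a routine ``renaming/weakening'' lemma for goal solutions, and I would prove it directly from Definition \ref{defGoalSol}. Recall that, by definition, $\mu \in Sol_{\prog}(G)$ holds precisely when there exists some $\mu' \in Val_{\ccdom}$ with $\mu' =_{\backslash \overline{U}} \mu$ and $\mu' \in Sol_{\prog}(P\, \Box\, C \, \Box \,M\,\Box\,H\,\Box\,F\,\Box\,R)$; the well-typed variant is analogous, requiring in addition that $(P\, \Box\, C \, \Box \,M\,\Box\,H\,\Box\,F\,\Box\,R)\mu'$ be well-typed. So the task reduces to producing a suitable witness $\mu'$ out of the given $\hat{\mu}$.

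The natural choice is to take $\mu'$ to agree with $\hat{\mu}$ on all variables occurring in $G$ (in particular on $\overline{U}$, and on the free variables of $P$, $C$, $M$, $H$, $F$, $R$), and to agree with $\mu$ outside of $\overline{Y'}$; since $\overline{Y'}$ is chosen away from $\overline{U}$ and from every other variable in $G$, and $\hat\mu =_{\backslash \overline{U},\overline{Y'}} \mu$, these two requirements are consistent. Concretely I would set $\mu' = \hat{\mu}\restrict_{\mathcal V\!\!ar \setminus \overline{Y'}} \cup \mu\restrict_{\overline{Y'}}$ (a well-defined valuation because $\overline{Y'}$ is disjoint from $var(G)$). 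Then $\mu' =_{\backslash \overline{U}} \mu$: for a variable $X \notin \overline{U}$ that is not in $\overline{Y'}$ we have $X\mu' = X\hat\mu = X\mu$ by the hypothesis $\hat\mu =_{\backslash \overline{U},\overline{Y'}}\mu$, and for $X \in \overline{Y'}$ we have $X\mu' = X\mu$ by construction. This establishes the first condition needed by Definition \ref{defGoalSol}.

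For the second condition, I would observe that $\mu'$ and $\hat\mu$ agree on $var(P\, \Box\, C \, \Box \,M\,\Box\,H\,\Box\,F\,\Box\,R)$, which is contained in $var(G)$ and hence disjoint from $\overline{Y'}$. Because the relations ``$\mu' \in Sol_{\prog}(P\, \Box\, C)$'' and ``$\mu' \in Sol_{\ccdom}(M\,\Box\,H\,\Box\,F\,\Box\,R)$'' depend only on the restriction of $\mu'$ to the (free) variables of those components --- this is immediate for $CRWL(\ccdom)$ proofs, which only mention variables occurring in the judgement, and for constraint-store solutions it follows from Definition \ref{defStoSol} together with the fact that $vdom(\eta) \supseteq fvar(\pi)$ for any solution $\eta$ of a primitive constraint --- we may replace $\hat\mu$ by $\mu'$ throughout and conclude $\mu' \in Sol_{\prog}(P\, \Box\, C \, \Box \,M\,\Box\,H\,\Box\,F\,\Box\,R)$ from the hypothesis $\hat{\mu} \in Sol_{\prog}(P\, \Box\, C \, \Box \,M\,\Box\,H\,\Box\,F\,\Box\,R)$. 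The well-typed case is identical, using in addition that well-typedness of $(P\, \Box\, C \, \Box \,M\,\Box\,H\,\Box\,F\,\Box\,R)\hat\mu$ is a property of the substitution's action on the variables of that expression, so it transfers to $\mu'$. Combining both conditions gives $\mu' =_{\backslash \overline{U}}\mu$ with $\mu'$ a solution of the goal body, which is exactly $\mu \in (WT)Sol_{\prog}(G)$.

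\textbf{Main obstacle.} There is no real difficulty here; the only point requiring a modicum of care is the bookkeeping of variable sets --- making sure that $\overline{Y'}$ being ``away from $\overline{U}$ and the other variables in $G$'' is enough to guarantee that the patched valuation $\mu'$ is both well-defined and simultaneously $=_{\backslash\overline{U}}\mu$ and coincident with $\hat\mu$ on $var(G)$. Once that is spelled out, the argument is purely a matter of unfolding Definition \ref{defGoalSol} and invoking the fact that the semantic notions involved are insensitive to the behaviour of a valuation on variables not occurring in the relevant syntactic object.
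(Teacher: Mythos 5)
Your proof is correct and is essentially the paper's own argument: your patched valuation $\mu'$ (agreeing with $\hat\mu$ off $\overline{Y'}$ and with $\mu$ on $\overline{Y'}$) is exactly the valuation $\hat{\hat{\mu}}$ used in the paper, and both proofs then check $\mu' =_{\backslash \overline{U}} \mu$ by the same case distinction and transfer the solution/well-typedness property from $\hat\mu$ to $\mu'$ using that $\overline{Y'}$ does not occur in $G$. Your extra remarks spelling out why the semantic notions only depend on the variables occurring in the goal body are a harmless elaboration of what the paper leaves implicit.
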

\vspace*{-0.4cm}
\begin{proof}
Consider $\hat{\hat{\mu}} \in Val_{\ccdom}$ univocally defined by the two conditions
$\hat{\hat{\mu}} =_{\backslash \overline{Y'}} \hat{\mu}$ and $\hat{\hat{\mu}} =_{\overline{Y'}} \mu$.
By hypothesis,  $\hat{\mu} \in (WT)Sol_{\prog}(P\, \Box\, C \, \Box \,M\,\Box\,H\,\Box\,F\,\Box\,R)$
and the variables $\overline{Y'}$ do not occur in $G$.
Therefore,
$\hat{\hat{\mu}} \in (WT)Sol_{\prog}(P\, \Box\, C \, \Box \,M\,\Box\,H\,\Box\,F\,\Box\,R)$
is ensured by the construction of $\hat{\hat{\mu}}$.
Recalling Definition \ref{defGoalSol} (see Subsection \ref{SC}),
we only need to prove $\hat{\hat{\mu}} =_{\backslash \overline{U}} \mu$
in order to conclude $\mu \in (WT)Sol_{\prog}(G)$.
In fact, given any variable $X \notin \overline{U}$, either $X \in \overline{Y'}$ or $X \notin \overline{Y'}$.
In the first case, $\hat{\hat{\mu}}(X) = \mu(X)$ by construction of $\hat{\hat{\mu}}$.
In the second case, $\hat{\hat{\mu}}(X) = \hat{\mu}(X)$ by construction of $\hat{\hat{\mu}}$
and $\hat{\mu}(X) = \mu(X)$ because of one of the hypothesis.
\end{proof}

Next we present the proof of Theorem \ref{localSC} which guarantees {\em local} soundness and completeness
for the {\em one-step} transformation of a given goal.

\begin{proof}[Proof of Theorem \ref{localSC}] \label{localSCProof}

Assume a given program $\prog$, an admissible goal $G$ for $\prog$ which is not in solved form,
and a rule {\bf RL} applicable to a selected part $\gamma$ of $G$.
The claim that   there are finitely many possible transformations $G \red_{{\bf RL},\gamma, \prog}G'_j$  ($1 \leq j \leq k$)
can be trivially checked by inspecting all the rules in Tables \ref{table3}, \ref{table4}, \ref{table7} and \ref{Stable} one by one.
We must prove two additional claims:
\begin{enumerate}
\item {\bf Local Soundness:} $Sol_{\prog}(G) \supseteq \bigcup_{j=1}^{k} Sol_{\prog}(G'_j)$.
\item {\bf Limited Local Completeness:} $WTSol_{\prog}(G) \subseteq \bigcup_{j=1}^{k} WTSol_{\prog}(G'_j)$,
provided that the application of {\bf RL}  to the selected part $\gamma$ of $G$ is {\em safe}, i.e.
it is neither an opaque application of {\bf DC}
nor an application of a rule from Table \ref{Stable} involving an incomplete solver invocation.
\end{enumerate}
Claims 1. and 2. must be proved for each {\bf RL} separately. In case that {\bf RL} is some of the rules displayed in
Table \ref{table3}, proving 1. and 2. involves building suitable witnesses as multisets of $\crwl{\ccdom}$ proof trees,
using techniques originally stemming from \cite{GHLR96,GHLR99} and later extended to $CFLP$ programs without domain cooperation
in \cite{LMR04}.  In case that {\bf RL} is some of the rules shown in Tables  \ref{table4}, \ref{table7} and \ref{Stable},
proving 1. and 2. requires almost no work with building witnesses.

We will consider rules {\bf DF} and {\bf FC} as representatives for Table  \ref{table3},
and most of the rules from Tables  \ref{table4}, \ref{table7} and \ref{Stable},
which are the main novelty in this paper.
When dealing with each rule {\bf RL}, we will assume that
$G$ resp. $G'_j$ are exactly as the original resp. transformed goal as displayed
in the presentation of {\bf RL} in Subsection \ref{lnr}, \ref{dcr} or \ref{csr}.
In our reasonings we will  use the notation $\mathcal{M} : \prog \vdash_{\crwl{\mathcal{C}}}(P\,\Box\,C)\mu'$
to indicate that the witness $\mathcal{M}$ is a multiset of $\crwl{\ccdom}$ proof trees that prove
$(P\,\Box\,C)\mu'$ from program $\prog$, using the inference rules of the $\crwl{\ccdom}$ logic
presented in \cite{LRV07}.
\vspace*{-0.4cm}

\subsubsection{Selected rules from Table \ref{table3}} \label{rt3}

\noindent
Rule {\bf DF, Defined Function}.
In this case, $\gamma$ is a production $f\, \tpp{e}{n} \to t$.
\begin{enumerate}
\item {\bf Local Soundness:}
Assume $\mu \in Sol_{\prog}(G'_j)$ for some $1 \leq j \leq k$.
Then there exists $\mu' =_{\backslash \overline{Y}, \overline{U}} \mu$
such that $\mu' \in Sol_{\prog}(\overline{e_n \to t_n},\, r \to t,\, P\, \Box\, C', C\, \Box\, M\, \Box\, H\, \Box\, F\, \Box\, R)$.
From this we deduce that
$\mu' \in Sol_{\ccdom}(M\, \Box\, H\, \Box\, F\, \Box\, R)$
and $\mathcal{M'} : \prog \vdash_{\crwl{\mathcal{C}}}(\overline{e_n \to t_n},\, r \to t,\, P\, \Box\, C', C)\mu'$
for a suitable witness $\mathcal{M'}$.
A part of $\mathcal{M'}$ proves $(\overline{e_n \to t_n},\, r \to t,\, C')\mu'$, which allows to deduce
$(f\, \tpp{e}{n} \to t)\mu'$ using the $\crwl{\ccdom}$ inference rule which deals with defined functions.
Therefore, $\mathcal{M'}$ can be used to build another witness
$\mathcal{M} : \prog \vdash_{\crwl{\mathcal{C}}}(f\, \tpp{e}{n} \to t,\, P\, \Box\, C)\mu'$.
Since $\mu' =_{\backslash \overline{U}} \mu$, we can conclude that $\mu \in Sol_{\prog}(G)$.
\item {\bf Limited  Local Completeness:}
Assume $\mu \in WTSol_{\prog}(G)$.
Then there is some $\mu' =_{\backslash \overline{U}} \mu$
such that $\mu' \in WTSol_{\prog}(f\, \tpp{e}{n} \to t,\, P\, \Box\, C\, \Box\, M\, \Box\, H\, \Box\, F\, \Box\, R)$.
Then, $\mu' \in WTSol_{\ccdom}(M\, \Box\, H\, \Box\, F\, \Box\, R)$
and $\mathcal{M} : \prog \vdash_{\crwl{\mathcal{C}}}(f\, \tpp{e}{n} \to t,\, P\, \Box\, C)\mu'$
for a suitable witness $\mathcal{M}$.
Note that $\mathcal{M}$ must include a $\crwl{\ccdom}$ proof tree
$\mathcal{T}$ proving  the production $(f\, \tpp{e}{n} \to t)\mu'$ using some instance of
$Rl : f\, \overline{t_n} \to r$ $\Leftarrow$ $C'$, suitably chosen as a variant of some $\prog$ rule
with new variables $\overline{Y} = var(Rl)$.
Let us choose $j$ so that $G'_j$ is the result of applying {\bf DF}  with $f\, \tpp{e}{n} \to t$ as the selected part of $G$ and
$Rl$ as the selected $\prog$ rule for $f$.
Consider a well typed $\mu'' \in Val_{\ccdom}$ that instantiates the variables in $\overline{Y}$ as required by the proof tree $\mathcal{T}$,
and instantiates any other variable $V$ to $\mu'(V)$.
By suitably reusing parts of $\mathcal{M}$, it is possible to build a witness
$\mathcal{M'} : \prog \vdash_{\crwl{\mathcal{C}}}(\overline{e_n \to t_n},\, r \to t,\, P\, \Box\, C', C)\mu''$.
Since $\mu'' =_{\backslash \overline{Y}, \overline{U}} \mu$, we can conclude that $\mu \in WTSol_{\prog}(G'_j)$.
\end{enumerate}
\noindent
Rule {\bf FC, Flatten Constraint}.
In this case, $\gamma$ is an atomic constraint  $p\, \tpp{e}{n} \to!\, t$ such that some $e_i$ is not a pattern and $k = 1$.
We write $G'$ instead of $G'_1$.
For the sake of simplicity, we consider $p\, e_1\, t_2\, \to!\, t$, where $e_1$ is not a pattern.
The presentation of the rule is then as in Table \ref{table3} with $n = 2$, $m = 1$.
\begin{enumerate}
\item {\bf Local Soundness:}
Assume $\mu \in Sol_{\prog}(G')$.
Then there exists $\mu' =_{\backslash V_1, \overline{U}} \mu$
such that $\mu' \in Sol_{\prog}(e_1 \to V_1,\, P\, \Box\,  p\, V_1\, t_2\, \to!\, t,\, C\, \Box\, M\, \Box\, H\, \Box\, F\, \Box\, R)$.
Then, we get
$\mu' \in Sol_{\ccdom}(M\, \Box\, H\, \Box\, F\, \Box\, R)$
and $\mathcal{M'} : \prog \vdash_{\crwl{\mathcal{C}}}(e_1 \to V_1,\, P\, \Box\,  p\, V_1\, t_2\, \to!\, t,\, C)\mu'$
for a suitable witness $\mathcal{M'}$.
A part of $\mathcal{M'}$ proves $(e_1 \to V_1,\,   p\, V_1\, t_2\, \to!\, t)\mu'$, which allows to deduce
$(p\, e_1\, t_2\, \to!\, t)\mu'$ using the $\crwl{\ccdom}$ inference rule which deals with primitive functions.
Therefore, $\mathcal{M'}$ can be used to build another witness
$\mathcal{M} : \prog \vdash_{\crwl{\mathcal{C}}}(P\, \Box\, p\, e_1\, t_2\, \to!\, t,\, C)\mu'$.
Since $\mu' =_{\backslash \overline{U}} \mu$, we can conclude that $\mu \in Sol_{\prog}(G)$.
\item {\bf Limited  Local Completeness:}
Assume $\mu \in WTSol_{\prog}(G)$.
Then there is some $\mu' =_{\backslash \overline{U}} \mu$
such that $\mu' \in WTSol_{\prog}(P\, \Box\, p\, e_1\, t_2\, \to!\, t,\, C\, \Box\, M\, \Box\, H\, \Box\, F\, \Box\, R)$.
Then, $\mu' \in WTSol_{\ccdom}(M\, \Box\, H\, \Box\, F\, \Box\, R)$
and $\mathcal{M} : \prog \vdash_{\crwl{\mathcal{C}}}(P\, \Box\, p\, e_1\, t_2\, \to!\, t,\, C)\mu'$
for a suitable witness $\mathcal{M}$.
Note that $\mathcal{M}$ must include a $\crwl{\ccdom}$ proof tree
$\mathcal{T}$ proving  the atomic constraint  $(p\, e_1\, t_2\, \to!\, t)\mu'$.
A part of $\mathcal{T}$ must prove a production of the form $e_1\mu' \to t_1$ for some suitable pattern $t_1$.
Consider a well typed $\mu'' \in Val_{\ccdom}$ such that $\mu''(V_1) = t_1$ and $\mu'' =_{\backslash V_1} \mu'$.
By suitably reusing parts of $\mathcal{M}$, it is possible to build a witness
$\mathcal{M'} : \prog \vdash_{\crwl{\mathcal{C}}}(e_1 \to V_1,\, P\, \Box\,  p\, V_1\, t_2\, \to!\, t,\, C)\mu''$.
Since $\mu'' =_{\backslash V_1, \overline{U}} \mu$, we can conclude that $\mu \in WTSol_{\prog}(G')$.
\end{enumerate}
\vspace*{-0.5cm}

\subsubsection{Rules from Table \ref{table4}} \label{rt4}

\noindent
Rule {\bf SB, Set Bridges}.
In this case, $\gamma$ is a primitive atomic constraint  $\pi$ which can be used to compute bridges, and $k = 1$.
We write $G'$ instead of $G'_1$. The application of the rule computes
$\exists \overline{V'}\, B' = bridges^{\mathcal{D} \to \mathcal{D'}}( \pi, B_M) \neq \emptyset$,
where $\mathcal{D} = \fd$ and $\mathcal{D'} = \rdom$ or vice versa,
according to the two cases (i) and (ii) explained in Table  \ref{table4}.
\begin{enumerate}
\item {\bf Local Soundness:}
Assume $\mu \in Sol_{\prog}(G')$.
Then there exists $\mu' =_{\backslash \overline{V'}, \overline{U}} \mu$ such that
$\mu' \in Sol_{\prog}(P\, \Box\,  \pi,\, C\, \Box\, M'\, \Box\, H\, \Box\, F\, \Box\, R)$.
Therefore, $\mu' \in Sol_{\ccdom}(M'\, \Box\, H\, \Box\, F\, \Box\, R)$
and $\mathcal{M'} : \prog \vdash_{\crwl{\mathcal{C}}}(P\, \Box\,  \pi,\, C)\mu'$
for a suitable witness $\mathcal{M'}$.
Since $M'$ is $B', M$, we get
$\mu' \in Sol_{\ccdom}(M\, \Box\, H\, \Box\, F\, \Box\, R)$
and $\mathcal{M'} : \prog \vdash_{\crwl{\mathcal{C}}}(P\, \Box\,  \pi,\, C)\mu'$,
which implies $\mu \in Sol_{\prog}(G)$ because of Lemma \ref{cgs}.
\item {\bf Limited  Local Completeness:}
Assume $\mu \in WTSol_{\prog}(G)$.
Then there is some $\mu' =_{\backslash \overline{U}} \mu$
such that $\mu' \in WTSol_{\prog}(P\, \Box\,  \pi,\, C\, \Box\, M\, \Box\, H\, \Box\, F\, \Box\, R)$.
Therefore, $\mu' \in WTSol_{\ccdom}(M\, \Box\, H\, \Box\, F\, \Box\, R)$
and $\mathcal{M} : \prog \vdash_{\crwl{\mathcal{C}}}(P\, \Box\, \pi,\, C)\mu'$
for a suitable witness $\mathcal{M}$.
Since $\pi$ is primitive, these conditions imply
$\mu' \in WTSol_{\mathcal{C}}(\pi \wedge B_M)$.
By item 2. of Proposition \ref{propBP} from Subsection \ref{dcr}, we know that
$\overline{V'}$ are new fresh variables and
$WTSol_{\mathcal{C}}(\pi \wedge B_M) \subseteq
WTSol_{\mathcal{C}}(\exists \overline{V'}(\pi \wedge B_M \wedge B'))$.
From this we can conclude that
$\mu' \in WTSol_{\mathcal{C}}(\exists \overline{V'}(\pi \wedge B_M \wedge B'))$
and therefore there is some $\mu'' =_{\backslash \overline{V'}} \mu'$ such that
$\mu''' \in WTSol_{\mathcal{C}}(\pi \wedge B_M \wedge B')$. Since $ \overline{V'}$ are
new variables not occurring in $G$, it is easy to check that
$\mu'' \in WTSol_{\mathcal{C}}(M'\, \Box\, H\, \Box\, F\, \Box\, R)$ and
$\mathcal{M} : \prog \vdash_{\crwl{\mathcal{C}}}(P\, \Box\, \pi,\, C)\mu''$, which ensures
$\mu \in WTSol_{\prog}(G')$.
\end{enumerate}
\noindent
Rule {\bf PP, Propagate Projections}.
In this case, $\gamma$ is a primitive atomic constraint  $\pi$ which can be used to compute projections, and $k = 1$.
We write $G'$ instead of $G'_1$. The application of the rule obtains $G'$ from $G$ by computing
$\exists \overline{V'}\, \Pi' = proj^{\mathcal{D} \to \mathcal{D'}}( \pi, B_M) \neq \emptyset$,
where $\mathcal{D} = \fd$ and $\mathcal{D'} = \rdom$ or vice versa,
according to the two cases (i) and (ii) explained in Table  \ref{table4}.
The reasonings for local soundness and limited local completeness are quite similar to those
used in the case of rule {\bf SB},
except that  item 3. of Proposition \ref{propBP} must be used in place of item 2. \\

\noindent
Rule {\bf SC, Submit Constraints}.
In this case, $\gamma$ is a primitive atomic constraint  $\pi$ and $k = 1$.
We write $G'$ instead of $G'_1$.
\begin{enumerate}
\item {\bf Local Soundness:}
Assume $\mu \in Sol_{\prog}(G')$.
Then there exists $\mu' =_{\backslash \overline{U}} \mu$ such that
$\mu' \in Sol_{\prog}(P\, \Box\, C\, \Box\, M'\, \Box\, H'\, \Box\, F'\, \Box\, R')$.
Therefore, $\mu' \in Sol_{\ccdom}(M'\, \Box\, H'\, \Box\, F'\, \Box\, R')$
and $\mathcal{M'} : \prog \vdash_{\crwl{\mathcal{C}}}(P\, \Box\, C)\mu'$
for a suitable witness $\mathcal{M'}$.
Due to the syntactic relationship between $G$ and $G'$ (see Table \ref{table4}),
$\mu' \in Sol_{\ccdom}(M'\, \Box\, H'\, \Box\, F'\, \Box\, R')$
amounts to $\mu' \in Sol_{\ccdom}(M\, \Box\, H\, \Box\, F\, \Box\, R)$
and $\mu' \in Sol_{\ccdom}(\pi)$.
Due to $\mu' \in Sol_{\ccdom}(\pi)$, the witness $\mathcal{M'}$ can be expanded
to another witness $\mathcal{M} : \prog \vdash_{\crwl{\mathcal{C}}}(P\, \Box\, \pi,\, C)\mu'$.
Thanks to this new witness we obtain
$\mu' \in Sol_{\prog}(P\, \Box\, \pi,\, C\, \Box\, M\, \Box\, H\, \Box\, F\, \Box\, R)$
and thus $\mu \in Sol_{\prog}(G)$.
\item {\bf Limited  Local Completeness:}
Assume $\mu \in WTSol_{\prog}(G)$.
Then there is some $\mu' =_{\backslash \overline{U}} \mu$
such that $\mu' \in WTSol_{\prog}(P\, \Box\,  \pi,\, C\, \Box\, M\, \Box\, H\, \Box\, F\, \Box\, R)$.
Therefore, $\mu' \in WTSol_{\ccdom}(M\, \Box\, H\, \Box\, F\, \Box\, R)$
and $\mathcal{M} : \prog \vdash_{\crwl{\mathcal{C}}}(P\, \Box\, \pi,\, C)\mu'$
for a suitable witness $\mathcal{M}$.
Because of the syntactic relationship between $G$ and $G'$ and the fact that $\pi$ is primitive,
we can conclude that $\mu' \in WTSol_{\ccdom}(M'\, \Box\, H'\, \Box\, F'\, \Box\, R')$.
Let $\mathcal{M'}$ be the witness constructed from $\mathcal{M}$ by omitting the $\crwl{\ccdom}$
proof tree for $\pi \mu'$ which is part of $\mathcal{M}$.
Then $\mathcal{M'} : \prog \vdash_{\crwl{\mathcal{C}}}(P\, \Box\, C)\mu'$.
This allows to conclude $\mu' \in WTSol_{\prog}(P\, \Box\, C\, \Box\, M'\, \Box\, H'\, \Box\, F'\, \Box\, R')$
and thus $\mu \in WTSol_{\prog}(G')$.
\end{enumerate}
\vspace*{-0.3cm}

\subsubsection{Rules from Table \ref{table7}} \label{rt7}

\noindent
Rule {\bf IE, Infer Equalities}.
This rule includes two similar cases. Here we will treat only the first one, the second one being completely analogous.
The selected part $\gamma$ is a pair of bridges of the form
$X$ {\tt \#==} $RX$, $X'$ {\tt \#==} $RX$ and $k = 1$.
We write $G'$ instead of $G'_1$.
\begin{enumerate}
\item {\bf Local Soundness:}
Assume $\mu \in Sol_{\prog}(G')$.
Then there exists $\mu' =_{\backslash \overline{U}} \mu$ such that
$\mu' \in Sol_{\prog}(P\, \Box\, C\, \Box\, X$ {\tt \#==} $RX,\, M\, \Box\, H\, \Box\, X$ {\tt ==} $X',\, F\, \Box\, R)$.
This implies two facts: firstly,  $\mathcal{M'} : \prog \vdash_{\crwl{\mathcal{C}}}(P\, \Box\, C)\mu'$
for a suitable witness $\mathcal{M'}$;
and secondly, $\mu' \in Sol_{\ccdom}(X$ {\tt \#==} $RX,\, M\, \Box\, H\, \Box\, X$ {\tt ==} $X',\, F\, \Box\, R)$.
The second fact clearly implies
$\mu' \in Sol_{\ccdom}(X$ {\tt \#==} $RX,\, X'$ {\tt \#==} $RX,\, M\, \Box\, H\, \Box\, F\, \Box\, R)$.
Along with the witness $\mathcal{M'}$, this condition guarantees
$\mu' \in Sol_{\prog}(P\, \Box\, C\, \Box\, X$ {\tt \#==} $RX,\, X'$ {\tt \#==} $RX,\, M\, \Box\, H\, \Box\, F\, \Box\, R)$
and hence  $\mu \in Sol_{\prog}(G)$.
\item {\bf Limited  Local Completeness:}
Assume $\mu \in WTSol_{\prog}(G)$.
Then there is some $\mu' =_{\backslash \overline{U}} \mu$
such that $\mu' \in WTSol_{\prog}(P\, \Box\, C\, \Box\, X$ {\tt \#==} $RX,\, X'$ {\tt \#==} $RX,\, M\, \Box\, H\, \Box\, F\, \Box\, R)$.
This implies two facts: firstly,  $\mathcal{M} : \prog \vdash_{\crwl{\mathcal{C}}}(P\, \Box\, C)\mu'$
for a suitable witness $\mathcal{M}$;
and secondly, $\mu' \in WTSol_{\ccdom}(X$ {\tt \#==} $RX,\, X'$ {\tt \#==} $RX,\, M\, \Box\, H\, \Box\, F\, \Box\, R)$.
The second fact clearly implies
$\mu' \in WTSol_{\ccdom}(X$ {\tt \#==} $RX,\, M\, \Box\, H\, \Box\, X$ {\tt ==} $X',\, F\, \Box\, R)$.
Then, $\mu' \in WTSol_{\prog}(P\, \Box\, C\, \Box\, X$ {\tt \#==} $RX,\, M\, \Box\, H\, \Box\, X$ {\tt ==} $X',\, F\, \Box\, R)$
holds thanks to the same witness $\mathcal{M}$,
and therefore  $\mu \in Sol_{\prog}(G')$.
\end{enumerate}

\noindent
Rule {\bf ID, Infer Disequalities}.
This rule includes two similar cases. Here we consider only the first one, the second one being completely analogous.
The selected part $\gamma$ is an antibridge of the form
$X$ {\tt \#/==} $u'$ placed within the $M$ store, and $k = 1$.
We write $G'$ instead of $G'_1$.
The application of the rule obtains $G'$ from $G$ by  dropping $X$ {\tt \#/==} $u'$ from $M$ and adding a semantically equivalent
disequality constraint $X$ {\tt /=} $u$ to the $F$ store.
The reasonings for local soundness and limited local completeness are very similar to those
used in the case of rule {\bf IE}. 
\vspace*{-0.3cm}

\subsubsection{Rules from Table \ref{Stable}} \label{rst}

Here we present only the proofs concerning the two rules {\bf FS} and {\bf SF}.
Note that the soundness and completeness properties of the $\fd$ solver
refer to valuations over the universe $\uni{\fd}$,  that must be related to valuations
over the universe $\uni{\ccdom}$ by means of Theorem  \ref{sumProperties} from Subsection \ref{cdomains},
as we will see below. The same technique can be applied to the rules {\bf MS} and {\bf RS}.
Rule {\bf HS} can be also  handled similarly to {\bf FS},  but in this case  Theorem \ref{sumProperties}
is not needed because the soundness and completeness properties
of the extensible $\herbrand$-solver refer directly to valuations over the universe $\uni{\ccdom}$. \\

\noindent
Rule {\bf FS $\fd$-Constraint Solver ({\em black-box})}.
The selected part $\gamma$ is the $\fd$-store $F$.
\begin{enumerate}
\item {\bf Local Soundness:}
Let us choose $G'$ as one of the finitely many goals $G'_j$ such that  $G \red_{{\bf FS},\gamma, \prog}G'_j$. Then
$G' = \exists \overline{Y'}, \overline{U}.
(P\, \Box\, C\, \Box\, M\, \Box\, H\, \Box\, (\Pi'\, \Box\, \sigma_F)\, \Box\,R)@_{\fd}\sigma'$
for some $\exists \overline{Y'} (\Pi'\, \Box\, \sigma')$ chosen as one of the alternatives computed by the $\fd$-solver,
i.e.  such that $\Pi_F \vdash\!\!\vdash_{solve^{\fd}} \exists \overline{Y'} (\Pi'\, \Box\, \sigma')$.
Assume now $\mu \in Sol_{\prog}(G')$.
Then there exists $\mu' =_{\backslash \overline{Y'}, \overline{U}} \mu$ such that
$$\mu' \in Sol_{\prog} ((P\, \Box\, C\, \Box\, M\, \Box\, H\, \Box\, (\Pi'\, \Box\, \sigma_F)\, \Box\,R)@_{\fd}\sigma')$$
for some $\exists \overline{Y'} (\Pi'\, \Box\, \sigma')$
such that $\Pi_F \vdash\!\!\vdash_{solve^{\fd}} \exists \overline{Y'} (\Pi'\, \Box\, \sigma')$.
Since $\Pi'\, \Box\, \sigma'$ is a store, we can assume $\Pi' \sigma' = \Pi'$ and
deduce the following conditions:
$$(0)\, \mathcal{M'} : \prog \vdash_{\crwl{\mathcal{C}}}(P\, \Box\, C)\sigma'\mu'\,
\textnormal{for a suitable witness}\, \mathcal{M'}$$
$$(1)\, \mu' \in Sol_{\ccdom}(\Pi_M\sigma'\, \Box\, \sigma_M\star\sigma')\,\,\,\,\,
(2)\, \mu' \in Sol_{\ccdom}(\Pi_H\sigma'\, \Box\, \sigma_H\star\sigma')$$
$$(3)\, \mu' \in Sol_{\ccdom}(\Pi'\sigma'\, \Box\, \sigma_F\sigma'),\,  \textnormal{where}\,  \Pi' \sigma' = \Pi'\,\,\,\,\,
(4)\, \mu' \in Sol_{\ccdom}(\Pi_R\sigma'\, \Box\, \sigma_R\star\sigma')$$

In particular, $(3)$ implies $\mu' \in Sol(\sigma_F\sigma'$), i.e.
$$(5)\, \sigma_F\sigma'\mu' = \mu'$$
In order to conclude that $\mu \in Sol_{\prog}(G)$, we show that the hypothesis of the
auxiliary Lemma \ref{cgs} hold for $\hat{\mu} = \mu'$.
Clearly, $\hat{\mu} =_{\backslash \overline{U}, \overline{Y'}} \mu$
and the new variables $\overline{Y'}$ are away from $\overline{U}$ and the other variables in $G$.
We still have to prove that $\mu' \in Sol_{\prog}(P\, \Box\, C\, \Box\, M\, \Box\, H\, \Box\, F\, \Box\,R)$.

\begin{itemize}
\item
Proof of $\mu' \in Sol_{\prog}(P\, \Box\, C)$:
Due to the invariant properties of admissible goals, $(P\, \Box\, C) = (P\, \Box\, C)\sigma_F$.
Using this equality and  $(5)$ we get
$(P\, \Box\, C)\sigma'\mu' = (P\, \Box\, C)\sigma_F\sigma'\mu' = (P\, \Box\, C)\mu'$.
Therefore, $\mathcal{M'} : \prog \vdash_{\crwl{\mathcal{C}}}(P\, \Box\, C)\mu'$
follows from $(0)$.
\item
Proof of $\mu' \in Sol_{\ccdom}(S)$, $S$ being any of the stores $M$, $H$, $R$:
According to the choice of $S$ we can use $(1)$, $(2)$ or $(4)$ to conclude
$$(6)\, \mu' \in Sol_{\ccdom}(\Pi_S\sigma')\,\,\,\, \textnormal{and}\,\,\,\,
(7)\, \mu' \in Sol(\sigma_S\star\sigma')\, \textnormal{i.e.}\, (\sigma_S\star\sigma')\mu' = \mu'$$
\begin{itemize}
\item
Proof of $\mu' \in Sol_{\ccdom}(\Pi_S)$:
Due to the invariant properties of admissible goals, $\Pi_S = \Pi_S\sigma_F$.
Then $(6)$ is equivalent to $\mu' \in Sol_{\ccdom}(\Pi_S\sigma_F\sigma')$.
By applying the Substitution Lemma \ref{sl} we deduce
$\sigma_F\sigma'\mu' \in Sol_{\ccdom}(\Pi_S)$,
which amounts to $\mu' \in Sol_{\ccdom}(\Pi_S)$ because of $(5)$.
\item
Proof of $\mu' \in Sol(\sigma_S)$:
Assume any variable $X \in vdom(\sigma_S)$. Then
$$X\mu' = X\sigma_S\sigma'\mu' = X\sigma_S\sigma_F\sigma'\mu' = X\sigma_S\mu'$$
where the first equality holds because of $(7)$,
the second equality holds because the admissibility properties of $G$ guarantee $\sigma_S\star\sigma_F = \sigma_S$,
and the third equality holds because of $(5)$.
\end{itemize}
\item
Proof of $\mu' \in Sol_{\ccdom}(F)$: First, we claim that
$$(8) \, \trunc{\sigma' \mu'}{\fd} \in Sol(\sigma') \, \textnormal{  i.e.  } \,
\sigma' \trunc{\sigma' \mu'}{\fd} = \trunc{\sigma' \mu'}{\fd}$$
To prove the claim, assume any $X \in vdom(\sigma')$. Because of Postulate \ref{fsolver} there are two possible cases:
\begin{enumerate}
\item[(a)] $\sigma'(X)$ is an integer value $n$. Then:
$$X \sigma' \trunc{\sigma' \mu'}{\fd} = n = \trunc{X\sigma' \mu'}{\fd} = X \trunc{\sigma' \mu'}{\fd}$$
\item[(b)] $X \in var(\Pi_F)$ and $\sigma'(X)$ is a variable $X' \in var(\Pi_F)$. Then $\sigma'(X') = X'$ because
$\sigma'$ is idempotent, and:
$$X\sigma'\trunc{\sigma' \mu'}{\fd} = X' \trunc{\sigma' \mu'}{\fd} = \trunc{X'\sigma' \mu'}{\fd} =$$
$$\trunc{X' \mu'}{\fd} = \trunc{X\sigma' \mu'}{\fd} = X \trunc{\sigma' \mu'}{\fd}$$
\end{enumerate}
We continue our reasoning using $(8)$.
\begin{itemize}
\item
Proof of $\mu' \in Sol_{\ccdom}(\Pi_F)$:
From $(3)$ and the Substitution Lemma \ref{sl} we get $\sigma'\mu' \in Sol_{\ccdom}(\Pi')$.
Because of Postulate \ref{fsolver} we can assume that all the constraints belonging to $\Pi'$ are $\fd$-specific.
Then, item 4. of Theorem \ref{sumProperties} can be applied to conclude $\trunc{\sigma'\mu'}{\fd} \in Sol_{\fd}(\Pi')$.
Using $(8)$ we get $\trunc{\sigma'\mu'}{\fd} \in Sol_{\fd}(\Pi'\, \Box\,\sigma')$,
which trivially implies $\trunc{\sigma'\mu'}{\fd} \in Sol_{\fd}(\exists \overline{Y'} (\Pi'\, \Box\,\sigma'))$.
Because of the soundness property of the $\fd$-solver (see Definition \ref{defSolver} and Postulate \ref{fsolver})
we obtain  $\trunc{\sigma'\mu'}{\fd} \in Sol_{\fd}(\Pi_F)$.
Applying again   item 4. of Theorem \ref{sumProperties}, we get $\sigma'\mu' \in Sol_{\ccdom}(\Pi_F)$.
Since $\Pi_F\, \Box\, \sigma_F$ is a store, $\Pi_F = \Pi_F\sigma_F$ and therefore
$\sigma'\mu' \in Sol_{\ccdom}(\Pi_F\sigma_F)$.
Then, the Substitution Lemma \ref{sl} yields $\sigma_F\sigma'\mu' \in Sol_{\ccdom}(\Pi_F)$,
which is the same as $\mu' \in Sol_{\ccdom}(\Pi_F)$ because of $(5)$.
\item
Proof of $\mu' \in Sol(\sigma_F)$:
$\mu' = \sigma_F\mu'$ follows from the following chain of equalities, which relies on $(5)$ and the idempotency of $\sigma_F$:
$$\mu' = \sigma_F\sigma'\mu' = \sigma_F\sigma_F\sigma'\mu' = \sigma_F\mu'$$
\end{itemize}
\end{itemize}

\item {\bf Limited  Local Completeness:}
At this point we assume that rule {\bf FS} can be applied to $G$ in a safe way, i.e. that the solver invocation
$solve^{\fd}(\Pi_F)$ satisfies the completeness property for solvers stated in Definition \ref{defSolver}
(see Subsection \ref{csolvers}).
Assume $\mu \in WTSol_{\prog}(G)$.
Then there is some $\mu' =_{\backslash \overline{U}} \mu$
such that $\mu' \in WTSol_{\prog}(P\, \Box\, C\, \Box\, M\, \Box\, H\, \Box\, F\, \Box\,R)$.
Consequently, we can assume:
$$(9)\, (P\, \Box\, C)\mu'\,  \textnormal{is well-typed and}\,
\mathcal{M} : \prog \vdash_{\crwl{\mathcal{C}}}(P\, \Box\, C)\mu'\,  \textnormal{for some witness}\, \mathcal{M}$$
$$(10)\, \mu' \in WTSol_{\ccdom}(M)\,\,\,\,\,
(11)\, \mu' \in WTSol_{\ccdom}(H)$$
$$(12)\, \mu' \in WTSol_{\ccdom}(F)\,\,\,\,\,
(13)\, \mu' \in WTSol_{\ccdom}(R)$$
In particular, $(12)$ implies $\mu' \in WTSol_{\ccdom}(\Pi_F)$. Thanks to Postulate \ref{fsolver} we can assume that
$\Pi_F$ is $\fd$-specific and apply item 4 of Theorem \ref{sumProperties} to conclude $\trunc{\mu'}{\fd} \in WTSol_{\fd}(\Pi_F)$. By completeness of the solver invocation $solve^{\fd}(\Pi_F)$ there is
some alternative $\exists \overline{Y'} (\Pi'\, \Box\, \sigma')$ computed by the solver
(i.e. such that  $\Pi_F \vdash\!\!\vdash_{solve^{\fd}} \exists \overline{Y'} (\Pi'\, \Box\, \sigma')$)
verifying
$$(14)\, \trunc{\mu'}{\fd}  \in WTSol_{\fd}(\exists \overline{Y'} (\Pi'\, \Box\, \sigma'))$$
Then $G' = \exists \overline{Y'}, \overline{U}.
(P\, \Box\, C\, \Box\, M\, \Box\, H\, \Box\, (\Pi'\, \Box\, \sigma_F)\, \Box\,R)@_{\fd}\sigma'$
is one of the the finitely many goals $G'_j$ such that $G \red_{{\bf FS},\gamma, \prog}G'_j$.
In the rest of the proof we will show that $\mu \in WTSol_{\prog}(G')$ by finding
$\mu'' =_{\backslash \overline{Y'}, \overline{U}} \mu$ such that
$$ (\dag)\, \mu'' \in Sol_{\prog} ((P\, \Box\, C\, \Box\, M\, \Box\, H\, \Box\, (\Pi'\, \Box\, \sigma_F)\, \Box\,R)@_{\fd}\sigma')$$
Because of $(14)$ there is $\hat{\hat{\mu}} \in Val_{\fd}$ such that
$$(15)\, \trunc{\mu'}{\fd} =_{\backslash \overline{Y'}} \hat{\hat{\mu}} \in WTSol_{\fd}(\Pi'\, \Box\, \sigma')$$
Let $\mu'' \in Val_{\ccdom}$ be univocally defined by the conditions
$\mu'' =_{\overline{Y'}} \hat{\hat{\mu}}$ and $\mu'' =_{\backslash \overline{Y'}} \mu'$.
Since $\mu =_{\backslash \overline{U}} \mu'$, it follows that $\mu'' =_{\backslash \overline{Y'}, \overline{U}} \mu$.
Moreover, $\trunc{\mu''}{\fd} = \hat{\hat{\mu}}$, because for any variable $X \in \var$ there are two possible cases:
either $X \in \overline{Y'}$ and then  $\trunc{\mu''}{\fd}(X) = \trunc{\hat{\hat{\mu}}}{\fd}(X) =  \hat{\hat{\mu}}(X)$,
since $\hat{\hat{\mu}} \in Val_{\fd}$;
or else $X \notin \overline{Y'}$ and then $\trunc{\mu''}{\fd}(X) = \trunc{\mu'}{\fd}(X) =  \hat{\hat{\mu}}(X)$,
since $\trunc{\mu'}{\fd} =_{\backslash \overline{Y'}} \hat{\hat{\mu}}$.
From $(15)$ and $\trunc{\mu''}{\fd} = \hat{\hat{\mu}}$ we  obtain $\mu'' \in WTSol_{\fd}(\Pi')$ by applying
item 4 of Theorem \ref{sumProperties}. We now claim:
$$(16)\, \mu'' \in WTSol_{\fd}(\Pi'\, \Box\, \sigma')$$
To justify this claim it is sufficient to prove $\mu'' \in Sol(\sigma')$, i.e. $\sigma'\mu'' = \mu''$.
In order to prove this let us assume any $X \in vdom(\sigma')$.
Because of Postulate \ref{fsolver}, there are two possible cases:
\begin{enumerate}
\item[(a)] $\sigma'(X)$ is an integer value $n$. From  $(15)$ we know $\hat{\hat{\mu}} \in Sol(\sigma')$
and therefore $\hat{\hat{\mu}}(X) = n$. Since $\trunc{\mu''}{\fd} = \hat{\hat{\mu}}$, it follows that $\mu''(X) = n$, and then
$X\sigma'\mu'' = n = X\mu''$.

\item[(b)] $X \in var(\Pi_F)$ and $\sigma'(X)$ is a variable $X' \in var(\Pi_F)$. Then:

\begin{tabular}{llr}
        & $X\sigma' \mu'' = X' \mu''$ & \\
$= $ & $X' \mu'$ & (using $\mu'' =_{\backslash \overline{Y'}} \mu'$ and $X' \notin \overline{Y'}$)\\
$=$ & $\trunc{X'\mu'}{\fd}$ & (using the fact that $\Pi_F$ is $\fd$-specific and $(12)$)\\
$=$ & $X'\trunc{\mu'}{\fd}$ &\\
$= $ & $X' \hat{\hat{\mu}}$ & (using $(15)$ and $X' \notin \overline{Y'}$)\\
$= $ & $X\sigma' \hat{\hat{\mu}} = X\hat{\hat{\mu}}$ & (using $(15)$)\\
$= $ & $X\trunc{\mu'}{\fd}$ & (using $(15)$ and $X \notin \overline{Y'}$)\\
$=$ & $\trunc{X\mu'}{\fd} \,= X \mu'$ & (using the fact that  $\Pi_F$ is $\fd$-specific and $(12)$)\\
$= $ & $X \mu''$ & (using $\mu'' =_{\backslash \overline{Y'}} \mu'$ and $X \notin \overline{Y'}$)\\
\end{tabular}
\end{enumerate}

 We are now in a position to prove $(\dag)$, thereby finishing the proof:
\begin{itemize}
\item
Proof of $\mu'' \in WTSol_{\prog}(P\, \Box\, C)\sigma'$:
Because of the Substitution Lemma \ref{sl}, this is equivalent to
$\sigma'\mu'' \in WTSol_{\prog}(P\, \Box\, C)$.
Because of $(16)$, $\sigma'\mu'' = \mu''$.
Since $\mu' =_{\backslash \overline{Y'}} \mu''$ and the variables $\overline{Y'}$ do not occur in $P\, \Box\, C$,
$\mu'' \in WTSol_{\prog}(P\, \Box\, C)$ is equivalent to $\mu' \in WTSol_{\prog}(P\, \Box\, C)$,
which is ensured by the same witness $\mathcal{M}$ given by $(9)$.
\item
Proof of $\mu'' \in WTSol_{\ccdom}(S\star\sigma')$, $S$ being any of the stores $M$, $H$, $R$:
According to the choice of $S$ we can use $(10)$, $(11)$ or $(13)$ to conclude
$$(17)\, \mu' \in WTSol_{\ccdom}(\Pi_S)\,\,\,\, \textnormal{and}\,\,\,\,
(18)\, \mu'\in Sol(\sigma_S)\, \textnormal{i.e.}\, \sigma_S\mu' = \mu'$$
\begin{itemize}
\item
Proof of $\mu'' \in WTSol_{\ccdom}(\Pi_S\sigma')$:
Since $\mu'' =_{\backslash \overline{Y'}} \mu'$ and the variables $\overline{Y'}$ do not  occur in $\Pi_S$,
$(17)$ implies $\mu'' \in WTSol_{\ccdom}(\Pi_S)$, which is equivalent  to $\sigma'\mu'' \in WTSol_{\ccdom}(\Pi_S)$
because of $(16)$. Then,  $\mu'' \in WTSol_{\ccdom}(\Pi_S\sigma')$ follows from the Substitution Lemma \ref{sl}.
\item
Proof of $\mu'' \in WTSol(\sigma_S\star\sigma')$:
Assume any variable $X \in vdom(\sigma_S)$. Then
$$X\sigma_S\sigma'\mu'' = X\sigma_S\mu'' = X\sigma_S\mu' = X\mu' = X\mu''$$
where the first equality holds because of $(16)$,
the second equality holds because $\mu'' =_{\backslash \overline{Y'}} \mu'$
and the variables $\overline{Y'}$ do not  occur in $X\sigma_S$,
the third equality holds because of $(18)$,
and the fourth equality holds because $\mu'' =_{\backslash \overline{Y'}} \mu'$
and the variables $\overline{Y'}$ do not  include $X$.
\end{itemize}
\item
Proof of $\mu'' \in WTSol_{\ccdom}(\Pi'\sigma'\, \Box\, \sigma_F\sigma')$:
\begin{itemize}
\item
Proof of $\mu'' \in WTSol_{\ccdom}(\Pi'\sigma')$:
This is a trivial consequence of $(16)$, since $\Pi'\sigma' = \Pi'$ (because $\Pi'\, \Box\, \sigma'$ is a store).
\item
Proof of $\mu'' \in Sol(\sigma_F\sigma')$:
Because of $(16)$ we can assume that $\mu'' \in Sol(\sigma')$, i.e. $\sigma'\mu'' = \mu''$.
We must prove $\sigma_F\sigma'\mu'' = \mu''$. Assume any variable $X \in vdom(\sigma_F\sigma')$.
Because of the invariant properties of admissible goals, there are three possible cases:
\begin{enumerate}
\item[(a)]
$X \in vdom(\sigma_F)$ and $\sigma_F(X)$ is an integer value $n$.
Because of $(12)$, we know that $\mu' \in Sol(\sigma_F)$ and hence $X\sigma_F\mu' = n = X\mu'$.
Moreover, $X\mu'' = X\mu' = n$ because $\mu'' =_{\backslash \overline{Y'}} \mu'$
and the variables $\overline{Y'}$ do not  include $X$.
Then we can conclude that $X\sigma_F\sigma'\mu'' = n = X\mu''$.
\item[(b)]
$X \in vdom(\sigma_F)$ and $\sigma_F(X) = X' \in var(\Pi_F)$. Then:

\begin{tabular}{llr}
        & $X\sigma_F\sigma' \mu'' = X' \sigma'\mu''$ & \\
$= $ & $X' \mu''$ & (using $(16)$)\\
$= $ & $X' \mu'$ & (using $\mu'' =_{\backslash \overline{Y'}} \mu'$ and $X' \notin \overline{Y'}$)\\
$=$ & $X\sigma_F \mu' = X\mu'$ & (using $(12)$)\\
$= $ & $X \mu''$ & (using $\mu'' =_{\backslash \overline{Y'}} \mu'$ and $X \notin \overline{Y'}$)\\
\end{tabular}

\item[(c)]
$X \notin vdom(\sigma_F)$.
Then $X\sigma_F = X$, and we can use  $\mu'' \in Sol(\sigma')$ to deduce that
$X\sigma_F\sigma'\mu'' = X\sigma'\mu'' = X\mu''$.
\end{enumerate}
\end{itemize}
\end{itemize}

\end{enumerate}

\noindent
Rule {\bf SF, Solving Failure}.
The selected part $\gamma$ is one of the four stores of the goal,
the number $k$ of possible transformations $G \red_{{\bf RL},\gamma, \prog}G'_j$
of $G$ into a non-failed goal $G'_j$ is $0$,
and therefore $\bigcup_{j=1}^{k} WTSol_{\prog}(G'_j) = \emptyset$.
\begin{enumerate}
\item {\bf Local Soundness:}
The inclusion $Sol_{\prog}(G) \supseteq \emptyset$ holds trivially.
\item {\bf Limited  Local Completeness:}
The inclusion $WTSol_{\prog}(G) \subseteq \emptyset$
is equivalent to $WTSol_{\prog}(G) = \emptyset$.
In order to prove this, we assume that the  application of {\bf SF} to $G$ has relied
on a complete invocation of the $\cdom$ solver.
Since the invocation of the solver has failed
(i.e., $\Pi_S \vdash\!\!\vdash_{solve^{\cdom}_{\varx}} \blacksquare$)
but it is assumed to be complete, we know that
$WTSol_{\cdom}(\Pi_S) = \emptyset$.
From this we can conclude $WTSol_{\ccdom}(\Pi_S) = \emptyset$,
using item 4 of Theorem \ref{sumProperties} in case that $\cdom$ is not $\herbrand$.
Finally, $WTSol_{\prog}(G) = \emptyset$ is a trivial consequence of 
$WTSol_{\ccdom}(\Pi_S) = \emptyset$.
\end{enumerate}
\hfill
\end{proof}
\vspace*{-0.5cm}

\subsubsection{Proof of the Progress Lemma} \label{ProofProgress}

In this Subsection  we  prove the  Progress Lemma \ref{progress} used in Subsection \ref{SC}
to obtain the  Global Completeness Theorem \ref{globalC}.
First, we define a {\em well-founded progress ordering} $\vartriangleright$
between pairs $(G,\mathcal{M})$ formed by an admissible goal $G$
without free occurrences of higher-order variables and a
witness $\mathcal{M} = \{\mathcal{T}_1, \ldots, \mathcal{T}_n\}$  for the fact that $\mu \in Sol_{\prog}(G)$.
Given such a pair, we define a $7$-tuple
$||(G,\mathcal{M})|| =_{def} (O_1, O_2, O_3, O_4, O_5, O_6, O_7)$
(where $O_1$ is a finite multiset of natural numbers and $O_2,\, \ldots,\, O_7$ are natural numbers)
as follows:

\begin{itemize}
\item [\bf O$_1$] is the {\em restricted size of the witness} $\mathcal{M}$,
defined as the multiset of natural numbers $\{\rsize{\mathcal{T}_1},\,  \ldots,\, \rsize{\mathcal{T}_n}\}$,
where $\rsize{\mathcal{T}_i}$ $(1\leq i \leq n)$
denotes the {\em res\-tricted size} of the $\crwl{\mathcal{C}}$ proof tree $\mathcal{T}_i$
as defined in \cite{LRV07}, namely as the number of nodes in $\mathcal{T}_i$ corresponding to
$\crwl{\mathcal{C}}$ inference steps that depend on the meaning of primitive functions $p$
(as interpreted in the coordination domain $\ccdom$)
plus the number of nodes in $\mathcal{T}_i$ corresponding to
$\crwl{\mathcal{C}}$ inference steps that depend on the meaning of user-defined  functions $f$
(according to the current program $\prog$).
\item [\bf O$_2$] is the sum of $||p\, \tpp{e}{n}||$ for all the total applications $p\, \tpp{e}{n}$ of primitive functions
$p \in PF^n$ occurring in the parts $P$ and $C$ of $G$,
where $||p\, \tpp{e}{n}||$ is defined as the number of argument expressions $e_i\, (1 \leq i \leq n)$ that are not patterns.
\item [\bf O$_3$] is the number of occurrences of rigid and
passive expressions $h\, \tpp{e}{n}$ that are not patterns in the productions $P$ of $G$.
\item [\bf O$_4$] is the sum of the syntactic sizes  of the right hand sides
of all the productions occurring  in $P$.
\item [\bf O$_5$] is the sum $\mathit{sf}_M + \mathit{sf}_H + \mathit{sf}_F + \mathit{sf}_R$ of the
 {\em solvability flags} of the four constraint stores occurring in $G$.
 The  solvability flag  $\mathit{sf}_M$ takes the value $1$ if rule {\bf MS} from Table  \ref{Stable}
 can be applied to $G$, and $0$ otherwise. The other three flags are defined analogously.
\item [\bf O$_6$] is the number of bridges occurring in the mediatorial store $M$ of $G$.
\item [\bf O$_7$] is the number of antibridges occurring  in the mediatorial store $M$ of $G$.
\end{itemize}

Let $>_{lex}$ be the lexicographic product  of the  $7$ orderings $>_i\, (1 \leq i \leq 7)$,
where $>_1$ is the multiset ordering  $>_{mul}$ over multisets of natural numbers,
and $>_i$ is the ordinary ordering $>$ over natural numbers for $2 \leq i \leq 7$.
Finally, let us define the progress ordering $\vartriangleright$ by the condition
$(G,\mathcal{M}) \vartriangleright (G',\mathcal{M'})$ iff $||(G,\mathcal{M})||  >_{lex} ||(G',\mathcal{M'})||$.
As proved in \cite{BN98}, $>_{mul}$ is a well-founded ordering and the lexicographic
product of well-founded orderings is again a well-founded ordering.
Therefore, $\vartriangleright$ is well-founded.

Now we can  prove the  Progress Lemma \ref{progress}.

\begin{proof}[Proof of Lemma \ref{progress}] \label{progressproof}

Consider  an admissible goal
$G$ $\equiv$ $\exists \overline{U}.$ $P$ $\Box$ $C$ $\Box$ $M$ $\Box$ $H$ $\Box$ $F$ $\Box$ $R$
for a program $\prog$,
a well-typed solution $\mu \in WTSol_{\prog}(G)$ and a
witness $\mathcal{M}$ for the fact that $\mu \in Sol_{\prog}(G)$.
Assume that neither $\prog$ nor $G$ have free occurrences of higher-order variables,
and that $G$ is not in solved form.
\begin{enumerate}
\item Let us prove that there must be some rule {\bf RL} applicable to $G$ which is not a failure rule.
Since $G$ is not in solved form, we know that either $P \neq \emptyset$ , or else $C \neq \emptyset$,
or else some of the transformations displayed in Tables \ref{table7} and \ref{Stable} can be applied to $G$.
Note that {\bf CF} cannot be applied to $G$ because $G$ has got solutions. Moreover, if the
failing rule {\bf SF} would be applicable to $G$, then some of the other rules in Table \ref{Stable} would be applicable also.
Let ${\bf \mathcal{PR}}$ be  the set of those transformation rules  displayed in Table \ref{table3}
which are different of {\bf CF}, {\bf EL} and {\bf FC}.
In the following items we analyze different cases according to the from of $G$. In each case we either find some
rule {\bf RL} that can be applied to $G$ or we make some assumption that can be used to reason in the subsequent cases.
In the last item we conclude that rule {\bf EL} can be applied, if no previous item has allowed to prove the applicability of another rule.

\begin{enumerate}
\item
If some of the transformation rules in Tables \ref{table7} and \ref{Stable} can be applied to $G$, then we are ready.
In the following items we assume that this is not the case.
\item
If $P \neq \emptyset$ and  some rule {\bf RL} $\in$ ${\bf \mathcal{PR}}$ can be applied to $G$, then we are ready.
In the following items we assume that this is not the case.
\item
Due to the hypothesis that $G$ has no free occurrences of higher-order variables,
from  this point on we can assume that each production occurring in $P$ must have one of the three following forms:
\begin{enumerate}
\item
$h\, \tpp{e}{m}\, \to\, X$, with $h\, \tpp{e}{m}$ passive but not a pattern.
\item
$f\tpp{e}{n}\, \tpp{a}{k}\, \to\, X$, with $f \in DF^n$ and $k \geq 0$.
\item
$p\, \tpp{e}{n}\, \to\, X$, with $p \in PF^n$.
\end{enumerate}
If this were not the case, then  $P$ would include some production $e\, \to\, t$ of some other form,
and a simple case analysis of the syntactic form of $e\, \to\, t$ would lead to the conclusion that some
rule {\bf RL} $\in$ ${\bf \mathcal{PR}}$ could be applied to it.
\item
If $C \neq \emptyset$ and includes some atomic constraint $\alpha$ that is not primitive,
then the rule {\bf FC} from Table \ref{table3} can be applied to $\alpha$, and we are ready.
In the following items we assume that this is not the case.
\item
If $C \neq \emptyset$ and only includes primitive atomic constraints $\pi$,
then at least rule rule {\bf SC} from Table \ref{table4} (and maybe also rules {\bf SB} and {\bf PP})
can be applied to $G$ taking $\pi$ as the selected part, and we are ready.
In the following items we assume that $C = \emptyset$.
\item
At this point, if there would be some variable $X \in pvar(P) \cap odvar(G)$,
this $X$ would be the right-hand side of some production in $P$ with one of the three
forms  i, or ii or  iii displayed in item (c) above, and one of the three rules
{\bf IM} or {\bf DF} or {\bf PC} could be applied, which contradicts the assumptions made at item (b).
From  this point on, we can assume that $pvar(P) \cap odvar(G) = \emptyset$.
\item
Let $S = \Pi_S\, \Box\, \sigma_S$ be any of the four stores, let $\cdom$ be the corresponding domain, and
let $\chi = pvar(P) \cap var(\Pi_S)$. Because of the assumptions made at item (a), $S$ must be in $\chi$-solved form
and the discrimination property of the solver $solve^{\cdom}$ ensures that one of the two following conditions must hold:
\begin{enumerate}
\item
$\chi \cap odvar_{\cdom}(\Pi_S) \neq \emptyset$, i.e.  $pvar(P) \cap var(\Pi_S) \cap odvar_{\cdom}(\Pi_S) \neq \emptyset$.
\item
$\chi \cap var_{\cdom}(\Pi_S) = \emptyset$, i.e.  $pvar(P) \cap var(\Pi_S) = \emptyset$.
\end{enumerate}
Since i contradicts the assumption $pvar(P) \cap odvar(G) = \emptyset$ made at item (f),
ii must hold for the four stores. On the other hand, the invariant properties of admissible goals
guarantee that produced variables cannot occur in the answer substitutions $\sigma_S$.
\item
At this point, because of the assumptions made at the previous items,
we can assume that $C = \emptyset$, the four stores are in solved form and include no produced variables,
and all the productions occurring in $P$ have the form $e \to X$, where $X$ is a variable.
Since $G$ is not solved, it must be the case that $P \neq \emptyset$.

Note that $pvar(P)$ is finite and not empty. Moreover, the transitive closure $\gg_{P}^{+}$
of the production relation $\gg_{P}$ between produced variables must be irreflexive,
due to the invariant properties of admissible goals. Therefore, there is some production
$(e \to X) \in P$ such that $X$ is minimal w.r.t. $\gg_{P}$.

The variable $X$ cannot occur in $e$
because this would imply $X \gg_{P} X$, contradicting the irreflexivity of $\gg_{P}^{+}$.
For any other production $(e' \to X') \in P$, $X$ must be different of $X'$ because of the invariant
properties of admissible goals, and $X$ cannot occur in $e'$ because this would imply $X \gg_P X'$,
contradicting the minimality of $X$ w.r.t. $\gg_P$.
Moreover, $X$ cannot occur in the stores because they include no produced variables.

Therefore, $X$ does not occur in the rest of the goal, and the rule {\bf EL} can be applied to eliminate $e \to X$.
\end{enumerate}

\item Assume now  any choice of a  rule {\bf RL} (not a failure rule) and a part $\gamma$ of $G$,
such that {\bf RL} can be applied  to $\gamma$ in a safe manner, i.e. involving
neither an opaque application  of {\bf DC} nor an incomplete solver invocation.
We must prove the existence of a finite computation
$G \red_{{\bf RL},\gamma, \prog}^{+} G'$ and a witness
$\mathcal{M'} : \mu \in WTSol_{\prog}(G')$
such that $(G,\mathcal{M}) \vartriangleright (G',\mathcal{M}')$.
Due to the Limited Local  Completeness of $\cclnc{\ccdom}$
(Theorem \ref{localSC}, item 2.), there is one step
$G\red_{{\bf RL}, \gamma, \prog} G'_1$ such that $\mathcal{M'} : \mu \in WTSol_{\prog}(G')$ with a witness $\mathcal{M'}$
constructed as we have sketched  in the proof of Theorem \ref{localSC}.
We define the desired finite computation by distinction of cases, as follows:
\begin{enumerate}
\item
If  {\bf RL} is different from the two rules {\bf SB} and {\bf PP},
then the finite computation is chosen as $G \red_{{\bf RL}, \gamma, \prog} G'_1$ and $G'$ is $G'_1$.
\item
If {\bf RL} is {\bf SB} and {\bf PP} is applicable to $\gamma$, then the finite computation is chosen as
$G \red_{{\bf SB}, \gamma, \prog} G'_1 \red_{{\bf PP}, \gamma, \prog} G'_2 \red_{{\bf SC}, \gamma, \prog} G'_3$
and $G'$ is $G_3$.
\item
If {\bf RL} is {\bf SB} and {\bf PP} is not applicable to $\gamma$, then the finite computation is chosen as
$G \red_{{\bf SB}, \gamma, \prog} G'_1 \red_{{\bf SC}, \gamma, \prog} G'_2$
and $G'$ is $G_2$.
\item
If {\bf RL} is {\bf PP} and {\bf SB} is applicable to $\gamma$, then the finite computation is chosen as
$G \red_{{\bf PP}, \gamma, \prog} G'_1 \red_{{\bf SB}, \gamma, \prog} G'_2 \red_{{\bf SC}, \gamma, \prog} G'_3$
and $G'$ is $G_3$.
\item
If {\bf RL} is {\bf PP} and {\bf SB} is not applicable to $\gamma$, then the finite computation is chosen as
$G \red_{{\bf PP}, \gamma, \prog} G'_1 \red_{{\bf SC}, \gamma, \prog} G'_2$
and $G'$ is $G_2$.
\end{enumerate}

\begin{table}[h]
\begin{center}
{\scriptsize
\begin{tabular}{lccccccc}
\hline
{\bf RULES} ~~~~~~& ~~{\bf O$_1$}~~ & ~~{\bf O$_2$}~~ & ~~{\bf O$_3$}~~ & ~~{\bf O$_4$}~~ & ~~{\bf O$_5$}~~ & ~~{\bf O$_6$}~~ & ~~{\bf O$_7$} \\
\hline
{\bf DC}                & $\geq_{mul}$  & $\geq$ & $\geq$  & $>$ &  &  & \\
\hline
{\bf SP}                 & $\geq_{mul}$  & $\geq$ & $\geq$  & $>$ &  &  & \\
\hline
{\bf IM}                 & $\geq_{mul}$  & $\geq$ & $>$        &         &  &  & \\
\hline
{\bf EL}                & $\geq_{mul}$  & $\geq$ & $\geq$   & $>$ & &  & \\
\hline
{\bf DF}                & ${>_{mul}}$    &                 &         &  & & \\
\hline
{\bf PC}                & $\geq_{mul}$ & $\geq$ & $\geq$   & $>$ & & & \\
\hline
{\bf FC}                & $\geq_{mul}$  & $>$ &                 &         &  & & \\
\hline
{\bf (b),(c),(d),(e)} & ${>_{mul}}$    &                 &         &  & & \\
\hline
{\bf IE}                  & $\geq_{mul}$ & $\geq$ & $\geq$ & $\geq$ & $\geq$ & $>$  &  \\
\hline
{\bf ID}                  & $\geq_{mul}$ & $\geq$ & $\geq$ & $\geq$ & $\geq$ &  $\geq$  & $>$ \\
 \hline
{\bf MS}                & $\geq_{mul}$ & $\geq$ & $\geq$ & $\geq$ & $>$ &  &\\
\hline
{\bf HS}                 & $\geq_{mul}$ & $\geq$ & $\geq$ & $\geq$ & $>$ &  &\\
\hline
{\bf FS}                  & $\geq_{mul}$ & $\geq$ & $\geq$ & $\geq$ & $>$ &  &\\
\hline
{\bf RS}                 & $\geq_{mul}$ & $\geq$ & $\geq$ & $\geq$ & $>$ &  &\\
\hline
\end{tabular}
\caption{Well-founded progress ordering $\vartriangleright$ for
$CCLNC(\mathcal{C})$}\label{ordering}}
\end{center}
\end{table}

Note that cases (b), (c), (d), and (e) above refer to the rules in Table \ref{table4}.
In all these cases, the Limited Local  Completeness of $\cclnc{\ccdom}$
allows to find all the computation steps and a witness $\mathcal{M'} : \mu \in WTSol_{\prog}(G')$.
In all the cases, we claim that $(G,\mathcal{M}) \vartriangleright (G',\mathcal{M}')$,
i.e. $||(G,\mathcal{M})|| >_{lex} ||(G',\mathcal{M}')||$.
This can  be justified by Table \ref{ordering}.
Each file of this table corresponds to a possibility for the rule {\bf RL} used in
a one-step  finite computation $G \red_{{\bf RL},\gamma, \prog}^{+} G'$ of type (a),
except for one file which corresponds to a finite computation $G \red_{{\bf RL},\gamma, \prog}^{+} G'$
of type (b), (c), (d) or (e).
Each column $1 \leq i \leq 7$ shows the variation in  $O_i$ according to $>_i$
when going from $||(G,\mathcal{M})||$ to $||(G',\mathcal{M'})||$ by means of the corresponding finite computation.
For instance, the file for {\bf IE} shows that the application of this rule
does not increase $O_i$ for $1 \leq i \leq 5$  and decreases $O_6$.

It only remains to show that the information displayed in  Table \ref{ordering} is correct.
Here we limit ourselves to explain the key ideas. A more precise proof could be presented on the basis of a more detailed construction of
the  witnesses  $\mathcal{M'} : \mu \in WTSol_{\prog}(G')$.

\begin{itemize}
\item
For every  rule {\bf RL}, the application of {\bf RL} does not increase $O_1$, as shown by the first column of the table.
This happens because the witness  $\mathcal{M'}$ can be constructed from  $\mathcal{M}$ in such a way that all the inference steps
within $\mathcal{M'}$ dealing with primitive and defined functions are borrowed from $\mathcal{M}$.
\item
The application of any of the rule {\bf DF} strictly decreases $O_1$, as seen in the table.
The reason is that the witness $\mathcal{M}$ includes a $\crwl{\ccdom}$ proof tree $\mathcal{T}$ for an appropriate instance of
a production of the form $f\, \tpp{e}{n}\, \to t$. The root inference of this proof tree contributes to the restricted size of
$\mathcal{M}$ and disappears in the witness $\mathcal{M'}$ constructed from $\mathcal{M}$ as sketched in Subsection \ref{rt3}.
Therefore, the restricted size of $\mathcal{M'}$ decreases by one w.r.t. the restricted size of $\mathcal{M}$.
\item
The table also shows that finite computations of type (b), (c), (d) or (e) strictly decrease $O_1$.
The reason is that such finite computations always work with a fixed primitive atomic constraint $\pi$
which is ultimately moved from the constraint pool $C$ of $G$ to one of the stores in $G'$ when performing
the last  {\bf SC} computation step. The witness $\mathcal{M} : \mu \in WTSol_{\prog}(G)$ includes  a $\crwl{\ccdom}$ proof tree
for an appropriate instance of $\pi$, while no corresponding proof tree is needed in the witness $\mathcal{M'}$.
Therefore, the restricted size of $\mathcal{M'}$ decreases by some positive amount.
\item
The application of rule {\bf FC} decrements $O_2$, because $G$ includes a production $p\, \tpp{e}{n} \to t$ with
$||p\, \tpp{e}{n}|| > 0$, which is replaced  in $G'$ by a primitive atomic constraint  $p\, \tpp{t}{n} \to!\, t$ with $||p\, \tpp{t}{n}|| = 0$
and some new productions $e_i \to V_i$ whose contribution to the $O_2$ measure of $G'$ must  be smaller than $||p\, \tpp{e}{n}||$.
\item
The application of rule {\bf IE} decreases $O_6$ and does not increment $O_i$ for $1 \leq i \leq 5$.
This is because in this case the witness $\mathcal{M'}$ can be chosen as $\mathcal{M}$ itself,
the measures $O_2, O_3, O_4$ and $O_5$ are obviously not affected by {\bf IE}, and the measure $O_6$ obviously decreases by $1$
when {\bf IE}  is applied.
\item
Because of similar reasons, the application of rule {\bf ID} decreases $O_7$ and does not increment $O_i$ for $1 \leq i \leq 6$.
\item
Let {\bf RL} be any of the four constraint solving transformations {\bf MS}, {\bf HS}, {\bf FS} and {\bf RS}.
The witness $\mathcal{M'} : \mu \in WTSol_{\prog}(G')$ can be guaranteed to exist only if the solver invocation has been a complete one.
In this case, $\mathcal{M'}$ can be chosen as the same witness $\mathcal{M}$,
and therefore the $O_1$ measure does not increase when going from $G$ to $G'$.
Measures $O_2$, $O_3$ and $O_4$ are not affected by the bindings created by the solver invocations (since they substitute patterns for variables).
Measure $O_5$ obviously decreases, since the solvability flag $\mathit{sf}_S$ for the store that has been solved descends from $1$ to $0$.
\end{itemize}
\hfill
\end{enumerate}
\end{proof}

\end{document}